\documentclass[reqno,oneside,11pt,letterpaper]{amsart}

\usepackage[margin=3cm]{geometry}
\usepackage[T1]{fontenc}

\usepackage{latexsym}
\usepackage{amssymb}
\usepackage{esint}
\usepackage{bbm}
\usepackage{tensor}

\usepackage{enumitem}
\usepackage{xcolor}
\usepackage{array}
\usepackage{longtable}
\usepackage[hidelinks]{hyperref}
\usepackage{xurl}

\allowdisplaybreaks[4] % allow page breaks in multiline equations
\newcounter{smallarabics}
\newenvironment{arabicenumerate}
  {\begin{list}{{\normalfont\textrm{(\arabic{smallarabics})}}}
    {\usecounter{smallarabics}\setlength{\itemindent}{0cm}
     \setlength{\leftmargin}{5ex}\setlength{\labelwidth}{4ex}
     \setlength{\topsep}{0.75\parsep}\setlength{\partopsep}{0ex}
     \setlength{\itemsep}{0ex}}}
  {\end{list}}
\newcommand{\ben}{\begin{arabicenumerate}}
\newcommand{\een}{\end{arabicenumerate}}

\def\init{\setcounter{equation}{0}}

\newtheorem{theoreme}{Theorem}[section]
\newtheorem{proposition}[theoreme]{Proposition}
\newtheorem{lemma}[theoreme]{Lemma}
\newtheorem{definition}[theoreme]{Definition}
\theoremstyle{remark}
\newtheorem{remark}[theoreme]{Remark}
\theoremstyle{plain}
\newtheorem{corollary}[theoreme]{Corollary}
\newtheorem{hypothesis}[theoreme]{Hypothesis}

\newcommand{\beq}{\begin{equation}}
\newcommand{\eeq}{\end{equation}}
\newcommand{\defpage}[1]{p.~\pageref{#1}}
\newcommand{\appdefpage}[1]{p.~\pageref{#1}}

\def\rr{{\mathbb R}}
\def\cc{{\mathbb C}}
\def\nn{{\mathbb N}}

\renewcommand{\Re}{\operatorname{Re}}
\def\bar{\overline}
\def\reg{{\rm reg}}
\def\cinf{C^\infty}
\def\coinf{C_{\rm c}^\infty}
\def\proof{\noindent{\bf Proof.}\ \ }
\def\qed{$\square$\medskip}
\def\i{{\rm i}}
\def\p{ \partial}
\def\12{\frac{1}{2}}
\def\e{{\rm e}}
\def\s{{\rm s}}
\def\vol{{\rm vol}}

\def\dVol{\mathop{}\!d{\rm vol}}
\def\ad{{\rm ad}}
\def\sol{{\rm sol}}
\def\cl{{\rm cl}}
\def\cyl{{\rm cyl}}
\def\sc{\rm{sc}}
\def\c{\rm{c}}
\def\b{{\rm b}}
\def\norm{{\bf 0}}
\def\rq{{\bf q}}

\def\cA{\mathcal{A}}
\def\cD{{\mathcal D}}
\def\cE{{\mathcal E}}
\def\cF{{\mathcal F}}
\def\cH{{\mathcal H}}
\def\cN{{\mathcal N}}
\def\cS{{\mathcal S}}
\def\cU{{\mathcal U}}
\def\cV{{\mathcal V}}
\newcommand{\one}{\mathbbm{1}}
\def\bS{\mathbb{S}}
\def\Ric{{\bf Ric}}
\def\rg{{\bf g}}
\def\Riem{{\bf Riem}}
\def\rh{{\bf h}}
\def\hrg{{\bf{\mathsf{g}}}}

\def\Ran{{\rm Ran}}
\newcommand{\Ker}{{\rm Ker}}
\newcommand{\Kerop}{{\rm Ker}\,}
\DeclareMathOperator{\Dom}{Dom}
\DeclareMathOperator{\supp}{supp}
\def\dual{\!\cdot \!}
\def\CCR{{\rm CCR}}
\def\zero{{\mskip-4mu{\rm\textit{o}}}}
\def\Diff{{\rm Diff}}
\newcommand{\mat}[4]{\begin{pmatrix}#1 &#2  \\ #3 &#4 \end{pmatrix}}

\newcommand{\col}[2]{\begin{pmatrix}#1 \\#2\end{pmatrix}}
\newcommand{\open}[1]{\mathopen{}\mathclose{\left]#1 \right[}}

\newcommand{\traa}[1]{\mskip-6mu\upharpoonright_{#1}}
\def\tosim{\overset{\sim}{\to}}
\newcommand\moyal{\mathbin{\raisebox{-0.05em}{\text{\ttfamily\#}}}}
\makeatletter
\newcommand*{\defeq}{\mathrel{\rlap{%
                     \raisebox{0.3ex}{$\m@th\cdot$}}%
                     \raisebox{-0.3ex}{$\m@th\cdot$}}%
                     =}
\newcommand*{\eqdef}{=\mathrel{\rlap{%
                     \raisebox{0.3ex}{$\m@th\cdot$}}%
                     \raisebox{-0.3ex}{$\m@th\cdot$}}%
                     }
\makeatother

\def\WF{{\rm WF}}
\def\Op{{\rm Op}}

\def\calde{Calder\'{o}n }
\def\tc{\tilde{c}}
\def\rx{{\rm x}}
\def\jx{\langle x\rangle}
\def\tb{\tilde{b}}

\def\tD{\widetilde{D}}
\def\tK{\widetilde{K}}
\def\tM{\widetilde{M}}
\def\tV{\widetilde{V}}
\def\trho{\widetilde{\varrho}}

\newcommand\be{{}^{\rm b}}
\newcommand{\oper}[1]{\Psi^{#1}_{{\rm s}}(M)}
\newcommand{\operw}[2]{\Psi^{#1}_{{\rm s}\: #2}(M)}
\newcommand{\opern}[1]{\Psi^{#1}_{{\rm s}\: \norm}(M_{\norm})}

\newcommand{\operbun}[1]{\Psi^{#1}_{{\rm s}}(M; V)}
\newcommand{\opersig}[1]{\Psi^{#1}_{ {\rm s}}(\Sigma; V)}
\newcommand{\operwsig}[2]{\Psi^{#1}_{ {\rm s}\:#2}(\Sigma; V)}
\def\hV{\mathsf{V}}

\def\reta{\boldsymbol{\eta}}
\def\treta{\widetilde{\reta}}
\def\tdelta{\widetilde{\boldsymbol{\delta}}}

\newcommand{\cov}[1]{\nabla_{e_{#1}}}

\begin{document}
\title[Hadamard states for linearized gravity]{Hadamard states for linearized gravity on asymptotically flat spacetimes}
\author{Christian G\'erard}
\address{Laboratoire de Math\'ematiques d'Orsay, Universit\'e Paris-Saclay, France}
\email{christian.gerard@universite-paris-saclay.fr} 
\author{Mikhail Molodyk}
\address{Department of Mathematics,  Department of Physics and Astronomy, Texas A\&M University,United States}
\email{mam765@tamu.edu}
\author{Micha{\l} Wrochna} 
\address{Mathematical Institute, Universiteit Utrecht, The Netherlands\newline\hspace*{\parindent}%
Mathematics \& Data Science, Vrije Universiteit Brussel, Belgium}
\email{m.wrochna@uu.nl}
\email{michal.wrochna@vub.be}
 \keywords{linearized Einstein equations, microlocal analysis, Quantum Field Theory on curved spacetimes, asymptotically flat spacetimes, Hadamard states, $\b$-calculus}
 \subjclass[2020]{81T20, 83C05, 58J47, 58J45, 58J32}
\begin{abstract}
 We consider the quantization of linearized Einstein equations on $4$-dimensional asymptotically flat spacetimes.  We prove the existence of  fully gauge invariant Hadamard states in the harmonic gauge for any Einstein  metric $\rg$  asymptotic to the Minkowski metric in a conic neighborhood of spatial infinity.
\end{abstract}
\maketitle
\section{Introduction and summary}\label{sec0}\init
\subsection{Introduction}
 The quantization of linearized gravity has long remained a challenging open problem at the intersection of algebraic quantum field theory, partial differential equations, and Lorentzian geometry. While the classical framework of linearized gravity is  well understood \cite{FH, HS, BFR, BDM}, and various candidate physical states have been explored in the physics literature for highly symmetric background spacetimes, a rigorous mathematical quantization has remained elusive. In particular, a mathematically satisfactory answer has been lacking even in the fundamentally important setting of perturbations of Minkowski spacetime.

Within algebraic quantum field theory, the primary criterion for a state to be physical is the {\em Hadamard condition}. This microlocal regularity condition is indispensable for the rigorous renormalization of non-linear observables \cite{FNW, AFO, FV} and for the mathematical formulation of perturbative quantum gravity \cite{BFR, BFH, rejzner}. 

Compared to scalar fields or Dirac fermions, tensor gauge theories present severe analytical obstacles and 
 standard techniques for proving the existence of Hadamard states fail when applied directly to linearized gravity: 

The deformation argument \cite{FNW} is not applicable  because the background spacetime $(M, \rg)$ must  satisfy the non-linear Einstein equations, precluding arbitrary background deformations.

Pseudodifferential calculus methods \cite{junker, GW, GOW} typically violate gauge invariance if implemented naively.
 A pseudodifferential construction of Hadamard states is known in the case of Yang--Mills fields linearized around a possibly non-zero solution~\cite{GW1}. However  it uses  a spacetime deformation argument and various arguments  specific to differential forms.

For the same reason, another method introduced recently in \cite{MS} for Maxwell fields relying on a new Hodge decomposition seems also presently difficult to extend to linearized gravity. 

Conformal scattering approaches \cite{DMP} suffer from divergences at conformal infinity \cite{BDM} that limit the quantization to a subspace of the classical degrees of freedom.

For Maxwell, Proca, Yang-Mills, and Teukolsky fields \cite{furlani, FP, hollands, DS, FS, GW1, Moretti2023, Iuliano2023, MS, WrochnaZahn2017},  there exist several constructions of Hadamard states relying on different methods. Very recently, H\"afner and Klein~\cite{HaefnerKlein2026} constructed the Unruh state for bosonic Teukolsky fields of spins $0,\pm1,\pm2$ on subextreme Kerr spacetimes and proved its Hadamard property throughout the black hole exterior and the interior up to the inner horizon. 

For linearized gravity on highly symmetric backgrounds, most notably on the de Sitter spacetime $dS^{4}$, there exists a large body of works in the physical literature. A popular approach is via {\em Wick rotation}.  The remarkable feature is that the gauge-fixed linearized Einstein operator on $dS^{4}$  becomes  the canonical Laplacian on  the 4-sphere $\bS^{4}$ after Wick rotation. 

Its Euclidean Green's function provides a natural candidate for the two-point function of a state after Wick rotating back to Lorentzian signature. The so-obtained ``state'', if well-defined,  is called  the \emph{Euclidean} or \emph{Bunch--Davies} \emph{vacuum}. 
Let us mention,  among others, the works \cite{Allen1986,AFO,Faizal2012,Higuchi2001} and, for a recent rigorous analysis in the spirit of the present work, \cite{GW3}.

If the background spacetime $(M, \rg)$ is {\em analytic}, the Wick rotation in a Gaussian time coordinate with respect to some reference Cauchy surface seems a promising approach. In fact, one can construct from the elliptic operator obtained by Wick rotation, so-called {\em \calde projectors}, which are  well-known objects  in elliptic boundary value problems, to try to construct a quasi-free state for the Lorentzian theory. 

States obtained from \calde projectors depend on fewer arbitrary choices than those constructed by pseudodifferential calculus. Therefore, it is hoped that the crucial gauge invariance property will be automatically satisfied. This approach was followed in \cite{GMW}, providing a quasi-free state for linearized gravity but with gauge invariance and positivity holding only modulo the addition of some smooth corrections. This was due to the fact that the Dirichlet boundary conditions, whose introduction is needed for the invertibility of the Wick-rotated operators, have the effect of destroying the gauge invariance; cf.~\cite{Witten2021,CMS} for related issues concerning boundary conditions in Euclidean gravity. 

 Let us note that \calde projectors will actually play an essential role in our construction, even if no analyticity of the metric is assumed.

The existence of Hadamard states for linearized gravity has only recently been established on spacetimes possessing a {\em compact} Cauchy surface \cite{G2} by a method called {\em full gauge fixing}. 

In this paper, we extend the result of \cite{G2} to the case of {\em asymptotically flat spacetimes},   by providing a rigorous construction of gauge-invariant Hadamard states for linearized gravity (see Thm.~\ref{mainthmmain}). To the best of our knowledge, this work provides the first mathematical proof of the existence of such states for perturbations of Minkowski spacetime, overcoming the analytical roadblocks previously imposed by gauge freedom and background constraints.

\subsubsection{Linearized Einstein equations}
\leavevmode{}Let us first explain the general setting.

Let $(M, \rg)$ be a globally hyperbolic spacetime with $\dim M= 4$ solving the Einstein equations
\[
\Ric= \Lambda\rg,
\]
where $\Lambda\in \rr$ is the cosmological constant. Let $V_{k}\defeq\cc\otimes^{k}_{\rm s}T^{*}M$, $k=1,2$  be the \emph{complex} bundle of symmetric $(0, k)$-tensors on $M$. Note that we consider here {\em complex} tensor fields, which in our opinion makes the whole framework somewhat more elegant. The link with the more common real framework is explained for example in \cite[Sect.~2]{GW}. We consider the two differential operators
\[
P= - \square_{2}-  I \circ d \circ \delta + 2\,\Riem_{\rg},\quad K= I\circ d,
\]
where $\square_{2}$ is the  d'Alembertian  associated to $\rg$, $I$ is the trace reversal, $d$ and $\delta$ are the symmetric differential and co-differential, see Subsects.~\ref{sec1.1}, \ref{sec1.2}.
The {\em linearized Einstein equations} around $\rg$ are 
\begin{equation}
\label{e0.1}
Pu=0,
\end{equation}
where $u$ is a (symmetric) $(0,2)$-tensor. 
The identity $P\circ K=0$ implies that $\Ker_{\rm sc}P$ is invariant under 
 {\em linearized gauge transformations} given by $u\mapsto u+ K w$, where $w$ is a $(0, 1)$-tensor. Therefore the natural 'on-shell' phase space is  the quotient space:
\[
\dfrac{\Ker_{\sc}P}{\Ran_{\sc}K},
\]
of 'solutions modulo gauge'. Here  and below the subscripts $\sc$ resp.~$\c$ refer to 'space compact', resp.~'compact' for example $\Ker_{\sc}P$ is the space of (smooth) space compact solutions of \eqref{e0.1}.

\subsubsection{The phase space for linearized gravity}
It is customary to work in   the {\em de Donder} or {\em harmonic gauge}:
\[
K^{\star}u=0,
\]
where $K^{\star}= \delta$ is the adjoint of $K$ for a Hermitian form $(\cdot| \cdot)_{I, V_{2}}$ involving the trace reversal $I$, see \ref{not:physical-hermitian-forms}, for which $P$ is formally selfadjoint. The  de Donder gauge condition can be imposed by a gauge transformation, and the quotient space $\dfrac{\Ker_{\sc}P}{\Ran_{\sc}K}$ is isomorphic to
\[
\dfrac{\Ker_{\rm sc}D_{2}\cap \Ker_{\rm sc}K^{\star}}{K\Ker_{\rm sc}D_{1}},
\]
where
\[
\begin{aligned}
D_{1}&= K^{\star}\circ K= - \Box_{1}-\Lambda,\\
D_{2}&= P+ K\circ K^{\star}=- \Box_{2}+ 2\Riem_{\rg},
\end{aligned}
\]
are hyperbolic operators acting respectively on $(0,1)$- and $(0, 2)$-tensors and $K \Ker_{\sc}D_{1}$ represents the residual gauge freedom.  An important  consequence of $P\circ K=0$ is the identity
\begin{equation}
\label{e.bli3}
D_{2}\circ K= K\circ D_{1}.
\end{equation}
One can then introduce   the 'off shell' phase space:
\[
\cV_{P}= \dfrac{\Ker_{\c}K^{\star}}{\Ran_{\c}P}
\]
equipped with the Hermitian form
\[
\bar{[u]}\dual {\bf Q}_{P}[u]\defeq  \bar{u}\dual {\bf Q}_{I, 2}u, 
\]
where 
\[
\bar{u}\dual {\bf Q}_{I, 2}u= \i (u|I G_{2} u)_{V_{2}}, 
\]
and $G_{2}= G_{2\,{\rm ret}}- G_{2\,{\rm adv}}$ is the commutator function for $D_{2}$.

The \emph{algebraic quantization} of linearized gravity simply consists  in constructing the  $\CCR$ $*$-algebra $\CCR(\cV_{P},{\bf Q}_{P})$. 

\subsubsection{Hadamard states for linearized gravity}
A quasi-free state on  $\CCR(\cV_{P},{\bf Q}_{P})$  is defined by a pair of Hermitian forms $\Lambda^{\pm}_{P}$ on $\cV_{P}$ called {\em covariances} such that 
 \beq\label{e0.2}
\begin{aligned}
(1)&\quad \boldsymbol{\Lambda}^{\pm}_{P}=\boldsymbol{ \Lambda}_{P}^{\pm*},\quad \boldsymbol{\Lambda}_{P}^{\pm}\geq 0,\\
(2)&\quad \boldsymbol{\Lambda}_{P}^{+}- \boldsymbol{\Lambda}_{P}^{-}={\bf Q}_{P}.
\end{aligned}
\eeq
We will consider  covariances obtained from 
a pair of Hermitian  forms on $\coinf(M; V_{2})$ by
\beq\label{e0.3}
\bar{[u]}\dual \boldsymbol{\Lambda}^{\pm}_{P}[u]\defeq \bar{u}\dual \boldsymbol{\Lambda}_{2}^{\pm}u,
\eeq
The Hermitian  forms  $\boldsymbol{\Lambda}_{2}^{\pm}$ have to satisfy the following set of conditions:
\beq\label{e.bli1}
\begin{aligned}
(1)&\quad D_{2}^{*}\circ\boldsymbol{\Lambda}_{2}^{\pm}= \boldsymbol{\Lambda}_{2}^{\pm}\circ D_{2}=0,\\
(2)&\quad \boldsymbol{\Lambda}_{2}^{\pm}=0 \hbox{ on  } \Ker_{\c}K^{\star}\times \Ran_{\c}K,\\
(3)&\quad \boldsymbol{ \Lambda}_{2}^{\pm}= \boldsymbol{\Lambda}_{2}^{\pm*},\quad \boldsymbol{\Lambda}_{2}^{\pm}\geq 0, \hbox{ on } \Ker_{\c}K^{\star},\\
(4)&\quad  \boldsymbol{\Lambda}_{2}^{+}-\boldsymbol{\Lambda}_{2}^{-}={\bf Q}_{I, 2}, \hbox{ on } \Ker_{\c}K^{\star},\\
(5)&\quad \WF(\Lambda_{2}^{\pm})'\subset \cN^{\pm}\times \cN^{\pm},
\end{aligned}
\eeq
where $\Lambda_{2}^{\pm}\in\cD'(M\times M; L(V_{2}))$ are the distributional kernels of $\boldsymbol{\Lambda}_{2}^{\pm}$ and  $\cN^{\pm}$ are the two connected components of the characteristic manifold $\cN= \{(x, \xi)\in T^{*}M\setminus\zero: \xi\dual \rg^{-1}(x)\xi= 0\}$.

Condition (1) corresponds to  the  {\em field equations}, familiar from quantization of matter fields, while condition (2), specific to gauge fields, is the {\em gauge invariance}. Together they imply that $\boldsymbol{\Lambda}_{2}^{\pm}$ 'pass to quotient'  and define $\boldsymbol{\Lambda}^{\pm}_{P}$ by  \eqref{e0.3}.

Condition (3) is the {\em positivity}, while condition (4) corresponds to the {\em CCR}. Together they imply  conditions \eqref{e0.2} on $\boldsymbol{\Lambda}^{\pm}_{P}$.

The last condition (5) is  the {\em Hadamard condition}, which singles out Hadamard states, considered as the physically meaningful states with the correct short distance behavior of two-point functions.

\begin{definition}\label{def1}
 A quasi-free state $\omega_{P}$ on $\CCR(\cV_{P},{\bf Q}_{P})$ with covariances $\boldsymbol{\Lambda}^{\pm}_{P}$ generated by $\boldsymbol{\Lambda}_{2}^{\pm}$ satisfying \eqref{e.bli1} is called a \emph{ Hadamard state for linearized gravity} on $(M, \rg)$.
\end{definition}
Conditions (1), (4) and (5) are rather easy to satisfy. In fact the construction of Hadamard states for scalar fields via pseudodifferential calculus initiated in \cite{junker, junkererratum} and further developed in \cite{GW1, GOW} can be adapted to the tensor case and produces a wealth of covariances satisfying (1), (4) and (5), see \cite[Sect.~5]{GMW}. These covariances are unique modulo smoothing error terms. 

Conditions (2)  and (3) in contrary, are sensitive to smoothing error terms and very difficult to impose. 

\subsection{Asymptotically flat spacetimes}
In this paper we consider linearized gravity with cosmological constant $\Lambda=0$ and construct covariances $\boldsymbol{\Lambda}_{2}^{\pm}$ satisfying \eqref{e.bli1} on \emph{asymptotically flat spacetimes}. 
More precisely, denoting by $\reta$ the Minkowski metric on $\rr^{4}$ equipped with its canonical coordinates $(x_{0}, \rx)$, we consider Ricci-flat metrics $\rg$ on $\rr^{4}$ (i.e.~with $\Ric= 0$) satisfying the following two conditions:  firstly  for some  conical neighborhood $U$ of  the Cauchy surface $\Sigma= \{x_{0}=0\}$ the metric $\rg$ satisfies:
\beq\label{e.bli2}
\p_{x}^{\alpha}(\rg- \reta)\in O(\langle x\rangle)^{-1- |\alpha|},\ \ \alpha\in \nn_{0}^{4},\ x\in U.
\eeq
Secondly we assume that $\Sigma$ is a Cauchy surface for $(\rr^{4}, \rg)$.
It is known, see e.g.~\cite{CK, LR, HV} and the discussion in Remark~\ref{rem:bgmetric-existence}, that there exist Ricci-flat metrics satisfying \eqref{e.bli2}.

In this paper we will prove the following result.
\begin{theoreme}\label{mainthmmain}
For any Ricci-flat metric $\rg$ on $\rr^{4}$ satisfying the above assumptions, there exist Hadamard states for linearized gravity on $(\rr^{4}, \rg)$.
\end{theoreme}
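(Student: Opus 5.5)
The plan is to adapt the \emph{full gauge fixing} method of \cite{G2} from compact Cauchy surfaces to the asymptotically flat setting, replacing compact-manifold pseudodifferential calculus on $\Sigma$ by a calculus adapted to spatial infinity (a scattering or $\b$-type calculus on $\Sigma$, as suggested by the keywords). We first reduce to Cauchy data. Using \eqref{e.bli3}, $D_{2}\circ K=K\circ D_{1}$, one writes $\boldsymbol{\Lambda}_{2}^{\pm}$ in terms of Cauchy data on $\Sigma=\{x_{0}=0\}$: $\boldsymbol{\Lambda}_{2}^{\pm}=(\varrho_{2}G_{2})^{*}\lambda_{2}^{\pm}(\varrho_{2}G_{2})$, with $\varrho_{2}$ the Cauchy data map and $\lambda_{2}^{\pm}$ operators on $\coinf(\Sigma;V)\oplus\coinf(\Sigma;V)$. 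Condition (1) is then automatic. Condition (4) becomes $\lambda_{2}^{+}-\lambda_{2}^{-}=q_{2}$, the charge of $D_{2}$. Condition (5) becomes a microlocal statement: $\lambda_{2}^{\pm}$ must be built from projectors onto the $\pm$ frequency parts of the Cauchy evolution, modulo smoothing errors with good decay at spatial infinity.

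Next we fully fix the gauge. The constraint $K^{\star}u=0$ and the residual gauge $K\Ker D_{1}$ translate on Cauchy data into a differential complex on $\Sigma$. Following \cite{G2}, we choose in addition a gauge condition on $\Sigma$ itself (an elliptic-type condition fixing the residual $D_{1}$ gauge). This identifies $\cV_{P}$ with a ``physical'' subspace of Cauchy data for $D_{2}$ on which $q_{2}$ is nondegenerate. Doing so requires solving elliptic equations on $\Sigma$, such as $\varrho K^{\star}K w=f$, with control at infinity. Here assumption \eqref{e.bli2} enters: in the conic region $U$, $\rg$ is a symbolic perturbation of $\reta$, so the relevant Laplace-type operators on $\Sigma$ are elliptic in the scattering/$\b$ calculus. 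Using Ricci-flatness and $\dim\Sigma=3$, one checks they are invertible between suitable weighted Sobolev spaces (no $L^{2}$ kernel, with weights chosen away from the indicial roots).

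We then construct $\lambda_{2}^{\pm}$ compatible with the gauge. We take \calde-type projectors $c^{\pm}$ associated with a factorization of the Cauchy evolution $\partial_{t}^{2}+a(t)$ into $(\partial_{t}-\i b^{\pm})$, where $b^{\pm}$ lie in the calculus on $\Sigma$. We set $\lambda_{2}^{\pm}=\pm q_{2}c^{\pm}$, then compress by the projection onto the physical subspace. Two facts need checking. First, the gauge projector commutes with $c^{\pm}$ up to smoothing, decaying terms; this uses \eqref{e.bli3} to intertwine the factorizations for $D_{1}$ and $D_{2}$. Second, the compression preserves the Hadamard property. Gauge invariance (2) then holds exactly, because $\lambda_{2}^{\pm}$ is defined via the physical subspace and vanishes on the gauge directions.

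Positivity (3) is the main obstacle. On the physical subspace, $\pm q_{2}c^{\pm}$ is only positive modulo smoothing terms. Two issues make this hard: $q_{2}$ is not positive on $V_{2}$, since $I$ gives an indefinite fibre metric, and $\Sigma$ is noncompact. In the compact case, one corrects by finite-rank modifications after showing that the symbol of $\pm q_{2}c^{\pm}$ is positive on physical data. Here, finite-rank arguments fail at infinity. My first attempt would be to exploit \eqref{e.bli2} to make the smoothing errors compact on weighted spaces, using that they decay at spatial infinity. Near $\Sigma$'s end, comparison with Minkowski, where positive covariances of the transverse-traceless part are explicit, should give genuine positivity. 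The remaining defect then lives in a compact, hence finite-dimensional, spectral region and can be absorbed by the standard procedure of \cite{G2}: pass to a Krein-space formulation, prove the physical Cauchy data form a Pontryagin-type space with finite negative index, and fix it by a finite-rank smooth correction of $\lambda_{2}^{\pm}$ that preserves (2), (4) and (5).
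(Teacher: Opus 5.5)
Your outline has the right overall architecture: Cauchy-data covariances, full gauge fixing in the spirit of \cite{G2}, and factorization-based projectors in a calculus adapted to spatial infinity. However, the two steps you defer as ``facts that need checking'' are where the proof actually lives, and with the gauge you describe they fail in general.

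\textbf{The gauge condition.} If the supplement of the gauge directions is defined by an ordinary elliptic condition $R$ on $\Sigma$ (a Laplace-type problem), the compression $T=\one-K_{\Sigma}(RK_{\Sigma})^{-1}R$ in general does not make $c_2^{\pm}Tc_2^{\mp}$ smoothing. The compressed covariances then lose the Hadamard property, which is exactly why the usual TT-synchronous gauge is unusable (Remark~\ref{rem11.1}). The identity $D_2K=KD_1$ only gives $c_2^{\pm}K_\Sigma\approx K_\Sigma c_1^{\pm}$; it says nothing about $(RK_\Sigma)^{-1}R$. The missing idea is to build the gauge condition out of the projectors themselves, $R_\Sigma=\col{\hat\ell\pi_0c_2^+}{\hat\ell\pi_0c_2^-}$. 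This imposes tracelessness and a vanishing $t\Sigma$ part on $\pi_0c_2^+f$ and on $\pi_0c_2^-f$ separately. Then $\pi^+R_\Sigma c_2^-=0$, together with ellipticity of $M_\Sigma^+$, yields the microlocal splitting.

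\textbf{Positivity.} The same choice makes positivity exact rather than approximate. On $\Ker R_\Sigma$ one has $\pm\rq_{I,2\b}c_2^{\pm}=\pm\tilde\rq_{2\b}c_2^{\pm}$, and the projectors are constructed (via $\hat b+\hat b^{*}\geq 0$) to be positive for the Euclidean charge. Your route of ``positive modulo smoothing, then a Pontryagin index and a finite-rank fix'' is not substantiated. The form $\pm\rq_{I,2}c^{\pm}$ is degenerate (it vanishes on $\Ran c^{\mp}$ and on gauge directions) and $\rq_{I,2}$ is indefinite, so it is unclear in which norm the errors would be relatively compact. On noncompact $\Sigma$ smoothing errors are not compact anyway. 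In the paper the finite-rank modifications serve a different purpose: they restore the CCR identity in the singular case, where $R_\Sigma K_{\b\Sigma}$ has a cokernel.

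\textbf{Decay of the gauge error.} Your claim that $c_2^{\pm}K_\Sigma-K_\Sigma c_1^{\pm}$ is ``smoothing and decaying'' does not follow from \eqref{e.bli3}. The factorization fixes $b$, hence $c_k^{\pm}$, only modulo $\b$-smoothing operators. These have nonzero normal operators and are bounded but not compact on weighted spaces, so with such errors the gauge-fixing system is not Fredholm. The paper's key new step is to replace the normal operators of $c_k^{\pm}$ by the Lorentzian versions of the Calder\'on projectors of the Wick-rotated normal operator $\tD_{k\b\norm}$, the $\b$-form of the flat Laplacian on $\rr^n$ in polar coordinates. These are exactly gauge covariant and charge self-adjoint.
\begin{itemize}
\item The gauge error then has vanishing normal operator and is compact (Prop.~\ref{prop8.5bis}).
\item Fredholmness of $R_\Sigma K_{\b\Sigma}$ follows from ellipticity of $M_\Sigma^{\pm}$ and invertibility of $M^{\pm}_{\Sigma\norm}$. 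The latter is proved after Wick rotation from positivity of the Dirichlet-to-Neumann map, for weights near $1$ on the two-tensor data.
\end{itemize}

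\textbf{Invertibility.} Your claim that the gauge-fixing problem is invertible ``by Ricci-flatness'' has no counterpart for the operator that actually works, which is a pseudodifferential system. Only Fredholmness is available; invertibility holds e.g.\ near Minkowski (Remark~\ref{remark9.1}). The non-invertible case must be handled separately.
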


\subsection{Summary of the construction}\label{sec0.3}
We now  give a summary of the construction of Hadamard states for linearized gravity done  in this paper.

\subsubsection{Hadamard projectors}

The construction of covariances satisfying the Hadamard condition relies on {\em Hadamard projectors} for $D_k$, which  are operators $c_{k}^{\pm}$, acting on Cauchy data for $D_k$ with: 
 \beq
\begin{aligned}
i)&\quad c_k^++c_k^-=\one,\quad (c_{k}^{\pm})^{2}= c_{k}^{\pm},\\
iii')&\quad \rq_{I,k}\circ  c_{k}^{\pm}= c_{k}^{\pm*}\circ  \rq_{I,k},\\
iv)&\quad \WF(\cU_{k}\circ c_{k}^{\pm})'\subset (\cN^{\pm}\cup\cF)\times T^{*}\Sigma,
\end{aligned}
\eeq
where  the  Cauchy solution operators $\cU_{k}$ and the charges $\rq_{I, k}$ are defined in \ref{sec1.6.2}.

One can then define a pair of Cauchy surface covariances for $D_{2}$ by:
\[
\boldsymbol{\lambda}_{2}^{\pm}= \pm \rq_{I, 2}\circ c_{2}^{\pm}.
\]
They satisfy the Hadamard condition, but are in general neither positive nor gauge invariant. One can, however, ensure their positivity for an auxiliary unphysical ``Euclidean charge''.

\subsubsection{Full gauge fixing}\label{sec0.3.1}
To  circumvent the difficulties  with conditions (2) and (3) we will use a method introduced in \cite{G2} relying on {\em full gauge fixing}. The main idea is to impose a gauge condition which fully fixes the gauge, making gauge invariance automatic, and at the same time restricts the Cauchy data of the metric perturbation to a subspace where the unphysical and physical charges match on the range of the Hadamard projectors.

%In order to explain it, it is convenient to fix a Cauchy surface $\Sigma$, taken as $\{x_{0}=0\}$, and to work with the equivalent Cauchy surface phase space equal to the quotient 
%$\dfrac{\Ker_{\c}K_{\Sigma}^{\dag}}{\Ran_{\c}K_{\Sigma}}$, where $K_{\Sigma}$, $K_{\Sigma}^{\dag}$ are Cauchy surface equivalents of $K, K^{\star}$, see \ref{sec1.6.2}.  
% 

Namely, we try to 
 find a supplementary space $E$ of $K\Ker_{\sc}D_{1}$ in $\Ker_{\sc}D_{2}\cap \Ker_{\sc}K^{\star}$. We can then identify the  phase space $\dfrac{\Ker_{\rm sc}D_{2}\cap \Ker_{\rm sc}K^{\star}}{K\Ker_{\rm sc}D_{1}}$ with $E$ using the associated projection
$\pi:\Ker_{\sc}D_{2}\cap \Ker_{\sc}K^{\star}\to E$. One then replaces $\boldsymbol{\lambda}_{2}^{\pm}$ by the {\em modified covariances}:
\[
\pi^{*}\circ \boldsymbol{\lambda}_{2}^{\pm}\circ \pi.
\] 
Applying the projection $\pi$ will automatically ensure gauge invariance.

The supplementary space $E$ has however to be chosen appropriately. Firstly  the Lorentzian  scalar product on  $(0, 2)$-tensor Cauchy data should be {\em positive} on $E$. This property will be needed to ensure the positivity of the state. 

  Secondly  $E$ should be adapted to   the microlocal Hadamard condition, i.e.~its definition should somehow distinguish between 'positive' and 'negative energy' solutions.
  
A gauge condition that satisfies the first requirement (but not the second) is the \emph{TT-synchronous gauge condition}. Working with solutions, this amounts, in  addition to the harmonic gauge condition $K^{\star}u=0$, to define the supplementary space $E$  by imposing the {\em traceless condition} ${\rm tr}_{\rg}u=0$ and the {\em synchronous condition} $u_{t\Sigma}=0$. 

To formulate the TT-synchronous gauge condition one needs of course to identify  $(M, \rg)$ with $(I_{t}\times \Sigma, - c^{2}dt^{2}+ \rh_{t})$, where $I$ is some interval, $\Sigma$ a Cauchy surface and $\rh_{t}$ a $t$-dependent family of Riemannian metrics on $\Sigma$.

It is sufficient to impose it to second order on $\Sigma$, i.e.~to require that
\[
{\rm tr}_{\rg}u\traa{\Sigma}=\nabla_{t}{\rm tr}_{\rg}u\traa{\Sigma}=0,\quad u_{t\Sigma}\traa{\Sigma}= \nabla_{t}u_{t\Sigma}\traa{\Sigma}=0.
\]

\subsubsection{The microlocal TT-synchronous gauge condition}\label{sec0.3.4}

An issue with  the TT-synchronous gauge condition is that the modified covariances obtained from it will in general {\em  not satisfy }the Hadamard condition anymore.
As in \cite{G2} we use instead a microlocal version of the synchronous gauge condition.  
In terms of Cauchy data $f=\col{f_{0}}{f_{1}} $  the {\em microlocal TT-synchronous gauge condition} is given by:  
\[
{\rm tr}_{\rg} \pi_{0}c_{2}^{\pm} f=0, \quad (\pi_{0}c_{2}^{\pm} f)_{t\Sigma}=0,
\]
where $\pi_{0}$ is the projection on the first component. 

The fact that the space $E$ defined by the microlocal TT-synchronous gauge condition is indeed a supplementary space to $\Ran_{\c}K_{\Sigma}$  is equivalent to the invertibility of some  {\em elliptic} system of PDE on $\Sigma$. This problem is uniquely solvable in appropriate Sobolev spaces modulo a smoothing error term. Since in the case of compact Cauchy surfaces a smoothing error is compact, this means the problem is \textit{Fredholm}, which is sufficient to define Hadamard covariances.

However, when $\Sigma$ is not compact (as in the spacetimes we consider), ellipticity alone is not sufficient to ensure invertibility or even the Fredholm property, because of  the \emph{infrared problem} caused by large distance behavior.

\subsubsection{Approach to the infrared problem}
We show that in asymptotically flat spacetimes, the construction of Hadamard projectors can still be done, but to similarly fix the gauge modulo compact errors one needs control at spatial infinity. We do this in two steps. First, we construct Hadamard projectors which are determined, as in previous work, only up to smoothing operators, but we ensure that they belong to a calculus of operators with good mapping properties between \textit{weighted b-Sobolev spaces}, in particular with well-defined \textit{normal operators}. In the second step, we revisit the Wick rotation method: we modify the Hadamard projectors near infinity, replacing their normal operators by those of Calder{\'o}n projectors of a Wick-rotated problem.

\subsubsection{The $\b$-framework}\label{sec0.3.2}
A standard method to adress the infrared problem is to use Melrose's $\b$-\emph{calculus}. In its simplest version, it amounts for example to replacing the flat Laplacian $- \Delta$ on $\rr^{n}$ by its $\b${\em -transform}  $-\Delta_{\b}= -\rho^{-1-n/2}\Delta \rho^{-1+ n/2}$, where $\rho= r^{-1}$ is the inverse distance to the origin. It is then easy to see that $-\Delta_{\b}$ is invertible  between $\b$-{\em Sobolev spaces} if $n\geq 3$, see \ref{sec9.2.3} for the definition, and similarly the $\b$-transform of a second order elliptic operator asymptotic to $- \Delta$ at infinity will be {\em Fredholm} between $\b$-Sobolev spaces.

We will perform  this transformation  at the level of the original operators $D_{1}, D_{2}$, replacing $D_{k}$ by $D_{k\b}= \rho^{-1-n/2}D_{k}\rho^{-1+ n/2}$, where $\rho= r^{-1}$ but now $r$ comes from perturbed {\em Beig-Schmidt coordinates} $(t, r, \omega)$, adapted to the background metric $\rg$, see Subsect.~\ref{sec2.2}.

We also transform $K$ into $K_{\b}= \rho^{1- n/2}K \rho^{-1+ n/2}$, so that  the key identity $D_{2\b}\circ K_{\b}= T_{\b}\circ D_{1\b}$, analogous to \eqref{e.bli3}, is satisfied with $T_{\b}=\rho^{-2}K_{\b}\rho^2$. All the constructions  will then be done  in the $\b$-framework. 

An essential property of $\b$-differential or $\b$-pseudodifferential operators is that they possess a {\em normal operator}, describing their behavior near $\rho= 0$. In particular $D_{k\b}, K_{\b}$ have normal operators $D_{k\b\norm}, K_{\b\norm}$. 

The operators $D_{k\b\norm}$  are actually $\b$-transforms of the  d'Alembertians for the flat metric $\reta$ written in the Beig-Schmidt coordinates. 

They are analytic in $t$ and hence setting $t= \i s$ can be Wick-rotated  to elliptic operators $\widetilde{D}_{k\b\norm}$, which are   $\b$-transforms  of the flat Laplacian on $\rr^{n}$ written in polar coordinates (for related results on elliptic and Fredholm theory on asymptotically Euclidean spaces, including applications to linearized gravity, see \cite{HiG,HiAE,VaLow}).

\subsubsection{Hadamard projectors in the \b-framework}\label{sec0.3.3}

We then follow the same strategy in the b-framework and construct {\em Hadamard projectors} for $D_{k\b}$, which  are operators $c_{k}^{\pm}$, acting on Cauchy data for $D_{k\b}$ with: 
 \beq\label{condit-hada-weak}
\begin{aligned}
i)&\quad c_k^++c_k^-=\one,\quad (c_{k}^{\pm})^{2}= c_{k}^{\pm},\\
iii')&\quad \rq_{I,k\b}\circ  c_{k}^{\pm}= c_{k}^{\pm*}\circ  \rq_{I,k\b},\\
iv)&\quad \WF(\cU_{k}\circ c_{k}^{\pm})'\subset (\cN^{\pm}\cup\cF)\times T^{*}\Sigma,
\end{aligned}
\eeq
where $\rq_{I, k\b}$ is the charge corresponding to $\rq_{I, k}$ in the b-framework. 

The construction of $c_{k}^{\pm}$ is equivalent to a factorization of $D_{k\b}$ and relies on the $\b$-pseudodifferential calculus. 
It is unique modulo smoothing $\b$-pseudodifferential operators. 

 To ensure that the Hadamard projectors $c_{k}^{\pm}$ have the correct behavior near $\rho= 0$, we perform a {\em transplant}, i.e.~we replace their normal operators by the {\em \calde projectors} for the Wick-rotated normal operators $\widetilde{D}_{k\b\norm}$. 
 
We only need to consider the normal operator at this stage, which comes from the exact Minkowski model at infinity and is therefore analytic, so its Wick rotation is simple to define. Moreover, the Wick-rotated problem is naturally stated in b-Sobolev spaces without the need for artificial boundary conditions (which were a significant obstacle in \cite{GMW}). Since the Calder{\'o}n projectors are naturally gauge-invariant, the Hadamard projectors obtained are gauge-invariant modulo an error which is not only smoothing but also decays at infinity.

This is possible thanks to the fact which originally motivated the use of Calder{\'o}n projectors in \cite{GW2,GMW}: the Calder{\'o}n projectors satisfy, up to Wick rotation, the same symbolic equations as the Hadamard projectors (in our case, both on the normal operator level), which implies that they differ only by a smoothing operator. This allows one to separately deal with the differential and decay aspects of the problem, adjusting the normal operator of the Hadamard projectors to have the desired asymptotic behavior while preserving the Hadamard property.

As before, we then define a pair of Cauchy surface covariances for $D_{2\b}$ by:
\[
\boldsymbol{\lambda}_{2\b}^{\pm}= \pm \rq_{I, 2\b}\circ c_{2}^{\pm}.
\]

\subsubsection{Gauge-fixing in the b-framework}

Once this is achieved, we follow \cite{G2} and impose the microlocal TT-synchronous gauge condition, see Subsect.~\ref{sec0.3.4}. 

The fact that the subspace $E$ is supplementary to $K_{\b}\Ker_{\sc}D_{1\b}$ is equivalent to the invertibility  of a system  of $\b$-pseudodifferential operators  acting on Cauchy data. This system can be analyzed via the $\b$-calculus. The fact that $c_{k}^{\pm}$ have the correct behavior at $\rho= 0$ ensures that this system is {\em Fredholm} between appropriate $\b$-Sobolev spaces. 

If the system is invertible (the so-called {\em regular case}, see Subsect.~\ref{sec11.3}), then the existence of a Hadamard state for linearized gravity follows rather easily.

If it is not invertible (the so-called {\em singular case}, see Subsect.~\ref{sec11.4}) 
then by Fredholm theory it still has finite dimensional kernel and cokernel. We then follow arguments in \cite{G2} and further alter the modified covariances by some finite rank smoothing operators to obtain a Hadamard state.

\subsection{Notation}\label{sec0.1}
We now collect various notation used throughout the paper. The detailed $\b$-pseudodifferential calculus, normal operators, and Fredholm criterion used later are recalled in Appendix~\ref{sec9}. An elementary flow lemma used in the metric reduction is recorded separately in Appendix~\ref{sec:app-ode}.
\subsubsection{Isomorphisms of vector spaces}
If $E, F$ are vector spaces and $a\in L(E, F)$ we write $a: E\tosim F$ if $a$ is an isomorphism. If $E, F$ are topological vector spaces, we use the same notation if $a$  is a homeomorphism.
\subsubsection{Sesquilinear forms}
If $E$ is a complex vector space, its anti-dual is denoted by $E^{*}$. 
A sesquilinear form ${\bf a}$ on $E$ is an element of $L(E, E^{*})$ and its action on elements of $E$ is denoted by $\bar{u}\dual {\bf a}v$. 
\subsubsection{Projections}
If $F\subset E$ are two vector spaces we say that $\pi: E\to F$ is a projection if  $\pi^{2}= \pi$ and  $\Ran \pi= F$.

\subsubsection{Operators on quotient spaces}\label{sec0.2.1}

Let $F_i\subset E_i$, $i=1,2$ be vector spaces and let $a\in L(E_1, E_2)$. Then the induced map
\[
[a]\in L( E_1/F_1, E_2/F_2),
\]
 is
\begin{equation}
\label{e0.01}
\begin{aligned}
1)&\quad \hbox{well-defined if }a E_1\subset E_2\hbox{ and }a F_1\subset F_2,\\
2)&\quad \hbox{injective iff }a^{-1}F_2=F_1,\\
3)&\quad \hbox{surjective iff }E_2=a E_1+F_2.
\end{aligned}
\end{equation}

\subsubsection{Sesquilinear forms on quotients}\label{sec0.2.2} 
Let now $F\subset E$ be vector spaces and let ${\bf c}\in L( E, E^*)$. We denote by $F^\circ\subset E^*$ the annihilator of $F$. Then the induced map
\[
[{\bf c}]\in L( E/F,(E/F)^*),
\]
 is
\begin{equation}
\label{e0.02}
\begin{aligned}
1)&\quad \hbox{well-defined if }{\bf c}E\subset F^{\circ},\quad F\subset\Kerop {\bf c},\\
2)&\quad \hbox{non-degenerate iff }F=\Kerop{\bf c}.
\end{aligned}
\end{equation}
If ${\bf c}$ is hermitian or anti-hermitian then the condition $F\subset\Kerop {\bf c}$ implies  ${\bf c}E\subset F^{\circ}$ (and vice versa).

\subsubsection{Vector bundles}\label{sec01.1.1}
Let  $V\overset{\pi}{\to}M$ be  a finite rank complex vector bundle  over a smooth manifold $M$.  

- If $\Sigma\subset M$ is a smooth manifold we denote by $V|_{\Sigma}\overset{\pi}{\to}\Sigma$ the restriction of $V$ to $\Sigma$.

- We denote by $\cinf(M; V)$, resp.~$\coinf(M; V)$ the space of smooth, resp.~compactly supported smooth sections of $V$.  
 
 - We denote by $\cD'(M; V)$, resp.~$\cE'(M; V)$ the space of distributional, resp.~compactly supported distributional  sections of $V$.

- We use the same notation if $V$ is a finite dimensional vector space, i.e.~we write simply $V$ instead of the trivial vector bundle $M\times V$.

- We denote by  $V_{k}\defeq\cc\otimes^{k}_{\rm s}T^{*}M$ 
 the complex bundle of symmetric $(0,k)$-tensors over $M$.

\subsubsection{Spacetimes}
We use the convention  $(-,+,\dots,+)$ for the Lo\-rent\-zian signature. 

- If $(M, \rg)$ is a spacetime, we denote by $J_{\pm}(K)$ the future/past causal shadow of  $K\subset M$.

- If $M$ is a globally hyperbolic spacetime  with $V\overset{\pi}{\to}M$  a finite rank vector bundle, we   denote by $\cinf_{\rm sc}(M;V)$ the space of space-compact sections, i.e.~sections in $\cinf(M;V)$ with compactly supported restriction to a Cauchy surface.

\subsubsection{Distributional kernels and wavefront sets}\ 

- If $u\in \cD'(M; V)$ we denote by $\WF (u)\subset T^{*}M\setminus\zero$ its {\em wavefront set}, which is invariantly defined using local trivializations of $V$. 

- If $V_{i}\overset{\pi}{\to}M_{i}$ are two vector bundles as above and $a: \coinf(M_{1}; V_{1})\to \cD'(M_{2}; V_{2})$ is linear continuous, then $a$ admits a distributional kernel, still denoted by $a\in \cD'(M_{2}\times M_{1}; V_{2}\boxtimes V_{1})$.  

- We denote by $\WF(a)'\subset (T^{*}M_{2}\times T^{*}M_{1})\setminus \zero$ its {\em primed} wavefront set, defined by 
  \[
  \Gamma'= \{((x_{2},\ \xi_{2}),\ (x_{1},\ -\xi_{1})):\ ((x_{2},\ \xi_{2}),\ (x_{1},\ \xi_{1}))\in \Gamma\}\hbox{ for }\Gamma\subset T^{*}M_{2}\times T^{*}M_{1}.
  \]

 \subsubsection{Symbol classes}\label{secnot.1.1}
 We denote by $S^{m}(\rr^{n})$ for $m\in \rr$ the space of functions in $\cinf(\rr^{n})$  such that
 \[
 \p_{x}^{\alpha}u(x)\in O(\jx^{m- |\alpha|}),\ \ \alpha\in \nn.
 \]
 Using the polar coordinates $r= |x|,\quad \omega= \dfrac{x}{|x|}$ in $\{|x|\geq 1\}$ this is equivalent to
 \[
 \p_{r}^{\alpha}\p_{\omega}^{\beta}u(r, \omega)\in O(r^{m- \alpha}),\ \ \alpha\in \nn\hbox{ in }\{|x|\geq 1\}.
 \]
A symbol $u\in S^{m}(\rr^{n})$ is classical if there exist $u_{k}\in S^{m-k}(\rr^{n})$ homogeneous of degree $m-k$ in $|r|\geq 1$ such that for each $N\in \nn$ $u-\sum_{j=0}^{N}u_{k}\in S^{m-N-1}(\rr^{n})$. We denote by $S^{m}_{\cl}(\rr^{n})$ the space of classical symbols. 

Finally for $r>0$ we denote by $S^{m}_{r}(\rr^{n})$ the space $S^{m}_{\cl}(\rr^{n})+ S^{m-r}(\rr^{n})$. Symbols in $S^{m}_{r}(\rr^{n})$ have a principal part which is a function on $\rr^{n}$ homogeneous of degree $m$. We will only use the case $r=1$ in the sequel.

The spaces $S^{m}_{\cl}(\rr^{n})$, $S^{m}(\rr^{n})$ resp.~$S^{m}_{r}(\rr^{n})$ correspond to the spaces $x^{-m}C^{\infty}(M)$, $x^{-m}\cA^{\infty}(M)$ resp.~$x^{-m}\cA^{\infty}_{r}(M)$ introduced below in \ref{not:conormal-classes} if we take $M=\overline{\rr^{n}}$ the radial compactification of $\rr^{n}$.

\subsubsection{Tensor valued symbol classes}\label{not:tensor-valued-classes}

We will need to consider not only spaces of scalar functions  but also spaces of sections of $V_{k}= \cc\otimes_{\rm s}^{k}T^{*}\rr^{n}$.  We set $d_{k}= {\rm dim}V_{k}$.

To this end  we identify $V_{k}$ with $\rr^{n}\times \cc^{d_{k}}$ using the frame obtained from $(dx^i)$ which form a  smooth frame of the \emph{scattering cotangent bundle}  ${}^{\rm sc}T^*\overline{\rr^n}$. 

Near $\partial\overline{\rr^n}$ we write $\rho=|x|^{-1}$. We use these scattering frames to define tensor-valued symbol classes on  $\rr^{n}$: if
\[
        u=\sum_{I=(i_{1}, \dots, i_{k})} u_I\,dx^{i_{1}}\otimes\cdots dx^{i_{k}},
\]
then membership of $u$  in $S^{m}(\overline{\rr^n}; V_{k})$ etc, means membership of all scalar coefficients $u_I$ in the corresponding scalar class.

   \subsubsection{Differential operators}\label{secnot.1.1b}
 
 We denote by $\Diff_{}^{m}(\rr^{n})$  resp.~$\Diff_{1}^{m}(\rr^{n})$ for $m\in \nn$ the space of differential operators $P$ of order $m$ on $\rr^{n}$ such that near $r=\infty$ 
 \[
 P= \sum_{|\alpha|+ |\beta|\leq m}a_{\alpha, \beta}(r, \omega)\p_{r}^{\alpha}(r^{-1}\p_{\omega})^{\beta},\ \ a_{\alpha, \beta}\in S^{|\alpha|+|\beta|-m}(\rr^{n})\hbox{ resp.~}
S_{1}^{|\alpha|+|\beta|-m}(\rr^{n}).
 \]
 The space $\Diff_{}^{m}(\rr^{n}; V_{k})$ resp.~$\Diff_{1}^{m}(\rr^{n}; V_{k})$ is defined similarly, by requiring that $a_{\alpha,\beta}\in S^{|\alpha|+|\beta|-m}(\rr^{n}; L(V_{k}))$ resp.~$S_{1}^{|\alpha|+|\beta|-m}(\rr^{n}; L(V_{k}))$, again using the frame obtained from $(dx^{i})$.

These orders reflect the fact that each radial or rescaled angular derivative lowers the growth order by one: every term of $P$ maps $S^{\ell}$ into $S^{\ell-m}$. In other words, $r^{m}P$ has bounded conormal coefficients when written in terms $r\p_r$ and $\p_\omega$ (as in the $\b$-framework introduced in the next subsections). For example, the Euclidean Laplacian belongs to $\Diff^2_1$:
	\[
	-\Delta=-\p_r^2-\frac{n-1}{r}\p_r-r^{-2}\Delta_{\bS^{n-1}}
	=r^{-2}\bigl(-(r\p_r)^2-(n-2)r\p_r-\Delta_{\bS^{n-1}}\bigr).
	\]
	The first order coefficient has order $-1$, as required when $m=2$.

 \subsubsection{Manifolds with boundary}\label{secnot.1.2}
Here and in the following subsections $M$ is a smooth $n$-dimensional manifold with compact boundary $\p M= Y$ and $x$ a boundary defining function. 

To keep the discussion simple we will fix the boundary definition function $x$, and will not try to study the invariance of various objects introduced below under a change of boundary defining function.

We identify a neighborhood of $\p M$ in $M$ with $\open{0, 2\epsilon}_{x}\times Y$ and set $M_{0}= M\setminus \open{0, \epsilon} \times Y$, $M_{\infty}=\open{0, 2\epsilon} \times Y$. We can view $M$ as the manifold obtained by gluing $M_{0}$ with $M_{\infty}$ over $\open{\epsilon/2, 3\epsilon} \times Y$.

We denote by $M_{\norm}$ the Cartesian product
\[
M_{\norm}= \open{0, +\infty}_{x}\times Y, 
\]
which serves as the  model space for $M$ near $\{x=0\}$. The dilation group $\{T_{\lambda}\}_{\lambda\geq 0}$ acts on functions on $M_{\norm}$ by $T_{\lambda}:u(x, y)\mapsto  u(\lambda x, y)$.

\subsubsection{Conormal function classes}\label{not:conormal-classes}

We denote by $\cinf(M)$ the space of functions on $M$ smooth up to $\{x=0\}$. We denote by $\cA^{\infty}(M)$ the space of smooth functions on $M^{\circ}$  such that 
\[
(x\p_{x})^{\alpha}\p_{y}^{\beta}u\in L^{\infty}(M_{\infty})\  (\alpha, \beta)\in\nn^{n}\hbox{ over }M_{\infty},
\]
i.e.~the space of functions {\em conormal} to $\{x=0\}$. We write $\cA^{\infty,r}(M)=x^r\cA^{\infty}(M)$.
  
We denote by $\cA_{r}^{\infty}(M)$ for $r>0$ the space of smooth functions on $M^{\circ}$ such that 
\[
u(x, y)= u_{\norm}(y)+ x^{r}v(x, y)\hbox{ over }M_{\infty},\ \ \hbox{ for }u_{\norm}\in \cinf(Y),\ v\in \cA^{\infty}(M).
\]
We have $\cinf(M)\subset \cA_1^{\infty}(M)$, and we will only use the case $r=1$ in the sequel.

\subsubsection{{\em b}-cotangent bundle}\label{not:b-tensor-bundles}
The space of sections of the {\em $\b$-tangent bundle} $\be TM$ is the $\cinf(M)$ module generated by the frame $(x\p_{x}, \p_{y})$, i.e.~the smooth vector fields on $M$ tangent to $\p M$. Similarly the space of sections of the {\em $\b$-cotangent bundle } $\be T^{*}M$ is the $\cinf(M)$ module generated by $(\dfrac{dx}{x}, dy)$.

We define the space $\cinf(M; \be T^{*}M)$ of sections of $\be T^{*}M$ with coefficients smooth up to the boundary, and similarly with $\cinf$ replaced by other function spaces, using the frames $(x\p_{x}, \p_{y})$, resp.~$( \dfrac{d x}{x}, dy)$.

Over $M_{\infty}$, the coordinates on $\be T^{*}M$ associated   to the frame $(\dfrac{dx}{x}, dy)$ are denoted by $(x, y, \zeta)$,   $\zeta= (\xi, \eta)$, corresponding to writing $\omega\in \be T^{*}M$ as $\xi\frac{dx}{x}+ \eta dy$.

\subsubsection{{\em b}-differential operators}
\label{not:b-diff}

We denote by  $\Diff^{m}_{\b}(M)$  the space of $\b$-differential operators. We recall that $P\in \Diff^{m}_{\b}(M)$ if $P\in\Diff^m(M^{\circ})$ and over $M_{\infty}$   
\[P= \sum_{k+ |\alpha|\leq m}a_{k \alpha}(x, y)D_{y}^{\alpha}(xD_{x})^{k},\ \ a_{k, \alpha}\in \cinf(M).
\]
Similarly we say that $P\in \Diff^{m}_{\b 1}(M)$ if over $M_{\infty}$ it has the same form but with $a_{k, \alpha}\in \cA^{\infty}_{1}(M)$.

In both cases the {\em normal operator} of $P$ denoted by $P_{\norm}$ is the dilation invariant differential operator acting on $M_{\norm}$:
\[
P_{\norm}= \sum_{k+ |\alpha|\leq m}a_{k \alpha\norm}(y)D_{y}^{\alpha}(xD_{x})^{k},
\]
where we recall that $a_{\norm}(y)= a(0, y)$.

\subsubsection{The cylinder picture}
It is convenient to set $x= \e^{-s}$ over $M_{\infty}$, so that $ \frac{dx}{x}= -ds$ and $x\p_{x}= - \p_{s}$. 

We set then  $M_{\infty\:{\rm cyl}}=\open{{\rm -}\ln(2\epsilon), +\infty}\times Y_{y}$ and we denote by $M_{\rm cyl}$ the manifold obtained by gluing $M_{0}$ with $M_{\infty\:{\rm cyl }}$. The corresponding identification $M\tosim M_{\cyl}$ will be loosely called the {\em cylinder map}. 

The image of $M_{\norm}$ under the cylinder map is $M_{\norm\:\cyl}= \rr_{s}\times Y$. Under the cylinder map  the dilation group $\{T_{\lambda}\}_{\lambda\geq 0}$ becomes  the translation group  in $s$ on $M_{\norm\:\cyl}$.

Under the cylinder map $\cA^{\infty}(M)$   is identified with  the space $\cinf_{\rm u}(M_{\cyl})$ of smooth functions on $M_{\rm cyl}$ uniformly bounded with all derivatives.

\subsubsection{Densities }\label{sec9.2.1}
We equip $M$ with a smooth density $d\mu$ such that over $M_{\infty}$
\[
d\mu= m(x, y)\dfrac{dx}{x}d^{n-1}y,\ \ m(x, y)\in \cA^{\infty}_{1}(M).
\]
The corresponding density on $M_{\rm cyl}$ is denoted by $d\mu_{\cyl}$. 

$M_{\norm}$ is equipped with the dilation invariant density
\[
d\mu_{\norm}= m_{\norm}(y)\dfrac{dx}{x}d^{n-1}y.
\]

\subsubsection{$\b$-Sobolev spaces}\label{sec9.2.3}
For $m\in \rr$ we denote by  $H^{m}(M_{\rm cyl})$  the usual Sobolev spaces on $(M, d\mu_{\rm cyl})$ and by $H^{m}_{\b}(M)$
its image under the cylinder map. For $m\in \nn$ we have \[
u\in H^{m}_{\b}(M)\hbox{ iff }u_{|M_{0}}\in H^{m}(M_{0}),\quad (x\p_{x})^{\alpha}\p_{y}^{\beta}u_{|M_{\infty}}\in L^{2}(M_{\infty}, d\mu),\ \ \alpha+ |\beta|\leq m.
\]

For $m, r\in \rr$  we set $H^{m, r}_{\b}(M)= x^{r}H^{m}_{\b}(M)$ equipped with its natural  norm.  The space $H^{0}_{\b}(M)$ is often denoted simply by $L^{2}_{\b}(M)$. We also write $H^{\infty,r}_{\b}(M)=\bigcap_{m\in\rr}H^{m,r}_{\b}(M)$.

\subsubsection{Radial compactification of $\rr^n$}
\label{not:rad-compact}
We will mostly  apply these notions when $M=\overline{\rr^n}$, the radial compactification of $\rr^n$ (or a subset of it).  
Again we will  need to consider also  spaces of sections of $V_{k}$ which we identify with $M\times \cc^{d_{k}}$ as in  \ref{not:tensor-valued-classes}.

 We use this identification to  
define the spaces $\cinf(M; V_{k})$, $\cA^\infty(M; V_{k})$, $\cA^{\infty,r}(M; V_{k})$,
\hfill\linebreak $\rho^r\cinf(M; V_{k})$ or $H^{m, r}_{\b}(M; V_{k})$.

\section{Linearized gravity}\label{sec1}\init
In this section we review the quantization of linearized gravity, following \cite{HS} and \cite{GW1}. 

\subsection{Notation} \label{sec1.1}
 Let $(M,\rg)$ be an $n$-dimensional Lorentzian spacetime satisfying the Einstein equation, i.e.~$\Ric=0$ in the notation discussed later. Often  $(M,\rg)$ will be  obtained by restricting a globally hyperbolic space-time to a neighborhood of a Cauchy surface $\Sigma$.
 
\subsubsection{Hermitian forms on tensors}\label{sec1.1.2}
 We denote by  $V_{k}\defeq\cc\otimes^{k}_{\rm s}T^{*}M$ 
 the complex bundle of symmetric $(0,k)$-tensors   equipped with the non-degenerate Hermitian form
\begin{equation}
\label{eq:uvk}
(u| v)_{V_{k}}\defeq k! \,\bar{u}\dual (\rg^{\otimes k})^{-1}v.
\end{equation}
The $k!$ normalization differs from the most common convention, but  has the advantage that various expressions involving adjoints are more symmetric.
If $U\subset M$ is open, we set
\begin{equation}
\label{uvkU}
(u|v)_{V_{k}(U)}=  \int_{U}(u(x)| v(x))_{V_{k}}\dVol_{\rg}, \ \ u,\ v\in \coinf(U;V_{k}),
\end{equation} 
and if  $\Sigma\subset M$ is a Cauchy surface, we set
\beq\label{uvksig}
(u|v)_{V_{k}(\Sigma)}=  \int_{\Sigma}(u(x)| v(x))_{V_{k}}\dVol_{\rh}, \ \ u,\ v\in \coinf(\Sigma;V_{k}),
\eeq
where $\dVol_{\rh}$ is the induced density on $\Sigma$.
\subsubsection{Trace reversal}\label{blurck}
 We set:
$I\defeq \one - \frac{1}{4}|\rg)(\rg|$ 
where
\[
(\rg| : u_{2}\mapsto (\rg| u_{2})_{V_{2}}, \quad |\rg): u_{0}\mapsto u_{0}\rg.
\]
Thus, on two-tensors,
\[
        (Iu)_{\mu\nu}=u_{\mu\nu}-\frac12\rg_{\mu\nu}u^\rho{}_{\rho}.
\]
We have $I= I^{*}$ for $(\cdot| \cdot)_{V_{2}}$, $I\rg=(1-\frac{n}{2})\rg$, $I$ is invertible for $n\neq 2$. In the physical case $n=4$ we have $I^{2}=\one$ and $I$ is the orthogonal symmetry with respect to $\cc\rg$ and called the  \emph{trace reversal}.

\subsubsection{Physical Hermitian forms}\label{not:physical-hermitian-forms}
Using $I$ we can define the {\em physical Hermitian forms} on the fibers of $V_{k}$ as
\[
(u| v)_{I,V_{1}} \defeq (u|v)_{V_{1}},\quad (u| v)_{I,V_{2}} \defeq (u| I v)_{V_{2}},
\]
and denote by $(\cdot| \cdot)_{I, V_{k}(U)}$ the corresponding analogs of $(\cdot| \cdot)_{V_{k}(U)}$. We denote by $A^{\star}$ the formal adjoint of $A$ for $(\cdot| \cdot)_{I, V_{k}(M)}$ to distinguish it from the formal adjoint $A^{*}$ for $(\cdot| \cdot)_{V_{k}(M)}$.

 \subsection{Operators on tensors}\label{sec1.2}
We denote by $d$ the symmetric differential and by $\delta$ its formal adjoint for $(\cdot| \cdot)_{V_{k}(M)}$, see e.g.~\cite[Subsect.~2.2]{G1}. Indices are raised with $\rg^{-1}$ and repeated indices are contracted. With the normalization of the Hermitian forms in \eqref{eq:uvk}, the basic tensorial formulas are, for $f\in\cinf(M)$, $w\in\cinf(M;V_1)$ and $u\in\cinf(M;V_2)$,
\[
\begin{aligned}
        (df)_{\mu}&=\nabla_\mu f,\quad
        (dw)_{\mu\nu}=\frac12(\nabla_{\mu}w_{\nu}+\nabla_{\nu}w_{\mu}),\\
        \delta w&=-\nabla^\mu w_\mu,\quad
        (\delta u)_\mu=-2\nabla^\nu u_{\nu\mu}.
\end{aligned}
\]

We set
 \beq\label{e0.3bip}
 D_{1}= - \square_{1},\quad
D_{2}= - \square_{2}+ 2 \Riem_{\rg},
 \eeq
 where $-\square_{k}$ is the connection d'Alembertian and $\Riem_{\rg}$ denotes the curvature action on symmetric two-tensors, with convention $(\Riem_{\rg}u)_{\mu\nu}=\mathbf{R}_{\mu}{}^{\rho}{}_{\nu}{}^{\sigma}u_{\rho\sigma}$, see e.g.~\cite[2.3.2]{G1}. Explicitly,
\[
\begin{aligned}
        (D_1w)_\mu&=-\nabla^\rho\nabla_\rho w_\mu,\\
        (D_2u)_{\mu\nu}&=-\nabla^\rho\nabla_\rho u_{\mu\nu}+2 \mathbf{R}_\mu{}^\rho{}_{\nu}{}^\sigma u_{\rho\sigma}.
\end{aligned}
\]
$D_{k}$ are hyperbolic operators with principal symbol $\xi\dual \rg^{-1}\xi\one_{V_{k}}$.  We will denote by $G_{k}=G_{k{\rm ret}}-G_{k{\rm adv}}$ their causal propagators.
The \emph{linearized Einstein equations} are
 \[
 Pu=0,\quad P\defeq D_{2}- I\circ d\circ \delta.
 \]
The operator producing pure gauge solutions is $K\defeq I\circ d$ and $K^{\star}=\delta$. In coordinates,
\[
\begin{aligned}
        (Kw)_{\mu\nu}
        &=\frac12(\nabla_\mu w_\nu+\nabla_\nu w_\mu)-\frac12\rg_{\mu\nu}\nabla^\rho w_\rho,\\
        (K^\star u)_\mu&=-2\nabla^\nu u_{\nu\mu},\\
        (Pu)_{\mu\nu}
        &=-\nabla^\rho\nabla_\rho u_{\mu\nu}
          +2\mathbf{R}_\mu{}^\rho{}_{\nu}{}^\sigma u_{\rho\sigma}
          +\nabla_\mu\nabla^\rho u_{\rho\nu}
          +\nabla_\nu\nabla^\rho u_{\rho\mu}
          -\rg_{\mu\nu}\nabla^\rho\nabla^\sigma u_{\rho\sigma}.
\end{aligned}
\]
Since by assumption $\Ric=0$, we have the following identities:
\[
\begin{aligned}
D_{2}&=P+K\circ K^{\star},\quad D_{1}=K^{\star}\circ K,\\
P&=P^{\star}\neq P^{*},\quad P\circ K=0,\\
D_{1}&=D_{1}^{*},\quad D_{2}=D_{2}^{*}=D_{2}^{\star},\\
D_{2}\circ K&=K\circ D_{1},\quad K^{\star}\circ D_{2}=D_{1}\circ K^{\star},\quad [D_{2}, I]=0.
\end{aligned}
\]
The condition $K^\star u=0$ is the linearized \emph{de Donder} or \emph{harmonic gauge}. 

\subsection{Phase spaces}\label{sec1.6}
If $A$ is a differential operator acting between sections of some vector bundles,  $\Ran_{\c}A$,  $\Ker_{\c}A$, resp.~$\Ran_{\sc}A$, $\Ker_{\sc}A$, denote the range, kernel of $A$ restricted to  sections with compact, resp.~space-compact supports.
\subsubsection{On-shell phase spaces}
The natural  phase space for linearized Einstein equations, corresponding to `solutions modulo gauge' is the quotient space
\[
\dfrac{\Ker_{\sc}P}{\Ran_{\sc}K}.
\]
Let
\[
\bar{u}\dual \rq_{P\,\sol}u\defeq (u| [\one_{J_{+}(\Sigma)}, \i P]u)_{I, V_{2}(M)},\ \ u\in \Ker_{\sc}P.
\]
where $\Sigma$ is a Cauchy surface in $(M, \rg)$. It can be shown, see e.g.~\cite[Sect.~5]{HS} that $\rq_{P\,\sol}$ is Hermitian, independent on the choice of $\Sigma$ and  that  $[\rq_{P\,\sol}]$ is well defined on $\dfrac{\Ker_{\sc}P}{\Ran_{\sc}K}$.

The corresponding phase space of `gauge-fixed' solutions is
\beq\label{eq:gf}
\dfrac{\Ker_{\sc}D_{2}\cap \Ker_{\sc}K^{\star}}{K\Ker_{\sc}D_{1}}.
\eeq
Let
\[
\bar{u}\dual \rq_{I, 2\, \sol}u\defeq (u| [\one_{J_{+}(\Sigma)}, \i D_{2}]u)_{I, V_{2}(M)},\ \ u\in \Ker_{\sc}D_{2}.
\]
$\rq_{I, 2\,\sol}$ is Hermitian, independent on the choice of $\Sigma$ and   $[\rq_{I, 2\,\sol}]$ is well defined on \eqref{eq:gf}. One can show, see e.g.~\cite[Subsect.~2.4]{GW1} that if $\iota:\Ker_{\sc}D_{2}\cap \Ker_{\sc}K^{\star}\to \Ker_{\sc}P$ is the canonical injection then
\[
[\iota]: \Big(\dfrac{\Ker_{\sc}D_{2}\cap \Ker_{\sc}K^{\star}}{K\Ker_{\sc}D_{1}}, [\rq_{I, 2 \,\sol}]\Big)\to \Big(\dfrac{\Ker_{\sc}P}{\Ran_{\sc}K}, [\rq_{P\,\sol}]\Big)
\]
is  well defined and  an isomorphism of Hermitian spaces.

\subsubsection{Off-shell phase spaces}
One can equivalently work with off-shell phase spaces consisting of compactly supported sections. In fact, see e.g.~\cite{HS}, one can show that if $G_{2}$ is the causal propagator for $D_{2}$ then
\[
\bar{u}\dual {\bf Q}_{I,2}u\defeq \i(u| G_{2}u)_{I, V_{2}(M)}, 
\]
is Hermitian on $\coinf(M; V_{2})$ and that $[{\bf Q}_{I, 2}]$ is well defined on $\dfrac{\Ker_{\c}K^{\star}}{\Ran_{\c}P}$. 

Moreover the maps
\[
 \Big(\dfrac{\Ker_{\c} K^{\star}}{\Ran_{\c} P}, [{\bf Q}_{I,2}]\Big)\overset{[G_{2}]}{\to} \Big(\dfrac{\Ker_{\sc} P}{\Ran_{\sc} K}, [\rq_{P\,\sol}]\Big),
 \]
\[
\Big(\dfrac{\Ker_{\c} K^{\star}}{\Ran_{\c} P}, [{\bf Q}_{I,2}]\Big)\overset{[G_{2}]}{\to}\Big(\dfrac{\Ker_{\sc} D_{2}\cap \Ker_{\sc}K^{\star}}{K\Ker_{\sc}D_{1}}, [\rq_{I, 2\,\sol}]\Big)
 \]
are isomorphisms of Hermitian spaces.

\subsubsection{Cauchy data phase spaces}\label{sec1.6.2}
Let $\Sigma\subset M$ a smooth space-like Cauchy surface.  
We set
\beq\label{e00.3a}
\varrho_{k}u= \col{u\traa{\Sigma}}{\i^{-1}\nabla_{\nu}u\traa{\Sigma}}= \col{f_{0}}{f_{1}},\ \ u\in \cinf_{\sc}(M; V_{k}),
\eeq
where $\nu$ is the future directed unit normal to $\Sigma$. 

Since $D_{k}$ are hyperbolic operators, the Cauchy problem for $D_{k}$ is well posed, and we define $\cU_{k}: \coinf(\Sigma; V_{k}\otimes \cc^{2})\to \Ker_{\sc}D_{k}$ by
\beq\label{e00.3}
\begin{cases}
D_{k}\cU_{k}= 0, \\
\varrho_{k}\cU_{k}= \one.
\end{cases}
\eeq
We associate to $K, K^{\star}$ and $I$ the corresponding operators acting on Cauchy data by
\[
K_{\Sigma}\defeq \varrho_{2}K \cU_{1},\quad K_{\Sigma}^{\dag}\defeq \varrho_{1}K^{\star}\cU_{2},\quad I_{\Sigma}\defeq \varrho_{2}I \cU_{2}.
\]
One defines Hermitian forms $\rq_{k}$ on $\coinf(\Sigma; V_{k}\otimes\cc^{2})$ by
\beq\label{lub}
(u_{k}| D_{k}v_{k})_{V_{k}(J_{\pm}(\Sigma))}- (D_{k}u_{k}| v_{k})_{V_{k}(J_{\pm}(\Sigma))}
\eqdef  \pm \i^{-1}\overline{\varrho_{k}u_{k}}\dual \rq_{k}\varrho_{k}v_{k}
\eeq
for all $u_{k}, v_{k}\in \coinf(M; V_{k})$. We set
\[
\rq_{I,1}= \rq_{1},  \quad \rq_{I,2}= \rq_{2}\circ I_{\Sigma},
\]
which corresponds to using the physical Hermitian forms in the lhs of \eqref{lub}. 
One can show, see \cite[Lem.~2.9]{GW1}, that $K_{\Sigma}^{\dag}\circ K_{\Sigma}=0$, and that $K_{\Sigma}^{\dag}$ is the adjoint of $K_{\Sigma}$ for $\rq_{I, k}$, i.e.~
\[
\bar{f}_{2}\dual \rq_{I, 2}K_{\Sigma}f_{1}=\bar{K_{\Sigma}^{\dag}f_{2}}\dual \rq_{I,1}f_{1},\ \ f_{k}\in \coinf(\Sigma; V_{k}\otimes \cc^{2}). 
\]
The crucial fact for us, see  \cite{GW1}, is  that the maps
\[
 \Big(\dfrac{\Ker_{\c} K^{\star}}{\Ran_{\c} P}, [{\bf Q}_{I,2}]\Big)\overset{[\varrho_{2}\circ G_{2}]}{\to} \Big(\dfrac{\Ker_{\c} K_{\Sigma}^{\dag}}{\Ran_{\c} K_{\Sigma}}, [\rq_{I,2}]\Big),
 \]
\[
\Big(\dfrac{\Ker_{\sc} D_{2}\cap \Ker_{\sc}K^{\star}}{K\Ker_{\sc}D_{1}}, [\rq_{I, 2\,\sol}]\Big)
\overset{[\varrho_{2}]}{\to}\Big(\dfrac{\Ker_{\c} K_{\Sigma}^{\dag}}{\Ran_{\c} K_{\Sigma}}, [\rq_{I,2}]\Big). 
\]
are isomorphisms of Hermitian spaces.

\subsection{Quantization}\label{sec1.7}

 The algebraic quantization of linear gauge theories is discussed in detail in \cite[Sect.~3]{GW1}. The algebraic framework reduces the quantization problem to showing the existence of physically relevant quantum states on the  CCR $*$-algebra $\CCR(\cV_{P},{\bf Q}_{P})$ where we set
 \[
 (\cV_{P},{\bf Q}_{P})= \Big(\dfrac{\Ker_{\c} K^{\star}}{\Ran_{\c} P}, [{\bf Q}_{I, 2}]\Big).
 \]
    The notions of CCR $*$-algebras, quasi-free states and covariances  are explained for example in \cite[Chap.~4]{G1}.

\subsubsection{Covariances}\label{sec1.7.1}
A quasi-free state on $\CCR(\cV_{P},{\bf Q}_{P})$ is determined by a pair $\boldsymbol{\Lambda}_{P}^{\pm}$ of {\em covariances}, i.e.~of Hermitian forms on $\cV_{P}$  such that \[
\begin{aligned}
i)&\quad \boldsymbol{\Lambda}^{\pm}_{P}= \boldsymbol{\Lambda}_{P}^{\pm*},\quad \boldsymbol{\Lambda}_{P}^{\pm}\geq 0,\\
ii)&\quad 
\boldsymbol{\Lambda}_{P}^{+}- \boldsymbol{\Lambda}_{P}^{-}={\bf Q}_{P}.
\end{aligned}
\]
We will consider quasi-free states $ \omega$ on  $\CCR(\cV_{P},{\bf Q}_{P})$ with covariances obtained  from a  pair of continuous Hermitian forms $\boldsymbol{\Lambda}_{2}^{\pm}$ on $\coinf(M; V_{2})$ (called  the {\em spacetime covariances} of $\omega$) by:
 \beq\label{e00.1}
 [\bar{u}]\dual \boldsymbol{\Lambda}_{P}^{\pm}[u]= \bar{u}\dual \boldsymbol{\Lambda}_{2}^{\pm}u,\ \ [u]\in \frac{\Ker_{\c} K^\star}{\Ran_{\c} P}.
 \eeq
\begin{lemma}\label{lemma.had}
 Suppose  that $\boldsymbol{\Lambda}_{2}^{\pm}\in L(  \coinf(M; V_{2}), \coinf(M; V_{2})^{*})$ are such that:
  \beq\label{defodefo}
\begin{aligned}
i)&\quad  D_{2}^{*} \circ \boldsymbol{\Lambda}^\pm_{2}=\boldsymbol{\Lambda}^\pm_{2}  \circ D_{2} =0 , \\ 
ii)&\quad \boldsymbol{\Lambda}_{2}^{+}-\boldsymbol{\Lambda}^{-}_{2} = {\bf Q}_{I,2} \hbox{ on }\Ker_{\c} K^{\star},\\
iii)&\quad \boldsymbol{\Lambda}_{2}^{\pm}=0\hbox{ on }\Ker_{\c} K^{\star}\times \Ran_{\c}K,\\
iv) & \quad \boldsymbol{\Lambda}_{2}^{\pm}= \boldsymbol{\Lambda}_{2}^{\pm*},\quad \boldsymbol{\Lambda}^\pm_{2}  \geq 0  \hbox{ on } \Ker_{\c} K^{\star}.
\end{aligned}
\eeq
Then $\boldsymbol{\Lambda}_{2}^{\pm}$ are the covariances  of a quasi-free state on $\CCR(\cV_{P},{\bf Q}_{P})$.
\end{lemma}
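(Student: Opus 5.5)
We need to show that \eqref{e00.1} defines Hermitian forms $\boldsymbol{\Lambda}_{P}^{\pm}$ on $\cV_{P}=\Ker_{\c}K^{\star}/\Ran_{\c}P$ satisfying i), ii) of \ref{sec1.7.1}. By the criterion \eqref{e0.02} on sesquilinear forms on quotients, it suffices to check that the restriction of $\boldsymbol{\Lambda}_{2}^{\pm}$ to $E=\Ker_{\c}K^{\star}$ is Hermitian and that $F=\Ran_{\c}P$ lies in its kernel. Here we use that $P$ maps $\coinf(M;V_{2})$ into $\Ker_{\c}K^{\star}$, since $K^{\star}\circ P=(P\circ K)^{\star}=0$ by the identities of Subsect.~\ref{sec1.2} and $P=P^{\star}$. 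Hence $F\subset E$.

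\textbf{Step 1: $\Ran_{\c}P$ lies in the kernel.} Let $u\in\Ker_{\c}K^{\star}$ and $v\in\coinf(M;V_{2})$. Since $P=D_{2}-K\circ K^{\star}$, we have
\[
\bar{u}\dual\boldsymbol{\Lambda}_{2}^{\pm}Pv=\bar{u}\dual\boldsymbol{\Lambda}_{2}^{\pm}D_{2}v-\bar{u}\dual\boldsymbol{\Lambda}_{2}^{\pm}K(K^{\star}v).
\]
The first term vanishes by i). The second vanishes by iii), because $K^{\star}v\in\coinf(M;V_{1})$, so that $K K^{\star}v\in\Ran_{\c}K$. By the Hermiticity in iv) on $E$, the form also vanishes when $Pv$ sits in the left slot. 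Therefore $F\subset\Ker\boldsymbol{\Lambda}_{2}^{\pm}|_{E}$, and by the remark after \eqref{e0.02} the induced forms $[\boldsymbol{\Lambda}_{2}^{\pm}]$ are well defined on $E/F$.

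\textbf{Step 2: the state conditions.} Hermiticity and positivity of $[\boldsymbol{\Lambda}_{2}^{\pm}]$ follow directly from iv), since $\bar{[u]}\dual\boldsymbol{\Lambda}_{P}^{\pm}[u]=\bar{u}\dual\boldsymbol{\Lambda}_{2}^{\pm}u\geq 0$ for $u\in E$. The CCR condition follows from ii) together with the definition ${\bf Q}_{P}=[{\bf Q}_{I,2}]$. That $[{\bf Q}_{I,2}]$ is itself well defined on $E/F$ was recalled in \ref{sec1.6} and is in any case consistent with what we have shown. Continuity of $\boldsymbol{\Lambda}_{2}^{\pm}$ is not needed for the algebraic statement. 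Finally, by the standard characterization of quasi-free states on CCR algebras (\cite[Chap.~4]{G1}), any pair of positive Hermitian forms on $\cV_{P}$ whose difference is ${\bf Q}_{P}$ is the pair of covariances of a unique quasi-free state.

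\textbf{Main obstacle.} There is no substantial obstacle. The only point requiring care is Step 1: one must use the decomposition $P=D_{2}-KK^{\star}$, which holds because $\Ric=0$, and the fact that gauge invariance iii) is only assumed against $\Ker_{\c}K^{\star}$ in the first slot. This is exactly why the argument restricts to $u\in E$ and uses Hermiticity on $E$ to handle the other slot.
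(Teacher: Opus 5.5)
Your proof is correct. The paper gives no proof of Lemma~\ref{lemma.had}: it presents it as part of the standard algebraic framework and defers to \cite[Sect.~3]{GW1}, so there is no separate argument to compare against. What you do is the standard direct verification via the quotient criterion \eqref{e0.02}:
\begin{itemize}
\item $K^{\star}P=(PK)^{\star}=0$, so $\Ran_{\c}P\subset\Ker_{\c}K^{\star}$;
\item the decomposition $P=D_{2}-KK^{\star}$, together with i) and iii), kills the right slot;
\item Hermiticity, applied only to elements of $\Ker_{\c}K^{\star}$ (legitimate since $Pv$ lies there), kills the left slot.
\end{itemize}
A small correction: $P=D_{2}-KK^{\star}$ holds by the definition $P=D_{2}-I\circ d\circ\delta$. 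The hypothesis $\Ric=0$ enters only through $P\circ K=0$, which you need for $K^{\star}P=0$.

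One point is worth noting. In Step~1 you only use that $\boldsymbol{\Lambda}_{2}^{\pm}$ vanishes on $\Ker_{\c}K^{\star}\times\Ran_{\c}KK^{\star}$. So your argument actually proves the lemma under the weaker hypothesis iii') of Remark~\ref{rem1.0}, which the paper attributes to \cite[Subsect.~2.7]{GW3} without proof.
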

\begin{remark}\label{rem1.0}
 Condition \eqref{defodefo} {\it iii)} is the {\em gauge invariance condition}. It turns out, see \cite[Subsect.~2.7]{GW3} that it can be replaced by the weaker condition
\[
iii')\quad \boldsymbol{\Lambda}_{2}^{\pm}=0\hbox{ on }\Ker_{\c} K^{\star}\times \Ran_{\c}KK^{\star},
\]
although this fact  will not be needed in this paper. Condition \eqref{defodefo} {\it iv)} is the {\em positivity condition}, implying the positivity of the state on  
$\CCR(\cV_{P},{\bf Q}_{P})$.

\end{remark}
\subsection{Hadamard condition} \label{sec1.7b}
The general consensus is that the  \emph{Hadamard condition}  singles out the physically meaningful states.   We recall that
 $$
 \cN=\{(x,\ \xi)\in T^*M\setminus\zero\,:\, \xi\cdot \rg^{-1}(x)\xi=0\}
 $$
 is  the  characteristic set of the wave operator on $(M,\rg)$, and 
$$
\cN^{\pm}=\cN\cap \{(x,\ \xi)\in T^*M\setminus\zero\, :\, \pm v\dual \xi>0\,\ \forall v\in T_xM\hbox{ future-directed time-like}\}
$$
are its  two connected components, corresponding to the upper/lower energy shells.

To formulate the Hadamard condition, we identify  the Hermitian forms $\boldsymbol{\Lambda}_{2}^{\pm}$  with distributional kernels  $\Lambda_{2}^{\pm}(\cdot, \cdot)\in \cD'(M\times M; L(V_{2}))$, called  {\em two-point functions}.  

This identification is   defined by the formal identity
\[
\bar{u}\dual \boldsymbol{\Lambda}^{\pm}_{2}u\eqdef\int_{M\times M}(u(x)| \Lambda_{2}^{\pm}(x, y)u(y))_{V_{2}}\dVol_{\rg}(x)\dVol_{\rg}(y),\ \ u\in \coinf(M; V_{2}).
\]
One can of course use other Hermitian forms on  the fibers of $V_{2}$ to do this identification, like for example $(\cdot| \cdot)_{I, V_{2}}$. This change amounts  to composing  $\Lambda_{2}^{\pm}(x, y)$ with smooth linear operators acting on the fibers of $V_{2}$ over $x$ and $y$ and does not change the Hadamard condition \eqref{defidefi} below.

\begin{definition}\label{defhadama}
  A quasi-free state $\omega$ on $\CCR(\cV_{P},{\bf Q}_{P})$ given by  covariances $\boldsymbol{\Lambda}_{2}^{\pm}$ as in Lemma~\ref{lemma.had} is {\em Hadamard} if in addition to \eqref{defodefo} it satisfies:
\beq\label{defidefi}
 \WF(\Lambda_{2}^{\pm})'\subset \cN^{\pm}\times \cN^{\pm}.
 \eeq
\end{definition}

\subsubsection{Hadamard condition on a Cauchy surface}\label{sec1.7.3}
One can equivalently consider Hermitian forms  $\boldsymbol{\lambda}^{\pm}_{2}$  on  the space of Cauchy data $\coinf(\Sigma; V_{2}\otimes \cc^{2})$ called {\em Cauchy surface covariances}.
Namely   assume that we have a pair of Hermitian forms $\boldsymbol{\lambda}_{2}^{\pm}$ on $\coinf(\Sigma; V_{2}\otimes \cc^{2})$ and  set
 \beq\label{e00.2b}
\boldsymbol{\Lambda}_{2}^{\pm}= (\varrho_{2}G_{2})^{*}\boldsymbol{\lambda}_{2}^{\pm}(\varrho_{2}G_{2}).
 \eeq
 The conditions on $\boldsymbol{\lambda}_{2}^{\pm}$ corresponding to \eqref{defodefo} are
 \beq
 \label{defodefo-cauchy}
 \begin{aligned}
i)&\quad \boldsymbol{\lambda}_{2}^{+}-\boldsymbol{\lambda}_{2}^{-} = \rq_{I,2} \hbox{ on }\Ker_{\c} K_{\Sigma}^{\dag},\\
ii)&\quad \boldsymbol{\lambda}_{2}^{\pm}= 0\hbox{ on }\Ker_{\c} K_{\Sigma}^{\dag}\times \Ran_{\c}K_{\Sigma},\\
iii) & \quad \boldsymbol{\lambda}_{2}^{\pm}= \boldsymbol{\lambda}_{2}^{\pm*},\quad \boldsymbol{\lambda}^\pm_{2}  \geq 0 \mbox{ on } \Ker_{\c} K_{\Sigma}^{\dag}.
\end{aligned}
 \eeq
 
Since $\rq_{I,2}$ is non-degenerate, we can set
\beq\label{e00.4}
\boldsymbol{\lambda}^{\pm}_{2}\eqdef \pm \rq_{I,2}\circ  c_{2}^{\pm}.
\eeq
\begin{proposition}\label{prop:states1} Suppose $c_2^\pm: \coinf(\Sigma;V_{2}\otimes \cc^{2})\to \cinf(\Sigma;V_{2}\otimes \cc^{2})$ is a pair of operators such that:
\beq\label{condit-hada}
\begin{aligned}
i)&\quad c_2^++c_2^-=\one,\\
ii)&\quad c_2^\pm : \Ran_{\c}K_{\Sigma}\to \Ran K_{\Sigma},\\
iii)&\quad \rq_{I,2}\circ  c_{2}^{\pm}= c_{2}^{\pm*}\circ  \rq_{I,2},\quad \pm \rq_{I,2} \circ  c^\pm_2 \geq 0,\quad \hbox{ on } \Ker_{\c} K_{\Sigma}^\dag.
\end{aligned}
\eeq

Then $\boldsymbol{\Lambda}_{2}^{\pm}$ given by \eqref{e00.2b} and \eqref{e00.4} are the covariances of a quasi-free state on $\CCR(\cV_{P},{\bf Q}_{P})$. Furthermore if for some neighborhood $\mathcal O$ of $\Sigma$ in $M$ we have: 
\[
{\it iv)}\ \WF(\cU_{2}\circ c_{2}^{\pm})'\subset (\cN^{\pm}\cup\cF)\times T^{*}\Sigma
\] over $\mathcal O\times \Sigma$,
where $\cF\subset T^{*}M$ is a conic set with $\cF\cap \cN= \emptyset$, 
then the associated state  is Hadamard. 
\end{proposition}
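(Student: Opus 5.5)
The plan is to check the Cauchy-surface conditions \eqref{defodefo-cauchy} for $\boldsymbol{\lambda}_{2}^{\pm}=\pm\rq_{I,2}\circ c_{2}^{\pm}$. Then I would transport them to spacetime via \eqref{e00.2b}, invoke Lemma~\ref{lemma.had}, and treat the wavefront set separately.

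\emph{Step 1: CCR, hermiticity and positivity.} By i), $\boldsymbol{\lambda}_{2}^{+}-\boldsymbol{\lambda}_{2}^{-}=\rq_{I,2}(c_{2}^{+}+c_{2}^{-})=\rq_{I,2}$, which is \eqref{defodefo-cauchy} i). The first part of iii) gives $\boldsymbol{\lambda}_{2}^{\pm*}=\pm c_{2}^{\pm*}\rq_{I,2}^{*}=\pm c_{2}^{\pm*}\rq_{I,2}=\boldsymbol{\lambda}_{2}^{\pm}$ on $\Ker_{\c}K_{\Sigma}^{\dag}$, using that $\rq_{I,2}$ is Hermitian. The second part of iii) is exactly positivity.

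\emph{Step 2: gauge invariance.} Take $f\in\Ker_{\c}K_{\Sigma}^{\dag}$ and $g=K_{\Sigma}h\in\Ran_{\c}K_{\Sigma}$. By ii), $c_{2}^{\pm}g=K_{\Sigma}h'$ for some $h'$. Then $\bar f\dual\rq_{I,2}K_{\Sigma}h'=\overline{K_{\Sigma}^{\dag}f}\dual\rq_{I,1}h'=0$ by the adjointness relation recalled in \ref{sec1.6.2}. The pairing $\bar g\dual\boldsymbol{\lambda}^{\pm}_2 f$ then vanishes by hermiticity. One point needs care: $h'$ need not be compactly supported, so the adjoint identity has to be justified with only $f$ compactly supported. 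This works by the support properties of the sesquilinear pairing.

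\emph{Step 3: transport to spacetime.} For $u\in\coinf(M;V_{2})$, the section $G_{2}u$ is a space-compact solution and $\varrho_{2}G_{2}u$ is compactly supported on $\Sigma$. Hence \eqref{e00.2b} makes sense, and $\boldsymbol{\Lambda}_{2}^{\pm}\circ D_{2}=0$ because $G_{2}D_{2}=0$. The maps $\varrho_{2}G_{2}$ send $\Ker_{\c}K^{\star}$ into $\Ker_{\c}K_{\Sigma}^{\dag}$ and $\Ran_{\c}K$ into $\Ran_{\c}K_{\Sigma}$, since $G_{2}K=KG_{1}$ by \eqref{e.bli3} and $K^{\star}G_{2}=G_{1}K^{\star}$. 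By the standard identity $\bar u\dual{\bf Q}_{I,2}u=\overline{\varrho_{2}G_{2}u}\dual\rq_{I,2}\varrho_{2}G_{2}u$, conditions \eqref{defodefo} i)–iv) follow from Steps 1–2, and Lemma~\ref{lemma.had} gives the state.

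\emph{Step 4: Hadamard property.} Write $G_{2}=\cU_{2}\,\rq_{2}'\,\varrho_{2}G_{2}$ up to the standard identity $(\varrho_{2}G_{2})^{*}\rq_{2}=\ldots$, so that $\Lambda_{2}^{\pm}=\pm(\cU_{2}c_{2}^{\pm})\circ(\text{smooth fiber maps})\circ(\varrho_{2}G_{2})$ and is a composition involving $\cU_{2}c_{2}^{\pm}$ on the left. Applying iv) and the composition theorem for wavefront sets shows that $\WF(\Lambda_{2}^{\pm})'$ over $\mathcal O\times\mathcal O$ is contained in $(\cN^{\pm}\cup\cF)\times(\cN^{\pm}\cup\cF)$, using hermiticity for the second factor. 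Since $D_{2}\Lambda_{2}^{\pm}=\Lambda_{2}^{\pm}D_{2}=0$, propagation of singularities confines the wavefront set to $\cN\times\cN$, which removes $\cF$. It also propagates the inclusion from $\mathcal O\times\mathcal O$ to $M\times M$ along bicharacteristics, preserving the sign components.

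I expect Step 4 to be the main obstacle, specifically the composition step: one must check that the composition of wavefront sets is legitimate, which needs properness and no zero covectors on the $\Sigma$ side. One must also make sure that mixed terms such as $\cN^{\pm}\times\cF$ are eliminated by the propagation argument.
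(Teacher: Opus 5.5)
Your Steps 1--3 are correct. They are the standard argument of \cite[Sect.~3.4]{GW1}, to which the paper defers, and your locality remark about the non-compactly supported $h'$ in Step 2 is the right point to check.

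The gap is in Step 4, where you obtain the sign of the second covector ``using hermiticity''. Hypothesis iii) gives $\boldsymbol{\lambda}_{2}^{\pm}=\boldsymbol{\lambda}_{2}^{\pm*}$ only on $\Ker_{\c}K_{\Sigma}^{\dag}$, which is how you yourself use it in Step 1. Hence $\boldsymbol{\Lambda}_{2}^{\pm}$ is Hermitian only on $\Ker_{\c}K^{\star}$. That says nothing about the wavefront set of the full kernel $\Lambda_{2}^{\pm}\in\cD'(M\times M;L(V_{2}))$, which is what the Hadamard condition concerns.

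Here is what the composition actually gives. Green's formula \eqref{lub} yields $(\varrho_{2}G_{2})^{*}\rq_{2}=\kappa\,\cU_{2}$ for a constant $\kappa$, and $\cU_{2}I_{\Sigma}=I\cU_{2}$. So $\Lambda_{2}^{\pm}=\pm\kappa\, I\,\cU_{2}c_{2}^{\pm}\varrho_{2}G_{2}$, with the fiber map on the left rather than between $c_{2}^{\pm}$ and $\varrho_{2}G_{2}$. Since $\Sigma$ is spacelike, $\WF(\varrho_{2}G_{2})'$ has no point with vanishing $\Sigma$-covector. Over $\mathcal{O}\times M$ you therefore get first covectors in $\cN^{\pm}\cup\cF$, but second covectors in $\cN\cup\zero$ of either sign.

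Note also that $D_{2}\Lambda_{2}^{\pm}=\Lambda_{2}^{\pm}D_{2}=0$ does not confine the wavefront set to $\cN\times\cN$. Microlocal ellipticity (not propagation) gives only $(\cN\cup\zero)\times(\cN\cup\zero)$. This removes $\cF$ but leaves possible components in $\cN^{\pm}\times\zero$. Hypothesis iv) does not exclude these, since it allows zero covectors in the $T^{*}\Sigma$ factor.

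The missing ingredient is the CCR on the whole space. Hypothesis i) is global, so $\boldsymbol{\lambda}_{2}^{+}-\boldsymbol{\lambda}_{2}^{-}=\rq_{I,2}$ on all of $\coinf(\Sigma;V_{2}\otimes\cc^{2})$. By the identity you quote in Step 3, $\boldsymbol{\Lambda}_{2}^{+}-\boldsymbol{\Lambda}_{2}^{-}={\bf Q}_{I,2}$ on all of $\coinf(M;V_{2})$, i.e.\ $\Lambda_{2}^{+}-\Lambda_{2}^{-}=\i IG_{2}$ as kernels.

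Applying the composition argument to both signs, every $(X,Y)\in\WF(\Lambda_{2}^{\pm})'$ with $X$ over $\mathcal{O}$ has $X\in\cN^{\pm}$. So such a point of $\WF(\Lambda_{2}^{+})'$ cannot lie in $\WF(\Lambda_{2}^{-})'$, and must therefore lie in $\WF(G_{2})'$, the bicharacteristic relation. This forces $Y\neq0$ with $Y$ on the null bicharacteristic through $X$, hence $Y\in\cN^{+}$; the minus sign is analogous.

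This replaces your hermiticity step and removes the $\cN^{\pm}\times\zero$ terms at the same time. Only after this should you use propagation in the first variable (every null bicharacteristic meets $T^{*}\mathcal{O}$, as $\Sigma$ is Cauchy) to pass from $\mathcal{O}\times M$ to $M\times M$. For the covariances actually built in Sect.~\ref{sec12}, hermiticity does hold on the whole space by \eqref{e8.12b}, so your argument would work there; but the proposition as stated does not assume it.
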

The proof of \eqref{condit-hada} is analogous to the one found in \cite[Sect.~3.4]{GW1}. The proof of the statement on the Hadamard condition can be found in \cite[Sect.~11.1]{G1}. 

In practice $c_{2}^{\pm}$ are constructed as {\em projections} and therefore called {\em Hadamard projectors}.

 \subsection{Conformal transformations}\label{sec1.8}
 The operators $D_{k}$ have principal symbol $\xi\dual\rg^{-1}\xi\one_{V_{k}}$ and it will be very useful to perform a conformal change of the metric, replacing $\rg$ by $\widehat{\rg}= c^{2}\rg$.
This subsection is only an algebraic reminder of how conformal changes act on the phase space.
%Another less important transformation, which can be skipped on a first reading,  consists in changing the Hermitian scalar products on the fibers of $V_{k}$.
%To this end  we fix a bundle isomorphism  $W_{k}: \coinf(M; V_{k})\to \coinf(M; V_{k})$ (to be chosen more specifically later on) and we denote by $\widehat{V}_{k}$ the bundle $V_{k}$ equipped with the fiber scalar product $(\cdot| \cdot)_{\widehat{V}_{k}}$ defined by
%\[
%(W_{k}u_{k}| W_{k}v_{k})_{\widehat{V}_{k}}\defeq (u_{k}|v_{k})_{V_{k}}.
%\]

Setting 
\[
U_{k}u\defeq c^{-n/2}u,\quad \
 (u_{k}| u_{k})_{\widehat{V}_{k}(M)}= \int_{M}(u_{k}|u_{k})_{V_{k}}d\vol_{\widehat{\rg}},
 \]
 
 we have
 \[
 (U_{k}u_{k}| U_{k}v_{k})_{\widehat{V}_{k}(M)}= (u_{k}| v_{k})_{V_{k}(M)},\ \ u_{k},\ v_{k}\in \coinf(M; V_{k}).
 \]
 We set
 \beq\label{e2.12r}
 \widehat{D}_{k}\defeq c^{-1}U_{k} D_{k} U_{k}^{-1}c^{-1},
 \eeq
 which is hyperbolic with principal symbol $\xi\dual\widehat{\rg}^{-1}\xi\one_{V_{k}}$, and formally self-adjoint for 
 $(\cdot| \cdot)_{\widehat{V}_{k}(M)}$.  We set also:
 \beq\label{e2.12}
 \widehat{K}\defeq c U_{2}K U_{1}^{-1}c^{-1},\quad \widehat{T}\defeq c^{-1}U_{2}KU_{1}^{-1}c, 
  \eeq
 and we have:
 \beq\label{e2.12m}
  \widehat{D}_{2}\circ \widehat{K}= \widehat{T}\circ \widehat{D}_{1}.
 \eeq

Let us define the Hermitian forms   $\widehat{\rq}_{\widehat{I},k}$ as in \ref{sec1.6.2}, replacing $ D_{k},I,  V_{k}$ by $\widehat{D}_{k}, \widehat{I}, \widehat{V}_{k}$.

We  fix  analogs of the Cauchy data maps $\varrho_{k}$ for $\widehat{D}_{k}$, denote them by $\widehat{\varrho}_{k}$, and define the maps $\mathcal{T}_{k}$ on $\coinf(\Sigma; V_{k}\otimes \cc^{2})$ by:
\beq\label{e2.12p}
\widehat{\varrho}_{k}cU_{k}u_{k}\eqdef \mathcal{T}_{k}\varrho_{k}u_{k},\ \ u_{k}\in C^{\infty}_{\sc}(M; V_{k}).
\eeq
If $\widehat{\cU}_{k}: \coinf(\Sigma; \widehat{V}_{k}\otimes \cc^{2})\to \Ker \widehat{D}_{k}$ is the solution of the associated Cauchy problem, we have
\begin{equation}
\label{e2.12w}
\widehat{\cU}_{k}= (cU_{k})\circ \cU_{k}\circ \mathcal{T}_{k}^{-1},
\end{equation}
and
\begin{equation}
\label{e2.12n}
\widehat{\rq}_{\widehat{I}, 2}= (\mathcal{T}_{2}^{-1})^{*}\rq_{I, 2}\mathcal{T}_{2}^{-1}.
\end{equation}

From \eqref{e2.12m} we see that $\widehat{K}: \Ker_{\sc}\widehat{D}_{1}\to \Ker_{\sc}\widehat{D}_{2}$ and we denote by $\widehat{K}_{\Sigma}$ the Cauchy operator associated to $\widehat{K}$ and by $\widehat{K}_{\Sigma}^{\dag}$ its adjoint for $\widehat{\rq}_{\widehat{I},k}$.
One has
\begin{equation}
\label{e2.12x}
\widehat{K}_{\Sigma}= \mathcal{T}_{2}\circ K_{\Sigma}\circ \mathcal{T}_{1}^{-1},\quad \widehat{K}_{\Sigma}^{\dag}= \mathcal{T}_{1}\circ K_{\Sigma}^{\dag}\circ \mathcal{T}_{2}^{-1}
\end{equation}

Let  $\boldsymbol{\lambda}_{2}^{\pm}$ be a pair of Hermitian forms on $\coinf(\Sigma; V_{2}\otimes \cc^{2})$ and let 
\beq\label{e2.12o}
\widehat{\boldsymbol{\lambda}}^{\pm}_{2}\defeq (\mathcal{T}_{2}^{-1})^{*}\circ \boldsymbol{\lambda}_{2}^{\pm}\mathcal{T}_{2}^{-1}.
\eeq
Using \eqref{e2.12n} we have:
\[
\widehat{c}_{2}^{\pm}= \mathcal{T}_{2}c_{2}^{\pm}\mathcal{T}_{2}^{-1}.
\]

Using \eqref{e2.12x} we see that the conditions in \eqref{defodefo-cauchy} for $\boldsymbol{\lambda}_{2}^{\pm}$ and for $\widehat{\boldsymbol{\lambda}}^{\pm}_{2}$ are equivalent. 

Similarly using \eqref{e2.12w} the  conditions  {\it iv)} in Prop.~\ref{prop:states1} for $c_{2}^{\pm}$ and $\widehat{c}_{2}^{\pm}$ are equivalent.

  \section{The background metric and its reduction}\label{sec2}\init
In this section we state our hypotheses on the background metric $\rg$ and put it in a form useful later on.
 \subsection{Hypotheses on the background metric}\label{sec2.1}
We consider a spacetime $(\rr^{n}, \rg)$, where $\rg$ is a Ricci-flat perturbation of the Minkowski metric $\boldsymbol{\eta}$ on $\rr^n$. We are mainly interested in the physical case $n=4$, but it is convenient to keep $n$ arbitrary in the sequel.

Let $x=(x_{0},\rx)$ be the canonical coordinates on $\rr^{n}$ and let $\Sigma=\{x_{0}=0\}$. A \emph{conic neighborhood} of $\Sigma$ is an open set $U$ containing
$$
\{x\in \rr^{n}: |x|<\delta\}\cup\{x\in \rr^{n}: |x_{0}|<\delta |x|\}
$$
for some $\delta>0$.

\begin{hypothesis}\label{hyp:bgmetric}
There exists a conic neighborhood $U$ of $\Sigma$ such that:
\begin{enumerate}[label={\rm(H\arabic*)},leftmargin=7ex]
\item The metric is asymptotically Minkowski of order $-1$ in $U$:
\begin{equation}
\label{e2.hyp1}
\rg=\boldsymbol{\eta}+\rg_{-1}\hbox{ in }U,\ \ \rg_{-1}\in S^{-1}(\rr^{n};V_{2}).
\end{equation}
\item The hypersurface $\Sigma$ is a spacelike Cauchy surface in $(\rr^{n},\rg)$.
\end{enumerate}
\end{hypothesis}

\begin{remark}\label{rem:bgmetric-existence}
Einstein metrics satisfying Hypothesis~\ref{hyp:bgmetric} do exist. In particular, the results of \cite{HV} imply that for $n=4$, for any $\delta>0$, there exist solutions of the Einstein equations which in a sufficiently small conic neighborhood of $\Sigma$ are of the form $\rg=\rg^{m}_{S}+\rg_{-1-\delta}$, where $\rg^{m}_{S}$ is, outside a compact region, the Schwarzschild metric of mass $m$. In a conic neighborhood of $\Sigma$, $\rg^{m}_{S}$ differs from the Minkowski metric by terms in $S^{-1}(\rr^{n};V_{2})$, while $\rg_{-1-\delta}\in S^{-1-\delta}(\rr^{n};V_{2})$. These arise from Einstein initial data on $\Sigma$ differing from the Schwarzschild data by a symbolic perturbation which decays faster than $1/r$.

Concretely, \cite[Thm.~1.8]{HV} states that if initial data on $\Sigma=\rr^3$ are given such that, compared to $\rg_S^m$, the induced metric is perturbed by an error in $H_{\mathrm{b}}^{\infty,1+\delta}(\overline{\rr^3})$ and the second fundamental form by an error in $H_{\mathrm{b}}^{\infty,2+\delta}(\overline{\rr^3})$, then in a conic neighborhood of $\Sigma$ the solution $\rg$ differs from $\rg_S^m$ by an error in $H_{\mathrm{b}}^{\infty,1+\delta}({}^mM)$, where ${}^mM$ is a particular compactification of $\rr^4$. We now explain why this is compatible with (H1)--(H2) of Hypothesis~\ref{hyp:bgmetric}.

First, since we work in an arbitrarily small conic neighborhood of $\Sigma$, we use the radial compactification associated with the spacelike radial variable
$$
r=(|\rx|^2-x_0^2)^{\frac12},\quad
\rho=r^{-1},\quad
\frac{x_0}{|\rx|}=\tanh t,
\quad \omega=\frac{\rx}{|\rx|}.
$$
Equivalently, $x_0=r\sinh t$ and $\rx=r\cosh t\,\omega$. This differs from the boundary coordinate used in some statements of \cite{HV}, where one works with $x_0/|\rx|$ and with a logarithmically modified time coordinate adapted to the mass parameter. In the conic region considered here these changes are conormal changes of boundary coordinates. For example, if $\tau=x_0/|\rx|=\tanh t$, then $d\tau=\cosh^{-2}(t)dt$, and replacing $\tau$ by $\tau+2m\rho\ln\rho$ only changes $\b$-vector fields and scattering one-forms by conormal lower order terms. Thus the regularity information from \cite{HV} is compatible with the symbolic assumptions used below, after the same harmless arbitrarily small loss in decay order as in the original formulation.

Second, since \cite{HV} solves the Einstein equation for the unknown dynamical metric in $L^2$-based spaces, results on regularity of the metric and assumptions on regularity of initial data are stated in terms of $H_{\mathrm{b}}^{\infty}({}^mM)$, rather than the $L^{\infty}$-based $\mathcal{A}^{\infty}(M)$ spaces we use here. Since $H_{\mathrm{b}}^{\infty,\epsilon}(M)\subset \mathcal{A}^{\infty,\epsilon}(M)\subset H_{\mathrm{b}}^{\infty}(M)$ for any $\epsilon>0$, conversion between these spaces leads to an arbitrarily small loss in the decay order of the metric perturbation. The perturbation therefore still decays strictly faster than the Schwarzschild leading term when expressed in our symbol spaces.
\end{remark}

\begin{remark}Our construction also applies under the weaker assumption that $\rg_{-1}\in S^{-\epsilon}(\rr^{n};V_{2})$ for some $0<\epsilon<1$. The required modification is to allow remainders of order $r^{-\epsilon}$ rather than $r^{-1}$ in the coefficient classes and the extended $\b$-pseudodifferential calculus (see Subsect.~\ref{sec9.4b}). The normal operators discussed in later sections are then unchanged, and the remainders still improve decay by a positive amount, which suffices for the compactness arguments in our proofs.\end{remark}
 
\subsection{Reduction of the background metric}\label{sec2.2}

\subsubsection{Beig--Schmidt coordinates near spacelike infinity}
In a conic neighborhood of $\Sigma$ we use the so-called Beig--Schmidt coordinates:
$$
        x_0=r\sinh t,\quad
        \rx=r\cosh t\,\omega,
        \quad \omega\in \bS^{n-2},
        \quad \rho=r^{-1}.
$$
Thus
$$
        r=(|\rx|^2-x_0^2)^{\frac12},
        \quad \frac{x_0}{|\rx|}=\tanh t,
        \quad \omega=\frac{\rx}{|\rx|}.
$$
Note that the variable $r$ is the Minkowski radial variable and not the spatial radius $|\rx|$ (except on $\Sigma$, where $r=|\rx|$). We denote by $\rh_{\bS^{d}}$ the canonical metric on $\bS^{d}$.

In these coordinates the Minkowski metric reads:
\begin{equation}
\label{e2.mink-hyp}
\begin{aligned}
\boldsymbol{\eta}
  &= -r^2dt^2+dr^2+r^2\cosh^2(t)\rh_{\bS^{n-2}} \\
  &= -\left(\frac{dt}{\rho}\right)^2+\boldsymbol{\delta}(t),
\end{aligned}
\end{equation}
where
$$
        \boldsymbol{\delta}(t)
        =dr^2+r^2\cosh^2(t)\rh_{\bS^{n-2}}
        =\rho^{-4}d\rho^2+\rho^{-2}\cosh^2(t)\rh_{\bS^{n-2}}.
$$
Equivalently,
\begin{equation}
\label{e2.mink-b-hyp}
        \boldsymbol{\eta}_{\b}:=\rho^2\boldsymbol{\eta}
        =-dt^2+\boldsymbol{\delta}_{\b}(t),
        \quad
        \boldsymbol{\delta}_{\b}(t)
        =\left(\frac{d\rho}{\rho}\right)^2+
        \cosh^2(t)\rh_{\bS^{n-2}}.
\end{equation}

We replace $t$ by $\chi(r)t + (1-\chi(r))x_0$, where $\chi\in C^{\infty}(\mathbb{R})$ with $\chi=0$ near 0 and $\chi(r)=1$ for $r>R$ for some $R$; the cutoff and its product with the exterior coordinate $t$ are extended by zero outside that chart. After this modification, the regions $\{|t|<\delta\}$ are conic neighborhoods of $\Sigma$ and the level sets of $t$ for small enough $|t|$ are spacelike hypersurfaces for $\mathbf{g}$. The expressions for the Minkowski metric above continue to hold in $\{r>R\}$.

\subsubsection{Compactification}
\label{sec:compactification}
We fix a conic neighborhood $U_{\delta}=\{|t|<\delta\}$ of $\Sigma$. We denote by $M$ the manifold obtained by smoothly gluing $U_{\delta}$ with
$$
        M_{\infty}=\{(t,\ \rho,\ \omega):0\leq \rho<1/2,
        |t|<\delta,\quad \omega\in\bS^{n-2}\},
$$
using the coordinate changes above on the overlap. We smoothly extend $\rho$ to a globally defined boundary defining function on $M$.

The closure of $\Sigma$, and similarly of any level set of $t$, is identified with the radial compactification of $\rr^{n-1}$. On the boundary of these level sets the $\b$-frames are
$$
        \rho\p_{\rho},\ \p_{\omega_i},
        \quad
        \frac{d\rho}{\rho},\ d\omega_i.
$$
The hypothesis $\rg-\boldsymbol{\eta}\in S^{-1}$ is equivalent to:
\begin{equation}
\label{e2.metric-b-assumption}
\begin{aligned}
        \rho^2\rg
        &=\boldsymbol{\eta}_{\b}
          +\rho\cA^{\infty}(M;\otimes^2_{\s}\be T^*M),\\
        \rho^{-2}\rg^{-1}
        &=\boldsymbol{\eta}_{\b}^{-1}
          +\rho\cA^{\infty}(M;\otimes^2_{\s}\be TM).
\end{aligned}
\end{equation}
\subsubsection{Conformal reduction}\label{sec:blub}
We now reduce $\rg$  to a split form. 
\begin{proposition}
\label{prop2.1}
There exist $\varepsilon>0$, a diffeomorphism
$$
        \chi:U_{\varepsilon}\to U_{\varepsilon},
        \quad \chi(0,\cdot)=\one_{\Sigma},
        \quad t\circ\chi=t,
$$
and a conformal factor $c\in\cA^{\infty}(M)$ such that
\begin{equation}
\label{eq:metric-reduced}
        \chi^*\rg
        =c^{-2}(-r^2dt^2+\rh(t)),
       \end{equation}
where $\rh(t)$ is a family of Riemannian metrics on the level sets of $t$. Moreover,
\begin{equation}
\label{e2.metric-reduced-asymptotics}
\begin{aligned}
        c&=1+\rho\cA^{\infty}(M),\\
        \rh_{\b}(t):=\rho^2\rh(t)
          &=\boldsymbol{\delta}_{\b}(t)
            +\rho\cA^{\infty}(M;\otimes_{\s}^{2}{}^{\b}T^{*}\Sigma),\\
        \rh_{\b}(t)^{-1}
          &=\boldsymbol{\delta}_{\b}(t)^{-1}
            +\rho\cA^{\infty}(M;\otimes_{\s}^{2}{}^{\b}T\Sigma).
\end{aligned}
\end{equation}
\end{proposition}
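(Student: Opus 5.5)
The plan is to build $\chi$ from the flow of the normalized gradient of $t$, so that in the new coordinates the curves $s\mapsto\chi(s,y)$ are $\rg$-orthogonal to the level sets of $t$. Let $\Sigma_t=\{t={\rm const}\}$; by the choice of $t$ these are spacelike for $|t|$ small, so $\nabla t=\rg^{-1}dt$ is timelike. Put
\[
X=\frac{\nabla t}{\rg(\nabla t,\nabla t)}.
\]
Then $X t=1$. Let $\chi(s,y)$ be the time-$s$ flow of $X$ starting at $y\in\Sigma$. This gives $t\circ\chi=t$ and $\chi(0,\cdot)=\one_\Sigma$, and since $X\perp T\Sigma_t$,
\[
\chi^*\rg=\rg(X,X)\,dt^2+\rh'(t)=-N^2dt^2+\rh'(t).
\]
Here $N^2=-1/\rg(\nabla t,\nabla t)$ and $\rh'(t)$ is the induced metric on $\Sigma_t$. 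Writing $c^{-2}=N^2\rho^2$, i.e.\ $c=(\rho N)^{-1}$, and $\rh(t)=c^2\rh'(t)$, one gets \eqref{eq:metric-reduced}.

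The key issue is global existence of the flow on $U_\varepsilon$ uniformly up to $\rho=0$, together with the asymptotics. I would work in the $\b$-setting. By \eqref{e2.metric-b-assumption},
\[
\rho^{-2}\rg^{-1}=\boldsymbol{\eta}_\b^{-1}+\rho\cA^\infty.
\]
Hence $\rho^{-2}\nabla t=-\p_t+\rho\cA^\infty(M;\be TM)$ and $\rho^{-2}\rg(\nabla t,\nabla t)=-1+\rho\cA^\infty$. Therefore $X=\p_t+\rho\cA^\infty(M;\be TM)$ is a $\b$-vector field, tangent to $\p M$, with conormal coefficients. Its $\rho\p_\rho$ and $\p_\omega$ components are $O(\rho)$. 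On the compact manifold with boundary, for $|t|<\varepsilon$ small, the integral curves therefore exist up to time $\varepsilon$, stay in $M_\infty$ near the boundary, and preserve $\rho=0$. This is presumably the elementary flow lemma of Appendix~\ref{sec:app-ode}: a $\b$-vector field with $\cA^\infty$ coefficients generates a flow whose components are conormal. Concretely, the flow preserves $\rho\sim\rho\circ\chi$ and yields $\chi^*\rho/\rho\in1+\rho\cA^\infty$ and $\chi^*\omega-\omega\in\rho\cA^\infty$. Showing that the flow map preserves the classes $\cA^\infty$ and $\rho\cA^\infty$, by differentiating the ODE with respect to initial data in $\rho\p_\rho,\p_\omega$ and using Gronwall, is the main technical obstacle. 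Injectivity and surjectivity onto $U_\varepsilon$ follow since $t$ is a coordinate transverse to the flow and the flow is that of a complete vector field on a compact region. Shrinking $\varepsilon$ if needed makes $\chi$ a diffeomorphism $U_\varepsilon\to U_\varepsilon$.

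With the flow controlled, the asymptotics are direct. $N^2\rho^2=-\rho^2/\rg(\nabla t,\nabla t)$, pulled back by $\chi$, lies in $1+\rho\cA^\infty$, so $c\in1+\rho\cA^\infty(M)$. For $\rh_\b(t)=\rho^2c^2\rh'(t)$, first note that $\rho^2\rg$ restricted to $T\Sigma_t$ equals $\boldsymbol{\delta}_\b(t)+\rho\cA^\infty$. Then $\chi$ differs from the identity by $\rho\cA^\infty$ in $\b$-coordinates, so pulling back $\boldsymbol{\delta}_\b$ changes it only by $\rho\cA^\infty$. Finally, multiplying by $c^2\in1+\rho\cA^\infty$ keeps the form. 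The inverse statement follows from a Neumann series, since $\boldsymbol{\delta}_\b(t)$ is uniformly positive definite on the $\b$-bundle for $|t|<\varepsilon$.
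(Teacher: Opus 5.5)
Your proposal is correct and is essentially the paper's proof. Your $X$ is the paper's $V=\rg^{-1}dt/(dt\cdot\rg^{-1}dt)=\partial_t+v$. The paper realizes $\chi(t,y)=(t,\widetilde\varphi_t(y))$ as the time-dependent flow of the spatial part $v\in\rho\cA^\infty$, which is the same map as your flow of $X$. It sets $c^2=-\rho^{-2}\chi^*\kappa$ and derives the asymptotics from the estimates $\chi^*\rho=\rho+\rho^2\cA^\infty$ and $\chi^*\omega=\omega+\rho\cA^\infty$, which Lemma~\ref{thm:ODE} proves by Gronwall on $\b$-derivatives exactly as you anticipate; the one extra ingredient there is a $\rho^{1/N}$ change of smooth structure to get existence and uniqueness of the flow for conormal coefficients.
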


\proof
Set
$$
        \kappa:=dt\dual\rg^{-1}dt.
$$
For the free metric $\boldsymbol{\eta}$ one has $dt\dual\boldsymbol{\eta}^{-1}dt=-\rho^2$ in $\{r>R\}$. Hence, by \eqref{e2.metric-b-assumption},
\begin{equation}
\label{e2.kappa}
        \kappa=-\rho^2(1+\rho a),
        \ \ a\in\cA^{\infty}(M).
\end{equation}
After possibly shrinking the conic neighborhood, $\kappa<0$.

Define
$$
        V:=\frac{\rg^{-1}dt}{dt\dual\rg^{-1}dt}.
$$
Then $dt(V)=1$. For $\boldsymbol{\eta}$ this vector field is $\partial_t$ in $\{r>R\}$, and \eqref{e2.metric-b-assumption} gives
\begin{equation}
\label{e2.V-b}
        V=\partial_t+v,
        \ \ v\in\rho\cA^{\infty}(M;{}^{\b}T\Sigma),
\end{equation}
where $v$ is tangent to the level sets of $t$.

Let $\widetilde\varphi_t$ be the time-dependent flow on $\Sigma$ generated by $v$, normalized by $\widetilde\varphi_0=\one_{\Sigma}$, and define
$$
        \chi(t,y)=(t,\widetilde\varphi_t(y)).
$$
Then $\chi_*\partial_t=V$. If $Y$ is tangent to a level set of $t$, then
$$
        \rg(V,Y)=\frac{dt(Y)}{\kappa}=0.
$$
Thus $\chi^*\rg$ has no $dt\,dy$ cross terms. Since
$$
        \chi^*\rg(\partial_t,\partial_t)=\rg(V,V)=\kappa^{-1},
$$
we choose
\begin{equation}
\label{e2.conformal-factor-new}
        c^2:=-\rho^{-2}\chi^*\kappa.
\end{equation}
This gives $c^2\chi^*\rg(\partial_t,\partial_t)=-\rho^{-2}$ and proves \eqref{eq:metric-reduced}.

The asymptotics are obtained by applying Corollary~\ref{cor:ODE-small} in the appendix to \eqref{e2.V-b}. In local product coordinates near $\rho=0$,
\begin{equation}
\label{e2.flow-error-new}
        \chi^*\rho=\rho+\rho^2\cA^{\infty}(M),
        \quad
        \chi^*\omega_i=
        \omega_i+\rho\cA^{\infty}(M).
\end{equation}
Combining \eqref{e2.kappa} and \eqref{e2.flow-error-new} gives $c=1+\rho\cA^{\infty}(M)$. Writing
$$
        (x_1,\ldots,x_n)=(t,\rho,\omega_1,
        \ldots,\omega_{n-2}),
        \quad
        (\tilde x_1,\ldots,\tilde x_n)=
        (t,\tilde\rho,\tilde\omega_1,
        \ldots,\tilde\omega_{n-2}),
$$
one has
$$
\chi^*\rg-\boldsymbol{\eta}
=
\sum_{i,j,k,l}
\left(
\rg_{kl}(\tilde x)
\frac{\partial\tilde x_k}{\partial x_i}
\frac{\partial\tilde x_l}{\partial x_j}
-
\boldsymbol{\eta}_{ij}(x)
\right)dx_i dx_j.
$$
Using \eqref{e2.flow-error-new} and part (4) of Lemma~\ref{thm:ODE}, this difference has $\rho\cA^{\infty}$ coefficients in the scattering coframe
$$
        \rho^{-2}d\rho,
        \quad \rho^{-1}dt,
        \quad \rho^{-1}d\omega_i.
$$
Equivalently, $\rho^2(\chi^*\rg-\boldsymbol{\eta})\in
\rho\cA^{\infty}(M;\otimes_{\s}^{2}{}^{\b}T^{*}M)$, and the spatial estimates in \eqref{e2.metric-reduced-asymptotics} follow. \qed

From this point on, we denote $\chi^*\rg$ simply by $\rg$. 
In the sequel we take the compactification $M$ to be defined with $\delta$ equal to the $\varepsilon$ in Proposition~\ref{prop2.1}. We set 
\beq\label{defdegb}
\rg_{\b}:=c^{2}\rho^2\rg.
\eeq 
The preceding proposition gives the following form, which is the one used below.

\begin{proposition}\label{prop2.2}
We have
\beq\label{e2.conformal-factor}
        c=1+\rho\cA^{\infty}(M)
\eeq
and
\beq\label{e2.bmetric-reduced}
\begin{aligned}
\rg_{\b}&=-dt^2+\rh_{\b}(t)
        =-dt^2+\boldsymbol{\delta}_{\b}(t)
        +\rho\cA^{\infty}(M;\otimes_{\s}^{2}{}^{\b}T^{*}\Sigma),\\
\rg_{\b}^{-1}&=-\p_t^2+\rh_{\b}(t)^{-1}
        =-\p_t^2+\boldsymbol{\delta}_{\b}(t)^{-1}
        +\rho\cA^{\infty}(M;\otimes_{\s}^{2}{}^{\b}T\Sigma).
\end{aligned}
\eeq
In particular, $(\rg_{\b})_{tt}=-1$ and the mixed $t\rho$, $t\omega$ coefficients vanish.
\end{proposition}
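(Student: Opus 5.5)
Proposition~\ref{prop2.2} is essentially a rewriting of Proposition~\ref{prop2.1} in $\b$-form, so the plan is to multiply \eqref{eq:metric-reduced} by $c^2\rho^2$ and read off the resulting expansions.

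\emph{Step 1: the conformal factor.} The asymptotics $c=1+\rho\cA^{\infty}(M)$ in \eqref{e2.conformal-factor} are already part of \eqref{e2.metric-reduced-asymptotics}; they were obtained from \eqref{e2.kappa}, \eqref{e2.conformal-factor-new} and the flow asymptotics \eqref{e2.flow-error-new}.

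\emph{Step 2: the metric.} Since we now write $\rg$ for $\chi^*\rg$, the definition \eqref{defdegb} together with \eqref{eq:metric-reduced} gives
\[
\rg_{\b}=c^2\rho^2 c^{-2}\bigl(-r^2dt^2+\rh(t)\bigr)=-dt^2+\rho^2\rh(t)=-dt^2+\rh_{\b}(t),
\]
using $\rho^2r^2=1$. Substituting the second line of \eqref{e2.metric-reduced-asymptotics} yields the first line of \eqref{e2.bmetric-reduced}. This also shows that $(\rg_{\b})_{tt}=-1$ and that there are no $dt\,d\rho$ or $dt\,d\omega$ cross terms, because $\rh_{\b}(t)$ is a metric on the level sets of $t$.

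\emph{Step 3: the inverse metric.} Because $\rg_{\b}$ is block diagonal in the splitting $\rr_t\oplus T\Sigma_t$, its inverse is $-\p_t^2+\rh_{\b}(t)^{-1}$. The third line of \eqref{e2.metric-reduced-asymptotics} then gives the claimed expansion of $\rg_{\b}^{-1}$.

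The only point that needs a little care is the bookkeeping in the $\b$-frames $d\rho/\rho$, $d\omega$. One has to check that the $\rho\cA^{\infty}$ errors are expressed in ${}^{\b}T^*\Sigma$, respectively ${}^{\b}T\Sigma$, coefficients. That is, inverting $\boldsymbol{\delta}_{\b}+\rho\cA^{\infty}$ should again produce a $\rho\cA^{\infty}$ error. This follows from a Neumann series, since $\boldsymbol{\delta}_{\b}(t)$ is uniformly nondegenerate for $|t|<\varepsilon$ and $\cA^{\infty}$ is an algebra. Granting that, nothing substantive remains.
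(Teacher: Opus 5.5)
Your proposal is correct and follows the paper, which states Proposition~\ref{prop2.2} as an immediate consequence of Proposition~\ref{prop2.1}: multiplying \eqref{eq:metric-reduced} by $c^{2}\rho^{2}$ gives $\rg_{\b}=-dt^{2}+\rh_{\b}(t)$, and the expansions of $c$, $\rh_{\b}(t)$ and $\rh_{\b}(t)^{-1}$ are read off from \eqref{e2.metric-reduced-asymptotics}. Your Neumann-series check is harmless but not needed, since the expansion of $\rh_{\b}(t)^{-1}$ is already part of Proposition~\ref{prop2.1}.
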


\subsection{Decomposition of tensors}\label{sec2.3.2}
\def\Sig{\Sigma}

$M$ can be identified with $I_t\times\Sigma$, where $I=\open{-\delta,\delta}$. Correspondingly, we set 
$V_{k\Sigma}= \cc\otimes_{\rm s}^{k}T^{*}\Sigma$, with $V_{0\Sigma}=\Sigma\times\cc$, and identify
%\beq\label{etiti.0}
% \begin{array}{l}
 $$
 \cinf(M; V_{1})\tosim\textstyle \bigoplus_{j=0}^{1} \cinf(I; \cinf(\Sigma; V_{j\Sigma}))
$$ 
 by
 $$
  w\mapsto (w_{t}, w_{\Sig}), \quad
   w \eqdef w_{t}rdt+ w_{\Sig},
$$   
and
$$   
   \cinf(M; V_{2})\tosim \textstyle \bigoplus_{j=0}^{2} \cinf(I; \cinf(\Sigma; V_{j\Sigma}))
$$
by
$$
 \begin{aligned} u\mapsto &(u_{tt}, u_{t\Sig},u_{\Sig\Sig}), \\
u\eqdef & \, u_{tt}rdt\otimes rdt+ u_{t\Sig}\otimes rdt+ rdt\otimes u_{t\Sig}+ u_{\Sig\Sig}.
\end{aligned}
$$
 
 These tensor decompositions are compatible with the definition of Sobolev spaces in \ref{sec9.2.3}. For example 
 $\| w\|_{H^{m, r}_{\b}(M; V_{1})}$ is equivalent to $\|w_{t}\|_{H^{m, r}_{\b}(M; V_{0\Sigma})}+ \| w_{\Sigma}\|_{H^{m, r}_{\b}(M; V_{1\Sigma})}$ 
 and similarly for $\| u\|_{H^{m, r}_{\b}(M; V_{2})}$.

\subsection{Hermitian  structures}\label{sec2.3}
In this subsection we fix  some notation for Hermitian    structures that we will use in the sequel. 

\def\hV{\mathsf{V}}
\subsubsection{Hermitian structures}\label{sec2.3.1}
On $\rr^{n}$ we will  use the density  $\dVol_{\rg}$. On the fibers of $V_{k}$ we will use the Hermitian form   $(\cdot| \cdot)_{V^{\rg}_{k}}$ induced by $\rg$,  see \eqref{eq:uvk}. The Hermitian form $(\cdot| \cdot)_{V_{k}(\rr^{n})}$ is  then defined as in \eqref{uvkU}.

On   the radial compactification $M$ of $\rr^{n}$ we will use the $\b$-density  $\dVol_{\hrg_{\b}}$ and   on the fibers of $V_{k}\overset{\pi}{\to}M $ the Hermitian form $(\cdot| \cdot)_{V^{\rg}_{k}}$. We set:
\beq\label{e2.10bd}
(u| u)_{V_{k}(M)}= \int_{M}(u| u)_{V^{\rg}_{k}}\dVol_{\hrg_{\b}},\ \ u\in \coinf(M; V_{k}).
\eeq

Under the identifications in Sect.~\ref{sec2.3.2}, the Hermitian forms $(\cdot| \cdot)_{V^{\rg}_{k}}$ become
 \beq\label{e2.10bb}
 \begin{aligned}
(w| w)_{V^{\rg}_{1}}&= c^{2}(- | w_{t}|^{2}+ (w_{\Sig}| w_{\Sig})_{V^{\rh(t)}_{1\Sigma}}),\\
(u|u)_{V^{\hrg}_{2}}&= c^{4}(2|u_{tt}|^{2}-  4(u_{t\Sig}| u_{t\Sig})_{V^{\rh(t)}_{1\Sigma}} +  (u_{\Sig\Sig}| u_{\Sig\Sig})_{V^{\rh(t)}_{2\Sigma}}),
\end{aligned}
 \eeq
 where $(\cdot| \cdot)_{V^{\rh(t)}_{k\Sigma}}$ are the natural analogs of $(\cdot| \cdot)_{V^{\rg}_{k}}$ with $\rg$ replaced by $\rh(t)$.

We denote by $(\cdot| \cdot)_{V^{\rg(0)}_{k}}$ the Hermitian form induced on the fibers of $V_{k}$ by  the metric \[
\rg(0)= c^{-2}(0)(-r^{2}dt^{2}+ \rh(0)), 
\]
obtained by freezing the coefficients of $\rg$  at $t=0$, and set \beq\label{e2.10cc}
(u| u)_{V_{k}(\Sigma)}= \int_{\Sigma}(u| u)_{V^{\rg(0)}_{k}}\dVol_{\rh_{\b}(0)},\ \ u\in \coinf(\Sigma; V_{k}).
\eeq

%For $t\in I= ]-\delta, \delta[$ we denote by $(\cdot| \cdot)_{V^{\hrg(t)}_{k}}$ the Hermitian form induced on the fibers of $V_{k}$ by $\hrg(t)$, where now $t$ is considered as a fixed parameter. We then set:\beq\label{e2.10cc}
%(u| u)_{V_{kt}(\Sigma)}= \int_{\Sigma}(u| u)_{V^{\hrg(t)}_{k}}\dVol_{\rh_{\b}(t)}, \ u\in \coinf(\Sigma; V_{k}).
%\eeq
%so that
%\beq\label{e2.10zz}
%(u| u)_{V_{k}(I\times \Sigma)}= \int_{I}(u| u)_{V_{kt}(\Sigma)}dt, \ u\in\coinf(M; V_{k}).
%\eeq
\subsubsection{Normal Hermitian structures}\label{sec2.3.1c}
Let  $M_{\norm}= \rr^{+}_{\rho}\times I_{t}\times \bS^{n-2}$, $\Sigma_{\norm}= \rr^{+}_{\rho}\times \bS^{n-2}$. We set similarly
\beq\label{e2.11}
(u| u)_{V_{k}(M_{\norm})}= \int_{M_{\norm}}(u(t)| u(t))_{V^{\reta}_{k}}\dVol_{\reta_{\b}}, 
\eeq
and
\beq\label{e2.11a}
(u| u)_{V_{k}(\Sigma_{\norm})}= \int_{\Sigma_{\norm}}(u| u)_{V^{\reta(0)}_{kt}}\dVol_{\boldsymbol{\delta}_{\b}(0)}, 
\eeq
%\beq\label{e2.11a}
%(u| u)_{V_{k t}(\Sigma_{\norm})}= \int_{\Sigma_{\norm}}(u| u)_{V^{\reta(t)}_{kt}}\dVol_{\boldsymbol{\delta}_{\b}(t)}, 
%\eeq
which correspond to the Hermitian structures in \ref{sec2.3.1} with $M$, $\Sigma$, $\hrg$, $\rh_{\b}$ replaced by $M_{\norm}$, $\Sigma_{\norm}$, $\reta$,$\boldsymbol{\delta}_{\b}$ respectively.

\subsection{Hilbertian structures}\label{sec2.3.1b}
We will also need to fix a reference {\em Hilbertian structure} on the fibers of $V_{k}$ over $\Sigma$ resp.~$\Sigma_{\norm}$ identified with $M\cap \{t=0\}$ resp.~$M_{\norm}\cap \{t=0\}$.

\subsubsection{Hilbertian structures}\label{sec2.3.1d1}
We denote by $(\cdot| \cdot)_{\tV_{k}^{\widetilde{\rg}(0)}}$ the Hibertian structure on the fibers of $V_{k}$ obtained from
\[
\widetilde{\rg}(0)= c^{-2}(0)(r^{2}dt^{2}+ \rh(0)).
\]
 The corresponding Hilbertian scalar product is:
 \[
(u| u)_{\tV_{k}(\Sigma)}= \int_{\Sigma}(u| u)_{\tV_{k}^{\widetilde{\rg}(0)}}\dVol_{\rh_{\b}(0)},\ \ u\in \coinf(\Sigma; V_{k}).
\]
We have
\begin{equation}
\label{e2.12a}
(u| u)_{V_{k}(\Sigma)}= (u | \tau_{k}u)_{\tV_{k}(\Sigma)},
\end{equation}
for 
\begin{equation}
\label{e2.12b}
\tau_{k}= \tau^{\otimes k},\quad \tau= \mat{-1}{0}{0}{\one},
\end{equation}
where we use the tensor decomposition introduced below in Subsect.~\ref{sec2.3.2}.
%\[
%(u| u)_{\tV_{k0}(\Sigma)}= \int_{\Sigma}(u| u)_{\tV_{k}^{\widetilde{\hrg}(0)}}\dVol_{\rh_{\b}(0)}, \ u\in \coinf(\Sigma; V_{k}).
%\]
%We have
%\begin{equation}
%\label{e2.12a}
%(u| u)_{V_{k0}(\Sigma)}= (u | \tau_{k}u)_{\tV_{k0}(\Sigma)},
%\end{equation}
%for 
%\begin{equation}
%\label{e2.12b}
%\tau_{k}= \tau^{\otimes k}, \ \tau= \mat{-1}{0}{0}{\one}.
%\end{equation}

\subsubsection{Normal Hilbertian structures}\label{sec2.3.1d}
We replace as in \ref{sec2.3.1c} $\Sigma$, $\hrg$, $\rh_{\b}$ by $\Sigma_{\norm}$, $\reta$, $\boldsymbol{\delta}_{\b}$, and denote by $(\cdot| \cdot)_{\tV_{k}^{\treta(0)}}$ the Hilbertian structure on the fibers of $V_{k}$ induced by
\[
\treta(0)= r^{2}dt^{2}+ \boldsymbol{\delta}(0).
\]
We set:
\beq\label{e2.11c}
(u| u)_{\tV_{k}(\Sigma_{\norm})}= \int_{\Sigma_{\norm}}(u| u)_{\tV_{k}^{\treta(0)}}\dVol_{\boldsymbol{\delta}_{\b}(0)},\ \ u\in \coinf(\Sigma_{\norm}; V_{k}).
\eeq
Similarly to \eqref{e2.12a} we have
\begin{equation}
\label{e2.12d}
(u| u)_{V_{k}(\Sigma_{\norm})}= (u | \tau_{k}u)_{\tV_{k}(\Sigma_{\norm})}.
\end{equation}
%We set:
%\beq\label{e2.11c}
%(u| u)_{\tV_{k 0}(\Sigma_{\norm})}= \int_{\Sigma_{\norm}}(u| u)_{\tV_{k}^{\treta(0)}}\dVol_{\boldsymbol{\delta}_{\b}(0)}, \ u\in \coinf(\Sigma_{\norm}; V_{k}).
%\eeq
%Similarly to \eqref{e2.12a} we have
%\begin{equation}
%\label{e2.12d}
%(u| u)_{V_{k0}(\Sigma_{\norm})}= (u | \tau_{k}u)_{\tV_{k0}(\Sigma_{\norm})}.
%\end{equation}

\subsection{Adjoints}\label{sec2.3.1e}
We will denote by $A^{\star}$ the formal adjoint of $A$ for the Hermitian structures $(\cdot|\cdot)_{V_{k}(M)}$, $(\cdot|\cdot)_{V_{k}(\Sigma)}$, $(\cdot|\cdot)_{V_{k}(M_{\norm})}$, $(\cdot|\cdot)_{V_{k}(\Sigma_{\norm})}$.
We will similarly denote by $A^{*}$ the formal adjoint of $A$ for the Hilbertian structures $(\cdot|\cdot)_{\tV_{k}(\Sigma)}$, $(\cdot|\cdot)_{\tV_{k}(\Sigma_{\norm})}$.

%We will denote by $A^{\star}$ the formal adjoint of $A$ for the Hermitian structures $(\cdot|\cdot)_{V_{k}(M)}$, $(\cdot|\cdot)_{V_{kt}(\Sigma)}$, $(\cdot|\cdot)_{V_{k}(M_{\norm})}$, $(\cdot|\cdot)_{V_{kt}(\Sigma_{\norm})}$.
%We will similarly denote by $A^{*}$ the formal adjoint of $A$ for the Hilbertian structures $(\cdot|\cdot)_{\tV_{k0}(\Sigma)}$, $(\cdot|\cdot)_{\tV_{k0}(\Sigma_{\norm})}$. 
% 
 For an operator $A$ acting on tensors over $\Sigma$ resp.~$\Sigma_{\norm}$ we can compute both $A^{\star}$ and $A^{*}$.  Using \eqref{e2.12a}, \eqref{e2.12d} we have:
 \begin{equation}
 \label{e2.12e}
 A^{\star}= \tau_{k}^{-1}A^{*}\tau_{k},\quad (A^{\star})^{*}= (A^{*})^{\star},
 \end{equation}
 since $\tau_{k}= \tau_{k}^{*}$ and $\tau_{k}^{2}=\one$.

 \subsection{Charges}
\label{sec2.3.3}
\subsubsection{Lorentzian charges}
The Lorentzian charge $\rq_{k\b}$ is the Hermitian form on $\coinf(\Sigma; V_{k}\otimes \cc^{2})$ defined as
\beq\label{e2.lorentzian-charge}
\bar{f}\dual \rq_{k\b}f=(f_{1}| f_{0})_{V_{k}(\Sigma)}+ (f_{0}| f_{1})_{V_{k}(\Sigma)}.
\eeq

%The Lorentzian charge $\rq_{k}$ is the Hermitian form on $\coinf(\Sigma; V_{k}\otimes \cc^{2})$ defined as
%\beq\label{e2.lorentzian-charge}
%\bar{f}\dual \rq_{k}f=(f_{1}| f_{0})_{V_{k0}(\Sigma)}+ (f_{0}| f_{1})_{V_{k0}(\Sigma)}.
%\eeq
% We have 
% \beq\label{e2.lorentzian-charge-matrix}
% \bar{f}\dual \rq_{k}f= (f| q_{k}f)_{\tV_{k}(\Sigma)\otimes \cc^{2}},\quad q_{k}= \mat{0}{\tau_{k}}{\tau_{k}}{0}.
% \eeq
We define the Cauchy data map  $\varrho_{k\b}: \Ker_{\sc}D_{k\b}\to \coinf(\Sigma; V_{k}\otimes \cc^{2})$  adapted to  $D_{k\b}$ by:
\begin{equation}
\label{e2.10bc}
  \varrho_{k\b}u
        \defeq
        \binom{u|_{t=0}}{\i^{-1}\nabla_{t}u|_{t=0}},
\end{equation}
where $\nabla_{t}$ is defined below in \eqref{e6.new15f}.
If $u\in \Ker_{\sc}D_{k\b}$ we have:
\[
\bar{\varrho_{k\b}u}\dual \rq_{k\b}\varrho_{k\b}u= \pm(u|[\one_{J^{\pm}(\Sigma)}, \i D_{k\b}]u)_{V_{k}(M)}, 
\]
where $(\cdot| \cdot)_{V_{k}(M)}$ is defined in \eqref{e2.10bd}.
 
 Sometimes we will need to express $\rq_{k}$ using the Hilbertian scalar product $(\cdot| \cdot)_{\tV_{k}}$ introduced in \eqref{sec2.3.1d1}. We set
 \begin{equation}
 \label{e2.10bee}
 \bar{f}\dual\rq_{k\b}f= (f| q_{k}f)_{\tV_{k}(\Sigma)\otimes \cc^{2}},\quad \hbox{ for }q_{k}= \mat{0}{\tau_{k}}{\tau_{k}}{0},
 \end{equation}
 where $\tau_{k}$ is defined in \eqref{e2.12b}.

% Sometimes we will need to express $\rq_{k}$ using the Hilbertian scalar product $(\cdot| \cdot)_{\tV_{k0}}$ introduced in \eqref{sec2.3.1d1}. We set
% \begin{equation}
% \label{e2.10bee}
% \bar{f}\dual\rq_{k}f\eqdef (f| q_{k}f)_{\tV_{k0}(\Sigma)\otimes \cc^{2}}, \hbox{ for }q_{k}= \mat{\tau_{k}}{0}{0}{\tau_{k}},
% \end{equation}
% where $\tau_{k}$ is defined in \eqref{e2.12b}.

 \subsubsection{Physical charge}
 For $k=2$ we will have to consider also the {\em physical charge}. We define $I_{\b\Sigma}: \coinf(\Sigma; V_{2}\otimes \cc^{2})\to \coinf(\Sigma; V_{2}\otimes \cc^{2})$ by:
 \begin{equation}
 \label{e2.10be}
 \varrho_{2\b} I_{\b}u\eqdef I_{\b\Sigma} \varrho_{2\b}u,
 \end{equation}
  where $I_{\b}$ is defined below in \ref{sec6.1.1}. Using   Prop.~\ref{prop3.normal-split} (vi) we have:
\begin{equation}
\label{e2.10bf}
I_{\b\Sigma}= \mat{I_{\b}\traa{\Sigma}}{0}{0}{I_{\b}\traa{\Sigma}}.
\end{equation}

We set then 
 \beq\label{e2.physical-charge}
 \rq_{I,2\b}= \rq_{2\b}\circ I_{\b\Sigma},
 \eeq
 We denote by $\rq_{k\b\norm}$ the normal version of $\rq_{k\b}$ with $D_{k\b}$ replaced by its normal operator $D_{k\b\norm}$. We have:
\begin{equation}
\label{e2.charge-b}
\bar{f}\dual \rq_{k\b\norm}f=(f_{1}| f_{0})_{V_{k}(\Sigma_{\norm})}+ (f_{0}| f_{1})_{V_{k}(\Sigma_{\norm})}.
\end{equation}
%We denote by $\rq_{k\norm}$ the normal version of $\rq_{k}$ with $D_{k\b}$ replaced by its normal operator $D_{k\b\norm}$. We have:
%\begin{equation}
%\label{e2.charge-b}
%\bar{f}\dual \rq_{k\norm}f=(f_{1}| f_{0})_{V_{k0}(\Sigma_{\norm})}+ (f_{0}| f_{1})_{V_{k0}(\Sigma_{\norm})}.
%\end{equation}
We define similarly the operators $I_{\b\norm}$ resp.~$I_{\b\Sigma\norm}$, which are the normal  operators of $I_{\b}$, $I_{\b\Sigma}$. They are obtained  as in \eqref{e6.-01} and \eqref{e2.10be} with $\rg$ replaced by $\reta$ in the definition   of $I$. Finally we define the normal version of $\rq_{2\b}$: 
\begin{equation}
\label{e2.charge-c}
\rq_{I,2\b\norm}= \rq_{2\b\norm}\circ I_{\b\Sigma\norm}.
\end{equation}

\subsubsection{Euclidean charges}\label{sec8.1.2}
 We will also need the Euclidean charges $\tilde{\rq}_{k\b}$:
 \beq\label{e2.euclidean-charge}
\bar{f}\dual \tilde{\rq}_{k\b}f=(f_{1}| f_{0})_{\tV_{k}(\Sigma)}+ (f_{0}| f_{1})_{\tV_{k}(\Sigma)},
\eeq
and their normal versions
\begin{equation}
\label{e2.euclideian-charge-normal}
\bar{f}\dual \tilde{\rq}_{k\b\norm}f=(f_{1}| f_{0})_{\tV_{k}(\Sigma_{\norm})}+ (f_{0}| f_{1})_{\tV_{k}(\Sigma_{\norm})}.
\end{equation}

% We will also need the Euclidean charges $\tilde{\rq}_{k}$:
% \beq\label{e2.euclidean-charge}
%\bar{f}\dual \tilde{\rq}_{k}f=(f_{1}| f_{0})_{\tV_{k0}(\Sigma)}+ (f_{0}| f_{1})_{\tV_{k0}(\Sigma)},
%\eeq
%and their normal versions
%\begin{equation}
%\label{e2.euclideian-charge-normal}
%\bar{f}\dual \tilde{\rq}_{k\norm}f=(f_{1}| f_{0})_{\tV_{k0}(\Sigma_{\norm})}+ (f_{0}| f_{1})_{\tV_{k0}(\Sigma_{\norm})}.
%\end{equation}

%which will appear  in Subsect.~\ref{sec10.2b} in connection with \calde projectors. 
%We have:
%\beq\label{e2.euclidean-charge-matrix}
%\bar{f}\dual\tilde{\rq}_{k}f= (f| \tilde{q}_{k}f)_{\tV_{k}(\Sigma)\otimes \cc^{2}},\quad \tilde{q}_{k}= \mat{0}{\one}{\one}{0}.
%\eeq

\section{Reduction of $D_{k}$}\init\label{sec6}
In this section we reduce the Lichnerowicz operators $D_{k}$, $k=1,2$, using the reduction of Sect.~\ref{sec2}.

\subsection{Reduced operators}\label{sec6.-1}
%The conformally reduced operator is denoted by $\hD_{k}$, and the final $\b$-operator $D_{k\b}$ is obtained directly from $\hD_{k}$ by  conjugation by $\rho^{-n/2}$.   Here $c$ is the conformal factor in \eqref{e2.conformal-factor-new}, and the reduced metric $\hrg$ was defined in \eqref{eq:metric-reduced}. Thus we use
We set
\beq\label{e6.-1}
 U_{\b}= (c\rho)^{-n/2},\quad
D_{k\b}\defeq (c\rho)^{-1}U_{\b}D_{k}U_{\b}^{-1}(c\rho)^{-1}.
\eeq
%\begin{equation}
%\label{e6.-0}
%U_{\b}= \rho^{-n/2},
%\end{equation}
%\beq\label{e6.-2}
%D_{k\b}\defeq \rho^{-1}U_{\b}\hD_{k}U_{\b}^{-1}\rho^{-1}.
%\eeq
  Sometimes to emphasize the dependence of $D_{k}$ on the background metric $\rg$ we will write $D_{k}^{\rg}$. 
  
   When no confusion is possible we suppress tensor degree $k$ and write $D, D_{\b}$ for $D_{k}, D_{k\b}$.
  
  \subsubsection{Trace reversal}\label{sec6.1.1}
Recall that  the trace reversal $I$ was defined in Subsect.~\ref{blurck}. We set
 \beq\label{e6.-01}
 I_{\b}\defeq (c\rho) U_{\b}IU_{\b}^{-1}(c\rho)^{-1}= I.
 \eeq
 Since $[D_{2}, I]=0$ we have also $[D_{2\b}, I_{\b}]=0$.
%
%
%In the $\b$-framework the trace reversal is simply the trace reversal in the $\b$-coframe associated with the reduced metric. On the Cauchy surface this gives
%\beq\label{e2.trace-reversal-b}
%I_{\b\Sigma}u= u - \frac{1}{4}(\hrg(0)| u)_{(\hrg(0)^{-1})^{\otimes 2}} \hrg(0).
%\eeq
%We will keep the shorthand $I_{\b}$ when no confusion with its restriction to $\Sigma$ is possible.
%

  \subsection{Preparations}\label{sec6.0}
  \subsubsection{Adapted frames on $V_{k}$}\label{sec6.0.1}
  In \ref{not:tensor-valued-classes} we identified  $V_{k}$ with $\rr^{n}\otimes \cc^{d_{k}}$ using the frame obtained from $(dx^{i})$.
  
  In this section  we will use  over $M_{\infty}$ the frame obtained from $( e^{a})=(r dt,dr,r d\omega)$.  The dual frame of $T\rr^{n}$ is $( e_{a})=(r^{-1}\p_{t},\p_{r},r^{-1}\p_{\omega})$ over $M_{\infty}$. 
   
  If $w$ is a $(0,1)$-tensor we have $w= w_{a} e^{a}$ for $w_{a}= w\cdot e_{a}$ and if $u$ is a $(0,2)$-tensor $u= u_{ab} e^{a}\otimes  e^{b}$ for $u_{ab}= e_{a}\dual u e_{b}$.
In particular $\rg_{ab}=  e_{a}\dual \rg  e_{b}$ and $\rg^{ab}=  e^{a}\dual \rg^{-1} e^{b}$.

This change of frame does not change the spaces $S^{m}_{1}(\rr^{n}; V_{k})$, $\Diff^{m}_{1}(\rr^{n}; V_{k})$, $\Diff^{m}_{\b, 1}(M; V_{k})$ or $H^{m, r}_{\b}(M; V_{k})$.
\subsubsection{Metric connection}\label{sec6.0.2}
Let $\nabla^{\rg}$ be the metric connection for $\rg$ and  $\Gamma\indices{^{c}_{ab}^{\rg}}$  the associated connection coefficients. 
Recall that
\beq\label{e6.new11}
(\nabla_{ e_{a}}^{\rg})^{\star}= -(\nabla_{ e_{a}}^{\rg}+ {\rm div}^{\rg} e_{a}),
\eeq
where $A^{\star}$ is the adjoint for  $(\cdot| \cdot)_{V_{k}^{\rg}(\rr^{n})}$.
Using that  $\rg=\reta+ S^{-1}(\rr^{n}; \otimes^{2}_{\rm s}T^{*}\rr^{n})$ it is straightforward to show that:
\begin{equation}
\label{e6.new1}
\begin{aligned}
\Gamma\indices{^{c}_{ab}^{\rg}}&= \Gamma\indices{^{c}_{ab}^{\reta}}+ S^{-2}(\rr^{n}),\quad {\rm div}^{\rg} e_{a}- {\rm div}^{\reta} e_{a}\in S^{-2}(\rr^{n}),\\
\Gamma\indices{^{c}_{ab}^{\reta}},\ {\rm div}^{\reta} e_{a}&\in S^{-1}_{\cl}(\rr^{n}),
\end{aligned}
\end{equation}
so
\begin{equation}
\label{e6.new16c}
\Gamma\indices{^{c}_{ab}^{\rg}}, {\rm div}^{\rg} e_{a}\in S^{-1}_{1}(\rr^{n}),
\end{equation}
see \ref{secnot.1.1} for notation. Therefore
\beq\label{e6.new-11}
\cov{a}^{\rg}= \p_{ e_{a}}+ R_{ka}^{\rg},\ \ R_{ka}^{\rg}\in S_{1}^{-1}(\rr^{n}; L(V_{k})),
\eeq
\begin{equation}
\label{e6.new-17}
(\cov{a}^{\rg})^{\star}=-(\cov{a}^{\rg}+ r_{a}^{\rg}),\ \ r_{a}^{\rg}\in S^{-1}_{1}(\rr^{n}).
\end{equation}

 By \eqref{e6.new1}  we obtain that:
\begin{equation}
\label{e6.new2}
\begin{aligned}
(\nabla_{ e_{a}}^{\rg})&=(\nabla_{ e_{a}}^{\reta})+ S_{ka},\ \ S_{ka}\in S^{-2}(\rr^{n}; L(V_{k})),\\
(\nabla_{ e_{a}}^{\rg})^{\star}&=(\nabla_{ e_{a}}^{\reta})^{\star}+ T_{ka},\ \ T_{ka}\in S^{-2}(\rr^{n}; L(V_{k})),
\end{aligned}
\end{equation}
where on the l.h.s. resp.~r.h.s. the adjoint is for  $(\cdot| \cdot)_{V^{\rg}_{k}(\rr^{n})}$ resp.~$(\cdot| \cdot)_{V^{\reta}_{k}(\rr^{n})}$.

  \subsection{The operator $D_{k}$}
 We recall that $-\square^{\rg}= (\nabla^{\rg}_{ e_{a}})^{\star}\rg^{ab}\nabla^{\rg}_{ e_{b}}$ hence
\beq\label{e6.new9}
D_{k}^{\rg}= (\nabla^{\rg}_{ e_{a}})^{\star}\rg^{ab}\nabla^{\rg}_{ e_{b}}+ R_{k}^{\rg}, 
\eeq
for $R_{1}^{\rg}=0$ and $R_{2}^{\rg}= 2\Riem_{\rg}\in S^{-3}(\rr^{n}; L(V_{2}))$.
From \eqref{e6.new1}, \eqref{e6.new2} we obtain that
\beq\label{e6.new3}
 \begin{aligned}
D_{k}^{\rg}&= ((\nabla_{ e_{a}}^{\reta})^{\star}+ T_{ka})\rg^{ab}(\nabla_{ e_{b}}^{\reta}+ S_{kb})+ R_{k}^{\rg}\\
&=D_{k}^{\reta}+ M_{k},
\end{aligned}
 \eeq
  where $M_{k}\in r^{-1}\Diff^{2}(\rr^{n}; V_{k})$, see \ref{secnot.1.1b}.

Let us denote $D_{k}^{\rg}$ , $\nabla^{\rg}$ simply by $D_{k}$ and $\nabla$.  From \eqref{e6.new9}, \eqref{e6.new-11} and the fact that $\rg_{tt}= - c^{-2},\quad \rg_{tr}= \rg_{t\omega}=0$ we obtain  that
\beq\label{e6.new13}
D_{k}= -\cov{t}^{\star}c^{2}\cov{t}+ a_{k}(t), 
\eeq
where
\begin{equation}
\label{e6.new7}
\begin{aligned}
D_{k}&= D_{k}^{\star}\hbox{ for }(\cdot| \cdot)_{V^{\rg}_{k}(\rr^{n})},\\
a_{k}(t)&\in \cinf(]- \delta, \delta[; \Diff^{2}_{1}(\Sigma; V_{k}),\\
a_{k}(t)&=a_{k}(t)^{\star}\hbox{ for }(\cdot| \cdot)_{V_{k}(\rr^{n})},\\
\sigma_{\rm pr}(a_{k}(t))&= c^{2}\xi\dual \rh_{t}^{-1}\xi\one_{V_{k}}.
\end{aligned}
\end{equation}

\subsection{The operator $D_{k\b}$}\label{sec6.2}
We now consider $D_{k\b}$ defined in \eqref{e6.-1}.  We recall that on  $\be TM$ we use the frame
\[
( e_{a\b})\defeq \rho^{-1} e_{a}=(\p_{t},-\rho\p_{\rho},\p_{\omega})
\hbox{ over }M_{\infty},
\]
and the dual frame of $\be T^{*}M$ is $( e^{a}_{\b})=(dt,-d\rho/\rho,d\omega)$. On the fibers of $V_{k}$ we use as before the frame induced from $( e_{a})$.
\def\covb{\nabla_{t}}
We set
\begin{equation}
\label{e6.new15f}
\nabla_{a}\defeq cU_{\b}\nabla_{e_{a}}U_{\b}^{-1}(c\rho)^{-1}= \p_{e_{a\b}}+ R_{k\b a},\ \ R_{k\b a}(t)\in \cA_{1}^{\infty}(M; L(V_{k})).
\end{equation}
We have:
\beq\label{e6.new16}
(\nabla_{t})^{\star}=-(\nabla_{t}+ r_{\b}),\ \ r_{\b}\in  \cA_{1}^{\infty}(M),
\eeq
where $A^{\star}$ is the adjoint for $(\cdot| \cdot)_{V_{k}(M)}$.

Let us summarize the properties of $D_{k\b}$.

\begin{proposition}\label{prop3.normal-split}
We have
\[
D_{k\b}=-(\nabla_{t})^{\star}\nabla_{t}+a_{ k\b}(t),
\]
\[
\nabla_{t}= \p_{t}+ R_{k\b},\quad (\nabla_{t})^{\star}= -(\nabla_{t}+ r_{\b}),
\]
where:
\begin{enumerate}[label={\rm(\roman*)},leftmargin=7ex]
\item
$D_{k\b}= D_{k\b}^{\star}$   for $(\cdot| \cdot)_{V_{k}(M)}$.
\item
$[D_{2\b}, I_{\b}]=[\nabla_{t}, I_{\b}]= [a_{2\b}(t), I_{\b}]=0$.
\item
$a_{ k\b}(\cdot)\in \cinf(I;\Diff^{2}_{\b,1}(\Sigma;V_{k}))$,
\item
$a_{k\b}(t)=a_{k\b}^{\star}(t)$ for $(\cdot| \cdot)_{V_{k}(M)}$.
\item $\sigma_{\rm pr}(a_{ k\b})(t)=\xi\dual\rh_{\b}^{-1}(t)\xi\one_{V_{k}}$.
\item $R_{k\b}(\cdot)\in \cA_{1}^{\infty}(M; L(V_{k}))$, $r_{\b}\in \cA^{\infty}_{1}(M)$.

%\item
%The spatial part can be written as
%\[
%a_{k\b}(t)=(\nabla^{\rm n}_{ e_{\b}^{\alpha}})^{\star_{t}}
%\rh_{\b}^{\alpha\gamma}(t)\nabla^{\rm n}_{ e_{\b}^{\gamma}}+q_{k\b}(t),
%\]
%with $q_{k\b}(t)\in \rho\cA^{\infty}(\Sigma;L(V_{k}))$.

\end{enumerate}
\end{proposition}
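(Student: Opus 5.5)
The plan is to obtain everything by conjugating the formula \eqref{e6.new13}, $D_{k}= -\cov{t}^{\star}c^{2}\cov{t}+ a_{k}(t)$, by the weights in \eqref{e6.-1}. Write $W=(c\rho)^{-1}U_{\b}$, so that $D_{k\b}=W D_{k} W'$ with $W'=U_{\b}^{-1}(c\rho)^{-1}$. Since $e_{t}=r^{-1}\p_{t}=\rho\p_{t}$, insert $\one=U_{\b}^{-1}(c\rho)^{-1}\,(c\rho)U_{\b}$ around the factor $c^{2}$. Definition \eqref{e6.new15f} gives $cU_{\b}\cov{t}U_{\b}^{-1}(c\rho)^{-1}=\nabla_{t}$. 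The step requiring care is to check that
\[
(c\rho)^{-1}U_{\b}\cov{t}^{\star}c^{2}\cov{t}U_{\b}^{-1}(c\rho)^{-1}=\nabla_{t}^{\star}\nabla_{t},
\]
where the left adjoint is taken for $(\cdot|\cdot)_{V_{k}^{\rg}(\rr^{n})}$ and the right one for $(\cdot|\cdot)_{V_{k}(M)}$.

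To verify this, compare the densities. Since $\rg_{\b}=c^{2}\rho^{2}\rg$, we have $\dVol_{\rg}=(c\rho)^{-n}\dVol_{\rg_{\b}}$ and $U_{\b}^{2}=(c\rho)^{-n}$, so $(u|v)_{V_{k}^{\rg}(\rr^{n})}=(U_{\b}u|U_{\b}v)_{V_{k}(M)}$. Hence $U_{\b}$ is unitary between the two Hermitian structures, and $U_{\b}A^{\star}U_{\b}^{-1}=(U_{\b}AU_{\b}^{-1})^{\star}$. Taking adjoints of $\nabla_{t}=cU_{\b}\cov{t}U_{\b}^{-1}(c\rho)^{-1}$ gives $\nabla_{t}^{\star}=(c\rho)^{-1}U_{\b}\cov{t}^{\star}U_{\b}^{-1}c$, since $c,\rho$ are real scalars. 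Multiplying, $\nabla_{t}^{\star}\nabla_{t}=(c\rho)^{-1}U_{\b}\cov{t}^{\star}c^{2}\cov{t}U_{\b}^{-1}(c\rho)^{-1}$, as required.

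Then set $a_{k\b}(t)\defeq (c\rho)^{-1}U_{\b}a_{k}(t)U_{\b}^{-1}(c\rho)^{-1}$, which gives the displayed decomposition. The same unitarity yields (i) from $D_{k}=D_{k}^{\star}$ and (iv) from $a_{k}=a_{k}^{\star}$, because $c\rho$ is real and scalar. For (ii), scalar weights commute with $I_{\b}=I$; $[D_{2},I]=0$ was already noted; and $[\cov{t},I]=0$ because $\nabla\rg=0$, so $\nabla$ commutes with $I=\one-\frac14|\rg)(\rg|$. Consequently $[a_{2}(t),I]=[D_{2}+\cov{t}^{\star}c^{2}\cov{t},I]=0$, and this passes to the conjugates.

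For (iii), (v) and (vi), the symbolic part, use $a_{k}(t)\in\Diff^{2}_{1}(\Sigma;V_{k})$ from \eqref{e6.new7}. Multiplying by $\rho^{-1}$ on each side converts $\p_{r}$ and $r^{-1}\p_{\omega}$ into $\b$-vector fields, with $\rho^{-1}\p_{r}=-\rho\p_{\rho}\cdot\rho^{-1}\rho$ up to reordering. Concretely, $r^{2}a_{k}$ is a polynomial in $r\p_{r},\p_{\omega}$ with $\cA^{\infty}_{1}$ coefficients by \ref{secnot.1.1b}.

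Conjugation by $\rho^{-n/2}$ and by $c^{\pm1}$, with $c\in 1+\rho\cA^{\infty}$ from \eqref{e2.conformal-factor}, only produces commutators $[\rho\p_{\rho},\rho^{s}]=s\rho^{s}$ and derivatives of $c$ in $\rho\cA^{\infty}$. All of these stay in $\cA^{\infty}_{1}$, which gives (iii). The principal symbol is unchanged by these conjugations up to the factor $(c\rho)^{-2}$. Combined with $\sigma_{\rm pr}(a_{k})=c^{2}\xi\dual\rh_{t}^{-1}\xi$ and $\rh_{\b}=\rho^{2}\rh$, this gives $\xi\dual\rh_{\b}^{-1}\xi$, which is (v).

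For (vi), $R_{k\b}$ is recorded in \eqref{e6.new15f} and $r_{\b}$ in \eqref{e6.new16}. Both follow from \eqref{e6.new-11} and \eqref{e6.new-17}: the term $\rho R^{\rg}_{kt}$ lies in $\cA^{\infty}_{1}$, and conjugation adds $\p_{t}$ of the weights. These are $\p_{t}\log c\in\rho\cA^{\infty}$; $\rho$ is $t$-independent in the Beig--Schmidt coordinates, so it contributes nothing.

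The main obstacle is bookkeeping rather than ideas. One must confirm that the conormal classes are preserved: the $\cA^{\infty}_{1}$ leading parts have to be exactly smooth functions of $(t,\omega)$ at $\rho=0$. This holds because the Minkowski contributions $\Gamma^{\reta}$ and ${\rm div}^{\reta}$ are classical, being in $S^{-1}_{\cl}$ by \eqref{e6.new1}, while the $\rg$-corrections gain an extra power of $\rho$.
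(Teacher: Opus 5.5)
Your proposal is correct and follows essentially the same route as the paper. You conjugate the splitting $D_{k}=-\cov{t}^{\star}c^{2}\cov{t}+a_{k}(t)$ by the scalar weights, use the unitarity of $U_{\b}$ between $(\cdot|\cdot)_{V_{k}^{\rg}(\rr^{n})}$ and $(\cdot|\cdot)_{V_{k}(M)}$ for the adjoint identity and for (i), (iv), use $\nabla\rg=0$ for (ii), and read off (iii), (v), (vi) from \eqref{e6.new7}, \eqref{e6.new-11} and \eqref{e6.new-17}; your version just spells out more of the bookkeeping. One small slip: the connection term produced in $\nabla_{t}$ is $c\,R^{\rg}_{kt}(c\rho)^{-1}=\rho^{-1}R^{\rg}_{kt}$, not $\rho R^{\rg}_{kt}$, and it is exactly $R^{\rg}_{kt}\in S^{-1}_{1}$ that puts this term in $\cA^{\infty}_{1}$.
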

\proof
We have:
\[
-(c\rho)^{-1}U_{\b}(\nabla_{ e_{t}})^{\star}c^{2}\nabla_{ e_{t}}U_{\b}^{-1}(c\rho)^{-1}=-(cU_{\b}\nabla_{e_{t}}U_{\b}^{-1}(c\rho)^{-1})^{\star}cU_{\b}\nabla_{e_{t}}U_{\b}^{-1}(c\rho)^{-1}= -(\nabla_{t})^{\star}\nabla_{t},
\]
which implies the expression for $D_{k\b}$ for \[
a_{k\b}(t)=(c \rho)^{-1}U_{\b}a_{k}(t)U_{\b}^{-1}(c\rho)^{-1}.
\]
Properties  (iii) and (v) follow. Properties (i) and (iv) follow from the unitarity of $U_{\b}$ between the appropriate Hermitian spaces. 
Since $\nabla\rg =0$ we see that $[\nabla_{e_{t}}, I]=0$, hence also $[\nabla_{t}, I]=0$.
The second statement of (ii)  follows then from the fact that $I= I_{\b}$, and implies the third statement of (ii).
 (vi) follows from \eqref{e6.new-11} and \eqref{e6.new-17}.   \qed
 
 \subsection{The normal operator $D_{k\b\norm}$}\label{sec6.3}
We will also need to compute the normal operator $D_{k\b\norm}$ of $D_{k\b}$. We will denote by  $D_{k\b}^{\reta}$, $R_{k\b}^{\reta}$, $r_{\b}^{\reta}(t)$ the objects in Subsect.~\ref{sec6.2} for $\rg= \reta$.

\begin{proposition}\label{prop10.1}
 We have 
 \[
 D_{k\b\norm}=D^{\reta}_{k\b}=  -(\nabla_{t\norm})^{\star}\nabla_{t\norm}+a_{k\b\norm}(t), 
 \]
 where $A^{\star}$ is the adjoint for $(\cdot| \cdot)_{V(M_{\norm})}$ and
 \[
 \nabla_{t\norm}= \p_{t}+ R_{k\b\norm}(t),\quad (\nabla_{t\norm})^{\star}= -(\nabla_{t\norm}+ r_{\b\norm}(t)),
 \]
\[
  R_{k\b\norm}(t)= R_{k\b}^{\reta}(t),\quad a_{k\b\norm}(t)= a^{\reta}_{k\b}(t),\quad r_{\b\norm}(t)= r_{\b}^{\reta}(t).
 \]

 \end{proposition}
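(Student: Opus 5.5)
The plan is to compute the normal operator term by term from the factorization in Prop.~\ref{prop3.normal-split}. The one thing to check is that every ingredient defining $D_{k\b}$ differs from its Minkowski analogue by a $\rho\cA^{\infty}$ error. We compare $D_{k\b}=D_{k\b}^{\rg}$ with $D^{\reta}_{k\b}$, the operator obtained from the same construction with $\rg$ replaced by $\reta$, hence with $c=1$ and $\rh_{\b}(t)=\boldsymbol{\delta}_{\b}(t)$.

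First, by \eqref{e6.new3}, $D_{k}^{\rg}=D_{k}^{\reta}+M_{k}$ with $M_{k}\in r^{-1}\Diff^{2}(\rr^{n};V_{k})$. Conjugating as in \eqref{e6.-1}, and using $c=1+\rho\cA^{\infty}(M)$ from \eqref{e2.conformal-factor}, we get
\[
D_{k\b}^{\rg}=\rho^{-1-n/2}D_{k}^{\reta}\rho^{-1+n/2}+\rho\,\Diff^{2}_{\b}(M;V_{k})\text{-type errors}.
\]
The reason is twofold. First, $\rho^{-1-n/2}r^{-1}\Diff^{2}\rho^{-1+n/2}\subset\rho\Diff^{2}_{\b}$, since $r^{2}\Diff^{2}$ is a $\b$-operator with conormal coefficients by \ref{secnot.1.1b}. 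Second, commuting $c^{\pm1}$, $c^{\pm n/2}$ through $D_{k}^{\reta}$ produces only terms with coefficients in $\rho\cA^{\infty}$ after $\b$-normalization, because $\rho\p_\rho$ and $\p_\omega$ preserve $\rho\cA^\infty$.

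Since the normal operator of an element of $\rho\,\Diff_{\b}$ vanishes, we get $D_{k\b\norm}=(D^{\reta}_{k\b})_{\norm}$. In Beig--Schmidt coordinates $D^{\reta}_{k\b}$ has coefficients independent of $\rho$ in the frames $(e_{a\b})$, $(e^{a}_{\b})$. This is because $\reta_{\b}=-dt^2+(d\rho/\rho)^2+\cosh^2 t\,\rh_{\bS^{n-2}}$ is dilation invariant, the Christoffel symbols of $\reta$ in the frame $(e_a)$ are homogeneous of degree $-1$, and the weight $\rho^{\mp n/2}$ is homogeneous. Hence $D^{\reta}_{k\b}$ is already dilation invariant and equals its own normal operator.

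The factorization then follows by applying Prop.~\ref{prop3.normal-split} with $\rg=\reta$. This gives $D^{\reta}_{k\b}=-(\nabla^{\reta}_{t})^{\star}\nabla^{\reta}_{t}+a^{\reta}_{k\b}(t)$ with $\nabla^{\reta}_t=\p_t+R^{\reta}_{k\b}$, and the adjoint is taken for $(\cdot|\cdot)_{V_k(M_{\norm})}$ by \eqref{e2.11}, since $c=1$ and $\rg$, $\rg_{\b}$ become $\reta$, $\reta_{\b}$. The same computation also gives $(\nabla^{\reta}_t)^{\star}=-(\nabla^{\reta}_t+r^{\reta}_{\b})$. Dilation invariance of $R^{\reta}_{k\b}$, $r^{\reta}_{\b}$ and $a^{\reta}_{k\b}$ identifies them with the normal parts $R_{k\b\norm}$, $r_{\b\norm}$, $a_{k\b\norm}$. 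As a consistency check, one can verify directly from \eqref{e6.new2} and \eqref{e6.new15f} that $R_{k\b}-R^{\reta}_{k\b}$ and $r_{\b}-r^{\reta}_{\b}$ lie in $\rho\cA^\infty$.

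The main obstacle I expect is bookkeeping rather than any conceptual difficulty. One must ensure that the various error classes $S^{-2}$, $S^{-1}_{1}$ and $r^{-1}\Diff^2$ translate into $\rho\cA^\infty$ coefficients in the $\b$-frame after conjugation by $(c\rho)^{\pm1}U_{\b}^{\pm1}$. The tensor frame $(e^a)$ is unchanged, so no frame rotation enters. The care needed is that the radial derivative of $c$ gives $\rho\p_\rho c\in\rho\cA^\infty$, which requires the conormal (not merely bounded) control from Prop.~\ref{prop2.1}.
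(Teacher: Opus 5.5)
Your proposal is correct and follows essentially the same route as the paper. Both arguments use \eqref{e6.new3} to write $D_{k\b}=D^{\reta}_{k\b}+\rho\,\Diff^{2}_{\b}$, note that $D^{\reta}_{k\b}$ is dilation invariant because $\reta$ is homogeneous (so it equals its own normal operator), and then apply Prop.~\ref{prop3.normal-split} with $\rg=\reta$. Your explicit check that the conformal factor $c=1+\rho\cA^{\infty}$ only contributes $\rho\,\Diff_{\b}$-type errors under the conjugation is a detail the paper leaves implicit.
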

 \proof We omit the subscript $k$. 
 From \eqref{e6.new3} we obtain that 
 \beq\label{e6.new17}
 D_{\b}= D_{\b}^{\reta}+ \rho\Diff^{2}_{\b}(M; V_{k}),
 \eeq
  so it suffices to compute the normal operator $D_{\b\norm}^{\reta}$. Since the Minkowski metric $\reta$ is homogeneous of degree $2$ in $r$, we see that $D_{\b\norm}^{\reta}= D_{\b}^{\reta}$. We  can then apply  Prop.~\ref{prop3.normal-split} to  the background metric $\reta$ to obtain the proposition.  
 \qed

 \section{Reduction of $K$}\label{sec7}\init
In this section we perform the analogous reduction of the 
gauge operator $K=I\circ d$ defined in Subsect.~\ref{sec1.2}.
We set
\[
 d_{\b}\defeq (c\rho) U_{\b}d U_{\b}^{-1}(c\rho)^{-1},
\] 
and
\beq\label{sec7.e2a}
 K_{\b}\defeq (c\rho) U_{\b}K U_{\b}^{-1}(c\rho)^{-1}= I_{\b}\circ d_{\b},
\eeq
where  $U_{\b}$, $I_{\b}$ are    defined in Subsect.~\ref{sec6.-1}.
  From $D_{2}\circ K=K\circ D_{1}$ we get
\begin{equation}
\label{e7.0}
D_{2\b}\circ K_{\b}=T_{\b}\circ D_{1\b},
\quad
T_{\b}\defeq (c\rho)^{-2}K_{\b}(c\rho)^{2}.
\end{equation}
\subsection{The operator $d_{\b}$}
We use the same convention for frames as in Subsect.~\ref{sec6.0}. To emphasize the dependence of $d$ on the background metric $\rg$ we will sometimes write $d^{\rg}$. We have
\[
(dw)_{ab}= \12((\nabla_{e_{a}}w)_{b}+ (\nabla_{e_{b}}w)_{a}),
\]
so
\beq\label{e7.new6}
(d_{\b}w)_{ab}= \12\rho((\nabla_{a}w)_{b}+ (\nabla_{b}w)_{a}),
\eeq
where we recall that  $\nabla_{a}$ is defined in \eqref{e6.new15f}.
From \eqref{e6.new2} we obtain that
\begin{equation}
\label{e7.new1}
d^{\rg}_{\b}= d^{\reta}_{\b}+ \rho^{2}\cA^{\infty}(M;L(V_1,V_2)).
\end{equation}
\subsection{The Cauchy operator $d_{\b\Sigma}$}\label{sec7.1b}
From  $D_{2}\circ d= d\circ D_{1}$, we deduce that $d_{\b}: \Ker_{\sc}D_{1\b}\to \Ker_{\sc}D_{2\b}$ and  can define the Cauchy operator $d_{\b\Sigma}$ associated to $d_{\b}$ by
\begin{equation}
\label{e7.new2}
d_{\b\Sigma}\varrho_{1\b}w\defeq \varrho_{2\b}d_{\b}w,\ \ \hbox{ for }w\in \Ker_{\sc}D_{1\b}.
\end{equation}

Since $K_{\b}= I_{\b}\circ d_{\b}$ we have
\begin{equation}
\label{e7.new5}
K_{\b\Sigma}= I_{\b\Sigma}\circ d_{\b\Sigma},
\end{equation}
and $I_{\b\Sigma}$ is computed in \eqref{e2.10bf}. 

The first trace $(d_{\b}w)_{ab}\traa{\Sigma}$ can be directly expressed in terms of $\varrho_{1\b}u_{1}$, as also the components $(\i^{-1}\nabla_{t}d_{\b}w)_{ab}\traa{\Sigma}$ of the second trace for $a, b\neq t$. For the components $(\i^{-1}\nabla_{t}d_{\b}w)_{ab}\traa{\Sigma}$ with $a=t$ or $b=t$ we use that $D_{1\b}w=0$ and hence $\nabla_{t}^{2}w= - r_{\b}\nabla_{t}w- a_{1\b}w$.

Denoting by $(d_{\b\Sigma})_{\alpha\beta}$ the matrix entries of $d_{\b\Sigma}$ we obtain that
\begin{equation}
\label{e7.new3}
(d_{\b\Sigma})_{\alpha\beta}\in \rho\Diff^{1+ \alpha- \beta}_{\b\: 1}(\Sigma).
\end{equation}
\subsection{Normal operators}
From \eqref{e7.new1} and \eqref{e7.new1} we obtain that
\begin{equation}
\label{e7.new4}
d^{\rg}_{\b\Sigma}= d^{\reta}_{\b\Sigma}+ \rho^{2}\cA^{\infty}(\Sigma;L(V_1,V_2)).
\end{equation}
For the  gauge operator \eqref{e7.new5}, $K_{\b\Sigma\norm}$ is defined by  
$K_{\b\Sigma\norm}\rho^{-1}=(K_{\b\Sigma}\rho^{-1})_{\norm}$.

\section{\calde projectors for the Wick-rotated normal operators}\label{sec10}\init
In this section we consider the Wick rotation of the normal operator $D_{k\b\norm}$, which is actually the free Laplacian on $\rr^{n}$ written in polar coordinates. We define and study the associated Dirichlet to Neumann map and \calde projectors. 

\def\tn{\tilde{n}}

\subsection{Wick rotation}\label{sec10.1}
In this subsection we discuss  following \cite[Sect.~3]{GW3} the Wick rotation of 
the normal operator $D_{k\b\norm}$, which by Prop.~\ref{prop10.1} is equal to $D^{\reta}_{k\b}$.  As a rule Wick-rotated versions of  Lorentzian objects will be denoted with tildes.
\subsubsection{Wick rotation of analytic tensors}\label{sec10.1.1}
We work on $M_{\norm}=I_{t}\times \rr^{+}_{\rho}\times \bS^{n-2}_{\omega}$. We consider $(k',l')$-tensors over  $M_{\norm}$ which are real analytic in $t$, and express such a  tensor as $u= u_{kl}(t, \cdot)\p_{t}^{\otimes k}dt^{\otimes l}$, where $u_{kl}(t,\cdot)$ is a $(k'-k,l'-l)$-tensor over $\rr^{+}_{\rho}\times \bS^{n-2}_{\omega}$. Then the {\em Wick rotation} $\widetilde{u}$ is by definition
\[
\widetilde{u}= \i^{l-k} u_{kl}(\i s, \cdot)\p_{s}^{\otimes k}ds^{\otimes l},
\]
i.e.~the  Wick rotation consists in replacing  $t$ by $\i s$, $dt$ by $\i ds$, $\p_{t}$ by $\i^{-1}\p_{s}$,  keeping the $(\rho, \omega)$ and the associated tensor components fixed.
\subsubsection{Wick-rotated metric}\label{sec10.1.2}
 Setting $t= \i s$  in the expression of $\reta$ given in \eqref{e2.mink-hyp}, we obtain the metric
 \beq\label{e10.n-22}
 \treta=r^2ds^2+dr^2+r^2\cos^2(s)\rh_{\bS^{n-2}},
 \eeq
 where we recall that $\rh_{\bS^{d}}$  is the canonical metric on $\bS^{d}$.
 
 We identify $ ]-\pi/2, \pi/2[_{s}\times \bS^{n-2}_{\omega}$  with $\bS^{n-1}\subset \rr^{n}$ with the two poles $y^{0}= \pm 1$ removed by the map
 \beq
 \label{e10.n1}
 ]-\pi/2, \pi/2[_{s}\times \bS^{n-2}_{\omega}\ni (s, \omega)\mapsto (\sin s, \cos s \omega)\in \bS^{n-1}.
 \eeq
 
 Then $\treta$ is simply the Euclidean metric on $\rr^{n}$ written in polar coordinates. Accordingly we set
 \[
 \begin{aligned}
\tM_{\norm}&= \rr^{+}_{\rho}\times \bS^{n-1},\quad
\treta_{\b}= (\frac{d\rho}{\rho})^{2}+ \rh_{\bS^{n-1}}=ds^{2}+ \tilde{\boldsymbol{\delta}}_{\b},\\
\tilde{\boldsymbol{\delta}}_{\b}&= (\frac{d\rho}{\rho})^{2}+ \cos^{2}(s)\rh_{\bS^{n-2}} ,\\
\Sigma_{\norm}&= \tM_{\norm}\cap \{s=0\}= \rr^{+}_{\rho}\times \bS^{n-2},\quad \hbox{ equipped with }\tdelta_{\b}(0)= (\frac{d\rho}{\rho})^{2}+ \rh_{\bS^{n-2}}.
\end{aligned}
\]
\subsubsection{Wick rotation of operators on $M_{\norm}$}\label{sec10.1.3}
If $A$ is a $\b$-differential operator acting on tensors on $M_{\norm}$ which is  real analytic in $t$, we define its Wick rotation $\widetilde{A}$ by:
\[
\widetilde{A}\widetilde{u}\defeq \widetilde{Au}, \hbox{ for }u \hbox{ analytic in }t.
\]
The  operator $\widetilde{A}$ can then  be applied to smooth tensors on $\tM_{\norm}$.
\subsubsection{Wick rotation of operators on $\Sigma_{\norm}$}\label{sec10.1.4}
 Let $A_{ij\Sigma}$, $i, j\in \{1, 2\}$ be a linear operator from $\cinf(\Sigma_{\norm}; V_{j}\otimes \cc^{2})$ to $\cinf(\Sigma_{\norm}; V_{i}\otimes \cc^{2})$. We set
\begin{equation}
\label{e10.n-4}
\widetilde{A}_{ij\Sigma}\defeq \cF_{i\Sigma}\circ A_{ij\Sigma}\circ \cF_{j\Sigma}^{-1},
\end{equation}
where
\beq\label{e10.n-7}
\cF_{k\Sigma}= \mat{\cF_{k}}{0}{0}{\cF_{k}}
\eeq
and $\cF_{k}$ is defined by:
 \beq\label{e10.n-6}
   \cF_{1}= \mat{\i}{0}{0}{\one}, \quad \cF_{2}= \left(\begin{array}{ccc}
-1&0&0\\0&\i&0\\0&0&\one
\end{array}\right),
\eeq
where we use the decomposition of $(0, k)$-tensors in Subsect.~\ref{sec2.3.2}. This is consistent with the prescription already introduced in Sects.~\ref{sec10.1.1}-\ref{sec10.1.3}, but we will apply it not only to differential operators.

 \subsubsection{Hilbertian scalar products}\label{sec10.1.5}
 We equip the fibers of $V_{k}$ with the {\em Hilbertian} product $(\cdot| \cdot)_{\tV_{k}^{\treta}}$ induced by $\treta$. 
 We set
 \beq\label{e10.n-3}
 \begin{aligned}
(u| u)_{\tV_{k}(\tM_{\norm})}&= \int_{\tM_{\norm}}(u| u)_{\tV_{k}^{\treta}}\dVol_{\treta_{\b}},\\
(u| u)_{\tV_{k}(\Sigma_{\norm})}&= \int_{\Sigma_{\norm}}(u| u)_{\tV_{k}^{\treta(0)}}\dVol_{\tdelta_{\b}(0)}.
\end{aligned}
 \eeq
The notation $(\cdot| \cdot)_{\tV_{k}(\Sigma_{\norm})}$ is coherent with the one introduced in \eqref{e2.11c} since $\boldsymbol{\delta}(0)= \widetilde{\boldsymbol{\delta}}(0)$.

% We set
% \beq\label{e10.n-3}
% \begin{array}{l}
%  (u| u)_{\tV_{k}(\tM_{\norm})}= \int_{\tM_{\norm}}(u| u)_{\tV_{k}^{\treta}}\dVol_{\treta_{\b}},\\[2mm]
%   (u| u)_{\tV_{k0}(\Sigma_{\norm})}= \int_{\Sigma_{\norm}}(u| u)_{\tV_{k}^{\treta(0)}}\dVol_{\tdelta_{\b}(0)}.
%\end{array}
% \eeq
%The notation $(\cdot| \cdot)_{\tV_{k0}(\Sigma_{\norm})}$ is coherent with the one introduced in \eqref{e2.11c} since $\boldsymbol{\delta}(0)= \widetilde{\boldsymbol{\delta}}(0)$.

\subsection{Wick-rotated normal operators}\label{sec10.2}
We saw in Prop.~\ref{prop10.1} that the normal operator $D_{k\b\norm}$ of 
 $D_{k\b}$ is equal  to $D_{k\b}^{\reta}$.  
 
 The Wick rotation  $\tD_{k}^{\reta}$ of $D_{k}^{\reta}$ is simply the  operator $D_{k}^{\treta}$, i.e.~minus the trivial flat Laplacian  in $\rr^{n}$ acting on $(0, k)$-tensors. It follows that the Wick rotation $\tD_{k\b\norm}$ is 
 $D_{k\b}^{\treta}= \rho^{-1-\frac{n}{2}}
 D_{k}^{\treta}
 \rho^{-1+\frac{n}{2}}$, acting on $\tM_{\norm}$.
  
 \subsection{Properties of $\tD_{k\b\norm}$}\label{sec10.3}

 If we use the Euclidean frame $(dx^{i})$ to identify  $V_{k}$ with $\tM_{\norm}\times \cc^{d_{k}}$, then 
 $\tD_{k}^{\reta}$ is the scalar Laplacian $-\Delta^{\treta}$ and 
 \beq\label{e10.n3}
 \tD_{k\b\norm}= (-(\rho\p_{\rho})^{2}+ (\tfrac{n-2}{2})^{2}- \Delta_{\bS^{n-1}})\otimes\one_{\cc^{d_{k}}},
 \eeq
  where $\Delta_{\bS^{n-1}}$ is the scalar Laplacian on $\bS^{n-1}$. Clearly $\tD_{k\b\norm}$ is selfadjoint for $(\cdot| \cdot)_{\tV_{k}(\tM_{\norm})}$ with $\tD_{k\b\norm}\geq (\frac{n-2}{2})^{2}$. By a standard argument this implies that
  \beq\label{e10.n4}
  \tD_{k\b\norm}: H^{s}_{\b}(\tM_{\norm}; V_{k})\tosim H^{s-2}_{\b}(\tM_{\norm}; V_{k}),\ \ s\in \rr.
  \eeq
  \subsubsection{Gauge invariance}\label{sec10.3.1}
  Let $\tK_{\b\norm}$ be the Wick rotation of $K_{\b\norm}$. From \eqref{e7.0} we obtain $D_{2\b\norm}\circ K_{\b\norm}= T_{\b\norm}\circ D_{1\b\norm}$, where  $T_{\b\norm}= \rho^{-2}K_{\b\norm}\rho^{2}$ hence:
  \begin{equation}
  \label{e10.n-77}
  \tD_{2\b\norm}\circ \tK_{\b\norm}= \widetilde{T}_{\b\norm}\circ \tD_{1\b\norm} \hbox{ for } \widetilde{T}_{\b\norm}= \rho^{-2}\tK_{\b\norm}\rho^{2}.
  \end{equation}
  
\subsubsection{Expression for $\tD_{k\b\norm}$}\label{sec10.3.2}

$\tD_{k\b\norm}$ and associated objects have the simplest expressions when written with respect to the Euclidean frame. However, our ultimate goal is to use it for a construction of Hadamard projectors, and for this purpose we will need expressions with respect to Wick-rotated analogues of the frames used  in Sect.~\ref{sec6}. Therefore, we now derive an expression for $\tD_{k\b\norm}$ similar to the one for $D_{k\b}$ in Subsect.~\ref{sec6.2}. We often suppress the index $k$.

 On $\tM_{\norm}$ we use the frame $\tilde{e}_{a}= (r^{-1}\p_{s}, \p_{r}, r^{-1}\p_{\omega})$.
  We denote by $\widetilde{\nabla}$ the metric connection for $\treta$, set $\tilde{r}_{a}= {\rm div}^{\treta}\tilde{e}_{a}$ and as in Subsect.~\ref{sec6.2} 
 \beq\label{e10.n2b}
 \widetilde{\nabla}_{s}=\rho^{-\frac{n}{2}}\widetilde{\nabla}_{ e_{s}}\rho^{-1+\frac{n}{2}}.
 \eeq
 Then as in Prop.~\ref{prop3.normal-split} we have:
 \beq\label{e10.n2}
 \tD_{\b\norm}= ( \widetilde{\nabla}_{s})^{*} \widetilde{\nabla}_{s}+ \tilde{a}_{\b}(s),
 \eeq
 for $\tilde{a}_{\b}(s)\in \cinf(I; \Diff^{2}_{\b}(\Sigma_{\norm}; V))$.

 If we identify $V$ with $\tM_{\norm}\times \cc^{d}$ using the frame obtained from  the dual frame $\tilde{e}^{a}= (rds, dr, rd\omega)$ we have as in Subsect.~\ref{sec6.2}:
 \begin{equation}
 \label{e10.n-1}
 \tD_{\b\norm}= - (\p_{s}+ \tilde{R}_{\b\norm}(s)+\tilde{r}_{\b\norm}(s))(\p_{s}+\widetilde{R}_{\b\norm}(s))+ \tilde{a}_{\b\norm}(s).
 \end{equation}
 while, see Prop.~\ref{prop10.1}:
 \begin{equation}
 \label{e10.n-5}
 D_{\b\norm}= (\p_{t}+ R_{\b\norm}(t)+r_{\b\norm}(t))(\p_{t}+ R_{\b\norm}(t))+ a_{\b\norm}(t)
 \end{equation}
 
  Let us  relate $\tilde{R}_{\b\norm}(s), \tilde{r}_{\b\norm}(s), \tilde{a}_{\b\norm}(s)$ with their Lorentzian analogs $R_{\b\norm}(t), r_{\b\norm}(t), a_{b\norm}(t)$ in Prop.~\ref{prop10.1}. Recall that $\cF_{k}$ for $k= 1, 2$ is defined in \eqref{e10.n-6}.
  
\begin{lemma}\label{lemma10.0}
We have \[
  \tilde{a}_{k\b\norm}(s)= \cF_{k} a_{k\b\norm}(\i s)\cF_{k}^{-1},\quad \tilde{R}_{k\b\norm}(s)= \i\cF_{k} R_{k\b\norm}(\i s)\cF_{k}^{-1},\quad \tilde{r}_{\b\norm}(s)= \i r_{\b\norm}(\i s). \]
\end{lemma}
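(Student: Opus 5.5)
The plan is to derive all three identities from the definition of the Wick rotation of operators in \ref{sec10.1.3}, namely $\widetilde{A}\widetilde{u}=\widetilde{Au}$ for $u$ analytic in $t$. We apply it to the factorized form of $D_{k\b\norm}$ from Prop.~\ref{prop10.1} and then compare with the factorized form \eqref{e10.n-1} of $\tD_{k\b\norm}$. The key observation is that, in the frame $e^{a}=(r\,dt,dr,r\,d\omega)$, a $(0,k)$-tensor $u$ analytic in $t$ has Wick rotation whose components in the frame $\tilde e^{a}=(r\,ds,dr,r\,d\omega)$ are $\cF_{k}\,u(\i s)$. Indeed each $dt$ factor becomes $\i\,ds$, so the $t$-component picks up $\i$ and the $tt$-component picks up $\i^{2}=-1$. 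This is exactly \eqref{e10.n-6}, read in the decomposition of Subsect.~\ref{sec2.3.2}. Hence, at the level of component vectors, the Wick rotation is the map $W_{k}:u(t)\mapsto \cF_{k}u(\i s)$.

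Next we conjugate each building block by $W_{k}$. A multiplication operator $B(t)$ acting on components becomes $W_{k}BW_{k}^{-1}=\cF_{k}B(\i s)\cF_{k}^{-1}$. Since $\p_{s}[u(\i s)]=\i(\p_{t}u)(\i s)$, we have $W_{k}\p_{t}W_{k}^{-1}=\i^{-1}\p_{s}$. Therefore
\[
W_{k}\nabla_{t\norm}W_{k}^{-1}=\i^{-1}\bigl(\p_{s}+\i\cF_{k}R_{k\b\norm}(\i s)\cF_{k}^{-1}\bigr),
\]
and similarly
\[
W_{k}(\nabla_{t\norm}+r_{\b\norm})W_{k}^{-1}=\i^{-1}\bigl(\p_{s}+\i\cF_{k}R_{k\b\norm}(\i s)\cF_{k}^{-1}+\i r_{\b\norm}(\i s)\bigr),
\]
where $r_{\b\norm}$ is scalar and commutes with $\cF_{k}$. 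Using $(\nabla_{t\norm})^{\star}=-(\nabla_{t\norm}+r_{\b\norm})$, we get
\[
\tD_{k\b\norm}=W_{k}D_{k\b\norm}W_{k}^{-1}=-(\p_{s}+\widehat R+\widehat r)(\p_{s}+\widehat R)+\cF_{k}a_{k\b\norm}(\i s)\cF_{k}^{-1},
\]
with $\widehat R=\i\cF_{k}R_{k\b\norm}(\i s)\cF_{k}^{-1}$ and $\widehat r=\i r_{\b\norm}(\i s)$. The sign works out because $(\i^{-1})^{2}=-1$, which cancels the sign in $-(\nabla_{t\norm})^{\star}\nabla_{t\norm}=(\nabla_{t\norm}+r_{\b\norm})\nabla_{t\norm}$. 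The spatial operator $a_{k\b\norm}(t)$ involves no $t$-derivatives, and the spatial frame components are untouched. So the conjugation only substitutes $t=\i s$ in the coefficients and conjugates by $\cF_{k}$.

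Finally we need uniqueness of the decomposition \eqref{e10.n-1}, which is the step requiring the most care. It is not enough that both expressions equal $\tD_{k\b\norm}$; we must check that the objects $\tilde R_{\b\norm}$, $\tilde r_{\b\norm}$, $\tilde a_{\b\norm}$ defined intrinsically via $\widetilde\nabla$ and $\mathrm{div}^{\treta}$ coincide with $\widehat R,\widehat r$ and the rotated $a$. We would do this by Wick rotating at the level of geometry rather than operators. The Levi-Civita connection of $\treta$ is obtained from that of $\reta$ by the analytic substitution $t=\i s$, since Christoffel symbols are rational in the metric coefficients and their derivatives. This gives $\widetilde\nabla_{\tilde e_{s}}=\i\,\widetilde{\nabla_{e_{t}}}$, because $\tilde e_{s}=r^{-1}\p_{s}$ corresponds to $\i\,e_{t}$. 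The conjugation by $\rho^{\pm}$ powers in \eqref{e10.n2b} is $t$-independent and commutes with rotation, which yields $\tilde R$. Likewise $\mathrm{div}^{\treta}\tilde e_{s}=\i\,(\mathrm{div}^{\reta}e_{t})(\i s)$ yields $\tilde r$. The identity for $\tilde a$ then follows by subtracting the first-order parts from the two equal expressions for $\tD_{k\b\norm}$. We should also double-check that the factor convention $\i^{-1}$ versus $\i$ in $\p_{t}\mapsto\i^{-1}\p_{s}$ is consistent with $\tilde e_{s}$ carrying no extra factor, since this is where a sign error is most likely.
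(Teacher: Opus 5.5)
Your proposal is correct and follows the paper's route. The paper's proof consists of the conjugation identity $\tD_{\b\norm}(s,\p_s)=\cF D_{\b\norm}(\i s,-\i\p_s)\cF^{-1}$, which is your map $W_k$, followed by reading off the coefficients. Your extra step is a worthwhile refinement: you identify $\tilde R_{\b\norm}$ and $\tilde r_{\b\norm}$ separately by Wick rotating the connection and the divergence. This matters because the factorized operator alone only determines $2\tilde R_{\b\norm}+\tilde r_{\b\norm}$, a point the paper's ``direct computation'' leaves implicit.
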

\proof   We suppress the index $k$. Let us write $D_{\b\norm}$ as $D_{\b\norm}(t, \p_{t})$ and $\tD_{\b\norm}$ as $\tD_{\b\norm}(s, \p_{s})$. Because of the change of frame from $(e^{a})$ to $(\tilde{e}^{a})$, we have
  \beq\label{e10.n-2}
  \tD_{\b\norm}(s, \p_{s})= \cF D_{\b\norm}(\i s, - \i\p_{s})\cF^{-1}.
  \eeq
This implies the lemma by a direct computation. \qed    
  \subsubsection{Imaginary time reflection}\label{sec10.3.3}
  We denote by $\kappa_{k}: \cinf(\tM_{\norm}; V_{k})\to \cinf(\tM_{\norm}; V_{k})$ the  lift to $V_{k}$ of the involution $\chi: (\rho, s, \omega)\mapsto (\rho, -s, \omega)$ on $\tM_{\norm}$. We have
\begin{equation}
\label{e10.n4b}
\kappa_{k}\circ \tD_{k\b\norm}= \tD_{k\b\norm}\circ \kappa_{k}.
\end{equation}
Since $\chi= \one$ over $\Sigma$, $\kappa_{k}$ preserves the fibers of $V_{k}$ over $\Sigma$ and we denote by $\kappa_{k\Sigma}$ its restriction to $V_{k}$ over $\Sigma$.

If we decompose  a $(0, k)$-tensor as in Subsect.~\ref{sec2.3.2} we obtain that 
\beq\label{e10.n4c}
\kappa_{1\Sigma}= \mat{-1}{0}{0}{\one},\quad \kappa_{2\Sigma}= \left( \begin{array}{ccc}
1&0&0\\
0&-\one&0\\
0&0&\one
\end{array}\right).
\eeq

  \subsection{Green's formula}\label{sec10.3b}
  We set $ \Omega^{\pm}\defeq \rr^{+}_{\rho}\times \bS^{n-1\pm}$,  where $\bS^{n-1\pm}$ are the two hemispheres given by $\pm s>0$, see \eqref{e10.n1}. For $u\in \coinf(\bar{\Omega}^{\pm}; V)$  we set
\beq\label{e10.normal-trace}
        \trho^{\pm}u
        \defeq
        \binom{u(0^{\pm})}{-\widetilde{\nabla}_{s}u(0^{\pm})}.
\eeq

The operators $\trho^{\pm}$ uniquely extend to $\trho^{\pm}: H^{2}_{\b}(\Omega^{\pm}; V)\to H^{3/2}_{\b}(\Sigma_{\norm}; V)\oplus H^{\12}_{\b}(\Sigma_{\norm}; V)$.
\def\tsig{\tilde{\sigma}}
For $u\in\coinf(\tM_{\norm})$ we set
\beq\label{e10.stup-trace}
\trho u \defeq
        \binom{u(0)}{-\widetilde{\nabla}_{s}u(0)}.
\eeq

\begin{proposition}\label{prop10.n1}
Let $ \tsig=\mat{0}{-\one}{\one}{0}$. Then for 
$u,v\in H^{1}_{\b}(\Omega^{\pm}; V)$ with $\tD_{\b\norm}u,\tD_{\b\norm}v\in L^{2}_{\b}(\Omega^{\pm}; V)$ we have:\beq\label{e01.green1}
        (u|\tD_{\b\norm}v)_{\tV(\Omega^{\pm})}
        -(\tD_{\b\norm}u|v)_{\tV(\Omega^{\pm})}
        =
        \pm(\trho^{\pm}u|\tsig\trho^{\pm}v)_{\tV(\Sigma_{\norm})\otimes \cc^{2}}.
\eeq

%Let $ \tsig=\mat{0}{-\one}{\one}{0}$. Then for 
%$u\in H^{1}_{\b}(\Omega^{\pm}; V)$ we have:\beq\label{e01.green1}
%        (u|\tD_{\b\norm}u)_{\tV(\Omega^{\pm})}
%        -(\tD_{\b\norm}u|u)_{\tV(\Omega^{\pm})}
%        =
%        \pm(\trho^{\pm}u|\tsig\trho^{\pm}u)_{\tV_{0}(\Sigma_{\norm})\otimes \cc^{2}}.
%\eeq
%and
%\beq\label{e01.green2}
%        (\tD_{\b\norm}u|u)_{\tV(\Omega^{\pm})}
%        +(u|\tD_{\b\norm}u)_{\tV(\Omega^{\pm})}
%        =2q_{\tV(\Omega^{\pm})}(u,u)
%        \mp(\trho^{\pm}u|\tilde q\trho^{\pm}u)_{\tV(\Sigma_{\norm})\otimes \cc^{2}},
%\eeq
%where
%\[
%        q_{\tV(\Omega^{\pm})}(u,u)
%        =
%        \int_{\tV(\Omega^{\pm})}(\widetilde{\nabla}^{\rm n}_{\p_{s}}u| \widetilde{\nabla}^{\rm n}_{\p_{s}}u)_{\tV}+ (u| \tilde{a}_{\b\norm}(\cdot)u)_{\tV}
%        \,\dVol_{\widetilde g_{\b}}.
%\]
%\textcolor{red}{maybe second Green is not needed}
\end{proposition}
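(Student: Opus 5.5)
This is a Green's formula for the operator $\tD_{\b\norm}=(\widetilde{\nabla}_{s})^{*}\widetilde{\nabla}_{s}+\tilde{a}_{\b}(s)$ of \eqref{e10.n2}. The plan is to integrate by parts in $s$ only, for smooth compactly supported sections on $\bar{\Omega}^{\pm}$. We then extend to the stated class by density and continuity of the traces $\trho^{\pm}$.

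Fix the $+$ case; the $-$ case is identical up to the orientation of the boundary. Write the density $\dVol_{\treta_{\b}}$ as $ds\,\dVol_{\tilde{\boldsymbol{\delta}}_{\b}(s)}$, so that $(u|v)_{\tV(\Omega^{+})}=\int_{0}^{\pi/2}(u(s)|v(s))_{\tV(\Sigma_{\norm}),s}\,ds$ with $s$-dependent fiberwise Hilbertian forms.

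The first key step is to check that $\tilde{a}_{\b}(s)$ is formally selfadjoint for $(\cdot|\cdot)_{\tV(\Sigma_{\norm}),s}$ for each $s$. This follows from the selfadjointness of $\tD_{\b\norm}$ in \ref{sec10.3}, because the operator $(\widetilde{\nabla}_{s})^{*}\widetilde{\nabla}_{s}$ is selfadjoint on its own.

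The second key step is the one-dimensional identity
\[
\partial_{s}(u|v)_{s}=(\widetilde{\nabla}_{s}u|v)_{s}+(u|\widetilde{\nabla}_{s}v)_{s}.
\]
Here we use that $\widetilde{\nabla}$ is the metric connection and that the conjugation by $\rho$-powers in \eqref{e10.n2b} is designed exactly so that the divergence term of $\tilde{e}_{s}$ is absorbed into the $s$-dependence of the density. Concretely, $(\widetilde{\nabla}_{s})^{*}=-(\widetilde{\nabla}_{s}+\tilde{r}_{\b\norm})$, where $\tilde{r}_{\b\norm}$ is the log-derivative of the density. With these two steps,
\[
(u|(\widetilde{\nabla}_{s})^{*}\widetilde{\nabla}_{s}v)_{\tV(\Omega^{+})}-((\widetilde{\nabla}_{s})^{*}\widetilde{\nabla}_{s}u|v)_{\tV(\Omega^{+})}
\]
equals the boundary term at $s=0$. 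This boundary term is $(u(0)|\widetilde{\nabla}_{s}v(0))-(\widetilde{\nabla}_{s}u(0)|v(0))$ with sign $+1$, since the inner normal of $\Omega^{+}$ points in $+s$. The term at $s=\pi/2$ is absent: there the slices degenerate to the pole, and for compactly supported smooth sections on $\bar{\Omega}^{+}$ (supported away from $s=\pi/2$ in $s$, or using that $\cos^{n-2}s$ vanishes there) no contribution survives. The boundary pairing uses the slice form at $s=0$, which is $(\cdot|\cdot)_{\tV(\Sigma_{\norm})}$ of \eqref{e10.n-3}.

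It remains to match signs with $\tsig$. With $\trho^{+}u=(u(0),-\widetilde{\nabla}_{s}u(0))$, the vector $\tsig\trho^{+}v$ has components $(\widetilde{\nabla}_{s}v(0),\,v(0))$. Hence
\[
(\trho^{+}u|\tsig\trho^{+}v)=(u(0)|\widetilde{\nabla}_{s}v(0))-(\widetilde{\nabla}_{s}u(0)|v(0)),
\]
as required. For $\Omega^{-}$ the boundary is the upper endpoint, which gives the sign $-$.

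Finally, we extend by density of $\coinf(\bar{\Omega}^{\pm};V)$ in the space $\{u\in H^{1}_{\b}:\tD_{\b\norm}u\in L^{2}_{\b}\}$ with its graph norm. We also need continuity of $\trho^{\pm}$ from that space to $H^{1/2}_{\b}\oplus H^{-1/2}_{\b}(\Sigma_{\norm})$, where the pairing makes sense. This last extension is the main obstacle. It is standard for smooth elliptic boundary problems. In the $\b$-setting we plan to transport it via the cylinder map, since $\tD_{\b\norm}$ becomes a translation-invariant elliptic operator on $\rr_{\sigma}\times \bS^{n-1\pm}$ (with $\sigma=-\ln\rho$). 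Local elliptic regularity near the boundary, which is uniform by translation invariance, then gives the trace theorem.

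A subtle point is the pole $s=\pm\pi/2$, where the frame $\tilde{e}_{a}$ degenerates. We would avoid it by working away from the poles with cutoffs, using that the identity is coordinate-invariant. Alternatively, we can note that in the Euclidean frame everything is smooth on the closed hemisphere, so the only boundary is $s=0$.
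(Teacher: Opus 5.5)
Your strategy matches the paper's: the formula comes from integrating by parts in $s$ alone, using $\tD_{\b\norm}=(\widetilde\nabla_s)^*\widetilde\nabla_s+\tilde{a}_\b(s)$ with $\tilde{a}_\b(s)$ free of $\partial_s$. The paper writes the difference as $(eu|[\one_{\Omega^{+}},\tD_{\b\norm}]ev)_{\tV(\tM_\norm)}$, where $\tilde{a}_\b$ commutes with $\one_{\Omega^+}$ and only $[\one_{\Omega^{+}},(\widetilde\nabla_s)^*]\widetilde\nabla_s+(\widetilde\nabla_s)^*[\one_{\Omega^{+}},\widetilde\nabla_s]$ contributes.

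However, your second key step is false as stated. Your slice forms $(\cdot|\cdot)_s$ include the density $\dVol_{\tilde{\boldsymbol{\delta}}_\b(s)}=\cos^{n-2}(s)\,\frac{d\rho}{\rho}\,d\omega$. Metric compatibility of $\widetilde\nabla$ only handles the fiber form, so the correct rule is
\[
\partial_s(u|v)_s=(\widetilde\nabla_s u|v)_s+(u|\widetilde\nabla_s v)_s+(u|\tilde{r}_{\b\norm}v)_s,\qquad \tilde{r}_{\b\norm}(s)=\partial_s\log\cos^{n-2}s=-(n-2)\tan s .
\]
The extra term is exactly the log-derivative you mention afterwards. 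Nothing is absorbed by the conjugation in \eqref{e10.n2b}: since $\tilde{e}_s=\rho\,\partial_s$ and $\partial_s\rho=0$, it simply gives $\widetilde\nabla_s=\widetilde\nabla_{\partial_s}$.

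As written, your two steps are inconsistent. Integrating your identity in $s$ for compactly supported $u,v$ would force $(\widetilde\nabla_s)^*=-\widetilde\nabla_s$. Combining your identity with the correct adjoint $(\widetilde\nabla_s)^*=-(\widetilde\nabla_s+\tilde{r}_{\b\norm})$ leaves the uncancelled interior term $\int_0^{\pi/2}\big[(\tilde{r}_{\b\norm}\widetilde\nabla_s u|v)_s-(u|\tilde{r}_{\b\norm}\widetilde\nabla_s v)_s\big]\,ds$. With the corrected rule the $\tilde{r}_{\b\norm}$ terms cancel exactly, giving
\[
(u|(\widetilde\nabla_s)^*\widetilde\nabla_s v)_{\tV(\Omega^+)}=(u(0)|\widetilde\nabla_s v(0))_{\tV(\Sigma_\norm)}+(\widetilde\nabla_s u|\widetilde\nabla_s v)_{\tV(\Omega^+)} .
\]
The contribution at $s=\pi/2$ vanishes thanks to $\cos^{n-2}s$, as you say. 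From there your antisymmetrization and your sign check against $\tilde{\sigma}$ are correct.

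Where you genuinely depart from the paper is the passage to the stated class, and here your route is the better one. The paper asserts $u,v\in H^2_\b(\Omega^\pm;V)$ ``by ellipticity'' and computes with extensions. But $u\in H^1_\b$ with $\tD_{\b\norm}u\in L^2_\b$ does not give $H^2$ regularity up to $s=0$ without a boundary condition; locally, a harmonic function whose Dirichlet data lie in $H^{1/2}$ but not $H^{3/2}$ is a counterexample. In Lemma~\ref{lemma10.1} the formula is applied precisely to such $H^1$ solutions, whose Neumann data $\tilde{n}^{\pm}_{\norm}v$ lie only in $H^{-1/2}_\b(\Sigma_\norm;V)$.

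Your plan is: prove the identity for smooth sections, then use density in the graph norm and continuity of $\trho^\pm$ into $H^{1/2}_\b\oplus H^{-1/2}_\b$, obtained on the cylinder from translation-invariant boundary estimates. This is the Lions--Magenes argument, and it yields the statement at the regularity actually used later. Its cost is that this trace and density theorem must really be proved, which the paper sidesteps.
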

\proof We only prove the $+$ case. By ellipticity we know that $u,v\in H^{2}_{\b}(\Omega^{+}; V)$. Let $eu, ev\in H^{1}_{\b}(\tM_{\norm}; V)$ an extension of $u, v$ to $\tM_{\norm}$. Then using \eqref{e10.n2} we get:
\[
\begin{aligned}
&(u| \tD_{\b\norm}v)_{\tV(\Omega^{+})}- ( \tD_{\b\norm}u| v)_{\tV(\Omega^{+})}\\
&= (eu| [\one_{\Omega^{+}}, \tD_{\b\norm}]ev)_{\tV(\tM_{\norm})}\\
&= (eu| [\one_{\Omega^{+}}, (\widetilde{\nabla}_{s})^{*}]\widetilde{\nabla}_{s}ev)_{\tV(\tM_{\norm})}\\
&\phantom{=\,}+ (eu|(\widetilde{\nabla}_{s})^{*}[\one_{\Omega^{+}}, \widetilde{\nabla}_{s}]ev)_{\tV(\tM_{\norm})}\\
&= (\trho^{\pm}u|\tsig\trho^{\pm}v)_{\tV(\Sigma_{\norm})\otimes \cc^{2}},
\end{aligned}
\]
since $\widetilde{\nabla}_{s}= \p_{s}+ \tilde{R}_{\b\norm}(s)$ where $\tilde{R}_{\b\norm}(s)$ is a multiplication operator. \qed
 \subsection{Dirichlet to Neumann map}\label{sec10.4}
 Let $\gamma u= u\traa{\Sigma_{\norm}}$ and $\gamma^{*}$ is its adjoint for the scalar products $(\cdot|\cdot)_{\tV(\tM_{\norm})}$ and $(\cdot| \cdot)_{\tV(\Sigma_{\norm})}$. Then
 \beq\label{e10.n5}
 \begin{aligned}
\gamma: H^{1}_{\b}(\tM_{\norm}; V)&\to H^{\12}_{\b}(\Sigma_{\norm}; V),\\
\gamma^{*}: H^{-\12}_{\b}(\Sigma_{\norm}; V)&\to H^{-1}_{\b}(\tM_{\norm}; V).
\end{aligned}
 \eeq
 Moreover there exists an extension map \beq\label{e10.n6}
 E:H^{\12}_{\b}(\Sigma_{\norm}; V)\to H^{1}_{\b}(\tM_{\norm}; V)\hbox{ with }\gamma\circ E= \one.
 \eeq
 
 We define $H^{s}_{\b}(\Omega^{\pm}; V)$ to be the space obtained by restrictions of $u\in H^{s}_{\b}(\tM_{\norm}; V)$ to $\Omega^{\pm}$ (so it measures b-regularity at the $\rho=0$ boundary but usual regularity across $\Sigma$).

\begin{definition}\label{def10.1}
 Let $v\in H^{\12}_{\b}(\Sigma_{\norm}; V)$  and $u^{\pm}\in H^{1}_{\b}(\Omega^{\pm}; V)$ the solution of
\beq\label{e10.n2bb}
\left\{
\begin{aligned}
\tD_{\b\norm}u^{\pm}&= 0\hbox{ in }\Omega^{\pm},\\
u^{\pm}(0^{\pm})&=v.
\end{aligned}
  \right.
\eeq
The Dirichlet to Neumann maps are 
\beq\label{e10.fixed-dtn}
        \widetilde n_{\norm}^{\pm}v
        \defeq
        -\widetilde{\nabla}_{s}u^\pm(0^\pm),
\eeq
\end{definition}

\begin{lemma}\label{lemma10.1}
We have
\[
\tn_{\norm}^{\pm}= \pm\12 (\gamma \tD_{\b\norm}^{-1}\gamma^{*})^{-1},
\]
as a map between $H^{\12}_{\b}(\Sigma_{\norm}; V)$ and $H^{-\12}_{\b}(\Sigma_{\norm}; V)$. \end{lemma}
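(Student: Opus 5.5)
We use the single-layer potential. For $g\in H^{-\12}_{\b}(\Sigma_{\norm}; V)$ put $w\defeq \tD_{\b\norm}^{-1}\gamma^{*}g$. By \eqref{e10.n4} and \eqref{e10.n5}, $w\in H^{1}_{\b}(\tM_{\norm}; V)$. Since $\tD_{\b\norm}w=\gamma^{*}g$ is supported on $\Sigma_{\norm}$, the restrictions $w^{\pm}=w\traa{\Omega^{\pm}}$ solve $\tD_{\b\norm}w^{\pm}=0$ in $\Omega^{\pm}$. As $w\in H^{1}_{\b}$, both traces equal $\gamma w=\gamma\tD_{\b\norm}^{-1}\gamma^{*}g\eqdef Sg$.

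The first step is the uniqueness needed in Definition~\ref{def10.1}. If $u\in H^{1}_{\b}(\Omega^{\pm};V)$ solves $\tD_{\b\norm}u=0$ with $u(0^{\pm})=0$, then pairing with $u$ and integrating by parts (the Green formula, Prop.~\ref{prop10.n1}, in its quadratic-form version) gives $\|\widetilde\nabla_s u\|^{2}+(u|\tilde a_{\b\norm}u)=0$. The positivity $\tD_{\b\norm}\geq(\frac{n-2}{2})^{2}$ persists for the Dirichlet form on the half-space, since it is the form of $-\Delta$ in the Euclidean frame. Hence $u=0$. So once $S$ is known to be invertible, $w^{\pm}=u^{\pm}$ is the Dirichlet solution with datum $v=Sg$.

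The key step is the jump relation. For $\phi\in\coinf(\tM_{\norm};V)$ we have $(\phi|\tD_{\b\norm}w)=(\gamma\phi|g)_{\tV(\Sigma_{\norm})}$. We compute the left side by splitting $\tM_{\norm}=\Omega^{+}\cup\Omega^{-}$ and applying \eqref{e01.green1} on each half, after first justifying it for $w^{\pm}\in H^{1}_{\b}$ with $\tD w^{\pm}=0$ using interior/boundary elliptic regularity up to $s=0$. This yields
\[
(\gamma\phi|g)=(\trho^{+}\phi|\tsig\trho^{+}w)-(\trho^{-}\phi|\tsig\trho^{-}w).
\]
The Dirichlet components of $w$ agree on the two sides, so only the Neumann components contribute, giving
\[
g=-\widetilde\nabla_s w(0^{+})+\widetilde\nabla_s w(0^{-})=\tn^{+}_{\norm}Sg-\tn^{-}_{\norm}Sg .
\]
The sign should be checked carefully against $\tsig$.

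Next we use the symmetry under $\kappa$. By \eqref{e10.n4b} the reflection exchanges $\Omega^{\pm}$. It also commutes with $\gamma$, since $\kappa_{\Sigma}$ acts fiberwise, and with $\tD^{-1}_{\b\norm}$. The vector field $\p_s$ changes sign under the reflection. Combining this with $\kappa_{\Sigma}$ and the compatibility of $\widetilde\nabla_s$, we get $\tn^{-}_{\norm}=-\kappa_{\Sigma}\tn^{+}_{\norm}\kappa_{\Sigma}$. In the Euclidean frame the reflection acts diagonally on $V$, so it should commute with $\tn^{+}_{\norm}$, which reduces the identity to $\tn^{-}_{\norm}=-\tn^{+}_{\norm}$. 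Substituting into the jump relation gives $2\tn^{+}_{\norm}S=\one$, and symmetrically $-2\tn^{-}_{\norm}S=\one$.

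It remains to invert $S$. It is bounded $H^{-\12}_{\b}\to H^{\12}_{\b}$, selfadjoint, and coercive, because
\[
(g|Sg)=(\tD^{-1}_{\b\norm}\gamma^{*}g|\gamma^{*}g)\gtrsim\|\gamma^{*}g\|_{H^{-1}_{\b}}^{2}\gtrsim\|g\|^{2}_{H^{-\12}_{\b}} .
\]
The last inequality uses surjectivity of $\gamma$ via the extension map $E$ of \eqref{e10.n6}. Lax--Milgram then gives $S^{-1}$. The main obstacle is the reflection step: verifying rigorously that $\kappa$ conjugates $\tn^{+}$ to $-\tn^{-}$ with the tensor signs \eqref{e10.n4c} handled correctly. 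If this is awkward in the adapted frame, we will do the computation entirely in the Euclidean frame, where $\tD_{\b\norm}$ is scalar \eqref{e10.n3} and the reflection commutes componentwise.
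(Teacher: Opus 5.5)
Your proposal is correct and is essentially the paper's proof run in the opposite direction. The paper starts from the Dirichlet solution for $v$ and takes its even extension $w$, which is precisely your single-layer potential $\widetilde{D}_{\b\norm}^{-1}\gamma^{*}(2\tilde n^{+}_{\norm}v)$. It then applies Green's formula on $\Omega^{\pm}$ together with $\tilde n^{-}_{\norm}=-\tilde n^{+}_{\norm}$ to get $v=2\gamma\widetilde{D}_{\b\norm}^{-1}\gamma^{*}\tilde n^{+}_{\norm}v$, and inverts $\gamma\widetilde{D}_{\b\norm}^{-1}\gamma^{*}$ by the same Lax--Milgram coercivity estimate through the extension map $E$.

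One correction: your intermediate display has the Green's-formula signs reversed. It should read $(\gamma\phi|g)=-(\widetilde{\varrho}^{+}\phi|\tilde\sigma\widetilde{\varrho}^{+}w)+(\widetilde{\varrho}^{-}\phi|\tilde\sigma\widetilde{\varrho}^{-}w)$. Your resulting jump relation $g=(\tilde n^{+}_{\norm}-\tilde n^{-}_{\norm})Sg$ is nevertheless right, and your direction (via uniqueness and surjectivity of $S$) also yields existence of the Dirichlet solution for every $v\in H^{1/2}_{\b}$ as a by-product.
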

\proof  We denote $H^{s}_{\b}(\tM_{\norm}; V)$, $H^{s}_{\b}(\Sigma_{\norm}; V)$ simply by $H^{s}_{\b}(\tM_{\norm})$, $H^{s}_{\b}(\Sigma_{\norm})$.
 Since $\tD_{\b\norm}$ is invariant under the reflection $(s, \omega)\mapsto (-s, \omega)$, we see that $\tn_{\norm}^{-}= - \tn_{\norm}^{+}$, so it suffices to prove the $+$ case. 

Let $u\in H^{1}_{\b}(\Omega^{+})$ be as in \eqref{e10.n2bb} and let $w\in H^{1}_{\b}(\tM_{\norm})$ the extension of $u$ which is even w.r.t. $s\mapsto -s$. Then $\tD_{\b\norm}w=0$ in $\Omega^{\pm}$ and  by \eqref{e01.green1} we have for $\phi\in \coinf(\tM_{\norm})$:
\[
\begin{aligned}
(\tD_{\b\norm}\phi| w)_{\tV(\tM_{\norm})}
&= (\tD_{\b\norm}\phi| w)_{\tV(\Omega^{+})}+(\tD_{\b\norm}\phi| w)_{\tV(\Omega^{-})}\\
&= -(\tilde{\varrho}\phi| \tsig\tilde{\varrho}^{+}w)_{\tV(\Sigma_{\norm})\otimes \cc^{2}}+(\tilde{\varrho}\phi| \tsig\tilde{\varrho}^{-}w)_{\tV(\Sigma_{\norm})\otimes \cc^{2}}\\
&= (\gamma\phi|n_{\norm}^{+}v)_{\tV(\Sigma_{\norm})}-(\gamma\phi| n_{\norm}^{-}v)_{\tV(\Sigma_{\norm})}\\
&=2(\gamma\phi|n_{\norm}^{+}v)_{\tV(\Sigma_{\norm})}=2(\phi|\gamma^{*}n_{\norm}^{+}v)_{\tV(\tM_{\norm})},
\end{aligned}
\]
using that $\tn_{\norm}^{-}= - \tn_{\norm}^{+}$. Therefore we have $\tD_{\b\norm}w=2\gamma^{*}\tn_{\norm}^{+}v$ hence $v=2 \gamma \tD_{\b\norm}^{-1}\gamma^{*}n_{\norm}^{+}v$. 
Let us set 
\beq\label{e10.n111}
T\defeq  \gamma \tD_{\b\norm}^{-1}\gamma^{*}: H^{-\12}_{\b}(\Sigma_{\norm})\to H^{\12}_{\b}(\Sigma_{\norm}).
\eeq
For $v\in H^{-\12}_{\b}(\Sigma_{\norm})$, $w=  \tD_{\b\norm}^{-1}\gamma^{*}v$ and $u\in H^{1}_{\b}(\Sigma_{\norm})$ we have
$(u| \tD_{\b\norm}w)_{\tV(\tM_{\norm})}=(\gamma u| v)_{\tV(\Sigma_{\norm})}$, hence \beq\label{e10.n8}
(w| \tD_{\b\norm}w)_{\tV(\tM_{\norm})}=(Tv| v)_{\tV(\Sigma_{\norm})}.
\eeq
On the other hand for $\psi\in H^{\12}_{\b}(\Sigma_{\norm})$ we have
\beq\label{e10.n11}
\begin{aligned}
|(\psi| v)_{\tV(\Sigma_{\norm})}|
&= |(\gamma\circ E\psi| v)_{\tV(\Sigma_{\norm})}|= |(E\psi| \gamma^{*}v)_{\tV(\tM_{\norm})}|\\
&= |(E\psi|\tD_{\b\norm}w)_{\tV(\tM_{\norm})}|\\
&\leq C\|E\psi\|_{H^{1}_{\b}(\tM_{\norm})}\| w\|_{H^{1}_{\b}(\tM_{\norm})}\\
&\leq C\|\psi\|_{H^{\12}_{\b}(\Sigma_{\norm})}\| w\|_{H^{1}_{\b}(\tM_{\norm})},
\end{aligned}
\eeq
hence
\begin{equation}
\label{e10.n9}
\| v\|_{H^{-\12}_{\b}(\Sigma_{\norm})}\leq C\| w\|_{H^{1}_{\b}(\tM_{\norm})}.
\end{equation}
Using \eqref{e10.n8} we obtain that
\[
\begin{aligned}
\|v\|^{2}_{H^{-\12}_{\b}(\Sigma_{\norm})}
&\leq C \|w\|^{2}_{H^{1}_{\b}(\tM_{\norm})}\\
&\leq C(w| \tD_{\b\norm}w)_{\tV(\tM_{\norm})}=C(v| Tv)_{\tV(\Sigma_{\norm})}.
\end{aligned}
\]
By Lax-Milgram this implies that 
 \begin{equation}
 \label{e10.n7}
T: H^{-\12}_{\b}(\Sigma_{\norm})\tosim H^{\12}_{\b}(\Sigma_{\norm}),
 \end{equation}
 which shows $\tn_{\norm}^{+}= \12 (\gamma \tD_{\b\norm}^{-1}\gamma^{*})^{-1}$ as a map between $H^{\12}_{\b}(\Sigma_{\norm})$ and $H^{-\12}_{\b}(\Sigma_{\norm})$.  \qed

We now summarize the properties of $\tn_{\norm}^{\pm}$. We use the notation in  Appendix~\ref{sec9}, in particular Subsects.~\ref{sec9.3} and \ref{sec9.4}. 

\begin{proposition}\label{prop10.2}
\ben
\item  $\tn_{\norm}^{-}= - \tn_{\norm}^{+}$.

\item  $\tn_{\norm}^{+}= (\tn_{\norm}^{+})^{*}$ for $(\cdot| \cdot)_{\tV(\Sigma_{\norm})}$ and $\tn_{\norm}^{+}= (\tn_{\norm}^{+})^{\star}$ for $(\cdot| \cdot)_{V(\Sigma_{\norm})}$.
 \item as an unbounded operator  with domain  $H^{1}_{\b}(\Sigma_{\norm}; V)$,  $\tn_{\norm}^{+}$ is  self-adjoint for $(\cdot| \cdot)_{\tV(\Sigma_{\norm})}$ with:
\beq\label{e10.n112}
\pm\tn_{\norm}^{\pm}\geq C(-(\rho\p_{\rho})^{2}- \Delta_{\bS^{n-2}}+1)^{\12}\one_{V},\ \ C>0.
\eeq

 \item 
\[
\tn_{\norm}^{\pm}: H^{m,r}_{\b}(\Sigma_{\norm};V)\tosim H^{m-1,r}_{\b}(\Sigma_{\norm};V)\hbox{ for }m\in \rr,\ r\in J,
\]
$J= ]-\frac{n-2}{2}, \frac{n-2}{2}[$.
\item There exist   $\tb_{\norm}(s)\in \cinf(\rr; \Psi^{1}_{{\rm s}\: \norm}(\Sigma_{\norm}; \tV_{\norm}))$ which is the unique solution   modulo an error in $\cinf(\rr; \Psi^{-\infty}_{{\rm s}\: \norm}(\Sigma_{\norm}; \tV_{\norm}))$ of
 \beq\label{e10.2b}
-\p_{s}\tilde{b}_{\norm}(s)- \tilde{r}_{\b\norm}(s)\tilde{b}_{\norm}(s)- [\tilde{R}_{\b\norm}(s), \tilde{b}_{\norm}(s)]+ \tilde{b}^{2}_{\norm}(s)- \tilde{a}_{\b\norm}(s)
\in \cinf(\rr, \Psi^{-\infty}_{{\rm s}\: \norm}(\Sigma_{\norm}; \tV_{\norm})),
\eeq
with  $\sigma_{\rm pr}(\tb_{\norm})(s)= (\xi\dual \tilde{\boldsymbol{\delta}}_{\b}^{-1}(s)\xi)^{\12}\one_{V_{\norm}}$ such that:
\beq\label{e10.2bc}
\tn_{\norm}^{+}= \tb_{\norm}(0)+ r_{-\infty\norm}, 
\eeq
where 
\[
r_{-\infty\norm}\in B^{-\infty}_{J}(\Sigma_{\norm}; V_{\norm}),\quad \ad_{h}r_{-\infty\norm}\in B^{-\infty}_{J, 0^{+}}(\Sigma_{\norm}),
\]
 for any $h\in \coinf(\rr^{+}_{\rho})$ with $h=1$ near $\rho=0$ and $J=\,\open{-\frac{n-2}{2},\frac{n-2}{2}}$. In other words $\tn_{\norm}^{+}\in N^{1}_{J}(\Sigma_{\norm}; V_{\norm})$.
\een
\end{proposition}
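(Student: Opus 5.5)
We prove the five items in order, using the representation $\tn_{\norm}^{+}=\12 T^{-1}$ with $T=\gamma\tD_{\b\norm}^{-1}\gamma^{*}$ from Lemma~\ref{lemma10.1}.

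Item (1) was already shown in the proof of Lemma~\ref{lemma10.1}, via the reflection invariance \eqref{e10.n4b}. For (2), self-adjointness for $(\cdot|\cdot)_{\tV(\Sigma_{\norm})}$ follows from $T=T^{*}$, which holds because $\tD_{\b\norm}$ is selfadjoint. For the $\star$-adjoint we use \eqref{e2.12e}: $A^{\star}=\tau_{k}A^{*}\tau_{k}$. So it suffices to show that $\tn^{+}_{\norm}$ commutes with $\tau_{k}$. This holds because $\tau_{k}$ coincides with the restriction $\kappa_{k\Sigma}$ of the reflection in \eqref{e10.n4c}, up to the sign pattern on the $t\Sigma$ components: $\tau_2=\mathrm{diag}(1,-1,1)$ and $\kappa_{2\Sigma}=\mathrm{diag}(1,-1,1)$, and similarly for $k=1$. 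Since $\kappa_{k}$ commutes with $\tD_{\b\norm}$ and with $\gamma$, the map $\kappa_{k\Sigma}$ commutes with $T$.

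For (3), the coercivity estimate $(v|Tv)\geq c\|v\|^{2}_{H^{-1/2}_{\b}}$ from the proof of Lemma~\ref{lemma10.1} gives lower bounds on $T$. To get an explicit formula it is cleaner to diagonalize: \eqref{e10.n3} is scalar in the Euclidean frame, and $\tD_{\b\norm}$ commutes with the Mellin transform in $\rho$. On the Mellin side, with $\rho\p_\rho\to \i\sigma$, we get $-\Delta_{\bS^{n-1}}+\sigma^{2}+(\frac{n-2}{2})^{2}$. The operator $T$ then becomes the restriction to the equator $\bS^{n-2}$ of the resolvent of this operator, and the Dirichlet-to-Neumann map is obtained by solving on the hemisphere. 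Separating variables in spherical harmonics on $\bS^{n-2}$ with eigenvalue $l(l+n-3)$ gives Legendre-type ODEs in $s$. Their DN symbol behaves like $(\sigma^{2}+l^{2}+1)^{1/2}$ uniformly and stays bounded below by a positive constant. This gives \eqref{e10.n112}, and the operator is selfadjoint on $H^{1}_{\b}$.

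The Euclidean frame differs from the $(\tilde e^a)$ frame only by an $s$-dependent rotation that is the identity on $\Sigma_{\norm}$. Hence we may compute in the Euclidean frame. The Hilbert structure $\tV$ is the Euclidean one, so the frame change is unitary.

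For (4), the Mellin picture gives invertibility on weighted spaces $H^{m,r}_{\b}$ as long as the weight line $\Im\sigma=-r$ avoids the indicial roots. The indicial roots of the hemisphere DN symbol are determined by where $\sigma^{2}+(\frac{n-2}{2})^{2}+\dots$ vanishes after continuation, and they first occur at $|\Im\sigma|=\frac{n-2}{2}$. This yields the interval $J$. Along each such line the symbol is elliptic of order $1$, so we get an isomorphism $H^{m,r}_{\b}\to H^{m-1,r}_{\b}$ for all $m$.

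Item (5) is the main point. We construct $\tb_{\norm}(s)$ by the standard symbolic Riccati iteration in the dilation-invariant calculus $\Psi_{\s\norm}$. We take principal symbol $(\xi\dual\tilde{\boldsymbol\delta}_{\b}^{-1}\xi)^{1/2}$ and solve the transport equations \eqref{e10.2b} order by order, then Borel-sum. Uniqueness modulo $\Psi^{-\infty}$ follows because the choice of positive root forces each lower-order term. Factoring $\tD_{\b\norm}=-(\widetilde\nabla_{s}+\tilde r-\tb)(\widetilde\nabla_{s}+\tb)$ modulo smoothing errors, we obtain a parametrix for the hemisphere problem $\tD_{\b\norm}u=0$, $u(0)=v$, with solution approximately $\exp$-type of $-\tb$. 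This shows $\tn^{+}_{\norm}-\tb_{\norm}(0)$ is smoothing in the interior.

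The hardest step is upgrading this to $r_{-\infty\norm}\in B^{-\infty}_{J}$ with the $\ad_h$ decay. My approach is to compare both operators through the Mellin transform. They are dilation invariant, so $r_{-\infty\norm}$ is a Mellin multiplier that is smoothing and holomorphic in the strip $|\Im\sigma|<\frac{n-2}{2}$, with rapid decay in $\langle\sigma\rangle$ uniformly on closed substrips. That decay follows by applying the interior smoothing estimate uniformly in $\sigma$ as a semiclassical parameter. The commutator with a cutoff $h(\rho)$ then gains decay in $\rho$ by the Paley–Wiener shift of contours inside the strip, which gives the $B^{-\infty}_{J,0^{+}}$ property.
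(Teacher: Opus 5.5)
Items (1) and (2) are fine. For the $\star$-adjoint your route differs from the paper's, which notes that $\tn^{+}_{\norm}$ is scalar in the Euclidean frame $(dx^{i})$, where $\tau$ keeps the same form. Your route is correct, and in fact $\kappa_{k\Sigma}=\tau_{k}$ exactly for $k=1,2$. Since $\kappa_{k}$ is a unitary involution with $\gamma\kappa_{k}=\kappa_{k\Sigma}\gamma$, you also have $\kappa_{k}\gamma^{*}=\gamma^{*}\kappa_{k\Sigma}$, and therefore $[T,\tau_{k}]=0$. Your claim that the frame change to $(dx^{i})$ is the identity on $\Sigma_{\norm}$ is false: at $s=0$, $dx^{i}=\omega^{i}dr+r\,d\omega^{i}$ is an $\omega$-dependent rotation of $(dr,r\,d\omega)$. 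This is harmless, because that rotation is unitary, $\rho$-independent and fixes the $s$-component.

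The real problem is in (3)--(5), where every decisive estimate is replaced by an unproved uniformity claim on the Mellin side.

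In (3), the assertion that ``the DN symbol behaves like $(\sigma^{2}+l^{2}+1)^{1/2}$ uniformly'' \emph{is} the statement to be proved. You give no uniform asymptotics of the Legendre-type solutions in two large parameters. Moreover, the estimate $(v|Tv)\geq c\|v\|^{2}_{H^{-1/2}_{\b}}$ you invoke goes the wrong way: it bounds $\tn^{+}_{\norm}$ from above. The missing idea is the reverse, soft bound:
$T=\gamma\tD_{\b\norm}^{-1}\gamma^{*}$ is bounded $H^{-1/2}_{\b}\to H^{1/2}_{\b}$, so $0<T\leq C\Lambda^{-1}$ with $\Lambda=(-(\rho\p_{\rho})^{2}-\Delta_{\bS^{n-2}}+1)^{1/2}$. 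Operator monotonicity of inversion (Kato--Heinz) then gives $\tn^{+}_{\norm}=\12 T^{-1}\geq (2C)^{-1}\Lambda$. Self-adjointness on $H^{1}_{\b}$ follows from symmetry together with the ellipticity from (5).

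In (4) you locate the first indicial roots correctly at $|{\rm Im}\,\sigma|=\frac{n-2}{2}$. However, invertibility on $H^{m,r}_{\b}$ needs bounds uniform in $\Re\sigma$ and $l$, and you do not supply them. The paper gets them at once: for $r\in J$, $\Re\,\rho^{r}\tD_{\b\norm}\rho^{-r}=-(\rho\p_{\rho})^{2}-r^{2}+(\frac{n-2}{2})^{2}-\Delta_{\bS^{n-1}}$ is coercive on $H^{1}_{\b}$, and Lax--Milgram applies to $T_{r}=\rho^{r}T\rho^{-r}$.

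In (5) the Riccati construction matches the paper, but the step you yourself call hardest is not carried out. Local smoothness of the kernel of $\tn^{+}_{\norm}-\tb_{\norm}(0)$ says nothing about decay of its Mellin symbol along lines in the strip. Treating $\sigma$ ``as a semiclassical parameter'' would require redoing the whole parametrix in a parameter-dependent calculus, together with uniform invertibility of the Mellin-transformed hemisphere problem.

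The paper needs no Mellin analysis for this:
\begin{itemize}
\item It writes the exact Dirichlet solution as $\chi Z_{\norm}v-\tD_{\b\norm}^{-1}(z_{-\infty}+R)v$, where $z_{-\infty}$ is smoothing and $R$ is supported where $\p_{s}\chi\neq0$.
\item It uses Lemma~\ref{lemma9.1}~(3), fed by \eqref{e10.n101}, to write $\tD_{\b\norm}^{-1}=Q+s_{-\infty}$ with $Q\in\Psi^{-2}_{\b\s}(\tM_{\norm})$, $s_{-\infty}\in B^{-\infty}_{J}$ and $\ad_{h}s_{-\infty}\in B^{-\infty}_{J,0^{+}}$.
\item Pseudo-locality ($\chi_{1}Q\chi_{2}$ is smoothing when $\chi_{1},\chi_{2}$ have disjoint supports in $s$) then yields, after taking traces at $s=0$, both the $B^{-\infty}_{J}$ and the $\ad_{h}$ properties of $r_{-\infty\norm}$.
\end{itemize}
Your Paley--Wiener argument for the $\ad_{h}$ part would be a reasonable alternative, but only once the uniform Mellin-side estimates have actually been established.
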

\proof 
   We have already checked  (1) in the proof of Lemma~\ref{lemma10.1}. Applying \eqref{e01.green1} to the solution $u$ of \eqref{e10.n2bb} we obtain that $\tn_{\norm}^{+}= (\tn_{\norm}^{+})^{*}$.  By \eqref{e2.12e} we have $(\tn_{\norm}^{+})^{\star}= \tau^{-1}\tn_{\norm}^{+}\tau$. If we use the Euclidean frame $(dx^{i})$ as in Subsect.~\ref{sec10.3} to identify $V$ with $\tM_{\norm}\times \cc^{d_{k}}$, then $\tD_{b\norm}$ is the scalar $\b$-Laplacian and $\tn_{\norm}^{+}$ is a scalar operator, commuting with $\tau$. Note that the expression of $\tau$ in the frame $(dx^{i})$ or $(e_{a})$ is identical since over $s= 0$ we have $dx^{0}= rds,\quad dx^{i}= \omega^{i}dr+ rd\omega^{i}$. This proves the second statement of (2).

   Let us now prove (3).
   By (2) we know that  $\tn_{\norm}^{+}$ is symmetric on $\coinf(\Sigma_{\norm}; V)$. By the ellipticity of $\tn_{\norm}^{\pm}$ proved in (5) this implies that $\tn_{\norm}^{\pm}$ is selfadjoint with domain $H^{1}_{\b}(\Sigma_{\norm};V)$. This proves the first statement of (3). Next from \eqref{e10.n111} in the proof of Lemma~\ref{lemma10.1} we obtain that
 \[
 (u| Tu)_{\tV(\Sigma_{\norm})}\leq C\|u\|^{2}_{H^{-\12}_{\b}(\Sigma_{\norm})},\ \ u\in \coinf(\Sigma_{\norm}), 
 \]
 hence
 \[
 T\leq C (-(\rho\p_{\rho})^{2}- \Delta_{\bS^{n-2}}+1)^{-\12}\one_{V}, \hbox{ for }(\cdot| \cdot)_{\tV(\Sigma_{\norm})}.
 \]
Then \eqref{e10.n112} follows from the Kato-Heinz theorem. 

Let us now prove (4). 
We use the notation in the proof of Lemma~\ref{lemma10.1}.
  For $r\in J$ we set $\tn^{+}_{\b\norm, r}=\rho^{r}\tn^{+}_{\norm}\rho^{-r}$, and $\tD_{\b\norm, r}= \rho^{r}\tD_{\b\norm}\rho^{-r}$. Since $\rho^{r}(\rho\p_{\rho})\rho^{-r}= \rho\p_{\rho}- r$ we obtain that
  \beq\label{e10.n101}
  \tD_{\b\norm, r}: H^{m, 0}_{\b}(\tM_{\norm})\tosim H^{m-2, 0}_{\b}(\tM_{\norm}),\ \ m\in \rr\hbox{ for }r\in J
  \eeq
  and
  \begin{equation}
  \label{e10.n10}
  \Re(u| \tD_{\b\norm, r}u)_{\tV(\tM_{\norm})}\geq C \|u\|^{2}_{H^{1}_{\b}(\tM_{\norm})},\ \ u\in H^{1}_{\b}(\tM_{\norm}).
  \end{equation}
  Setting $w_{r}= \, \tD_{\b\norm, r}^{-1}\gamma^{*}v$, we deduce as in \eqref{e10.n11} that
  \begin{equation}
  \label{e10.n12}
  \|v\|_{H^{-\12}_{\b}(\Sigma_{\norm})}\leq C\| w_{r}\|_{H^{1}_{\b}(\tM_{\norm})}.
\end{equation}
  Considering now 
 $T_{r}= \gamma \tD_{\b\norm, r}^{-1}\gamma^{*}$, we obtain as in \eqref{e10.n8} that 
\[
 (w_{r}| \tD_{\b\norm, r}w_{r})_{\tV(\tM_{\norm})}=(T_{r}v| v)_{\tV(\Sigma_{\norm})},
 \]
hence using \eqref{e10.n10}, \eqref{e10.n12}
\[
\Re(T_{r}v| v)_{\tV(\Sigma_{\norm})}\geq C\|v\|^{2}_{H^{-\12}_{\b}(\Sigma_{\norm})}.
\]
By Lax Milgram this implies that 
\[
T_{r}: H^{-\12}_{\b}(\Sigma_{\norm})\tosim H^{\12}_{\b}(\Sigma_{\norm}),
\]
hence $\tn^{+}_{\b\norm, r}: H^{\12}_{\b}(\Sigma_{\norm})\tosim H^{-\12}_{\b}(\Sigma_{\norm})$. Commuting $\rho\p_{\rho}$ 	and $\p_{\omega}$ derivatives, we obtain that
$\tn^{+}_{\b\norm, r}: H^{s}_{\b}(\Sigma_{\norm})\tosim H^{s-1}_{\b}(\Sigma_{\norm})$
 for $s\in \rr$ which proves (4).
 
Let us now prove (5).
 We first construct   a parametrix for the Dirichlet problem \eqref{e10.n2bb} following  the arguments in \cite[Subsect.~5.3]{GMW}.  To that end  we factorize $\tD_{\b\norm}$ as \[
 \tD_{\b\norm}= (-(\p_{s}+\tilde{R}_{\b\norm}(s)+\tilde{r}_{\b\norm}(s)) + \tb_{\norm}(s))((\p_{s}+\tilde{R}_{\b\norm}(s))+ \tb_{\norm}(s)), \hbox{ modulo smoothing}.
 \] This leads to \eqref{e10.2b}, which is a  Wick-rotated version of \eqref{e8.10b} below.

As in \cite[Subsect.~5.3]{GMW} one shows that \eqref{e10.2b} has a unique solution with the properties stated in the proposition.  Defining $\tb_{\norm}(s)$ as in the proposition  we construct by symbolic calculus $Z_{\norm}(\cdot)\in \cinf(]-\pi/2, \pi/2[; \Psi^{0}_{\b\s}(\Sigma_{\norm}))$ with \[
\sigma_{\rm pr}(Z_{\norm})(s)= \e^{-\int_{0}^{s}\sigma_{\rm pr}(\tb_{\norm})(\sigma)d\sigma},
\]
such that \[
\begin{cases}
 (\p_{s}+\tilde{R}_{\b\norm}(s)+  \tb_{\norm}(s))Z_{\norm}(s)= r_{-\infty}(s)\hbox{ in }]0, \pi/2[\times \Sigma_{\norm}, \\
 Z_{\norm}(0)= \one.
\end{cases}
\]
for $r_{-\infty}(\cdot)\in  \cinf(]-\pi/2, \pi/2[; \Psi^{-\infty}_{\b\:{\rm s}}(\Sigma_{\norm}))$.
 We obtain that
\[
\begin{cases}
\tD_{\b\norm}Z_{\norm}= \tilde{r}_{-\infty}(\cdot)\hbox{ in }\Omega^{+},\\
Z_{\norm}(0)=\one,\end{cases}
\]
where $\tilde{r}_{-\infty}(\cdot)\in \cinf(]-\pi/2, \pi/2[; \Psi^{-\infty}_{\b\:{\rm s}}(\Sigma_{\norm}))$.

Let us fix a cutoff function $\chi\in \coinf(]-\pi/2, \pi/2[_{s})$ with   $\chi=1$ near $s=0$.  Since 
\[
[\tD_{\b\norm}, \chi]= - 2\p_{s}\chi(\p_{s}+ \tilde{R}_{\b\norm}(s))- \p_{s}^{2}\chi-\tilde r_{\b\norm}(s)\p_s\chi,
\]
we have 
\[
\begin{cases}
\tD_{\b\norm}\chi Z_{\norm}=\chi\tilde{r}_{-\infty}-2(\p_s\chi)r_{-\infty}+ T(s),\\
\chi Z_{\norm}(0)=\one.
\end{cases}
\]
for $T(s)= (2\p_{s}\chi(s)\tb_{\norm}(s)-\p_{s}^{2}\chi(s)-\tilde r_{\b\norm}(s)\p_s\chi(s))Z_{\norm}(s)\in \cinf(]-\pi/2, \pi/2[; \Psi^{1}_{\b\:{\rm s}}(\Sigma_{\norm}))$.
It follows that the solution of the Dirichlet problem \eqref{e10.n2bb} in $\Omega^{+}$ is
\[
u= (\chi Z_{\norm}v- \tD_{\b\norm}^{-1}(z_{-\infty}(\cdot)+ R(\cdot))v),
\]
for
\[
\begin{aligned}
z_{-\infty}(s)&=\chi(s)\tilde r_{-\infty}(s)-2(\p_s\chi)(s)r_{-\infty}(s)\\
&\quad-\chi(-s)\tilde r_{-\infty}(-s)+2(\p_s\chi)(-s)r_{-\infty}(-s),
\end{aligned}\quad R(s)=T(s)-T(-s).
\] 
This gives
\beq\label{e10.2c}
\begin{aligned}
\tn_{\norm}^{+}v
&= - ((\p_{s}+\tilde{R}_{\b\norm}(s))Z_{\norm}v)(0)
 +(\p_{s}\tD_{\b\norm}^{-1}z_{-\infty}(\cdot)v)(0)\\
&\phantom{=\,}+ (\p_{s}\tD_{\b\norm}^{-1}R(\cdot)v)(0).
\end{aligned}
\eeq
The first term in the r.h.s. of \eqref{e10.2c} equals $(\tb_{\norm}(0)- r_{-\infty}(0))v$.  To estimate the third term, we will rely on the almost locality of $\tD_{\b\norm}^{-1}$. In fact  let us  fix two  cutoff functions $\chi_{1}, \chi_{2}\in \coinf(]-\pi/2, \pi/2[)$ with disjoint supports such that $\chi_{2}=1$ near $\supp \p_{s}\chi$ and $\chi_{1}=1$ near $s=0$. Then 
\[
(\p_{s}\tD_{\b\norm}^{-1}R(\cdot)v)(0)= (\p_{s}\chi_{1}(\cdot)\tD_{\b\norm}^{-1}\chi_{2}(\cdot)R(\cdot)v)(0).
\]
Therefore to complete the proof of (5) it remains to examine the operators
\[
r_{1,-\infty}\defeq \trho^{+}\tD_{\b\norm}^{-1}z_{-\infty}(\cdot),\quad r_{2,-\infty}\defeq \trho^{+}\chi_{1}(\cdot)\tD_{\b\norm}^{-1}\chi_{2}(\cdot)R(\cdot).
\]
To estimate $r_{i, -\infty}$, we  apply Lemma~\ref{lemma9.1} in the appendix to $\tD_{\b\norm}$ with $M_{\norm}= \rr^{+}_{\rho}\times\bS^{n-1}$.   We know that  $\tD_{\b\norm}$ is elliptic and $\rho^{r}\tD_{\b\norm}\rho^{-r}: H^{m, 0}_{\b}(\tM_{\norm})\tosim H^{m-2, 0}_{\b}(\tM_{\norm})$ for $r\in J$, see \eqref{e10.n101}. Therefore we can apply Lemma~\ref{lemma9.1} (3) and obtain that:
\[
\tD_{\b\norm}^{-1}= T+ s_{-\infty},
\]
for $T\in \Psi^{-2}_{\b\s}(\tM_{\norm})$, $s_{-\infty}\in B^{-\infty}_{J}(\tM_{\norm})$ and $\ad_{h}s_{-\infty}\in B^{-\infty}_{J, 0^{+}}(\tM_{\norm})$, see Subsect.~\ref{sec9.4} for notation. By $\Psi$DO calculus we obtain that $\chi_{1}T\chi_{2}\in \Psi^{-\infty}_{\b\s}(\tM_{\norm})$. From this it is straightfoward to verify that $r_{i, -\infty}\in B^{-\infty}_{J}(\Sigma_{\norm})$ with $\ad_{h}r_{i,-\infty}\in B^{-\infty}_{J, 0^{+}}(\Sigma_{\norm})$. This completes the proof of the proposition.  \qed

\subsection{\calde projectors}\label{sec10.5}

The Calder\'on projectors associated with $\tD_{k\b\norm}$ are defined by
\[
        \tc_{k\norm}^{\pm}f
        \defeq
        \mp\trho^{\pm}(\tD_{k\b\norm})^{-1}
        (\trho)^{*}\tsig f,
        \ \ f\in\coinf(\Sigma_{\norm};V_{k}\otimes\cc^2),
\]
where $\trho^{*}$ is the adjoint of $\trho$ defined in \eqref{e10.stup-trace} for the Hilbertian scalar products $(\cdot| \cdot)_{\tV(\tM_{\norm})}$ and $(\cdot| \cdot)_{\tV(\Sigma_{\norm})\otimes \cc^{2}}$.

Equivalently, they are the projections onto the Cauchy data of solutions of $\tD_{\b\norm}u= 0$ in $\Omega^{\pm}_{\norm}$.
 By Prop.~\ref{prop10.2}, the operator $\tn^+_{\norm}-\tn^-_{\norm}=\pm 2\tn^{\pm}_{\norm}$ is elliptic and invertible $H^{m,r}_{\b}(\Sigma_{\norm};V)\to H^{m-1,r}_{\b}(\Sigma_{\norm};V)$ for $m\in\mathbb{R}$ and $r\in J=\open{-\frac{n-2}{2},\frac{n-2}{2}}$. The \calde projectors are
\beq\label{e01.16d}
        \tc_{k\norm}^{\pm}
        =
        \mat{\mp(\tn_{k\norm}^{+}-\tn_{k\norm}^{-})^{-1}
                 \tn_{k\norm}^{\mp}}
             {\pm(\tn_{k\norm}^{+}-\tn_{k\norm}^{-})^{-1}}
             {\mp\tn_{k\norm}^{\pm}
                 (\tn_{k\norm}^{+}-\tn_{k\norm}^{-})^{-1}
                 \tn_{k\norm}^{\mp}}
             {\pm\tn_{k\norm}^{\pm}
                 (\tn_{k\norm}^{+}-\tn_{k\norm}^{-})^{-1}}
                 =
                 \frac{1}{2}
                 \mat{1}
             {(\tn_{k\norm}^{\pm})^{-1}}
             {\tn_{k\norm}^{\pm}}
             {1}.
\eeq
 Let
\beq\label{e01.16e}
        \cH^{m,r}_{\b}(\Sigma_{\norm};V_{k}\otimes\cc^2)
        \defeq
        H^{m,r}_{\b}(\Sigma_{\norm};V_{k})
        \oplus
        H^{m-1,r}_{\b}(\Sigma_{\norm};V_{k}).
\eeq
Then
\beq\label{e10.4}
        \tc_{k\norm}^{+}+\tc_{k\norm}^{-}=\one,
        \quad
        (\tc_{k\norm}^{\pm})^2=\tc_{k\norm}^{\pm},
\eeq
and the \calde projectors are bounded on $\cH^{m,r}_{\b}(\Sigma_{\norm};V_{k}\otimes\cc^2)$ for $m\in\mathbb{R}$, $r\in J$.

\subsubsection{Trace reversal}\label{sec10.5.1}
Let us denote by $I_{\norm}= I_{\b\norm}$ the trace reversal for $\reta$ defined as in \eqref{e6.-01}. 
We set 
\[
\widetilde{I}_{\b\norm}u= u- \frac{1}{4}(\treta| u)_{\tV_{2}}\treta,
\]
which is the trace reversal with respect to $\treta$.  We define $\widetilde{I}_{\b\Sigma\norm}$ by
\[
\widetilde{I}_{\b\Sigma\norm}\trho u_{2}\defeq \trho \widetilde{I}_{\b\norm}u_{2}.
\]
\begin{lemma}\label{lemma10.2}
\ben
\item  we have \[
\widetilde{I}_{\b\Sigma\norm}=\mat{ \widetilde{I}_{\b\norm}\traa{\Sigma}}{0}{0}{ \widetilde{I}_{\b\norm}\traa{\Sigma}}.
\]
\item  recall that $I_{\b\Sigma\norm}$ was defined in Subsect.~\ref{sec2.3.3}. We have: 
 \[
 \widetilde{I}_{\b\Sigma\norm}= \cF_{2\Sigma}\circ I_{\b\Sigma\norm}\circ \cF_{2\Sigma}^{-1},
 \]
 where $\cF_{2\Sigma}$ is defined in \ref{sec10.1.4}.
\item we have:
\[
[\tc_{2\norm}^{\pm}, \widetilde{I}_{\b\Sigma\norm}]=0.
\]
 \een
\end{lemma}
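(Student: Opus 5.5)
For (1), I will use that $\widetilde{I}_{\b\norm}$ is a zeroth order bundle map and commutes with $\widetilde{\nabla}_{s}$. Indeed $\widetilde{\nabla}$ is the Levi-Civita connection of $\treta$, so $\widetilde{\nabla}\treta=0$ and $\widetilde{\nabla}_{\tilde e_s}$ commutes with $\widetilde{I}$. The conjugation by powers of $\rho$ in \eqref{e10.n2b} is scalar and does not affect this. Hence $[\widetilde{\nabla}_{s},\widetilde{I}_{\b\norm}]=0$, which gives
\[
\trho\widetilde{I}_{\b\norm}u=\binom{\widetilde{I}_{\b\norm}u(0)}{-\widetilde{I}_{\b\norm}\widetilde{\nabla}_{s}u(0)},
\]
and this is the diagonal form claimed. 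This is the Wick-rotated analogue of Prop.~\ref{prop3.normal-split} (ii) and \eqref{e2.10bf}.

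For (2), by (1) and \eqref{e2.10bf} it suffices to check the pointwise identity $\widetilde{I}_{\b\norm}\traa{\Sigma}=\cF_{2}\,I_{\b\norm}\traa{\Sigma}\,\cF_{2}^{-1}$ on fibers of $V_2$ over $s=t=0$. At $t=0$ the metric $\reta$ has frame components $\mathrm{diag}(-1,\one)$ in $(e^a)$, while $\treta$ has components $\mathrm{diag}(1,\one)$ in $(\tilde e^a)$. By \ref{sec10.1.1}, Wick rotation replaces $rdt$ by $\i\, rds$, so $\cF_{2}$ multiplies $u_{tt}$ by $\i^2=-1$, $u_{t\Sigma}$ by $\i$, and leaves $u_{\Sigma\Sigma}$ fixed. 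Moreover $\cF_2$ sends $\reta(0)$ to $\treta(0)$. It also intertwines the traces: $\mathrm{tr}_{\reta}u=-u_{tt}+\mathrm{tr}_{\boldsymbol{\delta}}u_{\Sigma\Sigma}$ equals $\mathrm{tr}_{\treta}(\cF_2 u)=(-u_{tt})+\mathrm{tr}_{\boldsymbol{\delta}}u_{\Sigma\Sigma}$. Since $I u=u-\frac12 (\mathrm{tr}\,u)\,\rg$ in both cases, the conjugation identity follows by direct computation. More abstractly, Wick rotation is compatible with tensor operations, so $\widetilde{I}$ is the Wick rotation of $I$ in the sense of \ref{sec10.1.3}, and (2) is just \eqref{e10.n-4} applied to a multiplication operator.

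For (3), I will prove the commutation at the level of the Dirichlet-to-Neumann maps. Since $\tD_{2\b\norm}$ equals the scalar $\b$-Laplacian tensored with $\one_{\cc^{d_2}}$ in the Euclidean frame $(dx^i)$ (see \eqref{e10.n3}), it commutes with any constant fiber endomorphism in that frame. The map $\widetilde{I}_{\b\norm}$ is such an endomorphism, because $\treta$ has constant components $\delta_{ij}$ in $(dx^i)$. Therefore, if $u^{+}$ solves \eqref{e10.n2bb} with data $v$, then $\widetilde{I}_{\b\norm}u^{+}$ solves it with data $\widetilde{I}_{\b\norm}\traa{\Sigma}v$. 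Using (1), this gives $[\tn^{\pm}_{2\norm},\widetilde{I}_{\b\norm}\traa{\Sigma}]=0$. The explicit formula \eqref{e01.16d} then yields $[\tc^{\pm}_{2\norm},\widetilde{I}_{\b\Sigma\norm}]=0$, since $\widetilde{I}_{\b\Sigma\norm}$ is diagonal with equal entries and also commutes with $(\tn^{\pm}_{2\norm})^{-1}$.

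The only delicate point is (2): the normalization in \eqref{e10.normal-trace} and the $\i^{-1}$ in \eqref{e2.10bc} differ. This does not matter, because both $I_{\b\Sigma\norm}$ and $\widetilde{I}_{\b\Sigma\norm}$ act diagonally by the same pointwise map on each Cauchy component, and $\cF_{2\Sigma}$ is diagonal as well.
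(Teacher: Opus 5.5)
Your proposal is correct and follows the paper's own argument. Part (1) uses $\widetilde{\nabla}\treta=0$ to get $[\widetilde{\nabla}_{s},\widetilde{I}_{\b\norm}]=0$. Part (2) rests on $\cF_{2}\reta\traa{\Sigma}=\treta\traa{\Sigma}$, which you verify in more detail via the intertwining of traces. Part (3) combines $[\tD_{2\b\norm},\widetilde{I}_{\b\norm}]=0$ with (1) to get $[\tn^{\pm}_{2\norm},\widetilde{I}_{\b\norm}\traa{\Sigma}]=0$, and then concludes via \eqref{e01.16d}. The extra details you supply (constancy of $\widetilde{I}_{\b\norm}$ in the frame $(dx^{i})$, uniqueness for the Dirichlet problem) simply make explicit what the paper calls ``clear''.
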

\proof Since $\widetilde{\nabla}\treta=0$ we deduce from \eqref{e10.n2b} that $\widetilde{\nabla}_{s}\treta\traa{\Sigma}=0$. This implies  (1). 
 (2) follows  from the fact that $\cF_{2}\reta\traa{\Sigma}= \treta\traa{\Sigma}$.

It remains to prove (3).  Clearly 
 $[\tD_{2\b\norm}, \widetilde{I}_{\b\norm}]= 0$. Since $\widetilde{\nabla}\treta=0$ we obtain that $[\widetilde{\nabla}_{s}, \widetilde{I}_{\b\norm}]=0$ and hence $[\tn^{\pm}_{2\b\norm}, \widetilde{I}_{\b\norm}]= 0$, which using (1) proves (3). 
 \qed
\subsection{Charge  self-adjointness of the \calde projectors}\label{sec10.6}
%Let
%\beq\label{e10.3b}
%        \kappa_k u(s,\cdot)=\tau_k u(-s,\cdot),
%\eeq
%where $\tau_k$ is defined in \eqref{e8.00}.  The symmetry of the normal model gives
%\beq\label{e10.3c}
%        \kappa_k\tD_{k\b\norm}\kappa_k=\tD_{k\b\norm}.
%\eeq
%

\begin{proposition}\label{prop10.3}
The Calder\'on projectors are self-adjoint for the normal charge $\rq_{k\norm}$ i.e.~
\[
        (\tc_{k\norm}^{\pm})^{*}\circ \rq_{k\b\norm}=\rq_{k\b\norm}\circ \tc_{k\b\norm}^{\pm}.
\]
%For $k=2$ they are also self-adjoint for the physical normal charge $\rq_{I, 2\norm}= \rq_{2\norm}\circ I_{\b\norm}$.
\end{proposition}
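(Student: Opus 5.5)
We will verify the identity by explicit computation, using the closed formula \eqref{e01.16d} for $\tc_{k\norm}^{\pm}$ together with the matrix form \eqref{e2.10bee} of the Lorentzian normal charge. Concretely, $\bar f\dual\rq_{k\b\norm}f=(f|q_{k}f)_{\tV_{k}(\Sigma_{\norm})\otimes\cc^{2}}$ with $q_{k}=\mat{0}{\tau_{k}}{\tau_{k}}{0}$, the normal analogue of \eqref{e2.10bee} obtained from \eqref{e2.charge-b} and \eqref{e2.12d}. Thus the claim, with $*$ the adjoint for the Hilbertian product $(\cdot|\cdot)_{\tV_{k}(\Sigma_{\norm})}$, is equivalent to
\[
(\tc_{k\norm}^{\pm})^{*}q_{k}=q_{k}\tc_{k\norm}^{\pm}.
\]
Writing $\tn=\tn_{k\norm}^{\pm}$, we have $\tc_{k\norm}^{\pm}=\12\mat{1}{\tn^{-1}}{\tn}{1}$. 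By Prop.~\ref{prop10.2} (2), $\tn^{*}=\tn$. Hence
\[
(\tc^{\pm}_{k\norm})^{*}=\12\mat{1}{\tn}{\tn^{-1}}{1}.
\]

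It then remains to multiply out both sides:
\[
(\tc^{\pm})^{*}q_{k}=\12\mat{\tn\tau_{k}}{\tau_{k}}{\tau_{k}}{\tn^{-1}\tau_{k}},\qquad q_{k}\tc^{\pm}=\12\mat{\tau_{k}\tn}{\tau_{k}}{\tau_{k}}{\tau_{k}\tn^{-1}}.
\]
So the identity reduces to the single commutation relation $[\tn_{k\norm}^{\pm},\tau_{k}]=0$; the relation for $\tn^{-1}$ follows by inverting, which is legitimate by the isomorphism property in Prop.~\ref{prop10.2} (4).

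The key step is this commutation. It was in fact established in the proof of Prop.~\ref{prop10.2} (2). In the Euclidean frame $(dx^{i})$, $\tD_{k\b\norm}$ is the scalar operator \eqref{e10.n3} tensored with $\one_{\cc^{d_{k}}}$, so $\tn^{\pm}_{k\norm}$ is scalar in that frame. Over $s=0$, the matrix $\tau$ has the same form in the frame $(dx^{i})$ as in $(e^{a})$, and therefore commutes with any scalar operator. Equivalently, Prop.~\ref{prop10.2} (2) gives $\tn^{\star}=\tn$; combining this with \eqref{e2.12e}, $\tn^{\star}=\tau_{k}^{-1}\tn^{*}\tau_{k}$, and $\tn^{*}=\tn$ yields $\tau_{k}\tn=\tn\tau_{k}$ directly.

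The main care needed is domain and functional-analytic bookkeeping rather than algebra. All operators should be regarded as bounded maps on $\cH^{m,r}_{\b}(\Sigma_{\norm};V_{k}\otimes\cc^{2})$, and the adjoint taken with respect to the duality between $\cH^{m,r}_{\b}$ and $\cH^{1-m,-r}_{\b}$ induced by $q_{k}$, which swaps the Sobolev orders of the two components. The identity is first checked on $\coinf$, where both sides make sense, and then extended by density using the boundedness from Prop.~\ref{prop10.2} (4) for $r\in J$.
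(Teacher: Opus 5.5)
Your proof is correct, but it takes a different route from the paper's.

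\textbf{The paper's route.} The paper does not use the Dirichlet-to-Neumann representation at all. It starts from the definition $\tc_{k\norm}^{\pm}=\mp\trho^{\pm}(\tD_{k\b\norm})^{-1}(\trho)^{*}\tsig$ and invokes the abstract result \cite[Prop.~4.5]{GW3}. That result says that when $\tD_{k\b\norm}=\tD_{k\b\norm}^{*}=\kappa_{k}\tD_{k\b\norm}\kappa_{k}$, the Calder\'on projectors are self-adjoint for the charge $(f_{0}|\kappa_{k\Sigma}f_{1})+(f_{1}|\kappa_{k\Sigma}f_{0})$. The paper then reads off from \eqref{e10.n4c} that $\kappa_{k\Sigma}=\tau_{k}$, so this charge is exactly $\rq_{k\b\norm}$.

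\textbf{Your route.} You take the closed form \eqref{e01.16d} and use $(\tn^{\pm}_{k\norm})^{*}=\tn^{\pm}_{k\norm}$ from Prop.~\ref{prop10.2}~(2). A $2\times 2$ block computation then reduces the claim to the single relation $[\tn^{\pm}_{k\norm},\tau_{k}]=0$. You justify this correctly: $\tn^{\pm}_{k\norm}$ is scalar in the Euclidean frame, and over $s=0$ the matrix $\tau_{k}$ is constant in that frame. Equivalently, it follows from $\tn^{\star}=\tn^{*}=\tn$ together with \eqref{e2.12e}. Your algebra and your duality bookkeeping on $\cH^{m,r}_{\b}\times\cH^{1-m,-r}_{\b}$ are both fine.

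\textbf{Comparison.} Your argument is more elementary and self-contained within the paper, and it pinpoints exactly which property is used. The two proofs rest on the same underlying fact, since $\tau_{k}=\kappa_{k\Sigma}$ is the Cauchy-surface trace of the reflection symmetry. The paper's argument is more robust, because it needs none of the fine structure of the Dirichlet-to-Neumann maps: not $\tn^{-}_{k\norm}=-\tn^{+}_{k\norm}$, not their self-adjointness, and not their commutation with $\tau_{k}$. It therefore applies verbatim to any self-adjoint, reflection-invariant elliptic model. Yours relies on the flat Laplacian making $\tn^{\pm}_{k\norm}$ scalar in a frame where $\tau_{k}$ is constant.
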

\proof
 Since $\tD_{k\b\norm}= \tD_{k\b\norm}^{*}= \kappa_{k}\circ \tD_{k\b\norm}\circ \kappa_{k}$, 
where $\kappa_{k}$ is defined in \ref{sec10.3.3} we can apply  \cite[Prop.~4.5]{GW3}.
We obtain that $\tc_{k\norm}^{\pm}$ are selfadjoint for the charge
\[
\begin{aligned}
\bar{f}\dual {\bf b}_{k}f
&= (f| \tsig\circ \mat{\kappa_{k\Sigma}}{0}{0}{-\kappa_{k\Sigma}}f)_{\tV_{k}(\Sigma_{\norm}\otimes \cc^{2})}\\
&= (f_{0}| \kappa_{k\Sigma}f_{1})_{\tV_{k}(\Sigma_{\norm})}+ (f_{1}| \kappa_{k\Sigma}f_{0})_{\tV_{k}(\Sigma_{\norm})}.
\end{aligned}
\]
Using the expression of $\kappa_{k\Sigma}$ recalled in \eqref{e10.n4c} and the tensor decomposition in \eqref{e2.10bb} we see that ${\bf b}_{k}= \rq_{k\b\norm}$, which proves the lemma. \qed
%The second follows then from the fact that $[\tc_{2\norm}^{\pm}, ]$ \qed
%The proof is the standard Calder\'on argument.  Let
%$u=-r^+(\tD_{\b\norm})^{-1}(\trho)^*\tsig f$ and
%$v=-r^+\kappa(\tD_{\b\norm})^{-1}(\trho)^*\tsig g$.  Both solve the homogeneous equation in $\Omega^+_{\norm}$.  Applying \eqref{e01.green1} gives
%\[
%        (\trho^+v|\tsig\trho^+u)_{\tV(\Sigma_{\norm})}=0.
%\]
%Using $\trho^+\kappa=\mat{\tau}{0}{0}{-\tau}\trho^-$ and
%$\mat{\tau}{0}{0}{-\tau}^{*}\tsig=-q$ gives
%$(\tc_{\norm}^{-})^{*}q\tc_{\norm}^{+}=0$.  The opposite mixed term is identical, and the assertion follows from $\tc_{\norm}^{+}+\tc_{\norm}^{-}=\one$.  The physical charge statement follows from \eqref{e10.-2}.\qed

\subsection{Gauge invariance of the \calde projectors}\label{sec10.7}
If $u_{1}\in \cinf(\bar{\Omega^{\pm}}; V_{1})$ with $\tD_{1\b\norm}u_{1}\traa{\Sigma}= 0$ we define $\tK_{\b\Sigma\norm}\trho_{1}^{\pm}u_{1}$ by:
\[
\trho_{2}^{\pm}\tK_{\b\norm}u_{1}\eqdef
        \tK_{\b\Sigma\norm}\trho_{1}^{\pm}u_{1}.
\]
This defines $\tK_{\b\Sigma\norm}$ as a differential operator on $\Sigma_{\norm}$ (with either choice of $\pm$ giving the same operator), so, corresponding to the orders of the components of $K_{\b\norm}$, it acts continuously $\cH^{m,r}_{\b}(\Sigma_{\norm};V_{1}\otimes\cc^2)\to \cH^{m-1,r+1}_{\b}(\Sigma_{\norm};V_{2}\otimes\cc^2)$.

\begin{proposition}\label{prop10.4}
The Calder\'on projectors satisfy
\beq\label{e8.calde-gauge}
        \tc_{2\norm}^{\pm}\circ \tK_{\b\Sigma\norm}
        =
        \tK_{\b\Sigma\norm}\circ \tc_{1\norm}^{\pm}.
\eeq
\end{proposition}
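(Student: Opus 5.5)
The plan is to use the characterization of $\tc_{k\norm}^{\pm}$ as projections onto Cauchy data of solutions of $\tD_{k\b\norm}u=0$ in $\Omega^{\pm}$, along the complementary space of Cauchy data of solutions in $\Omega^{\mp}$. By \eqref{e10.4}, $\tc^{+}_{k\norm}+\tc^{-}_{k\norm}=\one$, so it suffices to show that $\tK_{\b\Sigma\norm}$ maps $\Ran\tc_{1\norm}^{\pm}$ into $\Ran\tc_{2\norm}^{\pm}$ for both signs. Indeed, writing $f=\tc_{1\norm}^{+}f+\tc_{1\norm}^{-}f$ and applying $\tc_{2\norm}^{+}$ to $\tK_{\b\Sigma\norm}f$, the first piece is fixed and the second is annihilated. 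This gives \eqref{e8.calde-gauge}.

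Concretely, I would fix $f\in\coinf(\Sigma_{\norm};V_{1}\otimes\cc^{2})$ and set $g=\tc^{+}_{1\norm}f$. By the formula $\tc_{1\norm}^{+}f=-\trho^{+}(\tD_{1\b\norm})^{-1}\trho^{*}\tsig f$, the datum $g$ equals $\trho_{1}^{+}u_{1}$ where $u_{1}=-(\tD_{1\b\norm})^{-1}\trho^{*}\tsig f$ restricted to $\Omega^{+}$. Since $\trho^{*}\tsig f$ is supported on $\Sigma_{\norm}$, we have $\tD_{1\b\norm}u_{1}=0$ in $\Omega^{+}$, and by elliptic regularity up to $s=0^{+}$ we get $u_{1}\in\cinf(\bar{\Omega}^{+};V_{1})$ with the b-decay inherited from \eqref{e10.n4}. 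Then $u_{2}\defeq\tK_{\b\norm}u_{1}$ satisfies
\[
\tD_{2\b\norm}u_{2}=\widetilde{T}_{\b\norm}\tD_{1\b\norm}u_{1}=0\hbox{ in }\Omega^{+}
\]
by \eqref{e10.n-77}. By the definition of $\tK_{\b\Sigma\norm}$ we also have $\trho_{2}^{+}u_{2}=\tK_{\b\Sigma\norm}g$. So $\tK_{\b\Sigma\norm}g$ is the Cauchy datum of a solution in $\Omega^{+}$, and the same argument applies on $\Omega^{-}$.

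The remaining point, which I expect to be the main obstacle, is to verify that "Cauchy data of a solution in $\Omega^{+}$" coincides with $\Ran\tc^{+}_{2\norm}$ in the relevant function space. I would prove that if $v\in H^{1}_{\b}(\Omega^{+};V_{2})$ (or a weighted analogue) solves $\tD_{2\b\norm}v=0$, then $\tc^{+}_{2\norm}\trho^{+}v=\trho^{+}v$. The approach is the standard Green-formula argument: Prop.~\ref{prop10.n1} shows that the extension of $v$ by zero to $\Omega^{-}$, call it $v_{0}$, satisfies $\tD_{2\b\norm}v_{0}=\pm\trho^{*}\tsig\trho^{+}v$ in distributions. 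Uniqueness from \eqref{e10.n4} then gives $v_{0}=\pm\tD_{2\b\norm}^{-1}\trho^{*}\tsig\trho^{+}v$, and taking traces yields the claim.

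The subtlety is the weight. $\tK_{\b\norm}$ raises the weight by one, cf.\ the mapping $\cH^{m,r}_{\b}\to\cH^{m-1,r+1}_{\b}$. I would therefore work with weights $r\in J$ such that $r+1\in J$, which is possible since $n\geq 4$. Uniqueness of the decaying solution then follows from the invertibility of the conjugated operators \eqref{e10.n101}. The identity can be extended from $\coinf$ data to the full space by density and the boundedness of both sides.
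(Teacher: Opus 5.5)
Your proposal is correct and follows essentially the same route as the paper. You show that $\tK_{\b\Sigma\norm}$ sends Cauchy data of solutions in $\Omega^{\pm}$ to Cauchy data of solutions in $\Omega^{\pm}$ via the intertwining identity \eqref{e10.n-77}, with weights $r\in J\cap(J-1)$ so that both $u_1$ and $\tK_{\b\norm}u_1$ have weights in $J$, and then conclude using $\tc_{k\norm}^{+}+\tc_{k\norm}^{-}=\one$ and continuity. Your Green's-formula check that Cauchy data of a weighted solution in $\Omega^{+}$ lie in $\Ran\tc^{+}_{2\norm}$ only makes explicit a step the paper takes for granted.
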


\proof
Fix $r\in J\cap(J-1)$ and first take $f\in \cH^{\infty,r}_{\b}(\Sigma_{\norm}; V_{1}\otimes \cc^{2})$ in the range of $\tc_{1\norm}^{\pm}$,  and let $u_{1}$ be the corresponding solution of $\tD_{1\b\norm}u_{1}=0$ in $\Omega^{\pm}$ with $\trho_{1}^{\pm}u_{1}= f$.  By \eqref{e10.n-77} we have  $\tD_{2\b\norm}\tK_{\b\norm}u_{1}=0$ in $\Omega^{\pm}$. Mapping properties of  Poisson operators imply $u_1\in H^{\infty,r}_{\b}(\Omega^{\pm};V_1)$ and $\tK_{\b\norm}u_1\in H^{\infty,r+1}_{\b}(\Omega^{\pm};V_2)$ (both weights belong to $J$). Since         $\trho_{2}^{\pm} \tK_{\b\norm}u_{1}= \tK_{\b\Sigma\norm}\trho_{1}^{\pm}u_{1}$ we obtain that $\tK_{\b\Sigma\norm}f\in \Ran \tc_{2\norm}^{\pm}$, so $\tK_{\b\Sigma\norm}$ maps  $\Ran \tc_{1\norm}^{\pm}$ into  $\Ran \tc_{2\norm}^{\pm}$.  Applying the same argument to $\one-\tc_{1\norm}^{\pm}= \tc_{1\norm}^{\mp}$ we obtain that  $\tK_{\b\Sigma\norm}$ maps  $\Ker \tc_{1\norm}^{\pm}$ into  $\Ker \tc_{2\norm}^{\pm}$ which proves \eqref{e8.calde-gauge}. The identity extends to $\cH^{m,r}_{\b}$ by continuity.\qed
\subsection{Lorentzian \calde projectors}\label{sec10.8}
In Subsect.~\ref{sec8.3} we will need to consider the Lorentzian versions of $\tc_{k\norm}^{\pm}$ defined as:
\beq\label{e01.16f}
c_{k\norm}^{\pm}\defeq \cF_{k\Sigma}^{-1}\tc_{k\norm}^{\pm}\cF_{k\Sigma},
\eeq
where $\cF_{k\Sigma}$ are defined in \ref{sec10.1.4}. We recall that  $\rq_{I, 2\b\norm}= \rq_{2\b\norm}\circ I_{\b\Sigma\norm}$, see \eqref{e2.charge-c}.

\begin{proposition}\label{prop10.4b}
One has
\ben
\item $(c_{k\norm}^{\pm})^{*}\circ \rq_{k\b\norm}=\rq_{k\b\norm}\circ c_{k\norm}^{\pm}$.
\item $(c_{2\norm}^{\pm})^{*}\circ \rq_{I,2\b\norm}=\rq_{I,2\b\norm}\circ c_{2\norm}^{\pm}$.
\item $c_{2\norm}^{\pm}\circ K_{\b\Sigma\norm}=K_{\b\Sigma\norm}\circ c_{1\norm}^{\pm}$.
\een
\end{proposition}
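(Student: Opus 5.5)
The plan is to transport the three Euclidean statements, Prop.~\ref{prop10.3}, Lemma~\ref{lemma10.2} and Prop.~\ref{prop10.4}, back to the Lorentzian side through the conjugation $c_{k\norm}^{\pm}=\cF_{k\Sigma}^{-1}\tc_{k\norm}^{\pm}\cF_{k\Sigma}$ of \eqref{e01.16f}. The tool for this is a set of intertwining identities relating $\cF_{k\Sigma}$ to the charges and to the gauge operator.

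For (1), I would first compute $\cF_{k\Sigma}^{*}\rq_{k\b\norm}\cF_{k\Sigma}$, with $^{*}$ the adjoint for $(\cdot|\cdot)_{\tV_k(\Sigma_\norm)}$. By \eqref{e2.10bee} one has $\rq_{k\b\norm}=(\cdot|q_k\cdot)$ with $q_k=\mat{0}{\tau_k}{\tau_k}{0}$. Since $\cF_{k}$ is diagonal with unimodular entries, $\cF_k^{*}\tau_k\cF_k=\tau_k$, because diagonal matrices commute and $\bar z z=1$. Hence $\cF_{k\Sigma}^{*}\rq_{k\b\norm}\cF_{k\Sigma}=\rq_{k\b\norm}$ and $\cF_{k\Sigma}^{*}=\cF_{k\Sigma}^{-1}$. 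It follows that
\[
(c_{k\norm}^{\pm})^{*}\rq_{k\b\norm}=\cF_{k\Sigma}^{*}(\tc_{k\norm}^{\pm})^{*}(\cF_{k\Sigma}^{-1})^{*}\rq_{k\b\norm}=\cF_{k\Sigma}^{-1}(\tc_{k\norm}^{\pm})^{*}\rq_{k\b\norm}\cF_{k\Sigma}.
\]
Prop.~\ref{prop10.3} turns the right-hand side into $\cF_{k\Sigma}^{-1}\rq_{k\b\norm}\tc_{k\norm}^{\pm}\cF_{k\Sigma}=\rq_{k\b\norm}c_{k\norm}^{\pm}$. This is a routine check of the matrices in \eqref{e10.n-6} against \eqref{e2.12b}.

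For (2), Lemma~\ref{lemma10.2}(2),(3) gives $[c_{2\norm}^{\pm},I_{\b\Sigma\norm}]=\cF_{2\Sigma}^{-1}[\tc_{2\norm}^{\pm},\widetilde{I}_{\b\Sigma\norm}]\cF_{2\Sigma}=0$. I would also need $I_{\b\Sigma\norm}$ to be selfadjoint for $\rq_{2\b\norm}$, i.e.\ $I_{\b\Sigma\norm}^{*}q_2=q_2I_{\b\Sigma\norm}$. This follows from $I=I^{\star}$ for the Lorentzian fibre form together with \eqref{e2.10bf} and \eqref{e2.12e}. Then
\[
(c_{2\norm}^{\pm})^{*}\rq_{2\b\norm}I_{\b\Sigma\norm}=\rq_{2\b\norm}c_{2\norm}^{\pm}I_{\b\Sigma\norm}=\rq_{2\b\norm}I_{\b\Sigma\norm}c_{2\norm}^{\pm},
\]
using (1) and the commutation.

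For (3), the key claim is $\tK_{\b\Sigma\norm}=\cF_{2\Sigma}K_{\b\Sigma\norm}\cF_{1\Sigma}^{-1}$, which is consistent with the prescription \eqref{e10.n-4}. Granting it, (3) follows at once from Prop.~\ref{prop10.4} by conjugation. This claim is the main obstacle. To establish it, I would work with $t$-analytic solutions $u_1$ of $D_{1\b\norm}u_1=0$ near $\Sigma_\norm$, whose Cauchy data are arbitrary analytic data; density and continuity then extend the identity. By definition of Wick rotation of operators, $\tK_{\b\norm}\widetilde{u}_1=\widetilde{K_{\b\norm}u_1}$, and $\tD_{1\b\norm}\widetilde{u}_1=0$. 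It then remains to relate the traces: $\widetilde{u}|_{s=0}=\cF_k u|_{t=0}$, since components with $l$ factors of $dt$ pick up $\i^{l}$, matching \eqref{e10.n-6}. For the normal derivatives, $-\widetilde\nabla_s\widetilde u(0)$ should equal $\cF_k\,\i^{-1}\nabla_t u(0)$. This uses $\p_s=\i\p_t$ together with Lemma~\ref{lemma10.0}, $\tilde R_{k\b\norm}(s)=\i\cF_kR_{k\b\norm}(\i s)\cF_k^{-1}$.

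The delicate point is the sign convention in \eqref{e10.normal-trace} versus \eqref{e2.10bc}, with the minus sign and the factor $\i^{-1}$. I would check it directly: $-\i(\p_t+R)=\i^{-1}(\p_t+R)$ holds only up to a sign. If a residual sign appears, it is a fixed constant multiple in the second component, common to $k=1,2$, and therefore cancels in the conjugation. Once this holds, $\trho_k\widetilde u=\cF_{k\Sigma}\varrho_k u$, and the claimed intertwining of $K_{\b\Sigma\norm}$ follows, giving (3).
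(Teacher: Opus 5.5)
Your proposal is correct and follows the paper's proof. Part (1) conjugates Prop.~\ref{prop10.3} by $\cF_{k\Sigma}$ using $\cF_{k\Sigma}^{*}\rq_{k\b\norm}\cF_{k\Sigma}=\rq_{k\b\norm}$. Part (2) combines (1) with $[c_{2\norm}^{\pm},I_{\b\Sigma\norm}]=0$ from Lemma~\ref{lemma10.2}. Part (3) conjugates Prop.~\ref{prop10.4} via $\tK_{\b\Sigma\norm}=\cF_{2\Sigma}K_{\b\Sigma\norm}\cF_{1\Sigma}^{-1}$, which the paper reads off from the prescription \eqref{e10.n-4} and you justify through traces of Wick-rotated solutions.

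Two small remarks. First, the self-adjointness of $I_{\b\Sigma\norm}$ for $\rq_{2\b\norm}$ is never used in your computation for (2). Second, the sign worry in your last paragraph is unfounded: $-\i=\i^{-1}$ exactly, so Lemma~\ref{lemma10.0} gives $-\widetilde\nabla_s\widetilde u(0)=\cF_k\,\i^{-1}\nabla_t u(0)$, and hence $\trho_k\widetilde u=\cF_{k\Sigma}\varrho_{k\b}u$ exactly. Had a sign flip in the second component occurred, it would not \emph{cancel} as you claim; it would exchange $\tc^{\pm}_{k\norm}$ with $\tc^{\mp}_{k\norm}$.
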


\proof  Let us first prove (1), removing the subscript $k$. We have $\cF_{\Sigma}^{*}\rq_{\b\norm} \cF_{\Sigma}= \rq_{\b\norm}$ hence using Prop.~\ref{prop10.3} we have:
\[
\begin{aligned}
\rq_{\b\norm}c^{\pm}_{\norm}
&= \rq_{\b\norm}\cF_{\Sigma}^{-1}\tc_{\norm}^{\pm}\cF_{\Sigma}= \cF_{\Sigma}^{*}\rq_{\b\norm} \tc^{\pm}_{\norm}\cF_{\Sigma}\\
&= \cF_{\Sigma}^{*}(\tc^{\pm}_{\norm})^{*}\rq_{\b\norm} \cF_{\Sigma}\\
&= \cF_{\Sigma}^{*}(\tc^{\pm}_{\norm})^{*}(\cF_{\Sigma}^{-1})^{*}\rq_{\b\norm} = (c^{\pm}_{\norm})^{*}\rq_{\b\norm}.
\end{aligned}
\]
From Lemma~\ref{lemma10.2} we deduce then that $[c_{2\norm}^{\pm}, I_{\b\Sigma\norm}]=0$ which proves (2).
From \eqref{e10.n-4} we obtain that $K_{\b\Sigma\norm}= \cF_{2\Sigma}^{-1}\circ \widetilde{K}_{\b\Sigma\norm}\circ \cF_{1\Sigma}$, so (3) follows from Prop.~\ref{prop10.4}. \qed

\section{Hadamard projectors for the reduced operators}\label{sec8}\init
In this section we construct Hadamard projectors for the reduced operators
\[ 
        D_{k\b}=-(\nabla_{t})^{\star}\nabla_{t}+a_{ k\b}(t)
\]
introduced in Subsect.~\ref{sec6.2}.  The Cauchy data map $\varrho_{k\b}$ adapted to $D_{k\b}$ is defined in \eqref{e2.10bc}.

The construction has two steps.  First we construct symbolic Hadamard projectors by factorizing $D_{k\b}$, using the $\Psi^{m}_{{\rm s}\: 1}$-calculus  recalled in Subsect.~\ref{sec9.3}.  Then we perform the transplant at the normal operator, in the sense of Subsect.~\ref{sec9.4}: the normal operators of the projectors are replaced by the Calder\'on projectors of the Wick-rotated normal operator.  This second step is needed for the gauge-invariance modulo compact errors used in Sect.~\ref{sec11}. In this section we often suppress the tensor degree $k$.

For  $\b$-pseudodifferential classes, residual ideals, and normal operators we use the notation in  Appendix~\ref{sec9}, in particular Subsects.~\ref{sec9.3} and \ref{sec9.4}. We will often refer to the range of weights $J=\open{-\frac{n-2}{2},\frac{n-2}{2}}$, whose importance stems from Prop.~\ref{prop10.2} (4) and (5).
\subsection{Factorization of $D_{\b}$}\label{sec8.2}
\begin{proposition}\label{prop8.3}
 \ben
 
 \item There exists  $b\in \cinf(I; \Psi^{1}_{{\rm s}\: 1}(\Sigma; V))$   with $\sigma_{\rm pr}(b)(t)= (\xi\dual \rh_{\b}^{-1}(t)\xi)^{\12}\one_{V}$ such that
 \beq\label{e8new.1}
 D_{\b}= (-(\nabla_{t})^{\star}+  \i b(t))(\nabla_{t}- \i b(t))+ r_{-\infty}(t),\ \ r_{-\infty}\in \cinf(I; \Psi^{-\infty}_{\rm s\: 1}(\Sigma;V)).
 \eeq
 Equivalently $b$ solves 
 \beq\label{e8new.2}
-\i\p_{t}b(t) -[R_{\b}(t), \i b(t)]- \i r_{\b}(t)b(t)+ b^{2}(t)- a_{\b}(t)=0\hbox{ mod } \cinf(I; \Psi^{-\infty}_{\rm s\: 1}(\Sigma;V)).
 \eeq

  \item  $b$ is unique modulo $\cinf(I, \Psi^{-\infty}_{\rm s\: 1}(\Sigma;V))$.
  \item the operator $-b^{\star}$ is also a solution of \eqref{e8new.2}.
 \item If $b_{\norm}(t)$ is the normal operator of $b(t)$ one has
 \[
 b_{\norm}(0)= \cF^{-1}\tn_{\norm}^{+}\cF+ r_{-\infty \norm}, 
 \]
 \[
r_{-\infty\norm}\in B^{-\infty}_{J}(\Sigma_{\norm}; V_{\norm}),\quad \ad_{h}r_{-\infty\norm}\in B^{-\infty}_{J, 0^{+}}(\Sigma_{\norm}),
\]
 for any $h\in \coinf(\rr^{+}_{\rho})$ with $h=1$ near $\rho=0$.
 \een
\end{proposition}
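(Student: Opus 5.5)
We first check that \eqref{e8new.1} and \eqref{e8new.2} are equivalent. Write $\nabla_{t}=\p_{t}+R_{\b}$ and $(\nabla_{t})^{\star}=-(\nabla_{t}+r_{\b})$, and expand the product:
\[
(-(\nabla_{t})^{\star}+\i b)(\nabla_{t}-\i b)=(\nabla_{t}+r_{\b}+\i b)(\nabla_{t}-\i b).
\]
Comparing with $D_{\b}=-(\nabla_{t})^{\star}\nabla_{t}+a_{\b}$, the difference is $-\i(\p_{t}b)-\i[R_{\b},b]-\i r_{\b}b+b^{2}-a_{\b}$. This is exactly the left-hand side of \eqref{e8new.2}, so the two formulations agree.

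We then solve the Riccati equation \eqref{e8new.2} symbolically in the $\Psi^{m}_{{\rm s}\,1}$-calculus of Subsect.~\ref{sec9.3}. The leading term is a quantization $b_{1}(t)$ of $(\xi\dual\rh_{\b}^{-1}(t)\xi)^{\12}\one_{V}$. By Prop.~\ref{prop3.normal-split} (v), $b_{1}^{2}-a_{\b}\in\Psi^{1}_{{\rm s}\,1}$, and the remaining terms $\i\p_{t}b$, $\i[R_{\b},b]$, $\i r_{\b}b$ are also of order $1$. We then solve recursively $b=\sum_{j\ge0}b_{1-j}$ with $b_{1-j}\in\Psi^{1-j}_{{\rm s}\,1}$. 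At step $j$ the correction is determined by $b_{1}b_{-j}+b_{-j}b_{1}=-e_{-j+1}$, where $e_{-j+1}$ is the current error. At the principal symbol level this is division by $2(\xi\dual\rh_{\b}^{-1}\xi)^{\12}$, which is elliptic uniformly up to $\rho=0$ in the b-sense. Borel summation in the calculus, smoothly in $t$, gives $b$ modulo $\cinf(I;\Psi^{-\infty}_{{\rm s}\,1})$. This proves (1).

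Uniqueness (2) uses the same recursion. Suppose $b,b'$ are two solutions with the same principal symbol, and set $\delta=b-b'$. Subtracting the equations gives $b\delta+\delta b'-\i\p_{t}\delta-\i[R_{\b},\delta]-\i r_{\b}\delta\in\Psi^{-\infty}$. If $\delta$ has order $m$, its principal symbol must then satisfy $2(\xi\dual\rh_{\b}^{-1}\xi)^{\12}\sigma_{m}(\delta)=0$ at order $m+1$, so $\delta$ has order $m-1$. Iterating shows $\delta$ is smoothing.

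For (3), take the $\star$-adjoint of \eqref{e8new.2}, using $a_{\b}^{\star}=a_{\b}$, $(\nabla_{t})^{\star}=-(\nabla_{t}+r_{\b})$, and the identity $R_{\b}^{\star}=-R_{\b}-r_{\b}$ that follows from it. A direct computation shows $-b^{\star}$ satisfies the same equation. The only bookkeeping needed is that the $t$-dependence of the fiber metric enters only through $r_{\b}$. Note that $-b^{\star}$ has principal symbol $-(\xi\dual\rh_{\b}^{-1}\xi)^{\12}$, so (2) does not apply to it. This is consistent: it is the second factorization root.

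Item (4) is the main step. Since all coefficients lie in $\cA^{\infty}_{1}$, taking normal operators commutes with the recursion. Hence $b_{\norm}(t)$ solves the normal Riccati equation built from $R_{\b\norm},r_{\b\norm},a_{\b\norm}$ of Prop.~\ref{prop10.1}, modulo $\Psi^{-\infty}_{{\rm s}\,\norm}$.

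We then Wick rotate. By Lemma~\ref{lemma10.0}, the normal coefficients are analytic in $t$; we choose the symbolic construction analytic in $t$, which is possible since the normal model is exact Minkowski. Then $\tb(s)\defeq\i\cF b_{\norm}(\i s)\cF^{-1}$ solves \eqref{e10.2b}, with principal symbol $(\xi\dual\tilde{\boldsymbol{\delta}}_{\b}^{-1}\xi)^{\12}$. The check is a direct substitution: $-\i\p_{t}\to-\p_{s}$, $\i b\to\tb$, $R\to-\i\tilde R$.

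By the uniqueness in Prop.~\ref{prop10.2} (5), this function agrees with $\tb_{\norm}$ modulo smoothing. Evaluating at $s=0$ and using \eqref{e10.2bc} gives $b_{\norm}(0)=\cF^{-1}\tn^{+}_{\norm}\cF+r_{-\infty\norm}$. The remainder lies in $B^{-\infty}_{J}$ with $\ad_{h}r_{-\infty\norm}\in B^{-\infty}_{J,0^{+}}$ because $\Psi^{-\infty}_{{\rm s}\,\norm}$ is contained in these residual classes.

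The main obstacle is that analyticity in $t$ holds only modulo smoothing terms. If it cannot be arranged directly, the fallback is to compare at $t=0$ only. The full Taylor expansion in $t$ of $b_{\norm}$ at $t=0$ is determined by the Riccati equation and $b_{\norm}(0)$. Likewise the full Taylor expansion in $s$ of $\tb_{\norm}$ is determined by $\tb_{\norm}(0)$. Both are related by formal Wick rotation, so the two symbol expansions at time $0$ coincide term by term, and the identity at $t=0$ follows.
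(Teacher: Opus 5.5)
Your overall route is the paper's: a symbolic solution of the Riccati equation with Borel summation, uniqueness by the same order-by-order argument, an adjoint argument for (3), and for (4) normal operators, Wick rotation, the uniqueness in Prop.~\ref{prop10.2}~(5) and \eqref{e10.2bc}. However, the Wick-rotation step is wrong as written. With $\tb(s)\defeq\i\cF b_{\norm}(\i s)\cF^{-1}$, the principal symbol is $\i(\xi\dual\tilde{\boldsymbol{\delta}}_{\b}^{-1}(s)\xi)^{\12}$, not the positive symbol you state. Also $b_{\norm}^{2}$ is sent to $-\tb^{2}$, so $\tb$ does not solve \eqref{e10.2b}. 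And evaluating at $s=0$ would give $b_{\norm}(0)=-\i\cF^{-1}\tn^{+}_{\norm}\cF$ modulo remainders, which contradicts your own conclusion. The correct correspondence is $b\to\tb$, not $\i b\to\tb$. Set $\tb(s)=\cF b_{\norm}(\i s)\cF^{-1}$, so that $\p_{s}\tb(s)=\i\cF(\p_{t}b_{\norm})(\i s)\cF^{-1}$. Lemma~\ref{lemma10.0} gives $\tilde{R}_{\b\norm}(s)=\i\cF R_{\b\norm}(\i s)\cF^{-1}$, $\tilde{r}_{\b\norm}(s)=\i r_{\b\norm}(\i s)$ and $\tilde{a}_{\b\norm}(s)=\cF a_{\b\norm}(\i s)\cF^{-1}$. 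Hence conjugating the left-hand side of \eqref{e8.10b} at $t=\i s$ by $\cF$ gives exactly the left-hand side of \eqref{e10.2b}. The principal symbol is then $(\xi\dual\tilde{\boldsymbol{\delta}}_{\b}^{-1}(s)\xi)^{\12}$, since $\cosh^{2}(\i s)=\cos^{2}s$. This is the paper's formula, and only with it does your final identity at $s=0$ follow.

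Two smaller points. First, the analyticity you call the main obstacle is handled exactly as in the paper. Each homogeneous term of the symbol of $b_{\norm}(t)$ is uniquely determined by the recursion from the coefficients of $D^{\reta}_{\b}$ and their derivatives, so it is real analytic in $t$. Only this formal expansion is Wick-rotated, which is all the formal uniqueness in Prop.~\ref{prop10.2}~(5) requires; the smoothing remainder never needs to be analytic. Your fallback, as worded, does not work: knowing that the $t$- and $s$-Taylor expansions are determined by their values at $0$ says nothing about whether those values agree. What would work is that each homogeneous term of the symbol of $b_{\norm}(0)$ depends only on finitely many $t$-derivatives at $0$ of the coefficients. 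Second, in (3) the identity $R_{\b}^{\star}=-R_{\b}-r_{\b}$ does not follow from $(\nabla_{t})^{\star}=-(\nabla_{t}+r_{\b})$, because $\p_{t}^{\star}\neq-\p_{t}$ when the fiber form and density depend on $t$. The extra matrix term happens to cancel against the corresponding term in $(\p_{t}b)^{\star}$. It is cleaner to write $-\i\p_{t}b-[R_{\b},\i b]=-[\nabla_{t},\i b]$ and use only the formula for $(\nabla_{t})^{\star}$, or, as the paper does, to take the $\star$-adjoint of \eqref{e8new.1} using $D_{\b}=D_{\b}^{\star}$.
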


\proof
One has $D_{\b}= \nabla_{t}^{2}+ r_{\b}(t)\nabla_{t}+ a_{\b}(t)$, so using that $\nabla_{t}= \p_{t}+ R_{\b}(t)$ an easy computation shows that \eqref{e8new.1} and \eqref{e8new.2} are equivalent. 

Let $\epsilon\in\cinf(I;S^1_{\cl\: 1}({}^{\b}T^*\Sigma;L(V)))$ be a classical elliptic symbol whose principal part is
\[
        \epsilon_{\rm pr}(t,y,\xi)=(\xi\dual\rh_{\b}^{-1}(t)\xi)^{1/2}\one_V.
\]
Using the symbolic fractional powers from Subsect.~\ref{sec9.2b.3}, we choose $\epsilon$ so that
\[
        \Op(\epsilon)^2-a_{\b}(t)\in\cinf(I;\Psi^{-\infty}_{\rm s\: 1}(\Sigma;V)).
\]
We also choose a symbolic inverse $(2\epsilon)^{(-1)}\in\cinf(I;S^{-1}_{\cl \: 1}({}^{\b}T^*\Sigma;L(V)))$, i.e.
\[
        (2\epsilon)^{(-1)}\moyal 2\epsilon=1
        \quad \hbox{mod }S^{-\infty}.
\]
We look for
\[
        b=\Op(\epsilon)+\Op(c),
        \ \ c\in\cinf(I;S^0_{\cl \: 1}({}^{\b}T^*\Sigma;L(V))).
\]
Substituting this expression into \eqref{e8new.2} gives, modulo
$\cinf(I;\Psi^{-\infty}_{\rm s\: 1}(\Sigma;V))$:
\[
\begin{aligned}
\Op(c)={}&(2\Op(\epsilon))^{(-1)}
        \big(\i\p_t\Op(\epsilon)+\i r_{\b} \Op(\epsilon)+ [R_{\b},\i\Op(\epsilon)]\big)\\
&+(2\Op(\epsilon))^{(-1)}
        \big(\i\p_t\Op(c)+\i r_{\b}\Op(c)+ [R_{\b},\i\Op(c)]
        +[\Op(\epsilon),\Op(c)]-\Op(c)^2\big).
\end{aligned}
\]
Equivalently, on the level of full symbols we obtain
\beq\label{e8.fixedpoint-symbol}
        c=c_0+F(c)\quad\hbox{mod }S_{{\rm cl}\: 1}^{-\infty},
\eeq
where
\[
        c_0=(2\epsilon)^{(-1)}\moyal
        \big(\i\p_t\epsilon+\sigma(\i r_{\b}\Op(\epsilon)+ [R_{\b},\i\Op(\epsilon)])\big),
\]
and
\[
F(c)=(2\epsilon)^{(-1)}\moyal
\big(\i\p_t c+\sigma(\i r_{\b}\Op(c)+ [R_{\b},\i\Op(c)]+ [\Op(\epsilon), \Op(c)])
-c\moyal c\big).
\]
Here $\sigma$ denotes any choice of  a full symbol representative of the indicated operator. A different representative changes the right hand side of \eqref{e8.fixedpoint-symbol} only by an element of $S_{{\rm cl}\: 1}^{-\infty}$.

The map $F$ improves symbolic order by one: if
$c-c'\in\cinf(I;S^{-m}_{\cl \: 1})$, then
\[
        F(c)-F(c')\in\cinf(I;S^{-m-1}_{\cl \: 1}).
\]
Indeed, the factor $(2\epsilon)^{(-1)}$ has order $-1$, while the expression in parentheses has order at most $-m$. Starting with $c^{(0)}=c_0$ and defining
$c^{(j+1)}=c_0+F(c^{(j)})$, one obtains
\[
        c^{(j+1)}-c^{(j)}\in\cinf(I;S^{-j-1}_{\cl \: 1}).
\]
Borel summation gives $c\in\cinf(I;S^0_{\cl \: 1})$ solving \eqref{e8.fixedpoint-symbol}. Consequently
\beq\label{e8.8b}
      -\i\p_{t}b(t) -[R_{\b}(t), \i b(t)]- \i r_{\b}(t)b(t)+ b^{2}(t)- a_{\b}(t)
        =r_{-\infty}
        \in\cinf(I;\Psi^{-\infty}_{\rm s\: 1}(\Sigma;V)).
\eeq
This proves (1). The same argument applied to the difference of two solutions proves (2). Since $D_{\b}= D_{\b}^{\star}$ we obtain that $ -b^{\star}$ solves \eqref{e8new.1} hence also \eqref{e8new.2} which proves (3).

It remains to prove (4). 
The remainder in \eqref{e8.8b} has a normal operator by Lemma~\ref{lemma9.2}. Taking normal operators gives:
\beq\label{e8.10b}
        -\i\p_t b_{\norm}(t)-[R_{\b\norm}(t),\i b_{\norm}(t)]-\i r_{\b\norm}(t)b_{\norm}(t)
        +b_{\norm}^{2}(t)-a_{\b\norm}(t)
        \in\cinf(I;\Psi^{-\infty}_{{\rm s}\: \norm}(\Sigma_{\norm};V_{\norm})).
\eeq
Moreover $b_{\norm}(t)$ has principal symbol
\[
(\xi\dual\boldsymbol{\delta}_{\b}^{-1}(t)\xi)^{1/2}\one_{V_{\norm}}.
\]
We now compare the solution $b_{\norm}(t)$ of \eqref{e8.10b} with the solution $\tb_{\norm}(s)$ of the analogous equation \eqref{e10.2b} appearing in Prop.~\ref{prop10.2}.

Since the metric $\reta$ is real analytic in $t$ near $t=0$, all the terms in the symbolic expansion of $b_{\norm}(t)$ are real analytic in $t$ near $t=0$. We apply Wick rotation coefficientwise to this formal expansion. Using the notation introduced in \ref{sec10.1.4} and Lemma~\ref{lemma10.0}, we obtain that the formal operator $\cF b_{\norm}(\i s)\cF^{-1}$ solves \eqref{e10.2b}. By uniqueness in the formal symbolic calculus, the symbols of this operator and $\tb_{\norm}(s)$ agree to all orders. Evaluating at $s=0$, we obtain
\[
b_{\norm}(0)=\cF^{-1}\tb_{\norm}(0)\cF
\hbox{ modulo }\Psi^{-\infty}_{{\rm s}\:\norm}(\Sigma_{\norm}).
\]
By \eqref{e10.2bc} in Prop.~\ref{prop10.2} we obtain that
\[
b_{\norm}(0)= \cF^{-1}\tilde{n}^{+}_{\norm}\cF+ r_{-\infty\norm},  
\]
where the remainder has the stated properties. This completes the proof. \qed

\subsection{Hadamard projectors for $D_{\b}$}\label{sec8.3}
\subsubsection{Summary of the construction}
From $b$ constructed in Prop.~\ref{prop8.3} one constructs Hadamard projectors.  We  set initially
\[
       b^{+}= b(0),\quad b^{-}=-b^{\star}(0)
\]
modulo residual terms.  After changing $b$ by a residual term if necessary, we may assume that $b^{+}-b^{-}$ is invertible in the calculus.

The Hadamard projectors are
\beq\label{e8.9bb}
        c^{\pm}=
        \mat{\mp(b^{+}-b^{-})^{-1}b^{\mp}}
             {\pm(b^{+}-b^{-})^{-1}}
             {\mp b^{+}(b^{+}-b^{-})^{-1}b^{-}}
             {\pm b^{\pm}(b^{+}-b^{-})^{-1}}.
\eeq
They project  Cauchy data onto the graphs
\[
        \Ran\binom{\one}{b^{+}},
        \quad
        \Ran\binom{\one}{b^{-}},
\]
 If $u$ solves
$(\nabla_t-\i b^{\pm})u=0$ with $u(0)=f$, then
$\varrho_{\b}u=\binom{f}{b^{\pm}f}$.

In our situation it is essential that the Hadamard projectors $c_k^{\pm}$, $k=1,2$, satisfy the gauge invariance identity asymptotically near $\{\rho=0\}$.  We therefore transplant the normal operators so that they are exactly the Lorentzian Calder\'on projectors $c_{k\norm}^{\pm}$ defined in Subsect.~\ref{sec10.8} , which satisfy the gauge identity in Prop.~\ref{prop10.4b} (3).
\subsubsection{Sobolev spaces of Cauchy data}
We set for $m,r\in\rr$
\beq\label{e8.9b}
        \cH^{m,r}_{\b}(\Sigma;V\otimes\cc^2)
        \defeq
        H^{m,r}_{\b}(\Sigma;V)
        \oplus
        H^{m-1,r}_{\b}(\Sigma;V).
\eeq
We denote by $\cU_{\b}: \coinf(\Sigma; V\otimes \cc^{2})\to \Ker_{\sc}D_{\b}$ the solution of the Cauchy problem for $D_{\b}$:
\beq\label{e0b.3}
\begin{cases}
D_{\b}\cU_{\b}= 0, \\
\varrho_{\b}\cU_{\b}= \one,
\end{cases}
\eeq
where the Cauchy data map $\varrho_{\b}$ is defined in \eqref{e2.10bc}.
\begin{proposition}\label{prop8.4}
There exist a pair of operators $c^{\pm}$ with the following properties:
\ben
\item $c^{+}+c^{-}=\one$, and $c^{\pm}$ are bounded projections on $\cH^{p,r}_{\b}(\Sigma;V\otimes\cc^2)$ for $p\in\rr$, $r\in J$.
\item $\rq_{\b}\circ c^{\pm}=(c^{\pm})^*\circ\rq_{\b}$, i.e.~$c^{\pm}$ are self-adjoint for the Lorentzian charge.
\item $\pm c^{\pm *}\circ\tilde{\rq}_{\b}\circ c^{\pm}\geq 0$, where the Euclidean charge $\tilde{\rq}_{\b}$ is defined in \ref{sec8.1.2}.
\item If $c^{\pm}_{\alpha\beta}$  for $0\leq\alpha,\beta\leq1$ are the matrix entries of $c^{\pm}$ then
\[
        c^{\pm}_{\alpha\beta}\in\Psi^{\alpha-\beta}_{J}(\Sigma;V).
\]
\item $\WF(\cU_{\b}c^{\pm})'\subset(\cN^{\pm}\cup\cF)\times T^*\Sigma$, over $\mathcal{O}\times \Sigma$, where $\mathcal{O}$ is a neighborhood of $\Sigma$ in $M$ and $\cF$ is conic with  $\cF\cap\cN= \emptyset$.
\item If $(c^{\pm})_{\norm}$ is the normal operator of $c^{\pm}$, then
\[
        (c^{\pm})_{\norm}=c^{\pm}_{\norm},
\]
where $c^{\pm}_{\norm}= \cF^{-1}_{\Sigma}\tc^{\pm}_{\norm}\cF_{\Sigma}$ are the Lorentzian Calder\'on projectors defined in Subsect.~\ref{sec10.8}.
\een
\end{proposition}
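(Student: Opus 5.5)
We take $b$ from Prop.~\ref{prop8.3} and set $b^{+}=b(0)$, $b^{-}=-b^{\star}(0)$. These are classical elliptic operators in $\Psi^{1}_{{\rm s}\,1}(\Sigma;V)$ with principal symbols $\pm(\xi\dual\rh_{\b}^{-1}(0)\xi)^{1/2}\one_V$. By Prop.~\ref{prop8.3}~(4) the normal operator of $b^{+}$ is $\cF^{-1}\tn^{+}_{\norm}\cF$ modulo the residual class $B^{-\infty}_{J}$.

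\emph{Transplant.} Since $(\cF^{-1}\tn^{+}_{\norm}\cF)^{\star}=\cF^{\star}\tn^{+}_{\norm}(\cF^{-1})^{\star}$ and $\tn^{-}_{\norm}=-\tn^{+}_{\norm}$, we use $\cF^{\star}=\cF^{-1}$ up to sign (which we will check in the tensor decomposition) and Prop.~\ref{prop10.2}~(2). This gives $(b^{-})_{\norm}=\cF^{-1}\tn^{-}_{\norm}\cF$ modulo residuals. We then replace $b^{\pm}$ by $b^{\pm}+r^{\pm}$, with $r^{\pm}$ residual operators chosen via the transplant construction of Subsect.~\ref{sec9.4}. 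We cut off near $\rho=0$, using $h$ with $h=1$ there, and insert the residual difference of normal operators. The goal is that $(b^{\pm})_{\norm}=\cF^{-1}\tn^{\pm}_{\norm}\cF$ exactly. We keep $b^{-}=-(b^{+})^{\star}$ by symmetrizing: we define $b^{-}$ as $-(b^{+})^{\star}$ after modifying $b^{+}$. This step uses only that the remainders in Prop.~\ref{prop8.3}~(4) lie in $B^{-\infty}_{J}$, with $\ad_h$ of them in $B^{-\infty}_{J,0^{+}}$. Then $b^{+}-b^{-}$ is elliptic with invertible normal operator $2\cF^{-1}\tn^{+}_{\norm}\cF$ by Prop.~\ref{prop10.2}~(4). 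By the Fredholm criterion of Appendix~\ref{sec9}, it is Fredholm $H^{m,r}_{\b}\to H^{m-1,r}_{\b}$ for $r\in J$, with index zero.

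\emph{Invertibility.} We also need positivity to obtain (3). We add a residual, finite-rank, smoothing correction so that $b^{+}+(b^{+})^{\star}$ becomes strictly positive with respect to the Hilbertian structure $\tV$. Concretely, we use the Euclidean positivity \eqref{e10.n112} of $\tn^{+}_{\norm}$ near $\rho=0$ and ellipticity in the interior; Gårding's inequality then gives positivity modulo a compact term, which we absorb into a finite-rank residual. This makes $b^{+}-b^{-}$ injective, hence invertible, and by the calculus its inverse lies in $\Psi^{-1}_{J}$. Defining $c^{\pm}$ by \eqref{e8.9bb}, one checks algebraically that $c^{+}+c^{-}=\one$ and $(c^{\pm})^{2}=c^{\pm}$. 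Property (4) follows from the orders of the entries, and (1) from the boundedness of $\Psi^{m}_{J}$ on weighted b-Sobolev spaces for $r\in J$. Property (2) is the standard computation $\rq_{\b}c^{\pm}=(c^{\pm})^{*}\rq_{\b}$ using $b^{-}=-b^{\star}$ together with \eqref{e2.10bee} and \eqref{e2.12e}. For (3), we compute $c^{\pm*}\tilde\rq_{\b}c^{\pm}$ on the graph $\{(f,b^{\pm}f)\}$, where it equals $\pm(f|(b^{\pm}+b^{\pm*})f)$ in the appropriate sense. This yields positivity from the arrangement above, using the $\tau$-relation between $b^{-}$ and $(b^{+})^{*}$.

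\emph{Remaining properties.} Normal operators are multiplicative, so $(c^{\pm})_{\norm}$ is given by \eqref{e8.9bb} with $b^{\pm}$ replaced by $\cF^{-1}\tn^{\pm}_{\norm}\cF$. Comparing with \eqref{e01.16d} conjugated by $\cF_{\Sigma}$ gives (6). For (5), $\cU_{\b}c^{\pm}f$ coincides, modulo smoothing, with the solution of $(\nabla_t-\i b^{\pm}(t))u=0$ for $b^{+}(t)=b(t)$ and $b^{-}(t)=-b^{\star}(t)$, by the factorization \eqref{e8new.1}. The residual modifications are smoothing and do not affect wavefront sets. The standard propagation argument for first-order pseudodifferential evolutions with principal symbol $\pm|\xi|_{\rh_{\b}}$ then gives the wavefront set in $\cN^{\pm}\cup\cF$, exactly as in \cite{G2,GOW}. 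I expect the main obstacle to be the transplant step: modifying $b^{\pm}$ so that the normal operators become exact while simultaneously preserving $b^{-}=-b^{+\star}$, positivity, and invertibility.
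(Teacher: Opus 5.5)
Your overall route is the paper's: transplant the normal operator of $b(0)$ to $\cF^{-1}\tn^{+}_{\norm}\cF$, set $b^{-}=-(b^{+})^{\star}$, add a finite-rank smoothing correction supported away from $\rho=0$, define $c^{\pm}$ by \eqref{e8.9bb}, and read off (2) and (4)--(6) from $b^{-}=-(b^{+})^{\star}$, the calculus and the factorization. There is, however, a genuine gap in the positivity step.

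As you compute, (3) amounts to $\pm(f|(b^{\pm}+b^{\pm*})f)_{\tV(\Sigma)}\geq0$. Since $b^{-}=-\tau^{-1}(b^{+})^{*}\tau$, one has $-(b^{-}+b^{-*})=\tau^{-1}(b^{+}+b^{+*})\tau$. So both signs require $\Re b^{+}\geq0$ for the Hilbertian structure. What you arrange instead is positivity of $b^{+}+(b^{+})^{\star}=b^{+}-b^{-}$, whose real part is $\Re b^{+}+\tau^{-1}(\Re b^{+})\tau$. This does not imply $\Re b^{+}\geq0$, because $\tau$ does not commute with $\Re b^{+}$ away from $\rho=0$. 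Already for $\tau=\mat{-1}{0}{0}{1}$ and $\Re b^{+}=\mat{1}{2}{2}{1}$ the sum is $2\cdot\one$, while $\Re b^{+}$ has eigenvalue $-1$. The $\tau$-relation you invoke only reduces the $-$ case of (3) to the $+$ case. It cannot turn positivity of $b^{+}+b^{+\star}$ into the positivity of $b^{+}+b^{+*}$ that (3) needs.

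The repair, which is what the paper does, is to aim the correction at $\Re b^{+}$ itself. $\Re b^{+}$ is elliptic with positive principal symbol, and its normal operator is $\cF^{-1}\tn^{+}_{\norm}\cF\geq C$ by \eqref{e10.n112}. Hence $\sigma_{\rm ess}(\Re b^{+})\subset[C,+\infty[$, and a finite-rank $s_{-\infty}=s_{-\infty}^{*}\geq0$ with kernel supported away from $\rho=0$ gives $\Re(b^{+}+s_{-\infty})\geq C/2$. Unitarity of $\tau$ then yields $\Re(b^{+}-b^{-})=\Re b^{+}+\tau^{-1}(\Re b^{+})\tau\geq C$. This gives both (3) and injectivity of $b^{+}-b^{-}$.

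Separately, ``injective, hence invertible'' together with an unproved ``index zero'' does not cover all weights $r\in J$. Positivity only rules out kernel and cokernel at weight $0$. For $r<0$ you need the bootstrap in the proof of Lemma~\ref{lemma9.6}: invertibility of the normal operator on all of $J$, plus the extra decay of $B^{-\infty}_{J,0^{+}}$ remainders, shows that kernel elements actually lie at weight $0$. Only then does Lemma~\ref{lemma9.3} give $(b^{+}-b^{-})^{-1}\in\Psi^{-1}_{J}$, which you use for (1) and (4).
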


\proof
We denote the operator $b(0)$  in Prop.~\ref{prop8.3} simply by $b$. We first modify $b$ by a residual term so that its normal operator $b_{\norm}$ is exactly $\cF^{-1}\tn^{+}_{\norm}\cF$.  Since $b\in\Psi^{1}_{{\rm s\: 1}}(\Sigma;V)\subset \Psi^{1}_{\rr}(\Sigma; V)$, Lemma~\ref{lemma9.2b} gives
\[
        b=\chi_{\infty}b_{\norm}\chi_{\infty}
        +x^{1/2}r_1x^{1/2}+r_{-\infty},
\]
where $r_1\in\Psi^1_{\rm s}(\Sigma;V)$, $r_{-\infty}\in B^{-\infty}_{J,0^+}(\Sigma;V)$, and $\chi_{\infty}=1$ near $\p\Sigma$.  We set
\beq\label{e8.10a}
n_{\norm}^{+}= \cF^{-1}\tn^{+}_{\norm}\cF,
\eeq
and 
\[
        \hat b\defeq
        \chi_{\infty}n_{\norm}^{+}\chi_{\infty}
        +x^{1/2}r_1x^{1/2}+r_{-\infty}.
\]
By Prop.~\ref{prop8.3} (4) we have $b-\hat b\in B^{-\infty}_{J}(\Sigma;V)$ $\hat b_{\norm}=n^{+}_{\norm}$.

From Prop.~\ref{prop10.2} (2) we have $n^{+}_{\norm}= (n^{+}_{\norm})^{*}= (n^{+}_{\norm})^{\star}$. Therefore the operators 
$\hat b+\hat b^{\star}$ and $\hat b+\hat b^{*}$ have normal operator $2n^{+}_{\norm}$ and principal symbol $2(\xi\rh_{\b}^{-1}\xi)^{1/2}\one_V$.  Since by Prop.~\ref{prop10.2} (3) $n^{+}_{\norm}: H^{m, r}_{\b}(\Sigma; V)\tosim H^{m-1, r}_{\b}(\Sigma; V)$, for $m\in \rr,\ r\in J$, Lemma~\ref{lemma9.4} shows that $\hat{b}+ \hat{b}^{*}, \hat{b}+ \hat{b}^{\star}: H^{m, r}_{\b}(\Sigma; V)\tosim H^{m-1, r}_{\b}(\Sigma; V)$ are Fredholm.

From  Prop.~\ref{prop10.2} (3) we know that $2n^{+}_{\norm}\geq C>0$ for $(\cdot|\cdot)_{\tV(\Sigma_{\norm})}$. As in the proof of 
 Lemma~\ref{lemma9.6} we can add a finite-rank operator $s_{-\infty}=s_{-\infty}^{*}\geq0$, with smooth kernel supported away from $\{\rho=0\}$, so that
\[
        \Re(\hat b+s_{-\infty})\geq C>0
\]
for the Hilbertian scalar product $(\cdot| \cdot)_{\tV(\Sigma)}$.  Applying \eqref{e2.12e} we obtain that $\Re(\hat{b}+ \hat{b}^{\star})= \Re\hat{b}+ \tau^{-1}\Re\hat{b}\tau$. Since $\tau^{*}= \tau^{-1}$ we also have that
\[
\Re(\hat{b}+ s_{-\infty}+ (\hat{b}+ s_{-\infty})^{\star})\geq 2C>0.
\]
We replace $\hat{b}$ by $\hat{b}+ s_{-\infty}$ which we again denote by $\hat{b}$. We obtain as in the proof of Lemma~\ref{lemma9.6} that 
 $\hat{b}, \hat{b}+ \hat{b}^{\star}, \hat{b}+ \hat{b}^{*}: H^{p, r}_{\b}(\Sigma; V)\tosim H^{p-1, r}_{\b}(\Sigma; V)$ for $p\in \rr,\ r\in J$.
 Moreover $\hat{b}+ \hat{b}^{\star}, \hat{b}$ with domain $H^{1, 0}_{\b}(\Sigma; V)$ are $m$-accretive for  $(\cdot| \cdot)_{\tV(\Sigma)}$ and since  the kernel of $s_{-\infty}$ vanishes near $\{\rho=0\}$ we still have  $\hat{b}_{\norm}= n^{+}_{\norm}$.

Set
\beq\label{e8.10}
        b^{+}=\hat b,
        \quad
        b^{-}=-\hat b^{\star}.
\eeq
Then $b^{+}-b^{-}:H^{p,r}_{\b}(\Sigma;V)\tosim H^{p-1,r}_{\b}(\Sigma;V)$ for $p\in\rr$, $r\in J$.
Let
\[
        S\defeq\i^{-1}\mat{\one}{-\one}{b^{+}}{-b^{-}}(b^{+}-b^{-})^{-1},
        \quad
        S^{-1}=\i\mat{-b^{-}}{\one}{-b^{+}}{\one}.
\]
Since $b^{+}-b^{-}=\hat{b}+ \hat{b}^{\star}$ is $m$-accretive, we can define $d=(b^{+}-b^{-})^{1/2}$, with $d=d^{\star}$.  Recall that in \eqref{e2.10bee} we defined the charge operator $q$ by:
\[
\bar{f}\dual \rq_{\b} f= (f| qf)_{\tV(\Sigma)\otimes \cc^{2}},\quad q= \mat{0}{\tau}{\tau}{0},
\]
where $\tau$ is defined in \eqref{e2.12b}.

  As in \cite[5.2.9]{GMW}, if
$T=S\circ\mat{d}{0}{0}{d}$, then using that $d^{\star}(b^{+}- b^{-})^{-1}d=1$ we obtain that:
\beq\label{e8.10bb}
        T^{*}qT=\mat{\tau}{0}{0}{-\tau}.
\eeq
  We set then for $\pi^{+}= \mat{1}{0}{0}{0}$ and $\pi^{-}= \mat{0}{0}{0}{1}$:
 \beq\label{e01.21b}
c^{\pm}\defeq T\pi^{\pm}T^{-1}=  \mat{\mp(b^{+}- b^{-})^{-1}b^{\mp}}{\pm(b^{+}- b^{-})^{-1}}{\mp b^{\pm}(b^{+}- b^{-})^{-1}b^{\mp}}{\pm b^{\pm}(b^{+}- b^{-})^{-1}}.
\eeq
We obtain  immediately that $c^{\pm}$ are bounded projections on $\cH^{p, r}_{\b}(\Sigma; V\otimes \cc^{2})$ for $p\in \rr,\ r\in J$. Moreover $c^{+}+ c^{-}= \one$ and  using \eqref{e8.10bb} we have also $\rq_{\b}\circ c^{\pm}= (c^{\pm})^{*}\circ \rq_{\b}$, i.e.~(1) and (2) of the proposition hold. 

 To check (3) we follow verbatim the proof in \cite[Lem.~3.1]{G2}. We see that (3) is equivalent to  the fact $\hat{b}+ \hat{b}^{*}\geq 0$ for  $(\cdot| \cdot)_{\tV(\Sigma)}$, which we have already proved.

Statement (4) follows from pseudodifferential calculus and the mapping properties of $(b^{+}-b^{-})^{-1}$.  To prove (5) we use that if $\varrho_{\b}u\in \Ran c^{\pm}$ then $(\nabla_{t}- \i b)u$ is smooth away from $\rho=0$, see \cite[Prop.~5.4]{GMW} for details.

Finally recall that $b^{\pm}_{\norm}=n^{\pm}_{\norm}= \cF^{-1}\tn^{\pm}_{\norm}\cF$.  Substituting these normal operators into \eqref{e01.21b}  and using \eqref{e01.16d} we obtain that $(c^{\pm})_{\norm}= \cF_{\Sigma}^{-1}\tc^{\pm}_{\norm}\cF_{\Sigma}$, which proves (5).\qed

\subsubsection{Hadamard projectors and physical charge}\label{sec8.3.1}
For $k=2$ the Lorentzian physical charge is $\rq_{I,2\b}=\rq_{2\b}\circ I_{\b\Sigma}$, see Subsect.~\ref{sec2.3.3}.   We recall that $[D_{2\b},I_{\b}]=0$ and from Lemma~\ref{lemma10.2} we have $[\cF_{2}^{-1}\tn_{2\norm}^{+}\cF_{2}, I_{\b}]=0$.  From these two facts we obtain that $I_{\b}\circ b_{2}\circ I_{\b}$ also satisfies the conditions in Prop.~\ref{prop8.3}. Therefore we may replace  $b_{2}$ by $\12 (b_{2}+ I_{\b}b_{2}I_{\b})$ and assume that 
        $[b_2,I_{\b}]=0$.
Consequently the Hadamard projectors satisfy $[c_2^{\pm},I_{\b\Sigma}]=0$
and hence are self-adjoint for the physical charge:
\beq\label{e8.12b}
        (c_2^{\pm})^{*}\circ\rq_{I,2\b}=\rq_{I,2\b}\circ c_2^{\pm}.
\eeq
\subsection{Gauge invariance modulo compact errors}\label{sec8.4}
Because of the residual choices in Props.~\ref{prop8.3} and \ref{prop8.4}, the projectors need not satisfy the exact identity
\beq\label{e8.12}
        c_2^{\pm}K_{\b\Sigma}=K_{\b\Sigma}c_1^{\pm}.
\eeq
The next proposition gives the weaker property used later.  

\begin{proposition}\label{prop8.5bis}
The map
\[
        c_2^{\pm}K_{\b\Sigma}-K_{\b\Sigma}c_1^{\pm}
\]
is smoothing on $\Sigma$ and compact from
$\cH^{p,r}_{\b}(\Sigma;V_1\otimes\cc^2)$ to
$\cH^{p-1,r+1}_{\b}(\Sigma;V_2\otimes\cc^2)$ for $p\in\rr, r\in J\cap(J-1)$.
\end{proposition}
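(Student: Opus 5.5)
We split $E^{\pm}\defeq c_2^{\pm}K_{\b\Sigma}-K_{\b\Sigma}c_1^{\pm}$ into an interior part, which is smoothing, and a boundary part, whose normal operator vanishes and which therefore gains decay.

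\textbf{Smoothing on $\Sigma$.} We work microlocally. The identity $D_{2\b}\circ K_{\b}=T_{\b}\circ D_{1\b}$ of \eqref{e7.0} implies that $K_{\b}$ maps $\Ker_{\sc}D_{1\b}$ into $\Ker_{\sc}D_{2\b}$, up to the conjugation by $(c\rho)^{\pm2}$ which relates $T_{\b}$ to $K_{\b}$. Take $f$ with $\varrho_{1\b}u_1=c_1^{\pm}f$. By Prop.~\ref{prop8.4} (5), $u_1$ has wavefront set in $\cN^{\pm}\cup\cF$ near $\Sigma$. Since $K_{\b}$ is a differential operator, $K_{\b}u_1$ is a solution of $D_{2\b}$ with wavefront set in $\cN^{\pm}$ near $\Sigma$. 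Hence $c_2^{\mp}K_{\b\Sigma}c_1^{\pm}$ is smoothing. The same argument with the roles of $c^{\pm}$ exchanged shows that $c_2^{\pm}K_{\b\Sigma}c_1^{\mp}$ is smoothing. Using $c_k^++c_k^-=\one$, we write
\[
E^{\pm}=c_2^{\pm}K_{\b\Sigma}c_1^{\mp}-c_2^{\mp}K_{\b\Sigma}c_1^{\pm},
\]
so $E^{\pm}$ is smoothing.

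Alternatively, one can argue at the level of operators. Write $b_k^{\pm}$ for the operators of Prop.~\ref{prop8.4}. Conjugating the factorization \eqref{e8new.1} of $D_{1\b}$ by $K_{\b}$ and using uniqueness in Prop.~\ref{prop8.3} (2), one shows that $K_{\b\Sigma}$ intertwines the graphs of $b_1^{\pm}$ and $b_2^{\pm}$ modulo $\Psi^{-\infty}_{{\rm s}\,1}$. Inserting this into \eqref{e01.21b} shows that every matrix entry of $E^{\pm}$ lies in $\Psi^{-\infty}_{{\rm s}\,1}$, modulo the residual terms $B^{-\infty}_J$ introduced when constructing $\hat b$. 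This second viewpoint matters because it places $E^{\pm}$ in a class that has a normal operator, namely the classes of Lemma~\ref{lemma9.2} and Lemma~\ref{lemma9.2b}.

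\textbf{Compactness.} We compute the normal operator of $E^{\pm}$. By \eqref{e7.new4}, and with the weight convention $K_{\b\Sigma\norm}\rho^{-1}=(K_{\b\Sigma}\rho^{-1})_{\norm}$, the operator $K_{\b\Sigma}$ has normal operator $K_{\b\Sigma\norm}$ after factoring out one power of $\rho$. The remainder gains a further power of $\rho$. By Prop.~\ref{prop8.4} (6), $(c_k^{\pm})_{\norm}=c_{k\norm}^{\pm}$. Hence the normal operator of $E^{\pm}$, with the appropriate weight $\rho$, is
\[
c_{2\norm}^{\pm}K_{\b\Sigma\norm}-K_{\b\Sigma\norm}c_{1\norm}^{\pm},
\]
which vanishes by Prop.~\ref{prop10.4b} (3).

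So $E^{\pm}$ is smoothing and has vanishing normal operator. The structure theorem for these classes (Lemma~\ref{lemma9.2b}, together with the residual class $B^{-\infty}_{J,0^+}$) then gives
\[
E^{\pm}=\rho^{\epsilon}\,\tilde E^{\pm}\rho^{\epsilon}
\]
for some $\epsilon>0$, where $\tilde E^{\pm}$ is smoothing and bounded
\[
\tilde E^{\pm}:\cH^{p,r}_{\b}\to\cH^{q,r+1}_{\b}\quad\text{for all }q,\ \text{for }r\in J\cap(J-1).
\]
The restriction on $r$ ensures that both the source weight $r$ and the target weight $r+1$ lie in $J$, where the projectors $c_k^{\pm}$ are bounded by Prop.~\ref{prop8.4} (1) and $\tilde n^{\pm}_{\norm}$ is invertible by Prop.~\ref{prop10.2} (4). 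The embedding
\[
\cH^{q,r+1+\epsilon}_{\b}\hookrightarrow\cH^{p-1,r+1}_{\b},\qquad q>p-1,\ \epsilon>0,
\]
is compact on the compact manifold with boundary $\overline{\Sigma}$. This yields compactness of $E^{\pm}$.

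\textbf{Main obstacle.} The hardest step is the bookkeeping that makes the normal-operator argument rigorous. We must check that $E^{\pm}$, which is built from $(b^+-b^-)^{-1}$, from residual terms in $B^{-\infty}_J$ and from the finite-rank correction $s_{-\infty}$, lies in a class where "smoothing with zero normal operator" really yields a gain of $\rho^{\epsilon}$. Two points need care: the different orders and weights of the entries of $K_{\b\Sigma}$ given in \eqref{e7.new3}, and the fact that $K_{\b\Sigma}$ shifts the weight by one. We would first verify that the normal-operator map is multiplicative on the relevant classes and uniform over weights in $J$, using Lemma~\ref{lemma9.2}. Then we would apply the vanishing entry by entry to the matrix of $E^{\pm}$.
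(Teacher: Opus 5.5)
Your proposal is correct and follows essentially the paper's route: interior smoothness via the wavefront-set argument, where you make explicit that both cross terms $c_2^{\pm}K_{\b\Sigma}c_1^{\mp}$ are smoothing (the paper leaves this implicit); then vanishing of the normal operator of $(c_2^{\pm}K_{\b\Sigma}-K_{\b\Sigma}c_1^{\pm})\rho^{-1}$ via Prop.~\ref{prop8.4} (6) and Prop.~\ref{prop10.4b} (3), which gives the decay gain and hence compactness for $r\in J\cap(J-1)$. Your ``alternative'' Riccati-intertwining viewpoint is unnecessary, and as stated it is not well defined, since $K_{\b}$ maps between different bundles and $\Psi^{-\infty}_{{\rm s}\,1}+B^{-\infty}_J$ carries no normal operator; membership of $(c_2^{\pm}K_{\b\Sigma}-K_{\b\Sigma}c_1^{\pm})\rho^{-1}$ in $\Psi^{\infty}_{J\cap(J+1)}(\Sigma)$ follows directly from Prop.~\ref{prop8.4} (4), \eqref{e7.new3} and Lemma~\ref{lemma9.2b} (5), exactly as in the paper.
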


\proof
We omit the vector-bundle symbols and prove only the $+$ case. From \eqref{e7.new3} we have:
\[
        (K_{\b\Sigma})_{\alpha\beta}
        \in
        \rho\Diff^{1+\alpha-\beta}_{\b,1}(\Sigma),
        \quad
        K_{\b\Sigma}:
        \cH^{p,r-1}_{\b}(\Sigma)\to\cH^{p-1,r}_{\b}(\Sigma),\ \ p,\ r\in \rr.
\]
From Prop.~\ref{prop8.4} (4)  we know that $(c_{k}^{\pm})_{\alpha\beta}\in \Psi^{\alpha- \beta}_{\b, J}(\Sigma)$, so by Lemma~\ref{lemma9.2b} (5) we have:
\[
\rho (c_{1}^{+})_{\alpha\beta}\rho^{-1}\in \Psi^{\alpha- \beta}_{\b, J+1}(\Sigma),\quad \rho c_{1}^{+}\rho^{-1}: \cH^{p, r'}_{\b}(\Sigma)\to \cH^{p, r'}_{\b}(\Sigma),
\]
for $p\in \rr,\ r'\in J+1$.

Therefore
\[
\begin{aligned}
& (c_{2}^{+}\circ K_{\b\Sigma}- K_{\b\Sigma}\circ c_{1}^{+})\rho^{-1}\in \Psi^{\infty}_{J\cap(J+1)}(\Sigma), \\
&( c_{2}^{+}\circ K_{\b\Sigma}- K_{\b\Sigma}\circ c_{1}^{+})\rho^{-1}: \cH^{p, r'}_{\b}(\Sigma)\to \cH^{p-1, r'}_{\b}(\Sigma),\ \ p\in \rr,\ r'\in J\cap(J+1).
\end{aligned}
\]

Let $f_1\in\cE'(\Sigma;V_1\otimes\cc^2)$ and set
\[
        u_1=\cU_{1\b}(\cdot,0)c_1^{+}f_1,
        \quad
        u_2=K_{\b}u_1,
        \quad
        f_2=K_{\b\Sigma}c_1^{+}f_1.
\]
By the Hadamard property, $\WF(u_1)\subset\cN^{+}$, hence $\WF(u_2)\subset\cN^{+}$.  On the other hand,
$\WF(\cU_{2\b}(\cdot,0)c_2^{-}f_2)\subset\cN^{-}$.  Since $\cN^{+}\cap\cN^{-}=\emptyset$, the operator $c_2^{-}K_{\b\Sigma}c_1^{+}$ is smoothing.  This implies that
$(c_2^{+}K_{\b\Sigma}-K_{\b\Sigma}c_1^{+})\rho^{-1}$ is smoothing and hence has vanishing principal symbol.

By Prop.~\ref{prop8.4} (6) its normal operator is $(c_{2\norm}^{+}K_{\b\Sigma\norm}- K_{\b\Sigma\norm}c_{1\norm}^{+})\rho^{-1}$ which vanishes by Prop.~\ref{prop10.4b}. Therefore
\[
(c_2^{+}K_{\b\Sigma}-K_{\b\Sigma}c_1^{+})\rho^{-1}: \cH^{p, r'}_{\b}(\Sigma)\to \cH^{p-1, r'}_{\b}(\Sigma)\hbox{ is compact for }p\in \rr,\ r'\in J\cap(J+1).
\]
Setting $r= r'-1$ we obtain the proposition.  
  \qed

\section{Gauge fixing}\label{sec11}\init
This section contains the main part of the construction of Hadamard states for linearized gravity in $(\rr^{n}, \rg)$. It relies on the {\em full gauge fixing} method, which is explained in more detail in \ref{sec0.3.1}. 

After imposing an additional gauge fixing condition, the problem amounts to inverting a system of pseudodifferential equations on $\Sigma$, which is elliptic in the $\b$-calculus framework and hence Fredholm.  The ellipticity in the $\b$-calculus framework  relies on the gauge invariance modulo compact errors of the Hadamard projectors. 

The analysis of full gauge fixing is first done in Subsect.~\ref{sec11.3} in what we call the {\em regular case}, namely, if the Fredholm problem is invertible. The more complicated {\em singular case} is studied in Subsect.~\ref{sec11.4}.

%\subsection{Introduction}\label{sec11.0}
%We fix throughout this section a weight in the gauge-fixing interval.  The interval $J$ is the one introduced in \eqref{e8.J}, and we repeat it here for convenience:
%\beq\label{e11.Jgf}
%        r\in J_{\rm gf}\defeq J\cap(J+1),
%        \quad J=\open{-\frac n2+1,\frac n2-1}.
%\eeq
%Below we take $1-r_0<r<1$ with $r_0>0$ sufficiently small.  Thus the two-tensor data are measured at weight $r$, while the gauge parameter is measured at weight $r-1$, which is close to zero.    Compactly supported Cauchy data belong to all these weighted spaces, so the final compactly supported phase-space statements are unaffected by this auxiliary choice of $r$.
%
%The charges $\rq_{k},\rq_{I,2}$ are well defined on the two-tensor spaces $\cH^{p,r}_{\b}(\Sigma; V_{2}\otimes \cc^{2})$ for $p\geq\12$, since $r>0$.  The point of choosing $r\in J_{\rm gf}$ is that $c_2^{\pm}$ is used at weight $r$, $c_1^{\pm}$ is used at weight $r-1$, and the compact gauge-error estimate Prop.~\ref{prop8.5bis} is available.
%
\subsection{Microlocal TT-synchronous gauge condition}\label{sec11.1}
We denote by $\hat{\ell}: \cinf(\Sigma; V_{2})\to \cinf(\Sigma; V_{1})$ the map:
 \[
 \hat{\ell}u_{2}\defeq \col{c^{-2}(n-2)^{-1}(\rg(0)| u_{2})_{V_{2}^{\rg(0)}}}{ 2 u_{2t\Sigma}},
 \]
  where $c$ is the conformal factor and on the RHS~we identify as in   \ref{sec2.3.2}, $(v_{t}, v_{\Sigma})\in \cinf(\Sigma; \cc\oplus T^{*}\Sigma)$ with $v_{t}rdt+ v_{\Sigma}\in \cinf(\Sigma; V_{1})$. The choice of the  coefficients $c^{-2}(n-2)^{-1}$ and $2$ in the entries of $\hat{\ell}$ will produce clean formulas below, see for example \eqref{e11.2}, but is apart from that not important.

  If $c_{2}^{\pm}$ are the Hadamard projectors constructed  in Prop.~\ref{prop8.4}  and $\pi_{0}: V_{2}\otimes \cc^{2}\to V_{2}$ denotes the projection on the first component, we set:
    \begin{equation}
 \label{e10.3}
 R_{\Sigma}\defeq \col{\hat{\ell} \pi_{0}c_{2}^{+}}{\hat{\ell}\pi_{0}c_{2}^{-}},
  \end{equation}
  so that
  \beq\label{e10.3z}
  R_{\Sigma}: \cH^{p,r}_{\b}(\Sigma; V_{2}\otimes \cc^{2})\to H^{p,r}_{\b}(\Sigma; V_{1}\otimes\cc^{2})\hbox{ is bounded for }p\in \rr,\ r\in J.
  \eeq
\begin{definition}\label{def11.1}
 The {\em microlocal TT-synchronous gauge} condition is defined in terms of Cauchy data on $\Sigma$ by:
 \beq\label{e10.-3}
 \begin{aligned}
(1)&\quad K_{\b\Sigma}^{\dag}f_{2}=0,\\
(2)&\quad R_{\Sigma}f_{2}=0.
\end{aligned}
 \eeq
 \end{definition}
\subsection{Properties of  $R_{\Sigma}\circ K_{\b\Sigma}$}\label{sec11.2}
Imposing  \eqref{e10.-3} by a gauge transformation  is equivalent to find $f_{1}\in \cinf(\Sigma; V_{1}\otimes \cc^{2})$ solving
\beq\label{e10.-4b}
R_{\Sigma}K_{\b\Sigma}f_{1}= R_{\Sigma}f_{2},
\eeq
for some given $f_{2}\in \coinf(\Sigma; V_{2}\otimes \cc^{2})\cap \Ker K_{\b\Sigma}^{\dag}$. If \eqref{e10.-4b} is satisfied then  $f_{2}- K_{\b\Sigma}f_{1}$ satisfies the microlocal TT-synchronous gauge condition. In this subsection we study the operator $R_{\Sigma}K_{\b\Sigma}$ appearing in \eqref{e10.-4b}. 

The leading factor $\rho$ in $K_{\b\Sigma}$ is accounted for by putting the gauge parameter in weight $r-1$ and the two-tensor data in weight $r$.

\subsubsection{Some equivalent norms}\label{sec11.2.1}

Recall that 
 the space $\cH_{\b}^{p, r}(\Sigma; V_{k}\otimes\cc^{2})$ is 
  equipped with the norm
\[
\|f\|_{p, r}^{2}= \|f_{0}\|^{2}_{H_{\b}^{p, r}(\Sigma; V_{k})}+ \|f_{1}\|^{2}_{H^{p-1, r}_{\b}(\Sigma; V_{k})},\ \ f= \col{f_{0}}{f_{1}}.
\]
We recall that $\pi_{i}f= f_{i}$ for $f= \col{f_{0}}{f_{1}}$.

It is convenient to use the equivalent norm following from Lemma~\ref{lemma11.1} below.
\begin{lemma}\label{lemma11.1}
 The map:
\beq\label{e11.-5}
L_{k}:\begin{aligned}
\cH^{p,r}_{\b}(\Sigma; V_{k}\otimes \cc^{2})\to H^{p,r}_{\b}(\Sigma; V_{k}\otimes \cc^{2})\\
 f\mapsto L_{k}f= \col{\pi_{0}c_{k}^{+}f}{\pi_{0}c_{k}^{-}f}\eqdef\col{v^{+}}{v^{-}}=v
\end{aligned}
\eeq
is an isomorphism for $r\in J$.
\end{lemma}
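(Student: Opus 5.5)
We will write down $L_k$ explicitly using the formula \eqref{e01.21b} for $c_k^{\pm}$ and invert it by hand. Writing $\beta\defeq b^{+}-b^{-}$ (with $b^{\pm}=b^{\pm}_k$ as in \eqref{e8.10}), the first rows of $c_k^{\pm}$ give
\[
\pi_{0}c_k^{+}f=\beta^{-1}(-b^{-}f_{0}+f_{1}),\qquad \pi_{0}c_k^{-}f=\beta^{-1}(b^{+}f_{0}-f_{1}).
\]
Hence $L_k=\mathrm{diag}(\beta^{-1},\beta^{-1})\circ\mat{-b^{-}}{\one}{b^{+}}{-\one}$. The second factor has explicit inverse $\mat{\one}{\one}{b^{+}}{b^{-}}\circ\mathrm{diag}(\beta^{-1},\beta^{-1})$, as one checks by direct multiplication. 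Thus
\[
L_k^{-1}v=\col{v^{+}+v^{-}}{b^{+}v^{+}+b^{-}v^{-}}.
\]
This is consistent with $c_k^{+}+c_k^{-}=\one$ and with $\Ran c_k^{\pm}$ being the graph of $b^{\pm}$. Indeed, $c_k^{\pm}f=\col{v^{\pm}}{b^{\pm}v^{\pm}}$, so $f=c_k^{+}f+c_k^{-}f$ recovers $f$ from $v^{\pm}$.

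The analytic content is then only mapping properties, and all of them were established in the proof of Prop.~\ref{prop8.4}. First, $\beta=\hat b+\hat b^{\star}:H^{p,r}_{\b}\tosim H^{p-1,r}_{\b}$ for $p\in\rr$, $r\in J$. Second, $b^{\pm}:H^{p,r}_{\b}\to H^{p-1,r}_{\b}$ is bounded for $r\in J$. This holds because $b^{\pm}$ are first-order operators in the $\Psi_J$ calculus: they are $\hat b$ and $-\hat b^{\star}$, with normal operators $n^{\pm}_{\norm}$, and they were shown to be isomorphisms on these spaces.

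With these two facts, $L_k$ maps $H^{p,r}_{\b}\oplus H^{p-1,r}_{\b}$ boundedly into $H^{p,r}_{\b}\oplus H^{p,r}_{\b}$. This is also immediate from the boundedness of $c_k^{\pm}$ on $\cH^{p,r}_{\b}$ given in Prop.~\ref{prop8.4}(1). In the other direction, the formula for $L_k^{-1}$ maps $H^{p,r}_{\b}\oplus H^{p,r}_{\b}$ boundedly into $\cH^{p,r}_{\b}$: the first component is in $H^{p,r}_{\b}$, and $b^{\pm}v^{\pm}\in H^{p-1,r}_{\b}$. Since the two compositions are the identity on dense subspaces (for example $\coinf$, using the calculus identities), they are the identity by continuity. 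Hence $L_k$ is an isomorphism for all $p\in\rr$, $r\in J$.

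The only point needing care is that $b^{+}-b^{-}$ is genuinely invertible, rather than merely Fredholm, on all weights $r\in J$. That is precisely why the finite-rank correction $s_{-\infty}$ was added in the proof of Prop.~\ref{prop8.4}. So the main step is to cite that invertibility correctly, including for $k=1,2$ after the symmetrization $\tfrac12(b_2+I_{\b}b_2I_{\b})$ of \ref{sec8.3.1}. For this I would note that the symmetrization preserves the normal operator and the accretivity, so it preserves invertibility.
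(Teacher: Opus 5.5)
Your proof is correct and takes essentially the same route as the paper. Both arguments rest on the explicit inverse $L_k^{-1}v=\col{v^{+}+v^{-}}{b^{+}v^{+}+b^{-}v^{-}}$ (which encodes that $\Ran c_k^{\pm}$ is the graph of $b^{\pm}$ and that $c_k^{+}+c_k^{-}=\one$), combined with the mapping properties of $b^{\pm}$ and $(b^{+}-b^{-})^{-1}$ established in the proof of Prop.~\ref{prop8.4}; the paper phrases this as an equivalence of the norms $\|c^{+}f\|_{p,r}+\|c^{-}f\|_{p,r}$ and $\|v^{+}\|_{H^{p,r}_{\b}}+\|v^{-}\|_{H^{p,r}_{\b}}$, whereas you check the inverse by an explicit matrix factorization.
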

We will  denote by $\pi^{\pm}: \cinf(\Sigma; V_{k}\otimes \cc^{2})\to \cinf(\Sigma; V_{k})$ the maps $v= \col{v^{+}}{v^{-}}\mapsto v^{\pm}$. 

It is often more convenient to use $\col{v^{+}}{v^{-}}$ instead of $f$. 

\medskip

\noindent {\bf Proof of Lemma~\ref{lemma11.1}.} From Prop.~\ref{prop8.4} (1) we see that  an equivalent norm on $\cH_{\b}^{p,r}(\Sigma; V_{k}\otimes \cc^{2})$  for $p\in \rr,\ r\in J$ is given by
\[
\| f\|^{2}_{p,r, {\rm mod}}= \| c^{+}f\|^{2}_{p,r}+ \| c^{-}f\|^{2}_{p,r}.
\]
If $\hat{b}$ is the operator introduced in the proof of Prop.~\ref{prop8.4}, we know that $\hat{b}: H^{p,r}_{\b}(\Sigma; V)\tosim H^{p-1, r}_{\b}(\Sigma; V)$ for $p\in \rr,\ r\in J$, so  an equivalent norm is 
\[
(\|v^{+}\|^{2}_{H_{\b}^{p,r}(\Sigma; V_{k})}+ \| v^{-}\|^{2}_{H^{p,r}_{\b}(\Sigma; V_{k})})^{\12},\quad v^{\pm}= \pi_{0}c^{\pm}f,
\]
and the inverse is
\[
L_k^{-1}\col{v^+}{v^-}
=\col{v^++v^-}{b_k^+v^++b_k^-v^-}.
\]
This proves the lemma. \qed
\begin{proposition}\label{prop11.1bis}
The map
\[
        R_{\Sigma}K_{\b\Sigma}:
        \cH^{p,r-1}_{\b}(\Sigma;V_{1}\otimes\cc^{2})
        \to
        H^{p-1,r}_{\b}(\Sigma;V_{1}\otimes\cc^{2})
\]
is Fredholm for $p\in\rr$ and $r\in J\cap(J+1)\cap\,]1-r_{0},1+r_{0}[$ for $0<r_{0}$ small enough.
\end{proposition}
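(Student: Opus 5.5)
The plan is to reduce $R_{\Sigma}K_{\b\Sigma}$, modulo compact errors, to a diagonal system of first order $\b$-pseudodifferential operators acting on the variables $v^{\pm}$ of Lemma~\ref{lemma11.1}. For that system, the Fredholm criterion of Appendix~\ref{sec9} (as in Lemma~\ref{lemma9.4}) only needs two things: ellipticity of the principal symbol and invertibility of the normal operator on the relevant weighted spaces.

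\emph{Step 1 (commuting through $K_{\b\Sigma}$).} By Prop.~\ref{prop8.5bis},
\[
\pi_{0}c_{2}^{\pm}K_{\b\Sigma}=\pi_{0}K_{\b\Sigma}c_{1}^{\pm}+C^{\pm},
\]
where $C^{\pm}$ is compact from $\cH^{p,r-1}_{\b}$ to $H^{p-1,r}_{\b}$ for $r-1\in J\cap(J-1)$, i.e.\ $r\in J\cap(J+1)$. Since $\hat{\ell}$ is a bundle map with $\cA^{\infty}$ coefficients, it preserves compactness. So, modulo compact operators,
\[
R_{\Sigma}K_{\b\Sigma}\equiv \col{\hat{\ell}\pi_{0}K_{\b\Sigma}c_{1}^{+}}{\hat{\ell}\pi_{0}K_{\b\Sigma}c_{1}^{-}}.
\]
Now write $f_{1}=L_{1}^{-1}v$, which is legitimate by Lemma~\ref{lemma11.1} at weight $r-1\in J$. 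Then $c_{1}^{\pm}f_{1}=\col{v^{\pm}}{b_{1}^{\pm}v^{\pm}}$, and therefore
\[
R_{\Sigma}K_{\b\Sigma}\equiv \mat{A^{+}}{0}{0}{A^{-}}\circ L_{1},\qquad
A^{\pm}\defeq \hat{\ell}\pi_{0}K_{\b\Sigma}\col{\one}{b_{1}^{\pm}}.
\]
By \eqref{e7.new3} and Prop.~\ref{prop8.4}, each $A^{\pm}$ lies in $\rho\Psi^{1}_{J}(\Sigma;V_{1})$. Since $L_{1}$ is an isomorphism, it suffices to show that
\[
A^{\pm}:H^{p,r-1}_{\b}\to H^{p-1,r}_{\b}
\]
is Fredholm, equivalently that $\rho^{-1}A^{\pm}$ is Fredholm on weight $r-1$.

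\emph{Step 2 (ellipticity).} In $\pi_{0}K_{\b\Sigma}$ the time derivative enters through $\nabla_{t}w|_{\Sigma}=\i\pi_{1}f_{1}$, which is replaced by $\i b_{1}^{\pm}$. Hence the principal symbol of $\rho^{-1}A^{\pm}$ is obtained from the symbol of $\hat{\ell}\circ I\circ d$ by substituting $\tau=\mp|\xi|_{\rh_{\b}}$ for the $t$-covector, with $\tau$ on the characteristic cone. This is exactly the computation in the compact case \cite{G2}, where the normalizations in $\hat{\ell}$ were chosen to make the result clean (cf.\ \eqref{e11.2}). It gives a $2\times2$ block matrix acting on $(w_{t},w_{\Sigma})$ whose determinant is a nonzero multiple of a power of $|\xi|$. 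I will reproduce this computation in the $\b$-frame $(dt,d\rho/\rho,d\omega)$. Since the principal symbols only involve $\rh_{\b}$, it is uniform up to $\rho=0$.

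\emph{Step 3 (normal operator), the main obstacle.} By Prop.~\ref{prop8.4}~(6) and \eqref{e7.new4}, the normal operator of $\rho^{-1}A^{\pm}$ is
\[
A^{\pm}_{\norm}=\hat{\ell}_{\norm}\pi_{0}K_{\b\Sigma\norm}\rho^{-1}\col{\one}{n^{\pm}_{1\norm}},
\]
with $n^{\pm}_{1\norm}$ the Lorentzian Dirichlet-to-Neumann maps. After conjugating by $\cF_{1}$, this becomes a dilation-invariant operator built from the flat Euclidean Dirichlet-to-Neumann map on $\rr^{+}_{\rho}\times\bS^{n-2}$. I plan to prove invertibility of $A^{\pm}_{\norm}$ on $H^{p,r-1}_{\b}$ for $|r-1|<r_{0}$ by Mellin transform in $\rho$ combined with a decomposition into spherical harmonics on $\bS^{n-2}$ (scalar and vector harmonics for $w_{t}$, $w_{\Sigma}$). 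For each harmonic this reduces invertibility to a family of $2\times2$ matrices depending on the Mellin variable $\sigma$, and the task is to show there are no indicial roots on the line $\Im\sigma=r-1$ near $0$.

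For the weight $0$ itself I would try an energy argument first. Namely, an element of the kernel gives a harmonic one-form $\tilde w$ on $\tM_{\norm}$ in $H^{1}_{\b}$ whose trace and synchronous components vanish on one side. The positivity in Prop.~\ref{prop10.2}~(3) should then force $\tilde w=0$. Cokernel triviality would follow by the same argument applied to the adjoint. Openness of the set of invertible weights then gives some $r_{0}>0$.

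With Steps 2 and 3 in hand, the Fredholm criterion gives the Fredholm property of $A^{\pm}$. Combined with Step 1 and Lemma~\ref{lemma11.1}, this yields the statement for $p\in\rr$ and $r\in J\cap(J+1)\cap\,]1-r_{0},1+r_{0}[$.
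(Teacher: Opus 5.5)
Your Steps~1 and~2 and the final appeal to Lemma~\ref{lemma9.4} (with $J$ replaced by $J\cap\,]-r_0,r_0[$) follow the paper's argument exactly. Your $A^{\pm}$ is $\rho M^{\pm}_{\Sigma}$, the commutator error is handled by Prop.~\ref{prop8.5bis}, and ellipticity is the symbol computation \eqref{e11.2}.

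The gap is in Step~3, the only non-routine part of the proposition: you never establish that the normal operator is invertible near weight $0$. The Mellin/spherical-harmonics route is only announced. It would also not reduce to $2\times2$ blocks, since scalar harmonics couple $w_t$, $w_\rho$ and the gradient part of $w_\omega$, and in the polar frame the Dirichlet-to-Neumann map mixes components. The energy argument stops at the decisive point. Positivity of $\pm\tn^{\pm}_{1\norm}$ (Prop.~\ref{prop10.2}~(3)) does not by itself force a kernel element to vanish. The kernel equation relates $\tn^{\pm}_{1\norm}v$ to first-order \emph{tangential} derivatives of $v$ (in $\rho\partial_\rho$ and along $\bS^{n-2}$), and your sketch does not show that these terms have vanishing real part when paired with $v$.

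That is the missing idea, and it comes from an explicit computation of the Wick-rotated normal operator. Substituting $-\widetilde\nabla_s w(0)=\tn^{\pm}_{1\norm}v$ into the trace and $s\Sigma$ components of $\rho^{-1}\tilde\ell_{\norm}\tilde d_{\b}w$ gives
\[
J_s\tM^{\pm}_{\Sigma\norm}=\tn^{\pm}_{1\norm}+\tilde B_{\Sigma\norm},\qquad
\tilde B_{\Sigma\norm}=\begin{pmatrix}0&\rho\partial_\rho+1-\frac n2&\hat\delta\\ \rho\partial_\rho+\frac n2-1&0&0\\ -\hat d&0&0\end{pmatrix},\qquad J_s=\operatorname{diag}(1,-1,-1),
\]
with $\tilde B_{\Sigma\norm}^{*}=-\tilde B_{\Sigma\norm}$ for $(\cdot|\cdot)_{\tV_1(\Sigma_{\norm})}$. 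The opposite constants $\pm(\frac n2-1)$ are exactly what make the $\rho\partial_\rho$ entries skew for the $\b$-measure. Consequences:
\begin{itemize}
\item $\pm\Re(J_s\tM^{\pm}_{\Sigma\norm})=\pm\tn^{\pm}_{1\norm}\geq C>0$.
\item The adjoint $\tn^{\pm}_{1\norm}-\tilde B_{\Sigma\norm}$ has the same real part, so the cokernel is handled too. Without this structure it is unclear what boundary problem the adjoint corresponds to, so your ``same argument applied to the adjoint'' is unjustified.
\item Ellipticity then gives invertibility $H^{1,0}_{\b}\to H^{0,0}_{\b}$.
\item $\rho^{r}\tM^{\pm}_{\Sigma\norm}\rho^{-r}=\tM^{\pm}_{\Sigma\norm}+O(r)$ extends this to weights in $]-r_0,r_0[$.
\end{itemize}
In your language: a kernel element satisfies $\tn^{\pm}_{1\norm}v=-\tilde B_{\Sigma\norm}v$, so the real number $(v|\tn^{\pm}_{1\norm}v)$ is purely imaginary, hence zero, hence $v=0$.

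One further slip. As written, $\hat\ell_{\norm}\pi_0K_{\b\Sigma\norm}\rho^{-1}(\cdots)$ is, by the convention $K_{\b\Sigma\norm}\rho^{-1}=(K_{\b\Sigma}\rho^{-1})_{\norm}$, the normal operator of $\hat\ell\pi_0K_{\b\Sigma}\rho^{-1}(\cdots)$, not of $\rho^{-1}A^{\pm}$. The correct operator is $\rho^{-1}\ell_{\norm}\pi_0d^{\reta}_{\b\Sigma}\col{\one}{n^{\pm}_{1\norm}}$. Moving $\rho^{-1}$ across shifts $\rho\partial_\rho$ by one, which destroys precisely this skew-adjointness.
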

\proof 
We have \[
R_{\Sigma}K_{\b\Sigma}
= \col{\hat{\ell}\pi_{0}K_{\b\Sigma}c_{1}^{+}}{\hat{\ell}\pi_{0}K_{\b\Sigma}c_{1}^{-}}+ \col{\hat{\ell}\pi_{0} ( c_{2}^{+}K_{\b\Sigma}- K_{\b\Sigma}c_{1}^{+})}{\hat{\ell}\pi_{0}(c_{2}^{-}K_{\b\Sigma}- K_{\b\Sigma}c_{1}^{-})}.
\]
Let us  look at the first term on the rhs. We recall that $K_{\b\Sigma}= I_{\b\Sigma}\circ d_{\b\Sigma}$, see \eqref{e7.new5}. We have $I_{\b}\rg(0)= (1-\frac{n}{2})\rg(0)$ and $(I_{\b}u_{2})_{t\Sigma}= u_{2t\Sigma}$ since $\rg(0)_{t\Sigma}=0$.  Therefore
 \[
 \hat{\ell}\pi_{0}K_{\b\Sigma}= \ell\pi_{0}d_{\b\Sigma},\quad \hbox{ for } 
 \ell u_{2}= \col{-\12c^{-2}(\rg(0)| u_{2})_{V_{2}^{\rg(0)}}}{ 2 u_{2t\Sigma}}.
\]
We factor out a power of $\rho$ setting: 
\beq\label{e11.1b}
 \hat{\ell}\pi_{0}K_{\b\Sigma}c_{1}^{\pm}
 \eqdef  \rho\,M_{\Sigma}^{\pm}\pi_{0}c_{1}^{\pm},
 \quad
 M_{\Sigma}^{\pm}
 =
 \rho^{-1}\ell\pi_{0}d_{\b\Sigma}\begin{pmatrix}\one\\ b_{1}^{\pm}(0)\end{pmatrix}.
\eeq
Here we recall that $\pi_{1}c_{1}^{\pm}=b_{1}^{\pm}(0)\pi_{0}c_{1}^{\pm}$, where $b_{1}^{\pm}$ is defined in \eqref{e8.10}.  From \eqref{e7.new3} and Prop.~\ref{prop8.4} (4) we see that $M_{\Sigma}^{\pm}\in \Psi^{1}_{J}(\Sigma; L(V_{1}))$.

\medskip
\noindent\textit{Elllipticity  of $M_{\Sigma}^{\pm}$.}\quad We first compute the principal symbol of $M_{\Sigma}^{\pm}$. We know that $\sigma_{\rm pr}(b_{1}^{\pm}(0))= \pm (\xi\dual\rh_{\b}(0)^{-1}\xi)^{\12}\one_{V_{1}}$. We use  \eqref{e7.new6} and the fact that 
$\12(\rg(0)| u_{2})_{V_{2}^{\rg(0)}}= c^{2}(- u_{tt}+ \rh^{\alpha\beta}u_{\alpha\beta})$, where $\alpha, \beta$ range over the frame indices $dr, rd\omega$. We obtain that
\beq\label{e11.2}
\sigma_{\rm pr}(M_{\Sigma}^{\pm})= \pm \i (\xi\dual \rh_{\b}^{-1}(0)\xi)^{\12}\one_{V_{1}}+\mat{0}{-\i (\xi|}{\i |\xi)}{0}
\eeq
where 
\[
 \mat{0}{-\i (\xi|}{\i |\xi)}{0}\col{w_{t}}{w_{\Sigma}}= \col{-\i \xi\dual\rh^{-1}(0)w_{\Sigma}}{\i \xi w_{t}},\ \ \xi,\ w_{\Sigma}\in \be T^{*}\Sigma,
\]
The second term in the rhs of \eqref{e11.2} is selfadjoint for the   Hilbertian scalar product $(\cdot| \cdot)_{\tV_{1}^{\tilde{\rg}(0)}}$ defined in \ref{sec2.3.1d1}, which shows that $M_{\Sigma}^{\pm}$ is elliptic in $\Psi^{1}_{J}(\Sigma; L(V_{1}))$.

\noindent\textit{The normal operator $M_{\Sigma\norm}^{\pm}$.}\quad
By \eqref{e7.new1} and Prop.~\ref{prop8.4} (5)  we have
\[
M_{\Sigma\norm}^{\pm}= \rho^{-1}\ell_{\norm}d_{\b\Sigma}^{\reta}\begin{pmatrix}\one\\n_{1\norm}^{\pm}\end{pmatrix}
\]
where $\ell_{\norm}$ is the analog of $\ell$ with $\rg(0)$ replaced by $\reta(0)$,  the conformal factor $c$ replaced by $1$  and $n_{1\norm}^{\pm}= \cF_{1}^{-1}\tn_{1\norm}^{\pm}\cF_{1}$, see \eqref{e8.10a}. We claim that 
\begin{equation}
\label{e11.new1}
M_{\Sigma\norm}^{\pm}: H^{p, r}_{\b}(\Sigma_{\norm}; V_{1})\to H^{p-1, r}_{\b}(\Sigma_{\norm}; V_{1})\hbox{ is invertible for }p\in \rr,\ r\in J\cap\,]-r_{0}, r_{0}[
\end{equation}
for $0<r_{0}\ll 1$. To prove \eqref{e11.new1} we can perform Wick rotation and normalize the output components, setting
\[
\tM_{\Sigma\norm}^{\pm}
=-\i J_s\cF_{1}M_{\Sigma\norm}^{\pm}\cF_{1}^{-1},
\quad J_s=\operatorname{diag}(1,-1,-1).
\]
Indeed, $\cF_1\ell_{\norm}\cF_2^{-1}=\i J_s\tilde\ell_{\norm}$.
By analyticity we have:
\beq\label{e11.new3}
\tM_{\Sigma\norm}^{\pm}= \rho^{-1}\tilde{\ell}_{\norm}\pi_{0}\tilde{d}_{\b\Sigma}\begin{pmatrix}\one\\\tn_{1\norm}^{\pm}\end{pmatrix},
\eeq
where $\tilde{\ell}_{\norm}$ is the analog of $\ell_{\norm}$ with $\reta(0)$ replaced by $\treta(0)$, $\tilde{d}_{\b\Sigma}= d_{\b\Sigma}^{\treta}$ and $\treta$ is defined in \eqref{e10.n-22}.

Let us now compute $\tM_{\Sigma\norm}^{\pm}$ explicitly. Recall that $\tilde{d}_{\b\Sigma}$ is defined using the Riemannian Cauchy data
\[
\widetilde\varrho w=\binom{w(0)}{-\widetilde{\nabla}_{s}w(0)},
\]
see \eqref{e10.stup-trace}, where $\widetilde{\nabla}_{s}$ is defined in \eqref{e10.n2b}.
Therefore to  compute $\tM_{\Sigma\norm}^{\pm}v$ for $v\in \coinf(\Sigma_{\norm}; V_{1})$ we have to compute $\rho^{-1}\tilde{\ell}_{\norm}\tilde{d}_{\b}w(0)$, where $w\in \coinf(\tM_{\norm}; V_{1})$ is such that $\tilde{\varrho}w= \col{v}{\tn_{1\norm}^{\pm}v}$.

We identify  $V_{k}$ with $\tM_{\norm}\times \cc^{d_{k}}$ using the frame $(dr, rds, rd\omega)$, writing 
\[
w=w_rdr+w_srds+ w_\omega rd\omega,\ \ w\in \cinf(\tM_{\norm}; V_{1}),
\]
and similarly  for $u\in \coinf(\tM_{\norm}; V_{2})$.
We obtain 
\begin{align*}
(\widetilde d_{\b}w)_{rr}
  &=-\rho(\rho\partial_\rho+n/2-1)w_r,\\
(\widetilde d_{\b}w)_{ri}
  &=\frac{\rho}{2}\big(\vec{\nabla}_i w_r-(\rho\partial_\rho+n/2)w_i\big),\\
(\widetilde d_{\b}w)_{ij}
  &=\rho\big((\vec{d}w)_{ij}+\rh_{ij}w_r\big),
\end{align*}
where $i,j$ range over $(s,\omega)$, $\vec{\nabla}$, $\vec{d}$ are the connection and symmetric derivative for $\rh_{\bS^{n-1}}$. We obtain that
\begin{equation}\label{e7.boundary-trace-computation}
\rho^{-1}\widetilde\ell_{\norm}\widetilde d_{\b}w
=
\begin{pmatrix}
-\big(- (\rho\partial_\rho-n/2)w_r+\partial_s w_s-\hat{\delta} w_\omega\big)\\
-(\rho\partial_\rho+n/2)w_s+\partial_s w_r\\
\partial_s w_\omega+\hat{d} w_s
\end{pmatrix},
\end{equation}
where $\hat{d}, \hat{\delta}$ are the symmetric derivative and coderivatives for $\rh_{\bS^{n-2}}$.  If $-\widetilde{\nabla}_{s}w(0)= \tn_{1\norm}^{\pm}v$, we obtain that
\[
\partial_s w_s(0)=-(\tn_{\norm}^{\pm}v)_s-v_r,
\quad
\partial_s w_r(0)=-(\tn_{\norm}^{\pm}v)_r+v_s,
\quad
\partial_s w_\omega(0)=-(\tn_{\norm}^{\pm}v)_\omega.
\]
We substitute  these identities in \eqref{e7.boundary-trace-computation}  and multiply the last two lines by $-1$. We obtain that:
\[
\begin{pmatrix}1&0&0\\
0&-1&0\\
0&0&-1\end{pmatrix}\circ \tM_{\Sigma\norm}^{\pm}=  \tn_{1\norm}^{\pm}+ \begin{pmatrix}
        0&\rho\partial_\rho+1-n/2&\hat{\delta}\\
        \rho\partial_\rho+n/2-1&0&0\\
        -\hat{d}&0&0
        \end{pmatrix}\eqdef  \tn_{1\norm}^{\pm}+ \tilde{B}_{\Sigma\norm}.
\]
We first claim that $\tM_{\Sigma\norm}^{\pm}: H^{1, 0}_{\b}(\Sigma_{\norm}; V_{1})\to H^{0, 0}_{\b}(\Sigma_{\norm }; \tV_{1})$ is invertible.  

Computing the principal symbol of $\tM_{\Sigma\norm}^{\pm}\in N^{1}_{J}(M_{\norm})$  we first obtain that $\tM_{\Sigma\norm}^{\pm}$ is elliptic in $N^{1}_{J}(M_{\norm})$, see Def.~\ref{def9.2}, hence $\tM_{\Sigma\norm}^{\pm}$ is closed with domain $H^{1, 0}_{\b}(\Sigma_{\norm}; V_{1})$.  Next  by  Prop.~\ref{prop10.2} $\tn_{1\norm}^{\pm}$ is selfadjoint for $(\cdot| \cdot)_{\tV_{1}(\Sigma_{\norm})}$ with domain $H^{1, 0}_{\b}(\Sigma_{\norm}; V_{1}) $ with $\pm \tn_{1\norm}^{\pm}\geq C>0$. On the other hand we have $\tilde{B}_{\Sigma\norm}^{*}= - \tilde{B}_{\Sigma\norm}$ for $(\cdot| \cdot)_{\tV_{1}(\Sigma_{\norm})}$. 

Thus
\[
\pm\Re(J_s\tM_{\Sigma\norm}^{\pm})
=\pm\tn_{1\norm}^{\pm}\geq C>0.
\]
Applying the same estimate to the adjoint and using the ellipticity, we deduce that $J_s\tM_{\Sigma\norm}^{\pm}$, and hence $\tM_{\Sigma\norm}^{\pm}$, is invertible from $H^{1,0}_{\b}(\Sigma_{\norm};V_1)$ to $H^{0,0}_{\b}(\Sigma_{\norm};V_1)$.  Since $\rho^{r}\tM_{\Sigma\norm}^{\pm}\rho^{-r}= \tM_{\Sigma\norm}^{\pm}+ O(r)$ in $B(H^{1, 0}_{\b}(\Sigma_{\norm}; V_{1}))$ we obtain that $\tM_{\Sigma\norm}^{\pm}: H^{1, r}_{\b}(\Sigma_{\norm}; V_{1})\tosim H^{0, r}_{\b}(\Sigma_{\norm }; V_{1})$ for $r\in J\cap\,]-r_{0}, r_{0}[$.

This implies also that  $\tM_{\Sigma\norm}^{\pm}: H^{p, r}_{\b}(\Sigma_{\norm}; V_{1})\tosim H^{p-1, r}_{\b}(\Sigma_{\norm }; V_{1})$ for $p\in\rr,\ r\in J\cap\,]-r_{0}, r_{0}[$ by the usual argument. Using $M_{\Sigma\norm}^{\pm}=\i\cF_1^{-1}J_s\tM_{\Sigma\norm}^{\pm}\cF_1$ we obtain \eqref{e11.new1}.

\noindent\textit{Fredholm property of $R_{\Sigma}K_{\b\Sigma}$.}

Using Lemma~\ref{lemma9.4} we deduce from the ellipticity of $M_{\Sigma}^{\pm}$ and invertibility of $M^{\pm}_{\Sigma\norm}$ that 
\[
M_{\Sigma}^{\pm}: H^{p, r}_{\b}(\Sigma; V_{1})\to H^{p-1, r}_{\b}(\Sigma; V_{1})\hbox{ is Fredholm for }p\in \rr,\ r\in J\cap\,]-r_{0}, r_{0}[, 
\]
Using Lemma~\ref{lemma11.1} and the extra factor $\rho$ that was included in the definition \eqref{e11.1b} of $M_{\Sigma}^{\pm}$ we obtain that 
\[
 \col{\hat{\ell}\pi_{0}K_{\b\Sigma}c_{1}^{+}}{\hat{\ell}\pi_{0}K_{\b\Sigma}c_{1}^{-}}:  \cH^{p,r}_{\b}(\Sigma;V_{1}\otimes\cc^{2})
        \to
        H^{p-1,r+1}_{\b}(\Sigma;V_{1}\otimes\cc^{2})
\]
is Fredholm for $p\in \rr,\ r\in J\cap\,]-r_{0}, r_{0}[$. We recall that 
\beq\label{e11.1}
R_{\Sigma}K_{\b\Sigma}
= \col{\hat{\ell}\pi_{0}K_{\b\Sigma}c_{1}^{+}}{\hat{\ell}\pi_{0}K_{\b\Sigma}c_{1}^{-}}+ \col{\hat{\ell}\pi_{0} ( c_{2}^{+}K_{\b\Sigma}- K_{\b\Sigma}c_{1}^{+})}{\hat{\ell}\pi_{0}(c_{2}^{-}K_{\b\Sigma}- K_{\b\Sigma}c_{1}^{-})}.
\eeq
By Prop.~\ref{prop8.5bis} the second term in the RHS of \eqref{e11.1} is compact from $\cH^{p,r}_{\b}(\Sigma; V_{1}\otimes \cc^{2})$ to $H_{\b}^{p-1,r+1}(\Sigma;V_{1}\otimes \cc^{2})$ for $r\in J\cap(J-1)$ which completes the proof of the proposition. \qed

 \subsubsection{Positivity property}\label{sec11.2.2}
 We now prove the key positivity property of $\pm \rq_{I, 2}\circ c_{2}^{\pm}$ on $\Ker R_{\Sigma}$. The restriction to $p\geq \12,\ r\geq 0$ in Lemma~\ref{lemma11.2} below  comes from the fact that  the charges $\rq_{2\b}$, $\rq_{I, 2\b}$, $\tilde{\rq}_{2\b}$, see Subsect.~\ref{sec2.3.3} are defined on $\cH^{p,r}_{\b}(\Sigma; V_{2}\otimes \cc^{2})$ only if $p\geq \12,\ r\geq 0$. 
\begin{lemma}\label{lemma11.2}
 We have
 \[
 \pm \rq_{I,2\b}\circ c_{2}^{\pm}\geq 0 \hbox{ on }\Ker R_{\Sigma}\cap \cH^{p,r}_{\b}(\Sigma; V_{2}\otimes \cc^{2})\hbox{ for }p\geq \12,\ r\in J\cap \rr^{+}.
 \]
\end{lemma}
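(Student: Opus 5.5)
The plan is to reduce the positivity of the physical charge on $\Ker R_{\Sigma}$ to the positivity of the Euclidean charge given by Prop.~\ref{prop8.4}~(3). We follow \cite{G2}. The key point is that on tensors which are traceless and have vanishing $t\Sigma$ components, the trace reversal $I_{\b}$ acts as the identity and the Lorentzian fiber form agrees with the Hilbertian one.

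\textbf{Reduction to $c_{2}^{\pm}f$.} Let $f\in \Ker R_{\Sigma}\cap\cH^{p,r}_{\b}(\Sigma;V_{2}\otimes\cc^{2})$ with $p\geq\12$ and $r\in J\cap\rr^{+}$, and set $g^{\pm}=c_{2}^{\pm}f$. By \eqref{e8.12b} and $(c_{2}^{\pm})^{2}=c_{2}^{\pm}$ we have
\[
\bar f\dual\rq_{I,2\b}c_{2}^{\pm}f=\bar{g}^{\pm}\dual\rq_{I,2\b}g^{\pm}.
\]
It therefore suffices to show $\pm\bar g^{\pm}\dual\rq_{I,2\b}g^{\pm}\geq0$. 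The condition $R_{\Sigma}f=0$ gives $\hat\ell\pi_{0}g^{\pm}=0$, so $u^{\pm}\defeq\pi_{0}g^{\pm}$ satisfies $(\rg(0)|u^{\pm})_{V_{2}^{\rg(0)}}=0$ and $u^{\pm}_{t\Sigma}=0$. Moreover $\pi_{1}g^{\pm}=b_{2}^{\pm}u^{\pm}$ by the graph structure \eqref{e01.21b}.

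\textbf{Using the vanishing components.} Since $u^{\pm}$ is traceless, $I_{\b}u^{\pm}=u^{\pm}$, and hence $\rq_{I,2\b}$ evaluated on $g^{\pm}$ equals
\[
(b_{2}^{\pm}u^{\pm}|I_{\b}u^{\pm})+(u^{\pm}|I_{\b}b_{2}^{\pm}u^{\pm})=(b_{2}^{\pm}u^{\pm}|u^{\pm})_{V_{2}(\Sigma)}+(u^{\pm}|b_{2}^{\pm}u^{\pm})_{V_{2}(\Sigma)},
\]
using $I_{\b}=I_{\b}^{\star}$. Next we compare $(\cdot|\cdot)_{V_{2}(\Sigma)}$ with $(\cdot|\tau_{2}\cdot)_{\tV_{2}(\Sigma)}$ through \eqref{e2.12a}. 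In the decomposition of Subsect.~\ref{sec2.3.2}, $\tau_{2}$ equals $+1$ on the $tt$ and $\Sigma\Sigma$ blocks and $-1$ on the $t\Sigma$ block. Because $u^{\pm}_{t\Sigma}=0$, we get $\tau_{2}u^{\pm}=u^{\pm}$. Hence
\[
(b_{2}^{\pm}u^{\pm}|u^{\pm})_{V_{2}(\Sigma)}=(b_{2}^{\pm}u^{\pm}|\tau_{2}u^{\pm})_{\tV_{2}(\Sigma)}=(b_{2}^{\pm}u^{\pm}|u^{\pm})_{\tV_{2}(\Sigma)}.
\]
The same holds for the other term. Thus $\bar g^{\pm}\dual\rq_{I,2\b}g^{\pm}=\bar g^{\pm}\dual\tilde\rq_{2\b}g^{\pm}$, since the Euclidean charge \eqref{e2.euclidean-charge} is built from the Hilbertian product. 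One has to check that the factor $c^{4}$ and the coefficients in \eqref{e2.10bb} do not spoil this. They do not, because both Hermitian forms are induced by $\rg(0)$ and $\widetilde{\rg}(0)$, which differ only in the sign of $r^{2}dt^{2}$.

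\textbf{Conclusion and main obstacle.} Prop.~\ref{prop8.4}~(3) gives $\pm\bar g^{\pm}\dual\tilde\rq_{2\b}g^{\pm}=\pm\bar f\dual c_{2}^{\pm*}\tilde\rq_{2\b}c_{2}^{\pm}f\geq0$, which is the claim. The restriction $p\geq\12$, $r\geq0$ is only needed so that all pairings are defined, via $H^{p,r}_{\b}\times H^{p-1,r}_{\b}$ duality with $r\geq0$; the identities above extend from $\coinf$ data by density and continuity. The main obstacle is making sure that Prop.~\ref{prop8.4}~(3) holds for $c_{2}^{\pm}$ after the symmetrization $b_{2}\mapsto\12(b_{2}+I_{\b}b_{2}I_{\b})$ of Subsect.~\ref{sec8.3.1}. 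For this I would verify that $\Re\hat b_{2}\geq0$ for $(\cdot|\cdot)_{\tV_{2}(\Sigma)}$ is preserved, using that $I_{\b}$ is unitary for the Hilbertian product on $\Sigma$. If it is not, I would perform the symmetrization before adding the finite-rank positive correction $s_{-\infty}$.
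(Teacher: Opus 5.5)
Your proof is correct and follows the paper's argument. You reduce to $g^\pm=c_2^\pm f$ via \eqref{e8.12b} and $(c_2^\pm)^2=c_2^\pm$, use $R_\Sigma f=0$ to get that $\pi_0 g^\pm$ is $\rg(0)$-traceless with vanishing $t\Sigma$ block so that $\rq_{I,2\b}$ and $\tilde{\rq}_{2\b}$ coincide on $g^\pm$, and conclude with Prop.~\ref{prop8.4}~(3); the only difference is that you spell out, via $I_\b u^\pm=u^\pm$, $I_\b=I_\b^\star$ and $\tau_2u^\pm=u^\pm$, the identification the paper states in one line. Your worry about the symmetrization of Subsect.~\ref{sec8.3.1} concerns Prop.~\ref{prop8.4}~(3) itself, which the paper also takes as given; your suggested check is sound for $n=4$, since $I_\b$ is then a self-adjoint unitary involution for the Hilbertian product, so averaging preserves $\Re\hat b_2\geq 0$.
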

\proof
Using \eqref{e8.12b}  and the fact that $c_{2}^{\pm}$ are projections  we obtain that
\[
\pm \bar{f}_{2}\dual \rq_{I, 2\b}c_{2}^{\pm}f_{2}= \pm \bar{c_{2}^{\pm}f_{2}}\dual \rq_{I, 2\b}c_{2}^{\pm}f_{2}
\]
and if $R_{\Sigma}f_{2}=0$ we obtain that 
\[
 \pm \bar{c_{2}^{\pm}f_{2}}\dual \rq_{I, 2\b}c_{2}^{\pm}f_{2}=  \pm \bar{c_{2}^{\pm}f_{2}}\dual \tilde{\rq}_{2\b}c_{2}^{\pm}f_{2}.
\]
This is positive by Prop.~\ref{prop8.4} (3). \qed
\subsection{Gauge fixing in the regular case}\label{sec11.3}
\subsubsection{Notation}
To ease notation we will sometimes write $\cH^{p,r}_{\b}(V_{k})$ for $\cH^{p,r}_{\b}(\Sigma;V_{k}\otimes\cc^{2})$.

\subsubsection{Gauge fixing}
We fix $r\in J\cap (J+1)\cap\,]1-r_{0}, 1+r_{0}[$. For the construction below, we may choose $1-\delta<r<1$ with $\delta>0$ sufficiently small. The usual argument using the ellipticity of   $R_{\Sigma}K_{\b \Sigma}$ shows that if 
\beq\label{e11.n3}
R_{\Sigma}K_{\b\Sigma}: \cH^{p,r-1}_{\b}(\Sigma; V_{1}\otimes \cc^{2})\tosim  H^{p-1,r}_{\b}(\Sigma; V_{1}\otimes \cc^{2})
\eeq
holds for some $p\in \rr$ then \eqref{e11.n3} holds for all $p\in \rr$.  In this subsection we assume that \eqref{e11.n3} holds for some (and hence for all) $p\in \rr$. 
\begin{remark}\label{remark9.1}
 If the background metric $\rg$ is such that $\rg- \reta$ is small enough in $S^{-1}(\rr^{n})$, then it is easy to verify that \eqref{e11.n3} is satisfied. In fact  the operators $M_{\Sigma}^{\pm}$ are close to  the invertible $M_{\Sigma\norm}^{\pm}$ in $B(H^{p, r}_{\b}(V_{1}), H^{p-1, r+1}_{\b}(V_{1}))$ and the error terms $c_{2}^{\pm}K_{\b\Sigma}- K_{\b\Sigma}c_{1}^{+}$ are close to $0$ in the same topology. This implies the invertibility of $R_{\Sigma}K_{\b \Sigma}$.
\end{remark}

We set
\begin{equation}
\label{e11.4}
T= \one- K_{\b\Sigma}(R_{\Sigma}K_{\b\Sigma})^{-1}R_{\Sigma}.
\end{equation}
%Note that 
%\[
%T= \one- K_{\b\Sigma}(R_{\Sigma}K_{\b\Sigma})^{-1}R_{\Sigma},
%\]
%where now $R_{\Sigma}K_{\b\Sigma}: \cH^{p, -1}_{\b}(\Sigma; V_{1}\otimes \cc^{2})\tosim  H^{p-1,r}_{\b}(\Sigma; V_{1}\otimes \cc^{2})$

\begin{proposition}\label{prop11.2}
 \ben
 \item $T: \cH^{p,r}_{\b}(\Sigma; V_{2}\otimes \cc^{2})\to \Ker R_{\Sigma}\cap\cH^{p,r}_{\b}(\Sigma; V_{2}\otimes \cc^{2})$  is a bounded projection  on $\cH^{p,r}_{\b}(\Sigma; V_{2}\otimes \cc^{2})$, $p\in \rr$.
 \item  $T\circ K_{\b\Sigma}=0$.
\item $T$ preserves $\Ker K_{\b\Sigma}^{\dag}\cap \cH^{p,r}_{\b}(\Sigma; V_{2}\otimes \cc^{2})$.
\item $c_{2}^{\pm}T c_{2}^{\mp}$ is smoothing on $\Sigma$.
\een
\end{proposition}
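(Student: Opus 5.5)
Set $A\defeq R_{\Sigma}K_{\b\Sigma}$, which by \eqref{e11.n3} is an isomorphism $\cH^{p,r-1}_{\b}(V_{1})\tosim H^{p-1,r}_{\b}(\Sigma;V_{1}\otimes\cc^{2})$. Then $T=\one-K_{\b\Sigma}A^{-1}R_{\Sigma}$, and $(1)$–$(2)$ are algebraic consequences of this formula. For boundedness, \eqref{e10.3z} gives $R_{\Sigma}:\cH^{p,r}_{\b}(V_{2})\to H^{p,r}_{\b}\subset H^{p-1,r}_{\b}$ (we use $r\in J$). Next $A^{-1}$ maps into $\cH^{p,r-1}_{\b}(V_{1})$. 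Finally, by \eqref{e7.new3}, $K_{\b\Sigma}:\cH^{p,r-1}_{\b}(V_{1})\to\cH^{p-1,r}_{\b}(V_{2})$. This loses one derivative, so I would recover it by applying \eqref{e11.n3} at level $p+1$: since $R_{\Sigma}f\in H^{p,r}_{\b}$, we get $A^{-1}R_{\Sigma}f\in\cH^{p+1,r-1}_{\b}$, and then $K_{\b\Sigma}A^{-1}R_{\Sigma}f\in\cH^{p,r}_{\b}(V_{2})$. The identity $R_{\Sigma}T=R_{\Sigma}-AA^{-1}R_{\Sigma}=0$ shows that $\Ran T\subset\Ker R_{\Sigma}$, and $T=\one$ on $\Ker R_{\Sigma}$. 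Hence $T^{2}=T$ and $\Ran T=\Ker R_{\Sigma}\cap\cH^{p,r}_{\b}$. For $(2)$, $TK_{\b\Sigma}=K_{\b\Sigma}-K_{\b\Sigma}A^{-1}A=0$ on $\cH^{p+1,r-1}_{\b}(V_{1})$.

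For $(3)$, I would use $K^{\dag}_{\b\Sigma}\circ K_{\b\Sigma}=0$, the $\b$-analogue of $K_{\Sigma}^{\dag}\circ K_{\Sigma}=0$ from \ref{sec1.6.2}. It follows from $K^{\star}\circ K=D_{1}$, since $D_{1}u=0$ for solutions and the conjugations by $c\rho$ and $U_{\b}$ preserve it. With this, $K^{\dag}_{\b\Sigma}Tf=K^{\dag}_{\b\Sigma}f-K^{\dag}_{\b\Sigma}K_{\b\Sigma}(\cdots)=K^{\dag}_{\b\Sigma}f$, so $T$ preserves $\Ker K^{\dag}_{\b\Sigma}$. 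The main point to verify here is that this identity, proved for smooth compactly supported data, extends to the weighted spaces by density and continuity of the differential operator $K^{\dag}_{\b\Sigma}$ between $\b$-Sobolev spaces.

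For $(4)$, write $c_{2}^{\pm}Tc_{2}^{\mp}=c_{2}^{\pm}c_{2}^{\mp}-c_{2}^{\pm}K_{\b\Sigma}A^{-1}R_{\Sigma}c_{2}^{\mp}$. The first term vanishes because the $c_{2}^{\pm}$ are complementary projections. For the second, I would insert $c_{2}^{\pm}K_{\b\Sigma}=K_{\b\Sigma}c_{1}^{\pm}+E^{\pm}$, where by Prop.~\ref{prop8.5bis} $E^{\pm}$ is smoothing on $\Sigma$. It then remains to show that $K_{\b\Sigma}c_{1}^{\pm}A^{-1}R_{\Sigma}c_{2}^{\mp}$ is smoothing. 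By Lemma~\ref{lemma11.1}, $R_{\Sigma}=\hat\ell L_{2}$ componentwise; that is, $R_{\Sigma}c_{2}^{\mp}$ only retains the $v^{\mp}$ component, so it has the block form $\col{0}{\hat\ell\pi_0 c_2^-}$ (for the upper sign) modulo smoothing terms. The task is then to show that $c_{1}^{\pm}A^{-1}$ kills this component modulo smoothing operators. Microlocally, $A$ is block diagonal modulo smoothing: by \eqref{e11.1} and Prop.~\ref{prop8.5bis}, $A\equiv\operatorname{diag}(\rho M_{\Sigma}^{+}\pi_0c_1^+,\rho M_{\Sigma}^{-}\pi_0c_1^-)$. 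So a parametrix for $A$ sends the $-$ slot into $\Ran c_{1}^{-}$ modulo smoothing. Hence $A^{-1}$ equals this parametrix plus a smoothing term (the Fredholm inverse differs from a parametrix by a smoothing operator, $A$ being elliptic), and then $c_{1}^{+}A^{-1}$ applied to the $-$ slot is smoothing, giving $(4)$.

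I expect $(4)$ to be the main obstacle. The difficulty is making "block diagonal modulo smoothing" precise inside the calculus $\Psi_{J}$, and checking that the inverse of $A$ (rather than just a parametrix) preserves it. If this becomes awkward, I would fall back on the wavefront-set argument already used in Prop.~\ref{prop8.5bis}: propagate $\cU_{2\b}$ and use $\cN^{+}\cap\cN^{-}=\emptyset$ to conclude smoothness directly, since $K_{\b}$ is a differential operator and hence does not enlarge wavefront sets.
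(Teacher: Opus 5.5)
Your proposal is correct and follows the paper's proof. Parts (1)–(3) use the same algebraic arguments; your bookkeeping of derivatives, applying \eqref{e11.n3} at level $p+1$, only makes explicit what the paper leaves implicit. For (4) you reduce, exactly as the paper does, to the smoothness of $c_1^{+}(R_\Sigma K_{\b\Sigma})^{-1}R_\Sigma c_2^{-}$ via Prop.~\ref{prop8.5bis}; the only difference is packaging, since the paper skips the full block-diagonal parametrix of $R_\Sigma K_{\b\Sigma}$ and instead uses the first row of $R_\Sigma K_{\b\Sigma}f=v$ with $\pi^{+}v=0$ together with elliptic regularity of $M_\Sigma^{+}$.
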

\proof
By \eqref{e10.3z} $R_{\Sigma}: \cH^{p,r}_{\b}(V_{2}\otimes\cc^{2})\to H^{p,r}_{\b}(V_{1}\otimes\cc^{2})$ is bounded. Using   Prop.~\ref{prop11.1bis}  and the fact that $K_{\b\Sigma}: \cH^{p, r-1}_{\b}(V_{1}\otimes\cc^{2})\to \cH^{p-1, r}_{\b}(V_{2}\otimes\cc^{2})$  we obtain that $T$ is bounded on $ \cH^{p,r}_{\b}(V_{2}\otimes\cc^{2})$.  We have $R_{\Sigma}\circ T=0$  and $T= \one$ on $\Ker R_{\Sigma}\cap\cH^{p-1,r}_{\b}(V_{2})$ which implies (1).  (2)  is immediate. 
Next the identities in  \eqref{e2.12x}, applied to our $\b$-framework imply that  $K_{\b\Sigma}^{\dag}\circ K_{\b\Sigma}=0$, which implies (3).
Let us prove (4). It  suffices to consider the $+$ case and to show that the operator
 $c_{2}^{+}K_{\b\Sigma}(R_{\Sigma}K_{\b\Sigma})^{-1}R_{\Sigma} c_{2}^{-}$ is smoothing, or using Prop.~\ref{prop8.5bis} that $c_{1}^{+}(R_{\Sigma}K_{\b\Sigma})^{-1}R_{\Sigma} c_{2}^{-}$ is smoothing.
 
Recall that the projections $\pi^{\pm}$ are defined after Lemma~\ref{lemma11.1}. We note that $\pi^{+}R_{\Sigma}c_{2}^{-}=0$ and if  $R_{\Sigma}K_{\b\Sigma}f= v$ with $\pi^{+}v=0$ then by \eqref{e11.1} we have $\rho^{-1}l\pi_{0}K_{\b\Sigma}c_{1}^{+}f= M_{\Sigma}^{+}\pi_{0}c_{1}^{+}f= r_{-\infty}v$ where $r_{-\infty}$ is smoothing, since $c_{2}^{+}K_{\b\Sigma}- K_{\b\Sigma}c_{1}^{+}$ is smoothing, see Prop.~\ref{prop8.5bis}. 

By the ellipticity of $M_{\Sigma}^{+}$ we obtain that $c_{1}^{+}(R_{\Sigma}K_{\b\Sigma})^{-1}v= r_{-\infty}v$ on $\Kerop \pi^{+}$, which completes the proof of (4). \qed

\begin{remark}\label{rem11.1}
 Property (4) in Prop.~\ref{prop11.2} is the key property of the microlocal  TT-synchronous gauge condition. It will be used in the next subsection for the construction of Hadamard states. It is not satisfied by the usual TT-synchronous gauge condition.
\end{remark}

\subsection{Gauge fixing in the singular case}\label{sec11.4}
As before we fix $r\in J\cap  (J+1)\cap\,]1-r_{0}, 1+r_{0}[$ but assume now that $R_{\Sigma}K_{\b\Sigma} :\cH^{p,r-1}_{\b}(\Sigma; V_{1}\otimes \cc^{2})\to H^{p-1,r}_{\b}(\Sigma; V_{1}\otimes \cc^{2})$ is {\em not invertible}. 

\subsubsection{Notation}\label{sec11.4.1}

 The orthogonal of a subspace $E\subset \cH^{p,r}_{\b}(\Sigma; V_{2}\otimes \cc^{2})$ for $\rq_{I,2\b}$ will be denoted by $E^{\rq_{I,2\b}}$. 

%For simplicity of notation, we often write $\bar{f}\dual q f$ instead of $(f| qf)_{V\otimes \cc^{2}}$, i.e.~we identify self-adjoint operators with Hermitian %forms using the Hilbertian scalar product $(\cdot | \cdot)_{V\otimes \cc^{2}}$.
%

If $A:\cH^{\infty,r}_{\b}(\Sigma; V_{k}\otimes\cc^{2})\to H^{\infty,r}_{\b}(\Sigma; V_{1}\otimes \cc^{2})$ we define $A^{\dag}$ by
\beq\label{e11.4bb}
(u| A f)_{\tV_{1}(\Sigma)\otimes \cc^{2}}\eqdef \bar{A^{\dag}u}\dual \rq_{I,k\b}f,
\eeq
where we recall that $\rq_{I,2\b}= \rq_{2\b}\circ I_{\b\Sigma}$, with $I_{\b\Sigma}$ defined in \eqref{e2.10bf}, and we set $\rq_{I,1\b}= \rq_{1\b}$.  
Denoting by $A^{*}$ the usual adjoint obtained from the Hilbertian scalar products  $(\cdot | \cdot)_{\tV_{k}(\Sigma)\otimes \cc^{2}}$ we have $A^{*}= q_{I,k}A^{\dag}$, where the charge matrices $q_k$ were defined in \eqref{e2.10bee} and $q_{I, 2}= q_{2}\circ I_{\b}$, $q_{I, 1}= q_{1}$.

\subsubsection{The space $\cH^{p,r}_{\b \: \reg}(\Sigma; V_{2}\otimes \cc^{2})$}\label{sec11.4.2}
We set:
\[
\begin{aligned}
&\cH^{p,r}_{\b \: \reg}(\Sigma; V_{2}\otimes \cc^{2})\\ &= \{f_{2}\in \cH^{p,r}_{\b}(\Sigma; V_{2}\otimes \cc^{2}): R_{\Sigma}f_{2}\in  R_{\Sigma}K_{\b\Sigma}\cH^{p+1,r-1}_{\b}(\Sigma; V_{1}\otimes \cc^{2})\}.
\end{aligned}
\]
We will often write $\cH^{p,r}_{\b\:\reg}(V_{2})$   for  $\cH^{p,r}_{\b\:\reg}(\Sigma; V_{2}\otimes \cc^{2})$.

\subsubsection{Gauge fixing}\label{sec11.4.2b}

 The gauge fixing equation \eqref{e10.-4b} can be solved  if and only if $f_{2}\in \cH^{p,r}_{\b \: \reg}(V_{2})$. Clearly
\beq\label{e11.3z}
\Ker R_{\Sigma}\cap \cH^{p,r}_{\b}(V_{2})\subset \cH^{p,r}_{\b \: \reg}(V_{2}),\quad K_{\b\Sigma}\cH^{p+1,r-1}_{\b}(V_{1})\subset \cH^{p,r}_{\b \: \reg}(V_{2}).
\eeq

By ellipticity and  Fredholm property of $R_{\Sigma}K_{\b\Sigma}$,  
 there exist $u_{i}\in \Ker (R_{\Sigma}K_{\b\Sigma})^{*}\cap H^{\infty,-r}_{\b}(V_{1}\otimes \cc^{2})$, for $1\leq i\leq m$ such that 
 \[
 u\in R_{\Sigma}K_{\b\Sigma}\cH^{p+1,r-1}_{\b}(V_{1})\hbox{ iff }
(u_{i}| u)_{\tV_{1}(\Sigma)\otimes \cc^{2}}=0,\ \ 1\leq i\leq m.
\]
We claim that $v_{i}\defeq R_{\Sigma}^{\dag}u_{i}\in  \cH^{\infty,-r}_{\b}(V_{2})$. In fact   by the definition of $A^{\dag}$ in \eqref{e11.4bb}:
\[
\begin{aligned}
|\bar{v}_{i}\dual \rq_{2\b}f|
&= |(u_{i}| R_{\Sigma}I_{\b\Sigma}^{-1}f)_{\tV_{1}(\Sigma)\otimes \cc^{2}}|
\leq C \|f\|_{\cH_{\b}^{-p, r}(V_{2})},
\end{aligned}
\]
since  $u_{i}\in  H^{p,-r}_{\b}(V_{1}\otimes \cc^{2})$, $I_{\b\Sigma}^{-1}\in B(\cH_{\b}^{-p, r}(V_{2}))$ and $R_{\Sigma}: \cH_{\b}^{-p, r}(V_{2})\to H^{-p, r}_{\b}(V_{1}\otimes \cc^{2})$ by \eqref{e10.3z}. This implies that 
 $v_{i}\in \cH^{p+1,-r}_{\b}(V_{2})$, which proves our claim.  
 
 Since $u_i\in\Ker (R_{\Sigma}K_{\b\Sigma})^*$, the preceding adjoint convention gives
\[
        0=(R_{\Sigma}K_{\b\Sigma})^{*}u_{i}
        =q_{1}K_{\b\Sigma}^{\dag}R_{\Sigma}^{\dag}u_{i}.
\]
The matrix $q_1$ is non-degenerate, hence
\beq\label{e11.3}
K_{\b\Sigma}^{\dag}v_{i}=0.
\eeq
After discarding redundant vectors and relabelling, we assume that $v_1,\ldots,v_m$ are linearly independent and set $\mathcal{V}= {\rm Vect}\{v_{i}: 1\leq i\leq m\}$. 
Then 
\beq\label{e11.4b}
\cH^{p,r}_{\b \: \reg}(V_{2})= \{f_{2}\in \cH^{p,r}_{\b}(V_{2}): \bar{v}_{i}\dual \rq_{I,2\b}f_{2}=0,\ \ 1\leq i\leq m\}= \cV^{\rq_{I,2\b}},
\eeq
and $\cH^{p,r}_{\b \: \reg}(V_{2})$ is of codimension $m$ in $\cH^{p,r}_{\b}(V_{2})$. 
Let us pick a supplementary space $\cV_{1}$ in $\cV$ of $\cV\cap  K_{\b\Sigma}\cH^{\infty,-r-1}_{\b}(V_{1})$.  We can assume that $v_{1}, \dots, v_{l}\in \cV_{1}$ and $v_{l+1}, \dots, v_{m}\in \cV\cap K_{\b\Sigma}\cH^{\infty,-r-1}_{\b}(V_{1})$. Since
\[
f_{2}\in \Ker K_{\b\Sigma}^{\dag}\cap \cH^{p,r}_{\b}(V_{2})\hbox{ iff }\bar{f}_{2}\dual \rq_{I,2\b}K_{\b\Sigma}f_{1}=0,\ \ \forall f_{1}\in \cH^{\infty,-r-1}_{\b}(V_{1}),
\]
we obtain that
\[
\Ker K_{\b\Sigma}^{\dag}\cap \cH^{p,r}_{\b \: \reg}(V_{2})=\Ker K_{\b\Sigma}^{\dag}\cap \cH^{p,r}_{\b}(V_{2})\cap \cV_{1}^{\rq_{I,2\b}},
\]
and $\Ker K_{\b\Sigma}^{\dag}\cap \cH^{p,r}_{\b \: \reg}(V_{2})$ is of codimension $l$ in $\Ker K_{\b\Sigma}^{\dag}\cap \cH^{p,r}_{\b}(V_{2})$.  Therefore we can find $w_{1}, \dots, w_{l}\in\Ker K_{\b\Sigma}^{\dag}\cap \cH^{\infty,r}_{\b}(V_{2})$ such that  the matrix $A= [\bar{w}_{i}\dual \rq_{I,2\b}v_{j}]_{1\leq i, j\leq l}$ is invertible. 

%Since $\cH^{p,r}_{\b \: \reg}(V_{2})$ is of codimension $n$ in $\cH^{p,r}_{\b}(V_{2})$, 
We can   complete the $w_{i}$ for $1\leq i\leq l$ by $w_{i}\in \cH^{\infty,r}_{\b}(V_{2})$ for $l+1\leq i\leq m$ so that $Q= [\bar{w}_{i}\dual \rq_{I,2\b}v_{j}]_{1\leq i,j\leq m}$ is invertible.  Since  $w_{i}\in \Ker K_{\b\Sigma}^{\dag}$ and $v_{j}\in  K_{\b\Sigma}\cH^{\infty,-r-1}_{\b}(V_{1})$ for $i\leq l,\quad j\geq l+1$ we have $Q= \mat{A}{0}{B}{C}$, where $A\in M_{l}(\cc)$, $C\in M_{m-l}(\cc)$ are invertible.
Replacing $w_{i}$ by $\sum_{k}\bar{T}_{ik}w_{k}$ for $T= \mat{A^{-1}}{0}{-C^{-1}B A^{-1}}{C^{-1}}$ we obtain
\beq\label{e11.5}
\begin{aligned}
i)&\quad w_{i}\in \cH^{\infty,r}_{\b}(V_{2}),\quad v_{j}\in \cH^{\infty, -r}_{\b}(V_{2}),\\
ii)&\quad \bar{w}_{i}\dual \rq_{I,2\b}v_{j}= \delta_{ij},\\
iii)&\quad w_{i}\in \Ker K_{\b\Sigma}^{\dag}\hbox{ for } i\leq l,\\
iv)&\quad v_{j}\in \Ker K_{\b\Sigma}^{\dag}\hbox{ for }1\leq j\leq m,\\
v)&\quad v_{j}\in  K_{\b\Sigma}\cH^{\infty,-r-1}_{\b}(V_{1})\hbox{ for }l+1\leq j\leq m.
\end{aligned}
\eeq
We set now
\begin{equation}
\label{e10.8}
\pi f_{2}= \sum_{k=1}^{m}\bar{v}_{k}\dual \rq_{I,2\b}f_{2} w_{k}.
\end{equation}

\begin{lemma}\label{lemma11.3}
\ben
\item $1- \pi: \cH^{p,r}_{\b}(\Sigma; V_{2}\otimes \cc^{2})\to \cH^{p,r}_{\b \: \reg}(\Sigma; V_{2}\otimes \cc^{2})$ is a projection.
\item $\pi\circ K_{\b\Sigma}=0$ on $\cH^{p,r-1}_{\b}(\Sigma; V_{1}\otimes \cc^{2})$.
\item $1- \pi$ preserves  $\Ker K_{\b\Sigma}^{\dag}\cap \cH^{p,r}_{\b}(\Sigma; V_{2}\otimes \cc^{2})$.
\item $\pi^{\dag}$ maps $K_{\b\Sigma}\cH^{p,r-1}_{\b}(\Sigma; V_{1}\otimes\cc^{2})$ into $K_{\b\Sigma}\cH^{\infty,-r-1}_{\b}(\Sigma; V_{1}\otimes\cc^{2})$.
\een
\end{lemma}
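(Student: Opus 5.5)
The plan is to verify the four items directly from the biorthogonality relations \eqref{e11.5}, using the characterization \eqref{e11.4b} of $\cH^{p,r}_{\b\:\reg}(V_{2})$ as $\cV^{\rq_{I,2\b}}$. Everything reduces to finite-dimensional linear algebra plus the adjoint identity $\bar f_{2}\dual\rq_{I,2\b}K_{\b\Sigma}f_{1}=\bar{K_{\b\Sigma}^{\dag}f_{2}}\dual\rq_{I,1\b}f_{1}$, which holds in the $\b$-framework by \eqref{e2.12x}. The pairings are well defined because $v_{k}\in\cH^{\infty,-r}_{\b}$ and $f_{2}\in\cH^{p,r}_{\b}$ are in duality, and the $w_{k}\in\cH^{\infty,r}_{\b}$; hence $\pi$ is a bounded finite-rank operator on $\cH^{p,r}_{\b}(V_{2})$.

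For (1), use $\bar v_{j}\dual\rq_{I,2\b}w_{k}=\delta_{jk}$. This is \eqref{e11.5} ii) after conjugation, since $\rq_{I,2\b}$ is Hermitian: $I_{\b\Sigma}$ commutes with $\rq_{2\b}$ and is selfadjoint. It gives $\pi w_{j}=w_{j}$, so $\pi^{2}=\pi$ and $1-\pi$ is a projection. Moreover $\bar v_{j}\dual\rq_{I,2\b}(1-\pi)f_{2}=0$ for all $j$, so $\Ran(1-\pi)\subset\cV^{\rq_{I,2\b}}=\cH^{p,r}_{\b\:\reg}$. Conversely, if $f_{2}\in\cV^{\rq_{I,2\b}}$ then $\pi f_{2}=0$, so $(1-\pi)f_{2}=f_{2}$.

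For (2), apply \eqref{e11.5} iv): $\bar v_{k}\dual\rq_{I,2\b}K_{\b\Sigma}f_{1}=\bar{K^{\dag}_{\b\Sigma}v_{k}}\dual\rq_{I,1\b}f_{1}=0$. The weights pair correctly, $-r$ against $r$ for the two-tensors and $-r-1$ against $r-1$ for one-tensors.

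For (3), let $f_{2}\in\Ker K^{\dag}_{\b\Sigma}$. Then $K^{\dag}_{\b\Sigma}(1-\pi)f_{2}=-\sum_{k}(\bar v_{k}\dual\rq_{I,2\b}f_{2})K^{\dag}_{\b\Sigma}w_{k}$. The terms with $k\le l$ vanish by \eqref{e11.5} iii). For $k\ge l+1$, the vector $v_{k}$ lies in $K_{\b\Sigma}\cH^{\infty,-r-1}_{\b}(V_{1})$ by \eqref{e11.5} v), say $v_{k}=K_{\b\Sigma}g_{k}$. Then $\bar v_{k}\dual\rq_{I,2\b}f_{2}$ is the conjugate of $\bar f_{2}\dual\rq_{I,2\b}K_{\b\Sigma}g_{k}=\bar{K^{\dag}_{\b\Sigma}f_{2}}\dual\rq_{I,1\b}g_{k}=0$, so these coefficients vanish too.

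For (4), first compute $\pi^{\dag}$ from the definition \eqref{e11.4bb}, adapted to the $\rq_{I,2\b}$ pairing on both sides. One has $\bar g\dual\rq_{I,2\b}\pi f=\sum_{k}(\bar v_{k}\dual\rq_{I,2\b}f)(\bar g\dual\rq_{I,2\b}w_{k})$, so $\pi^{\dag}g=\sum_{k}\overline{(\bar g\dual\rq_{I,2\b}w_{k})}\,v_{k}$, up to fixing conjugation conventions. Now take $g=K_{\b\Sigma}f_{1}$. For $k\le l$ the coefficient is $\bar{K_{\b\Sigma}f_{1}}\dual\rq_{I,2\b}w_{k}$, whose conjugate is $\bar{w}_{k}\dual\rq_{I,2\b}K_{\b\Sigma}f_{1}=\bar{K^{\dag}_{\b\Sigma}w_{k}}\dual\rq_{I,1\b}f_{1}=0$ by \eqref{e11.5} iii). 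The remaining terms involve only $v_{k}$ with $k\ge l+1$, which lie in $K_{\b\Sigma}\cH^{\infty,-r-1}_{\b}(V_{1})$ by \eqref{e11.5} v). Hence $\pi^{\dag}K_{\b\Sigma}f_{1}$ lies in that space.

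The only delicate point is bookkeeping. One must check the Hermitian symmetry of $\rq_{I,2\b}$ (this uses $[I_{\b\Sigma},\rq_{2\b}]$-compatibility, i.e. that $I_{\b\Sigma}$ is selfadjoint for $\rq_{2\b}$), the conjugations in the coefficients, and that each pairing is between dual weighted spaces. Beyond that, no analytic obstacle arises.
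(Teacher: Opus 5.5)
Your proof is correct and follows the paper's argument, with the details filled in: (1) comes from the biorthogonality \eqref{e11.5} {\it ii)} together with \eqref{e11.4b}, (2) from {\it iv)}, and (3) and (4) from {\it iii)} and {\it v)}, with $\pi^{\dag}f_{2}=\sum_{k}(\bar{w}_{k}\dual\rq_{I,2\b}f_{2})\,v_{k}$ exactly as you compute. The only slip is harmless bookkeeping in (2): $K^{\dag}_{\b\Sigma}$ gains a factor $\rho$ just as $K_{\b\Sigma}$ does, so $K^{\dag}_{\b\Sigma}v_{k}$ sits at weight $1-r$ rather than $-r-1$, and it is weight $1-r$ that pairs with $f_{1}\in\cH^{p,r-1}_{\b}$; in any case $K^{\dag}_{\b\Sigma}v_{k}=0$.
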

\proof The fact that $\pi$ is a projection is easy. We have $\Kerop \pi\cap \cH^{p,r}_{\b}(V_{2})= \cH^{p,r}_{\b \: \reg}(V_{2})$ by  \eqref{e11.4b} which proves (1).   (2) follows from  \eqref{e11.5} {\it iv)}, (3) follows from  \eqref{e11.5} {\it iii)} and {\it v)}.   Finally we have
\[
\pi^{\dag}f_{2}= \sum_{k=1}^{m}\bar{w}_{k}\dual \rq_{I,2\b}f_{2}v_{k},
\]
which again by \eqref{e11.5} {\it iii)} and {\it v)} proves (4).\qed

%ici 31/8/2026.

We now construct the analog of the projection $T$ in the singular case. We fix two projections:
\[
\begin{aligned}
\pi_{1}&: \cH^{p,r-1}_{\b}(\Sigma; V_{1}\otimes \cc^{2})\to \Ker R_{\Sigma}K_{\b\Sigma}\cap  \cH^{p,r-1}_{\b}(\Sigma; V_{1}\otimes \cc^{2}),\\
\pi_{2}&: \cH^{p,r}_{\b}(\Sigma; V_{2}\otimes \cc^{2})\to   K_{\b\Sigma}\cH^{p+1,r-1}_{\b}(\Sigma; V_{1}\otimes \cc^{2})\cap \Ker R_{\Sigma}.
\end{aligned}
\]
The projections $\pi_{i}$ are finite rank with smooth distributional kernels. Let us define the map
\[
(R_{\Sigma}K_{\b\Sigma})^{-1}R_{\Sigma}: \cH^{p,r}_{\b \: \reg}(\Sigma; V_{2}\otimes \cc^{2})\to \cH^{p+1,r-1}_{\b}(\Sigma; V_{1}\otimes \cc^{2})\cap \Ker \pi_{1},
\]
such that  $f_{1}= (R_{\Sigma}K_{\b\Sigma})^{-1}R_{\Sigma}f_{2}$  is the unique solution of
\beq\label{e-06}
\begin{cases}
R_{\Sigma}K_{\b\Sigma}f_{1}= R_{\Sigma}f_{2}, \\
 \pi_{1}f_{1}=0.
\end{cases}
\eeq
We set 
\beq\label{e10.20}
\begin{aligned}
T_{\rm reg}&=\one - K_{\b\Sigma}(R_{\Sigma}K_{\b\Sigma})^{-1}R_{\Sigma} \\ & \phantom{=\,} : \cH^{p,r}_{\b \: \reg}(\Sigma; V_{2}\otimes\cc^{2})\to  \cH^{p,r}_{\b}(\Sigma; V_{2}\otimes\cc^{2}).
\end{aligned}
\eeq

\begin{lemma}\label{lemma11.4}
  \ben
 \item The map \[
 T_{\rm reg}:\cH^{p,r}_{\b \: \reg}(\Sigma; V_{2}\otimes \cc^{2})\to\Ker R_{\Sigma}\cap \cH^{p,r}_{\b}(\Sigma;V_{2}\otimes \cc^{2})
  \]
  is a projection.
  \item $  T_{\rm reg}$ preserves $\cH^{p,r}_{\b \: \reg}(\Sigma; V_{2}\otimes \cc^{2})\cap \Ker K_{\b\Sigma}^{\dag}$.
  \item $  T_{\rm reg}K_{\b\Sigma}= K_{\b\Sigma}\pi_{1}$.
  \een

\end{lemma}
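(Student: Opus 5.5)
The plan is to mirror the proof of Prop.~\ref{prop11.2}, with the regularity space $\cH^{p,r}_{\b\:\reg}$ and the normalization $\pi_{1}f_{1}=0$ in \eqref{e-06} replacing invertibility. First I check that $T_{\rm reg}$ is well defined. For $f_{2}\in\cH^{p,r}_{\b\:\reg}(V_{2})$, the equation $R_{\Sigma}K_{\b\Sigma}f_{1}=R_{\Sigma}f_{2}$ is solvable by definition of $\cH^{p,r}_{\b\:\reg}$. The solution is unique modulo $\Ker R_{\Sigma}K_{\b\Sigma}=\Ran\pi_{1}$, so the condition $\pi_{1}f_{1}=0$ singles out exactly one solution. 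The map $f_{2}\mapsto f_{1}$ is linear, and by the Fredholm property (Prop.~\ref{prop11.1bis}) together with the closed range it is bounded into $\cH^{p+1,r-1}_{\b}(V_{1})$.

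For (1), I apply $R_{\Sigma}$ to $T_{\rm reg}f_{2}$ and get $R_{\Sigma}f_{2}-R_{\Sigma}K_{\b\Sigma}f_{1}=0$, so $\Ran T_{\rm reg}\subset\Ker R_{\Sigma}$. If $f_{2}\in\Ker R_{\Sigma}$, which lies in $\cH^{p,r}_{\b\:\reg}$ by \eqref{e11.3z}, then $f_{1}=0$ is the unique solution of \eqref{e-06}, so $T_{\rm reg}f_{2}=f_{2}$. Since $\Ran T_{\rm reg}\subset\Ker R_{\Sigma}\subset\cH^{p,r}_{\b\:\reg}$, we have $T_{\rm reg}^{2}=T_{\rm reg}$, and the range is exactly $\Ker R_{\Sigma}\cap\cH^{p,r}_{\b}$. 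This also gives the first half of (2): $T_{\rm reg}$ maps $\cH^{p,r}_{\b\:\reg}$ into itself.

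For the second half of (2), I use that $K_{\b\Sigma}^{\dag}\circ K_{\b\Sigma}=0$, as already invoked in the proof of Prop.~\ref{prop11.2}~(3). Then $K_{\b\Sigma}^{\dag}T_{\rm reg}f_{2}=K_{\b\Sigma}^{\dag}f_{2}=0$ whenever $f_{2}\in\Ker K_{\b\Sigma}^{\dag}$. Care is needed only because the $\b$-weights must make this identity meaningful. The gauge parameter $f_{1}$ lies in weight $r-1$ and $K_{\b\Sigma}f_{1}$ in weight $r$, so the identity holds on these spaces by density of $\coinf$ and continuity.

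For (3), let $g\in\cH^{p,r-1}_{\b}(V_{1})$ with enough regularity; by \eqref{e11.3z}, $K_{\b\Sigma}g\in\cH^{p,r}_{\b\:\reg}$ (adjusting $p$ by one if needed). The candidate solution is $f_{1}=(1-\pi_{1})g$. It satisfies $\pi_{1}f_{1}=0$ because $\pi_{1}$ is a projection. It also satisfies $R_{\Sigma}K_{\b\Sigma}f_{1}=R_{\Sigma}K_{\b\Sigma}g$, since $\pi_{1}g\in\Ker R_{\Sigma}K_{\b\Sigma}$. By uniqueness this is the solution of \eqref{e-06}, so $T_{\rm reg}K_{\b\Sigma}g=K_{\b\Sigma}g-K_{\b\Sigma}(1-\pi_{1})g=K_{\b\Sigma}\pi_{1}g$. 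The main, though minor, obstacle is the bookkeeping of the Sobolev indices: the solution $f_{1}$ is taken in $\cH^{p+1,r-1}$ while $g$ is in $\cH^{p,r-1}$. I would handle this by elliptic regularity of $R_{\Sigma}K_{\b\Sigma}$. This gives independence of $p$, as noted after \eqref{e11.n3}, and lets one use that $\Ran\pi_{1}$ consists of smooth elements, so $\pi_{1}$ does not depend on $p$.
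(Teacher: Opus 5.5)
Your proposal is correct and follows essentially the same route as the paper. As in the paper, $R_\Sigma T_{\rm reg}=0$ together with $T_{\rm reg}=\one$ on $\Ker R_\Sigma\subset\cH^{p,r}_{\b\:\reg}$ (by \eqref{e11.3z}) gives (1), $K_{\b\Sigma}^{\dag}K_{\b\Sigma}=0$ gives (2), and recognizing $(1-\pi_1)g$ as the unique solution of \eqref{e-06} for $f_2=K_{\b\Sigma}g$ gives (3); your extra remarks on well-definedness, boundedness of the solution map, and the Sobolev-index bookkeeping are sound details that the paper leaves implicit.
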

 \proof It is easy to check that $T_{\rm reg}$ is a projection. We  have $R_{\Sigma}T_{\rm reg}=0$ hence $ T_{\rm reg}\cH^{p,r}_{\b\:{\rm reg}}(V_{2})\subset \Ker R_{\Sigma}\cap \cH^{p,r}_{\b\:{\rm reg}}(V_{2})$ and $ T_{\rm reg}= \one $ on $\Ker R_{\Sigma}\cap \cH^{p,r}_{\b\:{\rm reg}}(V_{2})$ which implies (1).   (2) follows from $K_{\b\Sigma}^{\dag}K_{\b\Sigma}=0$. If $f_{2}= K_{\b\Sigma}g_{1}$, the unique solution of \eqref{e-06} is $f_{1}= (1-\pi_{1})g_{1}$  which implies (3). \qed
 
 We set now
\beq\label{e10.21}
T= (1-\pi_{2})\circ T_{\rm reg}\circ (1-\pi):\cH^{p,r}_{\b}(\Sigma; V_{2}\otimes \cc^{2})\to \cH^{p,r}_{\b}(\Sigma; V_{2}\otimes \cc^{2}).
\eeq
\begin{proposition}\label{prop11.3}
 \ben
 
 \item $T: \cH^{p,r}_{\b}(\Sigma; V_{2}\otimes \cc^{2})\to  \Ker R_{\Sigma}\cap \Ker \pi_{2}\cap\cH^{p,r}_{\b}(\Sigma; V_{2}\otimes \cc^{2})$ is a bounded projection on $\cH^{p,r}_{\b}(\Sigma; V_{2}\otimes \cc^{2})$.
  \item $T\circ K_{\b\Sigma}=0$.
  \item $T$ preserves $\Ker K_{\b\Sigma}^{\dag}\cap \cH^{p,r}_{\b}(\Sigma; V_{2}\otimes \cc^{2})$.
  \item $c_{2}^{\pm}T c_{2}^{\mp}$ is smoothing on $\Sigma$.
 \een
\end{proposition}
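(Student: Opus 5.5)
The plan is to check (1)--(4) directly from the definition \eqref{e10.21}, using Lemmas~\ref{lemma11.3} and \ref{lemma11.4} together with two features of the finite rank pieces. First, $\pi$, $\pi_{1}$, $\pi_{2}$ have smooth kernels, since $w_{k}\in\cH^{\infty,r}_{\b}$ and $v_{k}\in\cH^{\infty,-r}_{\b}$. Second, $\Ran\pi_{1}=\Ker R_{\Sigma}K_{\b\Sigma}$ consists of $H^{\infty}$ elements by ellipticity of $R_{\Sigma}K_{\b\Sigma}$ (Prop.~\ref{prop11.1bis}).

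For (1), boundedness goes factor by factor. The map $\pi$ is bounded on $\cH^{p,r}_{\b}$, since the pairing $\bar v_{k}\dual\rq_{I,2\b}f_{2}$ is continuous when $v_{k}\in\cH^{\infty,-r}_{\b}$. Next, $\cH^{p,r}_{\b\,\reg}=\Ker\pi$ is closed of finite codimension. On it, $(R_{\Sigma}K_{\b\Sigma})^{-1}R_{\Sigma}$ is bounded, because the Fredholm operator $R_{\Sigma}K_{\b\Sigma}$ restricted to $\Ker\pi_{1}$ is a bounded bijection onto its closed range.

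To see that $T$ is a projection, I track ranges.
\begin{itemize}
\item $1-\pi$ maps into $\cH^{p,r}_{\b\,\reg}$.
\item By Lemma~\ref{lemma11.4}(1), $T_{\rm reg}$ maps $\cH^{p,r}_{\b\,\reg}$ into $\Ker R_{\Sigma}$.
\item $1-\pi_{2}$ preserves $\Ker R_{\Sigma}$, because $\Ran\pi_{2}\subset\Ker R_{\Sigma}$.
\end{itemize}
So $\Ran T\subset\Ker R_{\Sigma}\cap\Ker\pi_{2}$. Conversely, let $f$ lie in this intersection. By \eqref{e11.3z} we have $f\in\cH^{p,r}_{\b\,\reg}=\Ker\pi$, hence $(1-\pi)f=f$. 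Then $T_{\rm reg}f=f$ since $f\in\Ker R_{\Sigma}$, and $(1-\pi_{2})f=f$. Thus $T=\one$ on the claimed range, which gives (1).

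For (2), Lemma~\ref{lemma11.3}(2) followed by Lemma~\ref{lemma11.4}(3) gives
\[
TK_{\b\Sigma}f_{1}=(1-\pi_{2})T_{\rm reg}K_{\b\Sigma}f_{1}=(1-\pi_{2})K_{\b\Sigma}\pi_{1}f_{1}.
\]
Here $\pi_{1}f_{1}$ is smooth and lies in $\Ker R_{\Sigma}K_{\b\Sigma}$. Hence $K_{\b\Sigma}\pi_{1}f_{1}\in K_{\b\Sigma}\cH^{p+1,r-1}_{\b}\cap\Ker R_{\Sigma}=\Ran\pi_{2}$, and it is annihilated by $1-\pi_{2}$.

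For (3), I check the three factors separately.
\begin{itemize}
\item $1-\pi$ preserves $\Ker K_{\b\Sigma}^{\dag}$ by Lemma~\ref{lemma11.3}(3).
\item $T_{\rm reg}$ preserves $\Ker K_{\b\Sigma}^{\dag}\cap\cH^{p,r}_{\b\,\reg}$ by Lemma~\ref{lemma11.4}(2).
\item $1-\pi_{2}$ preserves $\Ker K_{\b\Sigma}^{\dag}$, since $\Ran\pi_{2}\subset\Ran K_{\b\Sigma}$ and $K^{\dag}_{\b\Sigma}K_{\b\Sigma}=0$.
\end{itemize}

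The main obstacle is (4). Since $\pi$ and $\pi_{2}$ are finite rank with smooth kernels, it suffices to prove that $c_{2}^{\pm}K_{\b\Sigma}(R_{\Sigma}K_{\b\Sigma})^{-1}R_{\Sigma}c_{2}^{\mp}$ is smoothing on $\Sigma$; composing with $1-\pi$ on the right and $1-\pi_{2}$ on the left only adds smoothing terms. I would repeat the argument of Prop.~\ref{prop11.2}(4), with the exact inverse replaced by the normalized generalized inverse defined by \eqref{e-06}. Consider the $+$ case. By Prop.~\ref{prop8.5bis}, we may commute $c_{2}^{+}$ through $K_{\b\Sigma}$ up to a smoothing error, so it is enough to show that $c_{1}^{+}f_{1}$ is smooth whenever $f_{1}$ solves \eqref{e-06} with data $v=R_{\Sigma}c_{2}^{-}f_{2}$. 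Such data satisfy $\pi^{+}v=0$. By \eqref{e11.1}, $M_{\Sigma}^{+}\pi_{0}c_{1}^{+}f_{1}$ is then a smoothing operator applied to $v$, and the ellipticity of $M_{\Sigma}^{+}$ together with Lemma~\ref{lemma11.1} shows that $c_{1}^{+}f_{1}$ depends smoothingly on $v$. The extra difficulty compared to the regular case is the normalization $\pi_{1}f_{1}=0$. It only alters $f_{1}$ by elements of the finite dimensional smooth space $\Ker R_{\Sigma}K_{\b\Sigma}$, with coefficients that are continuous functionals of $v$, so it contributes a smoothing term as well.
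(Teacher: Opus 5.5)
Your proposal is correct and follows the paper's proof step by step: range tracking via \eqref{e11.3z} for (1); Lemma~\ref{lemma11.3}(2) with Lemma~\ref{lemma11.4}(3) and the definition of $\pi_{1},\pi_{2}$ for (2); factorwise preservation for (3), where you usefully make explicit that $1-\pi_{2}$ preserves $\Ker K_{\b\Sigma}^{\dag}$, which the paper leaves implicit; and for (4), the argument of Prop.~\ref{prop11.2}(4) together with the smoothness of the kernels of $\pi,\pi_{2}$.

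One small imprecision: in the singular case $(R_{\Sigma}K_{\b\Sigma})^{-1}R_{\Sigma}c_{2}^{\mp}$ is not defined as written, since $R_{\Sigma}c_{2}^{\mp}f_{2}$ need not lie in $\Ran R_{\Sigma}K_{\b\Sigma}$. You should keep the factor $1-\pi$ and work with $v=R_{\Sigma}(1-\pi)c_{2}^{\mp}f_{2}$, for which $\pi^{\pm}v=-\hat{\ell}\pi_{0}c_{2}^{\pm}\pi c_{2}^{\mp}f_{2}$ is smoothing rather than zero. The ellipticity argument for $M_{\Sigma}^{\pm}$ handles this equally well, and it works for any Sobolev-bounded solution operator, so the normalization $\pi_{1}f_{1}=0$ needs no separate treatment.
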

\proof  
 From Lemma~\ref{lemma11.4} (1) and $ \pi_{2}\cH^{p,r}_{\b\:{\rm reg}}(V_{2})\subset \Ker R_{\Sigma}\cap \cH^{p,r}_{\b\:{\rm reg}}(V_{2})$ we get that  $R_{\Sigma}T=0$, so $ T\cH^{p,r}_{\b}(V_{2})\subset \Ker R_{\Sigma}\cap \Ker \pi_{2}\cap \cH^{p,r}_{\b}(V_{2})$. Since $\Ker R_{\Sigma}\subset\cH^{p,r}_{\b\:{\rm reg}}(V_{2})$ by \eqref{e11.3z} we have  $\pi=0$ on $\Ker R_{\Sigma}\cap \cH^{p,r}_{\b}(V_{2})$, hence $ T_{\rm reg}(1-\pi)=\one$ on $\Ker R_{\Sigma}\cap \cH^{p,r}_{\b}(V_{2})$ hence $T= \one$ on $\Ker R_{\Sigma}\cap \Ker \pi_{2}\cap \cH^{p,r}_{\b}(V_{2})$. This proves (1).

By Lemma~\ref{lemma11.3} (2) we have $(1- \pi)K_{\b\Sigma}= K_{\b\Sigma}$ hence $TK_{\b\Sigma}= (1- \pi_{2})K_{\b\Sigma}\pi_{1}=0$ by Lemma~\ref{lemma11.4} (3) and the definition of $\pi_{1}, \pi_{2}$. This proves (2).

(3) follows from Lemma~\ref{lemma11.4} (2) and Lemma~\ref{lemma11.3} (3). To prove (4) we argue  as in the proof of Prop.~\ref{prop11.2} (4) using additionally the fact that $\pi, \pi_{2}$ have smooth distributional kernels.  \qed

\section{Hadamard states}\init\label{sec12}
In this final section we construct a Hadamard state by modifying the Hadamard projectors $c_{2}^{\pm}$ using the projection $T$. In the regular case $T$ was defined in \eqref{e11.4}; in the singular case it is defined in \eqref{e10.21}. We recall that   $\Ker_{\rm c}K_{\b\Sigma}^{\dag}$  and $\Ran_{\c}K_{\b\Sigma}$ in Thm.~\ref{thm11.1} and Thm.~\ref{thm11.2} below denote $\Ker K_{\b\Sigma}^{\dag}\cap \coinf(\Sigma; V_{2}\otimes \cc^{2})$ and $K_{\b\Sigma}\coinf(\Sigma; V_{1}\otimes\cc^{2})$ respectively.  Throughout this section  the weight index $r$ belongs to an interval $]1-\delta, 1[$ for $0<\delta\ll 1$, see  the proof of Lemma~\ref{lemma11.z}.
 
\subsection{Covariances in the $\b$-framework}
For each pair of covariances constructed below, we set
\[
C_{2\b}^{\pm}\defeq\pm\rq_{I,2\b}^{-1}\boldsymbol{\lambda}_{2\b}^{\pm}.
\]
We start by working in the $\b$-framework, i.e.~we consider the operators $D_{k\b}$, $K_{\b}$ etc.
\subsubsection{Regular case}
We consider first the  simpler regular case.
 Let us set
\beq\label{e12.1}
\boldsymbol{\lambda}_{2\b}^{\pm}\defeq \pm T^{*}\circ \rq_{I,2\b}c_{2}^{\pm}\circ T.
\eeq
Here $\rq_{I,2\b}$ is the physical Lorentzian charge from \eqref{e2.physical-charge}, $c_2^\pm$ are the Hadamard projectors constructed in Prop.~\ref{prop8.4} and the  projection $T$ is constructed in Prop.~\ref{prop11.2}. 

\begin{lemma}\label{lemma11.z}
 The Hermitian forms $\boldsymbol{\lambda}_{2\b}^{\pm}$ are well defined on $\coinf(\Sigma; V_{2}\otimes \cc^{2})$ for $n\geq 4$.
\end{lemma}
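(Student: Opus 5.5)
We have to show that for $f\in\coinf(\Sigma;V_2\otimes\cc^2)$ the quantity $\bar f\dual\boldsymbol{\lambda}_{2\b}^{\pm}f=\pm\,\overline{Tf}\dual\rq_{I,2\b}c_2^{\pm}Tf$ is finite and depends sesquilinearly and continuously on $f$. The plan is to track weights. We have $f\in\cH^{p,r}_{\b}(\Sigma;V_2\otimes\cc^2)$ for all $p$ and all $r$, since $f$ has compact support. We fix $p\geq\12$ and a weight $r\in\,]1-\delta,1[$ lying in $J\cap(J+1)\cap\,]1-r_0,1+r_0[$. Prop.~\ref{prop11.2} (1) in the regular case, resp.\ Prop.~\ref{prop11.3} (1) in the singular case, then shows that $T$ is bounded on $\cH^{p,r}_{\b}$. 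Prop.~\ref{prop8.4} (1) shows that $c_2^{\pm}$ is bounded on $\cH^{p,r}_{\b}$ as long as $r\in J$. It therefore suffices to check that $\rq_{I,2\b}$ is a bounded Hermitian form on $\cH^{p,r}_{\b}\times\cH^{p,r}_{\b}$ for such $r$, i.e.\ for $r\geq 0$, $p\geq\12$. This is the restriction stated before Lemma~\ref{lemma11.2}.

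The key constraint is therefore on $r$. It must belong to $J=\,]-\frac{n-2}{2},\frac{n-2}{2}[$, so that the projectors are bounded. It must also belong to $J+1$ and lie close to $1$, so that $R_\Sigma K_{\b\Sigma}$ is Fredholm in the gauge-fixing step (Prop.~\ref{prop11.1bis}). Finally it must be nonnegative, so that the charge is defined. With $r\in\,]1-\delta,1[$, the condition $r<\frac{n-2}{2}$ requires $\frac{n-2}{2}\geq 1$, i.e.\ $n\geq4$; this is exactly the hypothesis of the lemma. The condition $r>\frac{n-2}{2}-1$ holds for $n=4$, since $1-\delta>0$. For $n\geq4$ the remaining constraints ($r\in J+1$, $r>0$) are then automatic for small $\delta$. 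I would write out this interval check explicitly as the first step.

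Second, I verify boundedness of the charge. Recall $\bar f\dual\rq_{2\b}f=(f_1|f_0)_{V_2(\Sigma)}+(f_0|f_1)_{V_2(\Sigma)}$ with $f_0\in H^{p,r}_{\b}$ and $f_1\in H^{p-1,r}_{\b}$. The pairing $H^{p,r}_{\b}\times H^{p-1,r}_{\b}\to\cc$ is bounded by $\|f_0\|_{H^{\12,0}_{\b}}\|f_1\|_{H^{-\12,0}_{\b}}$ when $p\geq\12$ and $r\geq0$, using the duality of $H^{\12}_{\b}$ and $H^{-\12}_{\b}$ with respect to the density $\dVol_{\rh_{\b}(0)}$. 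The fiber form $(\cdot|\cdot)_{V^{\rg(0)}_2}$ equals $(\cdot|\tau_2\cdot)_{\tV_2}$ up to the bounded factor $c^4\in\cA^\infty$; this follows from \eqref{e2.12a} and \eqref{e2.10bb}. Moreover $I_{\b\Sigma}$ is a bounded multiplication operator with $\cA^\infty_1$ coefficients. Combining these facts, $\rq_{I,2\b}$ is bounded on $\cH^{p,r}_{\b}$.

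Finally, I compose: $f\mapsto Tf\mapsto c_2^{\pm}Tf$ is continuous from $\coinf$ into $\cH^{p,r}_{\b}$, so $\boldsymbol{\lambda}_{2\b}^{\pm}$ is a continuous sesquilinear form on $\coinf(\Sigma;V_2\otimes\cc^2)$. The main obstacle is ensuring the weight window is nonempty. This requires keeping track of the weight shift $r-1$ used for the gauge parameter in $T$: $(R_\Sigma K_{\b\Sigma})^{-1}$ maps into $\cH^{p+1,r-1}_{\b}(V_1)$ and $K_{\b\Sigma}$ maps back into $\cH^{p,r}_{\b}(V_2)$. For this we need $r-1\in J$ so that $c_1^\pm$ is bounded there, which is exactly the condition $r\in J+1$. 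I expect the delicate point to be confirming that the ranges of $T$ in the singular case (through $\pi,\pi_2$, with $w_i\in\cH^{\infty,r}_{\b}$) stay in the same weight $r$, which follows from \eqref{e11.5} i).
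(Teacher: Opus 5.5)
Your proposal is correct and is essentially the paper's argument: take $r\in\,]1-\delta,1[$, which lies in $J\cap(J+1)\cap\,]1-r_0,1+r_0[$ exactly because $n\geq 4$, use that $T$ and $c_2^{\pm}$ are bounded on $\cH^{p,r}_{\b}$, and conclude since $\rq_{I,2\b}$ is a bounded form on $\cH^{p,r}_{\b}$ for $p\geq\12$, $r\geq 0$; you only make explicit the charge estimate and the interval check that the paper leaves implicit. One harmless slip: the lower endpoint of $J+1$ is $1-\frac{n-2}{2}$, not $\frac{n-2}{2}-1$ (these agree only for $n=4$), but your conclusion that $r\in J+1$ holds automatically for $n\geq 4$ is correct.
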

\proof
 Let $f_{2}\in \cH^{\infty, r}_{\b}(\Sigma; V_{2}\otimes\cc^{2})$ for $r\in ]1- \delta, 1[$ and $0<\delta\ll 1$. Then  $r\in J\cap (J+1)\cap\,]1-r_{0}, 1+ r_{0}[$  so $Tf_{2}\in  \cH^{\infty, r}_{\b}(\Sigma; V_{2}\otimes\cc^{2})$ by Prop.~\ref{prop11.2} and $r\in J$ so $c_{2}^{\pm}Tf_{2}\in \cH^{\infty, r}(\Sigma; V_{2}\otimes\cc^{2})$. Therefore $\bar{Tf_{2}}\dual \rq_{I, 2\b}c_{2}^{\pm}Tf_{2}$ is well defined since $r\geq 0$. \qed

We recall that  $\cU_{2\b}$ is the solution of the Cauchy problem for $D_{2\b}$, see \eqref{e0b.3}. 

\begin{theoreme}\label{thm11.1}
 The pair $\boldsymbol{\lambda}_{2\b}^{\pm}$ satisfy:
 \ben
\item  $\WF(\cU_{2\b}\circ C_{2\b}^{\pm})'\subset (\cN^{\pm}\cup\cF)\times T^{*}\Sigma$ 
over $\mathcal{O}\times \Sigma$, where $\mathcal{O}$ is a neighborhood of $\Sigma$ in $M$ and $\cF$ is conic with  $\cF\cap\cN= \emptyset$.
\item $ \boldsymbol{\lambda}_{2\b}^{+}- \boldsymbol{\lambda}_{2\b}^{-}= \rq_{I,2\b}$ on $\Ker_{\c} K_{\b\Sigma}^{\dag}$.
\item $\boldsymbol{\lambda}_{2\b}^{\pm}=0$ on $\Ker_{\c}K_{\b\Sigma}^{\dag}\times \Ran_{\c} K_{\b\Sigma}$.
\item $\boldsymbol{\lambda}_{2\b}^{\pm}= \boldsymbol{\lambda}_{2\b}^{\pm*}$, $\boldsymbol{\lambda}_{2\b}^{\pm}\geq 0$ on $\Ker_{\c} K_{\b\Sigma}^{\dag}$.
\een
\end{theoreme}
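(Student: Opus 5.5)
The plan is to verify the four properties directly from the definition \eqref{e12.1}, using Prop.~\ref{prop11.2}, the self-adjointness \eqref{e8.12b} of $c_2^\pm$ for $\rq_{I,2\b}$, and Lemma~\ref{lemma11.2}. We first compute $C_{2\b}^{\pm}=\pm\rq_{I,2\b}^{-1}\boldsymbol{\lambda}_{2\b}^{\pm}=\rq_{I,2\b}^{-1}T^{*}\rq_{I,2\b}c_2^{\pm}T=T^{\dag}c_2^{\pm}T$, where $T^{\dag}$ is the $\rq_{I,2\b}$-adjoint of $T$.

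For (3), take $f_1\in\coinf(\Sigma;V_1\otimes\cc^2)$. Then $T K_{\b\Sigma}f_1=0$ by Prop.~\ref{prop11.2} (2), so $\boldsymbol{\lambda}^{\pm}_{2\b}(f_2,K_{\b\Sigma}f_1)=\pm\overline{Tf_2}\dual\rq_{I,2\b}c_2^{\pm}TK_{\b\Sigma}f_1=0$; this holds even without $f_2\in\Ker K^{\dag}_{\b\Sigma}$. For (4), hermiticity follows from \eqref{e8.12b}. For positivity, note that $Tf_2\in\Ker R_\Sigma\cap\cH^{\infty,r}_{\b}$ with $r\in J\cap\rr^+$ by Prop.~\ref{prop11.2} (1), and apply Lemma~\ref{lemma11.2}. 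For (2), $\boldsymbol{\lambda}^+_{2\b}-\boldsymbol{\lambda}^-_{2\b}=T^{*}\rq_{I,2\b}(c_2^++c_2^-)T=T^{*}\rq_{I,2\b}T$. It therefore suffices to show $\overline{Tf}\dual\rq_{I,2\b}Tf=\bar f\dual\rq_{I,2\b}f$ for $f\in\Ker_{\c}K^{\dag}_{\b\Sigma}$. Write $Tf=f-K_{\b\Sigma}g$ with $g=(R_\Sigma K_{\b\Sigma})^{-1}R_\Sigma f\in\cH^{\infty,r-1}_{\b}$. Since $Tf\in\Ker K^{\dag}_{\b\Sigma}$ by Prop.~\ref{prop11.2} (3), and $K^{\dag}_{\b\Sigma}$ is the $\rq_{I}$-adjoint of $K_{\b\Sigma}$, the cross terms $\overline{K_{\b\Sigma}g}\dual\rq_{I,2\b}f$ and $\overline{K_{\b\Sigma}g}\dual\rq_{I,2\b}Tf$ vanish. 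The one point needing care is that this adjointness identity has to be justified in the weighted spaces. Here $g$ has weight $r-1$ and $f$ is compactly supported, so the pairing converges; we would justify the identity by approximating $g$ by compactly supported data, using continuity of $K_{\b\Sigma}:\cH^{p,r-1}_{\b}\to\cH^{p-1,r}_{\b}$ and of $\rq_{I,2\b}$ on weight $r\geq0$ (and the analogous pairing between weights $r-1$ and $-r-1$... in fact $r-1>-\delta$ is enough once $Tf$ has weight $r$).

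Property (1) is the main obstacle. We split
\[
T^{\dag}c_2^{\pm}T=c_2^{\pm}T^{\dag}c_2^{\pm}Tc_2^{\pm}+(\text{terms containing } c_2^{\mp}Xc_2^{\pm}\text{ or } c_2^{\pm}Xc_2^{\mp}).
\]
The terms $c_2^{\mp}Tc_2^{\pm}$ are smoothing by Prop.~\ref{prop11.2} (4), and the same holds for $c_2^{\mp}T^{\dag}c_2^{\pm}=(c_2^{\pm}Tc_2^{\mp})^{\dag}$, since $c_2^{\pm}$ are $\rq_{I,2\b}$-self-adjoint and $\rq_{I,2\b}$ is a local invertible operator. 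Hence $C_{2\b}^{\pm}=c_2^{\pm}Xc_2^{\pm}+S$ with $S$ smoothing on $\Sigma$ and $X=T^{\dag}c_2^{\pm}T$. We then compose with $\cU_{2\b}$. For the main term, $\cU_{2\b}c_2^{\pm}$ has wavefront set in $(\cN^{\pm}\cup\cF)\times T^{*}\Sigma$ by Prop.~\ref{prop8.4} (5). Composing on the right with the operator $Xc_2^{\pm}$, which is continuous on $\cE'$ to $\cD'$ with wavefront only in the conormal/diagonal-type directions, keeps the left projection inside $\cN^{\pm}\cup\cF$. For $\cU_{2\b}S$ the left wavefront lies in $\cN$ only paired with zero covector on $\Sigma$, which is allowed in the primed statement. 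This step needs a careful wavefront-composition argument: the left factor is dealt with as in \cite[Sect.~11.1]{G1}, and $T$ is finite sums of pseudodifferential and smoothing pieces, so composition rules apply locally over $\mathcal{O}\times\Sigma$.
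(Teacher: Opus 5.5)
Your plan follows the paper's proof. Item (3) comes from $T\circ K_{\b\Sigma}=0$. Item (4) comes from \eqref{e8.12b} together with Lemma~\ref{lemma11.2} applied to $Tf_2\in\Ker R_\Sigma$. Item (2) comes from $T$ preserving $\Ker K_{\b\Sigma}^{\dag}$ and $Tf=f$ modulo $\Ran K_{\b\Sigma}$; your density argument for the adjoint identity in weighted spaces is a useful detail the paper leaves implicit. Item (1) comes from transferring Prop.~\ref{prop11.2}~(4) to $T^{\dag}$ via $c_2^{\pm}=(c_2^{\pm})^{\dag}$ and then invoking Prop.~\ref{prop8.4}~(5). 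Your reduction $C_{2\b}^{\pm}\equiv c_2^{\pm}C_{2\b}^{\pm}c_2^{\pm}$ is equivalent to the paper's $C_{2\b}^{\pm}\equiv c_2^{\pm}T^{\dag}T$ modulo smoothing terms.

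One step in (1) is wrong as written: your treatment of the remainder $\cU_{2\b}S$. You say its wavefront set consists of points $((x,\xi),(y,0))$ with $(x,\xi)\in\cN$, and that this is allowed by the primed statement. It is not allowed. Since $\cF\cap\cN=\emptyset$, a point with $(x,\xi)\in\cN^{\mp}$ lies outside $(\cN^{\pm}\cup\cF)\times T^{*}\Sigma$. So if your description were accurate, (1) would fail.

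The correct reason is that this term contributes nothing. $S$ has a smooth kernel, and $\cU_{2\b}$ maps $\cinf(\Sigma;V_2\otimes\cc^2)$ continuously into smooth sections near $\Sigma$ by finite propagation speed. Hence $\cU_{2\b}S$ has a smooth kernel and empty wavefront set. Equivalently, in the composition calculus $\WF(\cU_{2\b})'$ contains no points of the form $((x,\xi),(y,0))$, because a nonzero null covector cannot annihilate the spacelike $T_y\Sigma$. With this correction the rest of your argument for (1) goes through.
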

\proof 
Let us first prove (1). We apply \cite[Prop.~11.1.1]{G1}. 
For $A\in L(\cH_{\b}^{p,r}(V_{2}))$ we set  $A^{*}\rq_{I,2\b}\eqdef \rq_{I,2\b}A^{\dag}$. Note that Prop.~\ref{prop8.4} (2) can be rewritten as $c_{2}^{\pm}= (c_{2}^{\pm})^{ \dag}$.

 We  need to show that $\WF(\cU_{2\b}T^{\dag}c_{2}^{\pm}T)'\subset (\cN^{\pm}\cup \cF)\times T^{*}\Sigma$ for some conic set $\cF\subset T^{*}M$ with $\cF \cap \cN= \emptyset$.

By Prop.~\ref{prop11.2} $c_{2}^{\pm}T  c_{2}^{\mp}$ is smoothing, hence $c_{2}^{\pm}T^{\dag}  c_{2}^{\mp}$ is smoothing, since $c_{2}^{\pm}= ( c_{2}^{\pm})^{\dag}$.  Therefore $\cU_{2\b}T^{\dag}c_{2}^{\pm}T= \cU_{2\b}c_{2}^{\pm}T^{\dag}T$ modulo a smooth kernel.  So the result follows from the Hadamard property of $c_{2}^{\pm}$, see Prop.~\ref{prop8.4} (5).

Let us now prove (2).   We have $\boldsymbol{\lambda}_{2\b}^{+}- \boldsymbol{\lambda}_{2\b}^{-}= T^{*}\rq_{I,2\b}T$. Let us fix some $p\geq \12$, see the discussion in \ref{sec11.2.2} and $r\in ]1-\delta, 1[$ as in the proof of Lemma~\ref{lemma11.z}.
$T$ preserves $\Ker K_{\b\Sigma}^{\dag}\cap\cH^{p,r}_{\b}(\Sigma; V_{2}\otimes \cc^{2})$ by Prop.~\ref{prop11.2} (3) and $Tf_{2}= f_{2}$ mod $K_{\b\Sigma}\cH^{p+1,r-1}_{\b}(\Sigma; V_{1}\otimes \cc^{2})$ hence $\boldsymbol{\lambda}_{2\b}^{+}- \boldsymbol{\lambda}_{2\b}^{-}= \rq_{I,2\b}$ on $\Ker K_{\b\Sigma}^{\dag}\cap\cH^{p,r}_{\b}(\Sigma; V_{2}\otimes \cc^{2})$.

(3) follows from Prop.~\ref{prop11.2} (2).

Since  $ T\cH^{p,r}_{\b}(\Sigma; V_{2}\otimes \cc^{2})\subset \Ker R_{\Sigma}\cap \cH^{p,r}_{\b}(\Sigma; V_{2}\otimes \cc^{2})$ and $\pm \rq_{I,2\b}c_{2}^{\pm}\geq 0$ on $\Ker R_{\Sigma}\cap \cH^{p,r}_{\b}(\Sigma; V_{2}\otimes \cc^{2})$ by Lemma~\ref{lemma11.2}, we obtain (4). 
 \qed

\subsubsection{Singular case} We now consider the singular case.  We need an additional modification  of $c_{2}^{\pm}$ since  $T^{*}\rq_{I,2\b}T\neq \rq_{I,2\b}$ on $\Ker K_{\b\Sigma}^{\dag}$ because of the projection $\pi$.  This modification is inspired by  a construction in \cite[Subsect.~4.4]{FS}.
Let
\beq\label{e11.18}
\tilde{\pi}f_{2}= \sum_{k=1}^{l}\bar{v}_{k}\dual \rq_{I,2\b}f_{2}w_{k},
\eeq
and note using \eqref{e11.5} {\it v)} that $\tilde{\pi}= \pi$ on $\Ker K_{\b\Sigma}^{\dag}$.

\def\tpi{\tilde{\pi}}
Consider the hermitian form \beq\label{e11.12}
\begin{aligned}
\boldsymbol{\nu} &\defeq\rq_{I,2\b}- (1-\tpi)^{*}\rq_{I,2\b}(1- \tpi)\\
&=\rq_{I,2\b}\tpi^{\dag}+ \rq_{I,2\b} \tpi- \rq_{I,2\b}\tpi^{\dag}\tpi.
\end{aligned}
\eeq
 The Hermitian form $\boldsymbol{\nu}$ has a smooth distributional kernel,  a finite rank and is bounded on $\cH^{0,r}_{\b}(\Sigma; V_{2})$ by \eqref{e11.5} {\it i)}.    We define the finite rank self-adjoint operator $\nu$ by
 \[
 \bar{f}_{2}\dual \boldsymbol{\nu}f_{2}\eqdef (\rho^{-r}f_{2}| \nu \rho^{-r} f_{2})_{\tV_{2}(\Sigma)\otimes \cc^{2}}
 \]
   and we can find $u_{i}\in \cH^{\infty,0}_{\b}(\Sigma; V_{2}\otimes \cc^{2})$, $1\leq i\leq q$ linearly independent such that 
\[
\nu= \sum_{i=1}^{q}\alpha_{i}|u_{i})(u_{i}|,\ \ \alpha_{i}\in \rr^{*}.
\]

We set
\beq\label{e11.10}
\bar{f}_{2}\dual \boldsymbol{\lambda}_{2\b}^{\pm} f_{2}\defeq\pm \bar{Tf_{2}}\dual \rq_{I,2\b}c_{2}^{\pm}Tf_{2} \pm (\rho^{-r}f_{2}| \one_{\rr^{\pm}}(\nu)\nu \rho^{-r}f_{2})_{\tV_{2}(\Sigma)\otimes \cc^{2}}.
\eeq
As before $ \boldsymbol{\lambda}_{2\b}^{\pm}$ are well defined on $\coinf(\Sigma; V_{2}\otimes \cc^{2})$ for $n\geq 4$.

\begin{theoreme}\label{thm11.2}
 The pair $\boldsymbol{\lambda}_{2\b}^{\pm}$ satisfy:
\ben
\item  $\WF(\cU_{2\b}\circ C_{2\b}^{\pm})'\subset (\cN^{\pm}\cup\cF)\times T^{*}\Sigma$ 
over $\mathcal{O}\times \Sigma$, where $\mathcal{O}$ is a neighborhood of $\Sigma$ in $M$ and $\cF$ is conic with  $\cF\cap\cN= \emptyset$.
\item $\boldsymbol{\lambda}_{2\b}^{+}- \boldsymbol{\lambda}_{2\b}^{-}= \rq_{I,2\b}$ on $\Ker_{\c} K_{\b\Sigma}^{\dag}$.
\item $\boldsymbol{\lambda}_{2\b}^{\pm}=0$ on $\Ker_{\c}K_{\b\Sigma}^{\dag}\times \Ran_{\c} K_{\b\Sigma}$.
\item $\boldsymbol{\lambda}_{2\b}^{\pm}= \boldsymbol{\lambda}_{2\b}^{\pm*}$, $\boldsymbol{\lambda}_{2\b}^{\pm}\geq 0$ on $\Ker_{\c} K_{\b\Sigma}^{\dag}$.
\een
\end{theoreme}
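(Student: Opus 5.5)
The plan is to treat $\boldsymbol{\lambda}_{2\b}^{\pm}$ in \eqref{e11.10} as the regular-case covariance $\pm T^{*}\rq_{I,2\b}c_{2}^{\pm}T$, now built from the singular-case projection $T$ of Prop.~\ref{prop11.3}, plus the finite-rank correction $\pm\boldsymbol{\nu}^{\pm}$. Here $\boldsymbol{\nu}^{\pm}$ is the form associated with $\one_{\rr^{\pm}}(\nu)\nu$. Each of (1)–(4) will be checked by following the proof of Thm.~\ref{thm11.1} for the first term and then treating the correction separately.

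\emph{Properties (1), (3) and hermiticity.} For (1), the correction $\nu$ is finite rank with $u_{i}\in\cH^{\infty,0}_{\b}$, so its contribution to $C^{\pm}_{2\b}$ has a smooth kernel. Composing with $\cU_{2\b}$ keeps it smooth, so it does not change the wavefront set. The first term is handled as in Thm.~\ref{thm11.1}(1), using Prop.~\ref{prop11.3}(4) in place of Prop.~\ref{prop11.2}(4); this holds because $\pi$ and $\pi_{2}$ have smooth kernels. For (3), take $f_{2}\in\Ker_{\c}K_{\b\Sigma}^{\dag}$ and $g=K_{\b\Sigma}f_{1}$. The first term vanishes because $TK_{\b\Sigma}=0$ by Prop.~\ref{prop11.3}(2). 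For the correction, it suffices to know that $\boldsymbol{\nu}$ vanishes on $\cH\times\Ran K_{\b\Sigma}$, since then $\Ran K_{\b\Sigma}\subset\Ker\nu$ (after the $\rho^{-r}$ conjugation) and hence $\Ran K_{\b\Sigma}\subset\Ker\one_{\rr^{\pm}}(\nu)\nu$. This vanishing follows from $\tpi K_{\b\Sigma}=0$, which is \eqref{e11.5} \emph{iv)} giving $\bar v_{k}\dual\rq_{I,2\b}K_{\b\Sigma}f_{1}=\overline{K^{\dag}_{\b\Sigma}v_{k}}\dual\rq_{I,1\b}f_{1}=0$. Inserting this into the expansion \eqref{e11.12}, the terms $\rq_{I,2\b}\tpi$ and $\rq_{I,2\b}\tpi^{\dag}\tpi$ vanish on $\Ran K_{\b\Sigma}$. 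The remaining term is $\overline{\tpi f}\dual\rq_{I,2\b}K_{\b\Sigma}f_{1}$, which vanishes because $\tpi f\in{\rm Vect}\{w_{k}\}_{k\le l}\subset\Ker K^{\dag}_{\b\Sigma}$ by \eqref{e11.5} \emph{iii)}. Hermiticity of both terms is immediate, since $c_{2}^{\pm}$ is $\rq_{I,2\b}$-selfadjoint and $\nu$ is selfadjoint.

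\emph{Property (4), positivity.} The first term is nonnegative by Lemma~\ref{lemma11.2}, since $Tf_{2}\in\Ker R_{\Sigma}$. The correction $\pm\one_{\rr^{\pm}}(\nu)\nu$ is a nonnegative operator, so its form is nonnegative. Both statements only require the weight $r\in]1-\delta,1[$ fixed earlier.

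\emph{Property (2), the CCR.} This is the main step. The difference is $T^{*}\rq_{I,2\b}T+\boldsymbol{\nu}$, because $\one_{\rr^{+}}(\nu)\nu+\one_{\rr^{-}}(\nu)\nu=\nu$. We need this to equal $\rq_{I,2\b}$ on $\Ker_{\c}K^{\dag}_{\b\Sigma}$, so for such $f$ we show $\bar{Tf}\dual\rq_{I,2\b}Tf=\overline{(1-\tpi)f}\dual\rq_{I,2\b}(1-\tpi)f$. On $\Ker K^{\dag}_{\b\Sigma}$ we have $(1-\pi)f=(1-\tpi)f$, and this again lies in $\Ker K^{\dag}_{\b\Sigma}$ by Lemma~\ref{lemma11.3}(3). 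Next, $T_{\rm reg}$ changes its argument only by an element of $K_{\b\Sigma}\cH$, and so does $1-\pi_{2}$, because $\Ran\pi_{2}\subset K_{\b\Sigma}\cH$. Hence $Tf=(1-\tpi)f+K_{\b\Sigma}h$ with $Tf\in\Ker K^{\dag}_{\b\Sigma}$, using Prop.~\ref{prop11.3}(3). Expanding the form, every cross term with $K_{\b\Sigma}h$ vanishes because both $Tf$ and $(1-\tpi)f$ lie in $\Ker K^{\dag}_{\b\Sigma}$, which is $\rq_{I,2\b}$-orthogonal to $\Ran K_{\b\Sigma}$. The same orthogonality kills the term $\overline{K_{\b\Sigma}h}\dual\rq_{I,2\b}K_{\b\Sigma}h$. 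Adding $\boldsymbol{\nu}$, which by \eqref{e11.12} is exactly $\rq_{I,2\b}-(1-\tpi)^{*}\rq_{I,2\b}(1-\tpi)$, gives $\rq_{I,2\b}$. The obstacle is weight bookkeeping: $h$ lies in $\cH^{p+1,r-1}_{\b}$ and the $v_{k}$ lie in weight $-r$. The pairings involved are pairings of weight $r$ against weight $-r$, so they are finite, but this needs to be checked at each step.
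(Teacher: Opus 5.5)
Your proposal is correct and follows the paper's proof essentially step by step. The correspondence is: (1) from the smooth kernel of the finite-rank correction together with Prop.~\ref{prop11.3}(4); (2) from $Tf_{2}=(1-\tilde{\pi})f_{2}$ modulo $K_{\b\Sigma}\cH^{p+1,r-1}_{\b}(V_{1})$ on $\Ker K_{\b\Sigma}^{\dag}$, combined with \eqref{e11.12}; (3) from $TK_{\b\Sigma}=0$ and $\tilde{\pi}=\tilde{\pi}^{\dag}=0$ on $\Ran K_{\b\Sigma}$ via \eqref{e11.5} \emph{iii)}--\emph{iv)}; and (4) from Lemma~\ref{lemma11.2} and the sign of $\pm\one_{\rr^{\pm}}(\nu)\nu$. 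Your explicit expansion of the cross terms in (2) just spells out what the paper compresses into the remark that $T$ preserves $\Ker K_{\b\Sigma}^{\dag}$ and $\pi=\tilde{\pi}$ there.
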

\proof 
The proof of (1) is identical to Thm.~\ref{thm11.1} since  the additional term in \eqref{e11.10}  has a smooth distributional kernel.

Let us prove (2).  We have using \eqref{e11.12}:
\[
\boldsymbol{\lambda}_{2\b}^{+}- \boldsymbol{\lambda}_{2\b}^{-}= T^{*}\rq_{I,2\b}T+ \rq_{I,2\b}- (1- \tpi)^{*}\rq_{I,2\b}(1- \tpi).
\]
Let us compute the first term in the RHS  on $\Ker K_{\b\Sigma}^{\dag}\cap\cH^{p,r}_{\b}(V_{2})$. Recall that $T= (1- \pi_{2})T_{\reg}(1- \pi)$. We   have $T_{\rm reg}(1- \pi)f_{2}= (1- \pi)f_{2}$ modulo  $K_{\b\Sigma}\cH^{p+1,r-1}_{\b}(V_{1})$ by \eqref{e10.20}. Since $ \pi_{2}\cH^{p,r}_{\b}(V_{2})\subset K_{\b\Sigma}\cH^{p+1,r-1}_{\b}(V_{1})$ we obtain that $Tf_{2}= (1- \pi)f_{2}$ modulo  $K_{\b\Sigma}\cH^{p+1,r-1}_{\b}(V_{1})$. By Prop.~\ref{prop11.3} $T$ preserves $\Ker K_{\b\Sigma}^{\dag}\cap \cH^{p,r}_{\b}(V_{2})$ hence for $f_{2}\in \Ker_{\c} K_{\b\Sigma}^{\dag}$ we have:
\[
 \bar{Tf_{2}}\dual \rq_{I,2\b}Tf_{2}= \bar{(1-\pi)f_{2}}\dual \rq_{I,2\b}(1- \pi)f_{2}
 =\bar{(1-\tpi)f_{2}}\dual \rq_{I,2\b}(1- \tpi)f_{2}, 
\]
since $\pi= \tilde{\pi}$ on  $\Ker_{\c} K_{\b\Sigma}^{\dag}$.
Therefore $\boldsymbol{\lambda}_{2\b}^{+}- \boldsymbol{\lambda}_{2\b}^{-}= \rq_{I,2\b}$ on $\Ker_{\c}\! K_{\b\Sigma}^{\dag}$.

We now prove (3). By Prop.~\ref{prop11.3} (2) $T\circ K_{\b\Sigma}=0$. Since $v_{k}, w_{k}\in \Ker K_{\b\Sigma}^{\dag}$ for $1\leq k\leq l$, see \eqref{e11.5}, we obtain that $\tilde{\pi}= \tilde{\pi}^{\dag}=0$ on $\Ran_{\c}K_{\b\Sigma}$.  Therefore $\boldsymbol{\nu}=0$ on $\Ran_{\c} K_{\b\Sigma}$ hence $\nu=0$ on $\rho^{-r}\Ran_{\c}K_{\b\Sigma}$  and  $\boldsymbol{\lambda}_{2\b}^{\pm}= 0$ on $\Ker_{\c}K_{\b\Sigma}^{\dag}\times \Ran_{\c} K_{\b\Sigma}$. This proves (3). 
  
It remains to prove (4). The first term in the RHS of \eqref{e11.10} is positive  by the same argument as in the proof of Thm.~\ref{thm11.1}, since $ T\cH^{p,r}_{\b}(V_{2})\subset \Ker R_{\Sigma}\cap \cH^{p,r}_{\b}(V_{2})$. The second term is also clearly positive. This completes the proof of the theorem. \qed

\subsection{Hadamard state on $\CCR(\cV_{P},{\bf Q}_{P})$}

We now construct a Hadamard state on the physical algebra $\CCR(\cV_{P}, {\bf Q}_{P})$, i.e.~for the original linearized Einstein equation around the background metric $\rg$. 

To this end we apply the discussion in Subsect.~\ref{sec1.8}. The operators $D_{k\b}$, $K_{\b}$ etc are obtained from $D_{k}, K$  by \eqref{e2.12r}, \eqref{e2.12} for 
 $c$ replaced by  $c\rho$ and $U_{k}$  by $U_{\b}$, using the identities in \eqref{e6.-1}. If we use the convention for Cauchy data in \eqref{e00.3a} and \eqref{e2.10bc}, we obtain that the map $\mathcal{T}_{k}$ defined in \eqref{e2.12p} equals
 \begin{equation}
 \label{e100}
 \mathcal{T}_{k}\col{f_{0}}{f_{1}}= \col{c\rho U_{\b}f_{0}}{U_{\b}f_{1}},\ \ f= \col{f_{0}}{f_{1}}\in \coinf(\Sigma; V_{k}\otimes \cc^{2}).
 \end{equation}
 Following \eqref{e2.12o} we set
 \begin{equation}
 \label{e101}
 \boldsymbol{\lambda}_{2}^{\pm }\defeq (\mathcal{T}_{2})^{*}\circ \boldsymbol{\lambda}_{2\b}^{\pm}\circ \mathcal{T}_{2},
 \end{equation}
 where $\boldsymbol{\lambda}_{2\b}^{\pm}$ is defined by \eqref{e12.1} in the regular case of by \eqref{e11.10} in the singular case.
 
 From the discussion  in Subsect.~\ref{sec1.8} we obtain the following theorem.
 
\begin{theoreme}\label{mainthm}
 The Hermitian forms $ \boldsymbol{\lambda}_{2}^{\pm }$ are the Cauchy surface covariances of a quasi-free Hadamard state on $\CCR(\cV_{P}, {\bf Q}_{P})$.
\end{theoreme}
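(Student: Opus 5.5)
The plan is to transport Thms.~\ref{thm11.1} and \ref{thm11.2} from the $\b$-framework back to the original operators $D_{k}$, $K$, using the conformal/unitary dictionary of Subsect.~\ref{sec1.8}. Then we conclude with Prop.~\ref{prop:states1}, or more precisely the equivalent form \eqref{defodefo-cauchy} together with the Hadamard criterion of \cite[Prop.~11.1.1]{G1}.

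First, we check that the setting of Subsect.~\ref{sec1.8} applies. The $\b$-operators are obtained from $D_{k}$ and $K$ by \eqref{e2.12r}, \eqref{e2.12}, with $c$ replaced by $c\rho$ and $U_{k}$ by $U_{\b}$, as recorded in \eqref{e6.-1} and \eqref{sec7.e2a}. The intertwining identity \eqref{e2.12m} is \eqref{e7.0}. The corresponding Cauchy data transform $\mathcal{T}_{k}$ is given by \eqref{e100}. It is a bundle isomorphism multiplying by smooth nonvanishing functions on $\Sigma^{\circ}$, so it preserves $\coinf(\Sigma;V_{k}\otimes\cc^{2})$ bijectively. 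Consequently:
\begin{itemize}
\item by \eqref{e2.12x}, $\mathcal{T}_{2}$ maps $\Ker_{\c}K_{\Sigma}^{\dag}$ onto $\Ker_{\c}K_{\b\Sigma}^{\dag}$, and $\mathcal{T}_{2}\Ran_{\c}K_{\Sigma}=\Ran_{\c}K_{\b\Sigma}$, because $\mathcal{T}_{1}$ is also bijective on compactly supported data;
\item by \eqref{e2.12n}, $\rq_{I,2}=\mathcal{T}_{2}^{*}\rq_{I,2\b}\mathcal{T}_{2}$, where we also use $I_{\b}=I$ from \eqref{e6.-01}.
\end{itemize}

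Second, we verify \eqref{defodefo-cauchy} for $\boldsymbol{\lambda}_{2}^{\pm}$ defined in \eqref{e101}. For $f\in\Ker_{\c}K_{\Sigma}^{\dag}$ we have $\mathcal{T}_{2}f\in\Ker_{\c}K_{\b\Sigma}^{\dag}$. Items (2)--(4) of Thm.~\ref{thm11.1} (regular case) or Thm.~\ref{thm11.2} (singular case) then give, respectively:
\begin{itemize}
\item $\boldsymbol{\lambda}_{2}^{+}-\boldsymbol{\lambda}_{2}^{-}=\mathcal{T}_{2}^{*}\rq_{I,2\b}\mathcal{T}_{2}=\rq_{I,2}$;
\item vanishing on $\Ker_{\c}K_{\Sigma}^{\dag}\times\Ran_{\c}K_{\Sigma}$;
\item hermiticity and positivity, since these are preserved under congruence by $\mathcal{T}_{2}$.
\end{itemize}
By the discussion after \eqref{defodefo-cauchy} and Lemma~\ref{lemma.had}, the spacetime covariances $\boldsymbol{\Lambda}_{2}^{\pm}=(\varrho_{2}G_{2})^{*}\boldsymbol{\lambda}_{2}^{\pm}(\varrho_{2}G_{2})$ define a quasi-free state on $\CCR(\cV_{P},{\bf Q}_{P})$. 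The field equations hold automatically from the factor $G_{2}$. The well-definedness of $\boldsymbol{\lambda}_{2\b}^{\pm}$ on compactly supported data was shown in Lemma~\ref{lemma11.z}, which uses $n\geq4$ and is fine for $n=4$.

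Third, we handle the Hadamard property, which I expect to be the only point needing care. Set $C_{2}^{\pm}=\pm\rq_{I,2}^{-1}\boldsymbol{\lambda}_{2}^{\pm}$. A short computation with \eqref{e2.12n} gives $C_{2}^{\pm}=\mathcal{T}_{2}^{-1}C_{2\b}^{\pm}\mathcal{T}_{2}$. By \eqref{e2.12w},
\[
\cU_{2}C_{2}^{\pm}=(c\rho U_{\b})^{-1}\,\cU_{2\b}\,C_{2\b}^{\pm}\,\mathcal{T}_{2}.
\]
Both $(c\rho U_{\b})^{-1}$ and $\mathcal{T}_{2}$ are smooth invertible multiplication operators in the interior. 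This is because $c$ is smooth and positive, and the reduction diffeomorphism $\chi$ of Prop.~\ref{prop2.1} preserves $\cN^{\pm}$ (we have identified $\rg$ with $\chi^{*}\rg$). Hence the primed wavefront sets agree, and item (1) of Thm.~\ref{thm11.1}/\ref{thm11.2} gives $\WF(\cU_{2}C_{2}^{\pm})'\subset(\cN^{\pm}\cup\cF)\times T^{*}\Sigma$ over $\mathcal{O}\times\Sigma$. The wavefront set is a local interior notion, so the $\b$-weights at $\rho=0$ play no role.

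The criterion \cite[Prop.~11.1.1]{G1}, which is the Hadamard part of Prop.~\ref{prop:states1}, is formulated in terms of $\cU_{2}$ composed with the Cauchy covariance operators rather than projectors. Applying it yields $\WF(\Lambda_{2}^{\pm})'\subset\cN^{\pm}\times\cN^{\pm}$. Finally, pulling back by $\chi$ transfers the state to the original metric, which proves the theorem and hence Thm.~\ref{mainthmmain}.
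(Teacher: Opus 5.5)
Your proposal is correct and takes the same route as the paper. The paper's proof transports Thms.~\ref{thm11.1} and \ref{thm11.2} back to the original operators through the conformal dictionary of Subsect.~\ref{sec1.8}, with $\mathcal{T}_2$ given by \eqref{e100}, and then applies Prop.~\ref{prop:states1}; you simply make explicit what it leaves implicit, notably the identity $C_2^{\pm}=\mathcal{T}_2^{-1}C_{2\b}^{\pm}\mathcal{T}_2$ and the invariance of the primed wavefront set under the smooth invertible interior factors $(c\rho U_{\b})^{-1}$ and $\mathcal{T}_2$.
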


\appendix\section{Pseudodifferential calculus}\init\label{sec9}
\def\cyl{{\rm cyl}}\def\Mcyl{M_{\rm cyl}}
\subsection{Notation}\label{sec9.1}
Let $M$ be a smooth $n$-dimensional manifold with compact boundary $\p M= Y$ and $x$ a boundary defining function. We refer back to \ref{secnot.1.2} for definitions of general notions for manifolds with boundary which we use below. As in that section, we will fix $x$ and not study the invariance of objects under a change of boundary defining function.

The calculus below is a variant of the small $\b$-calculus and of pseudodifferential calculi on manifolds with cylindrical ends; see, for instance, \cite{GilLoya,MitreaNistor}. In the cylindrical picture $x=\e^{-s}$, the normal operator below is the analogue of the indicial (or translation-invariant) operator introduced in these references. Compared with the almost translation-invariant algebra of Mitrea--Nistor, the main difference relevant here is that we work uniformly on a fixed interval of $\b$-weights $H^{p,r}_{\b}=x^{r}H^{p}_{\b}=\e^{-rs}H^{p}$, whereas their calculus is formulated in the framework of  bounded geometry Sobolev spaces without weight. 

For our constructions the relevant interval of weights is $J=\open{-\frac n2+1,\frac n2-1}$, 
the definitions below are however stated for an arbitrary open interval $J\subset\rr$.

\subsubsection{Hermitian and Hilbertian vector  bundles}\label{sec9.2.2}
%where do we actually use this?

 We fix a  rank $N$ complex vector bundle $V\overset{\pi}{\to} M$ which is assumed to be trivial over $M_{\infty}$.  If $V$ is a Hermitian bundle, we will assume that, after trivialisation over $M_{\infty}$, the Hermitian form $h(x, y)$ on the fibers of $V$ belongs to $\mathcal{A}^{\infty}_{1}(M; L_{\rh }(\cc^{N}))$, where $L_{\rm h}(\cc^{N})$ denotes the space of Hermitian forms on $\cc^{N}$. 
 
 The scalar product on the fibers of $V$ obtained from $h$ is denoted by $(\cdot| \cdot)_{V}$, the one obtained from $h_{\norm}$ over $M_{\norm}$ by $(\cdot| \cdot)_{V_{\norm}}$.

Later on we will need to define $L(V)$-valued symbols or $V$-valued Sobolev spaces. This can be done using local trivializations, cf.~Subsect.~\ref{not:tensor-valued-classes}, but it will often be convenient for our purposes to also equip the fibers of $V$ with some reference Hilbertian structure. 

We fix a Hilbertian scalar product $r(s, y)$ on the fibers of $V$  belonging to $\mathcal{A}^{\infty}_{1}(M; L_{\rm h}(\cc^{N}))$. If $h$ is a Hermitian form on the fibers of $V$ as above, then 
$h(x, y)= r(x, y)\circ \tau(x, y)$, where $\tau\in \mathcal{A}^{\infty}_{1}(M; L(\cc^{N}))$.

\subsection{Phase-space symbols}
Recall that over $M_{\infty}$, the coordinates on $\be T^{*}M$ associated   to the frame $(\dfrac{dx}{x}, dy)$ are denoted by $(x, y, \zeta)$,   $\zeta= (\xi, \eta)$, corresponding to writing $\omega\in \be T^{*}M$ as $\xi\frac{dx}{x}+ \eta dy$. 

Similarly, passing to the cylinder picture, over $M_{\infty\:{\rm cyl}}$ the coordinates on $T^{*}M_{\rm cyl}$ associated to the frame $(ds, dy)$ are denoted by $(s, y,  \sigma, \eta)$ corresponding to writing $\omega\in T^{*}M_{\cyl}$ as $\sigma ds + \eta dy$.

 \subsubsection{Symbols on $\be T^{*}M$ with conormal regularity}\label{sec9.2.5}
 
 For $m\in \rr$ we denote by $S^{m}(\be T^{*}M)$ the space of functions $a$ on $\be T^{*}M$ such that 
 \beq\label{e9.1}
 \begin{aligned}
\p_{z}^{\alpha}\p_{\zeta}^{\beta}a(z, \zeta)&\in O(\langle \zeta\rangle^{m- |\beta|}),\ \ \alpha,\ \beta\in \nn^{n} \hbox{ over }M_{0},\\
(x\p_{x})^{\alpha}\p^{\beta}_{y}\p^{\gamma}_{\zeta}a(x, y, \zeta)&\in O(\langle \zeta\rangle^{m-|\gamma|}),\ \ (\alpha, \beta),\ \gamma\in \nn^{n}\hbox{ over }M_{\infty}.
\end{aligned}
  \eeq
  In other words $a$ has conormal regularity with respect to $\{x=0\}$  and symbolic estimates hold with respect to the b-dual variables $\zeta$.

We set $S^{-\infty}(\be T^{*}M)= \bigcap_{m\in \rr}S^{m}(\be T^{*}M)$, $S^{\infty}(\be T^{*}M)= \bigcup_{m\in \rr}S^{m}(\be T^{*}M)$. 
  
  For $r>0$ we denote by $S^{m}_{r}(\be T^{*}M)$ the subspace of $a\in S^{m}(\be T^{*}M)$ such that
  \[
  a(x, y, \zeta)= a_{\norm}(y, \zeta)+ x^{r}v(x, y, \zeta)\hbox{ over }M_{\infty},
  \]
  for $a_{\norm}\in \cinf(\be T^{*}M)$, $v\in S^{m}(\be T^{*}M)$.

    A symbol $a\in S^{m}(\be T^{*}M)$ is {\em classical} if there exist $a_{k}\in S^{m-k}(\be T^{*}M)$, homogeneous of degree $m-k$ in $\zeta$ in $|\zeta|\geq 1$ such that
\[
a- \sum_{k=0}^{N-1}a_{k}\in S^{m-N}(\be T^{*}M),\ \ \forall N\in \nn.
\]
We denote by $S^{m}_{\cl}(\be T^{*}M)$ the space of classical symbols. We note that while in e.g.~Sect.~\ref{secnot.1.1} ``classical'' refers to existence of an expansion at the boundary in the base space, here it refers to existence of an expansion at infinity in the fibers.

The quotient space $S^{m}_{\cl}(\be T^{*}M)/S^{m-1}_{\cl}(\be T^{*}M)$ is isomorphic to the space $S^{m}_{\rm h}(\be T^{*}M)$ of homogeneous symbols of degree $m$ in $\zeta$ and the image of $a$ under the quotient map is called the principal part of $a$ and denoted by $a_{\rm pr}$.
 
 We use the same notation for the subspaces $S^{m}_{\cl\:r}(\be T^{*}M)$ etc. 

%     For ease of notation we will in  the sequel  often omit the subscripts $\cl$. 

  \subsubsection{Symbol classes on $T^{*}M_{\cyl}$}\label{sec9.2.4}
   For $m\in \rr$  we  denote by $S^{m}_{\cl}(T^{*}M_{\cyl})$  the image of $S^{m}_{\cl}(\be T^{*}M)$ under the cylinder map. If $a\in S^{m}_{\cl}(T^{*}M_{\cyl})$   we have over $M_{\infty\:\cyl}$:
    \[
   \p_{s}^{\alpha}\p_{y}^{\beta}\p_{\zeta}^{\gamma}a(s, y, \sigma, \eta)\in O(\langle (\sigma, \eta)\rangle^{m- |\gamma|}), (\alpha, \beta), \gamma\in \nn^{n},
   \]
   so $S^{m}_{\cl}(T^{*}M_{\cyl})$ is the standard  class of uniform classical symbols on $T^{*}M_{\cyl}$.

 \subsection{$\b$-quantization}\label{sec9.3}
   
  We fix a quantization map $\Op_{\infty}$ on $M_{\norm\:\cyl}= \rr_{s}\times Y_y$:
   \[
   \Op_{\infty}: S^{m}(T^{*}M_{\norm\:\cyl})\to B(\cS(M_{\norm\:\cyl}),
   \]
   where $\cS(M_{\norm\:\cyl})$ is the Schwartz class on $M_{\norm\:\cyl}$, by
   \beq\label{app.e2}
   \Op_{\infty}(a)u(s, \cdot)= (2\pi)^{-1}\int \e^{\i(s- s')\sigma}\chi(s-s')\Op_{Y}(a)(s', \sigma)u(s', \cdot)ds'd\sigma,
   \eeq
   where  $\Op_{Y}(a)(s', \sigma)$ is a quantization map of $a(s, \sigma, \cdot)$  on  the compact manifold $Y$ constructed using  a finite atlas of $Y$  and $\chi\in \coinf(\rr)$ equals to $1$ near $0$.
   
   Thanks to the addition of the cutoff function $\chi(s-s')$ in \eqref{app.e2} we see that if $a\in S^{m}(T^{*}M_{\norm\:\cyl})$ then
   \beq\label{e9.3}
   \Op_{\infty}(a)\in B(\e^{-r|s|}H^{p}(M_{\norm\:\cyl}), \e^{-r|s|}H^{p-m}(M_{\norm\:\cyl})),\ \ p,\ r\in \rr.
   \eeq

    Similarly we fix a quantization map on the compact piece $M_{0}$:
\[
\Op_{0}: S^{\infty}(T^{*}M_{0})\to B(\coinf(M_{0}))
\]
We also fix  extension maps $E_{1}: \coinf(M_{\infty})\to \coinf(M_{\norm\:\cyl})$,  $E_{2}: S^{\infty}_{\cl}(T^{*}M_{\infty\:\cyl})\to S^{\infty}_{\cl}(T^{*}M_{\norm\:\cyl})$ and a restriction map $R_{1}:\cinf(M_{\norm\:\cyl})\to \cinf(M_{\infty})$. 
 
 Finally we fix a  partition of unity $1= \chi_{0}^{2}+ \chi_{\infty}^{2}$, where  $\chi_{0}\in \coinf(M_{0})$ and $\chi_{\infty}$ depends only on $s$ in $M_{\infty\:\cyl}$. Using the identification of $M_{\infty\:\cyl}$ with a subset of $M_{\norm\:\cyl}$, we set
\beq\label{e9.3b}
\begin{aligned}
\Op &: S^{\infty}_{\cl}(T^{*}M_{\cyl})\to  B(\coinf(M_{\cyl}))\\
&a\mapsto \chi_{0}\Op_{0}(a)\chi_{0}+ \chi_{\infty}R_{1}\Op_{\infty}(E_{2}a)E_{1}\chi_{\infty}.
\end{aligned}
\eeq
We identify $S_{\cl}^{\infty}(\be T^{*}M)$ with $S^{\infty}_{\cl}(T^{*}M_{\cyl})$ and $\coinf(M)$ with $\coinf(M_{\cyl})$ 
using the cylinder map. Using these identifications we  obtain  the quantization map:
\[
\Op: S^{\infty}_{\cl}(\be T^{*}M)\to B(\coinf(M)).
\]
\begin{definition}\label{def9.0}
For  $m\in \rr\cup\{\pm \infty\}$ and $r>0$  we set
\[
\Psi^{m}_{\rm s}(M)\defeq \Op(S^{m}_{\cl}(\be T^{*}M)),\quad \Psi^{m}_{{\rm s}\: r}(M)= \Op(S^{m}_{\cl\: r}(\be T^{*}M)).
\]
 \end{definition}
  If $A= \Op (a)\in\operw{m}{r}$, its principal symbol is  $\sigma_{\rm pr}(A)= a_{\rm pr}$. 
Usually in the $\b$-calculus framework, a smaller symbol space $S^{m}_{\b}(\be T^{*}M)$ is used where instead of \eqref{e9.1} one requires
\[
 \begin{aligned}
\p_{z}^{\alpha}\p_{\zeta}^{\beta}a(z, \zeta)&\in O(\langle \zeta\rangle^{m- |\beta|}), \alpha, \beta\in \nn^{n} \hbox{ over }M_{0},\\
\p_{x}^{\alpha}\p^{\beta}_{y}\p^{\gamma}_{\zeta}a(x, y, \zeta)&\in O(\langle \zeta\rangle^{m-|\gamma|}), (\alpha, \beta),\gamma\in \nn^{n}, \hbox{ over }M_{\infty},
\end{aligned}
  \]
i.e.~the symbols in $S^{m}_{\b}(\be T^{*}M)$ are smooth up to $\{x=0\}$ (``classical'' not only in the frequency variables but also in terms of behavior at the spacetime boundary). The space of operators on $M$ obtained by the same quantization map  is denoted by $\Psi^{m}_{\b\:{\rm s}}(M)$ and 
  sometimes called  the 'very small' $\b$-calculus.  
  
  We have  $\Diff^{m}_{\b}(M)\subset \Psi^{m}_{\b\:{\rm s}}(M)$ and $\Diff^{m}_{r}(M)\subset \operw{m}{r}$.

Because of the cutoffs and extension/restriction maps in \eqref{e9.3b}, $\oper{\infty}$ is not stable under products and adjoints. However one has
\begin{equation}\label{e9.3c}
\begin{aligned}
\Op(a)\circ \Op(b)&= \Op(a\moyal b)+ r_{-\infty},\\
\Op(a)^{*}&= \Op(Ta)+ r_{-\infty}
\end{aligned}
\end{equation}
where $\moyal: S^{m}_{\cl}(\be T^{*}M)\times S^{m'}_{\cl}(\be T^{*}M)\to S^{m+m'}_{\cl}(\be T^{*}M)$ is the Moyal product, $T: S^{m}_{\cl}(\be T^{*}M)\to S^{m}_{\cl}(\be T^{*}M)$ is antilinear with $Ta- \bar{a}\in S^{m-1}_{\cl}(\be T^{*}M)$, and $r_{-\infty}\in B^{-\infty}_{\rr}(M)$, see Def.~\ref{def9.1b} below.

\subsubsection{ Classes of operators}\label{sec9.2.1b}
To obtain a class  of operators stable under products and adjoints, it is necessary to add to $\oper{\infty}$ an ideal of smoothing operators. The action of this ideal on weighted Sobolev spaces is also important. 

In the sequel we fix an open interval $J\subset\rr$ defining a range of weight indices. We set $J(\epsilon)=\{r\in\rr:[r-\epsilon,r+\epsilon]\subset J\}$ for $0<\epsilon\ll 1$.
\begin{definition}\label{def9.1b}
 For $m\in \rr$  we set \[
B^{m}_{J}(M)= \bigcap_{p\in \rr, r\in J}B(H_{\b}^{p, r}(M); H_{\b}^{p-m, r}(M)), 
 \]
 \[
B^{-\infty}_{J}(M)= \bigcap_{m\in \rr}B^{m}_{J}(M).
 \]
 \end{definition}

We see that $\oper{m}\subset B ^{m}_{\rr}(M)$, and  changing the quantization map $\Op_{Y}$, the cutoff functions $\chi, \chi_{0}, \chi_{\infty}$ or the extension/restriction maps $E, R$ changes $\Op(a)$ by an error in $B^{-\infty}_{\rr}(M)$.
\def\logx{\langle \ln x\rangle}
The following facts are almost immediate:
\beq\label{e9.3d}
x^{-\alpha}B^{m}_{J}(M)x^{\alpha}= B^{m}_{J-\alpha}(M),\quad
x^{\alpha}B_{J}^{m}(M),\ B^{m}_{J}(M)x^{\alpha}\subset x^{\alpha/2}B^{m}_{J(\alpha/2)}(M)x^{\alpha/2}, 
\eeq
for $m\in \rr\cup\{-\infty\}$.

Moreover using symbolic calculus on $M_{\cyl}$ we easily obtain that if $a\in S^{m}_{\cl}(\be T^{*}M)$, then 
\[
\Op(xa)= x^{\12}\Op(\tilde{a})x^{\12}+ x^{\12}r_{-\infty}x^{\12},\ \ \tilde{a}\in S^{m}_{\cl}(\be T^{*}M),\ r_{-\infty}\in B_{\rr}^{-\infty}(M).
\]
We will also need spaces of operators that improve the decay at $\{x=0\}$. 

\begin{definition}\label{def9.1c}
 For $m\in \rr\cup \{-\infty\}$ we  set
 \[
 B^{m}_{J, 0^{+}}(M)= \bigcap_{\epsilon>0}\bigcup_{\delta>0}x^{\delta}B^{m}_{J(\epsilon)}(M)x^{\delta}.
 \]
 \end{definition}
In other words $a\in B^{m}_{J, 0^{+}}$ if for any $\epsilon>0$ there exists $\delta>0$ such that $a= x^{\delta}b x^{\delta}$ for $b\in B^{m}_{J(\epsilon)}(M)$. From \eqref{e9.3d} we see that
\begin{equation}
\label{e9.3dd}
\begin{aligned}
&B^{m_{1}}_{J}(M)B^{m_{2}}_{J, 0^{+}}(M),\ B^{m_{2}}_{J, 0^{+}}(M)B^{m_{1}}_{J}(M)
\subset B^{m_{1}+ m_{2}}_{J, 0^{+}}(M), \\
&x^{\alpha}B^{m}_{J}(M),\ B^{m}_{J}(M)x^{\alpha}
\subset B^{m}_{J, 0^{+}}(M),\ \ \forall \alpha>0.
\end{aligned}
\end{equation}

 \subsubsection{The case of  vector bundles}\label{sec9.3.1}

It is easy to extend  the calculus to vector bundles over $M$. We fix a  vector bundle $V\overset{\pi}{\to} M$ equipped with a Hilbertian structure as in \ref{sec9.2.2}.

The spaces $H_{\b}^{p, r}(M; V)$, $S^{m}_{(r)}(\be T^{*}M; L(V))$,  $\opersig{m}$, $\operwsig{m}{r}$, $B^{m}_{J}(M; V)$ and $B^{-\infty}_{J}(M; V)$ are then defined in the obvious way, using the Hilbertian norm on $V$.

\subsection{Symbolic fractional powers}\label{sec9.2b}
\subsubsection{$\lambda$-dependent $\b$-quantization}   \label{sec9.2b.1}
It is straightforward to consider symbols depending on a large real parameter $\lambda$.

In fact let us  denote by $l\in \rr$ a variable dual to $\lambda$ i.e.~let us write   the Fourier transform on $\rr_{l}$ as $\cF u(\lambda)= \int_{\rr}\e^{-\i l\lambda}u(l)dl$. We replace  $M$ by $\widetilde{M}= M\times\rr_{l}$ i.e.~replace $Y$  by $Y\times \rr_{l}$. Because of its simple product structure, the fact that $Y\times \rr_{l}$ is not compact does not cause any problem.

We denote by $\widetilde{S}_{\cl}^{m}(\be T^{*}M)$ resp.~$\widetilde{S}_{\cl\: r}^{m}(\be T^{*}M)$ the subspace of $S_{\cl}^{m}(\be T^{*}\widetilde{M})$ resp.~$S_{\cl\: r}^{m}(\be T^{*}\widetilde{M})$of symbols independent of $l$. 

The spaces $\Op(\widetilde{S}_{\cl}^{m}(T^{*}M))$ resp.~$\Op(\widetilde{S}_{\cl\: r}^{m}(T^{*}M))$ are denoted by $\widetilde{\Psi}^{m}_{{\rm s}}(M)$ resp.~$\widetilde{\Psi}^{m}_{{\rm s}\: r}(M)$. 

After conjugation  by the  Fourier transform $\cF$, it  coincides with the subspace of $\Psi^{m}_{{\rm s}}(\widetilde{M})$ resp.~$\Psi^{m}_{{\rm s}\: r}(\widetilde{M})$ of operators commuting with translations in $l$. 
This commutation property is preserved by inverses, parametrices etc. 
\subsubsection{Ideal of $\lambda$-dependent smoothing operators}\label{sec9.2b.2}
We define the sets $\widetilde{B}^{-\infty}_{J}(M)$ as the set of maps $\rr\ni \lambda\mapsto W(\lambda)\in B^{-\infty}_{J}(M)$ such that   \[
\|W(\lambda)\|_{B(H^{p, r }(M), H^{p-m, r}(M))}\in O(\langle\lambda\rangle^{-m}),\ \ p\in \rr,\ m\in \nn,\ r\in J.
\]

\subsubsection{Symbolic fractional powers}\label{sec9.2b.3}
For $0<\alpha<1, a, m>0$ we have:
\beq\label{e9.0}
a^{-\alpha}= C_{\alpha}^{-1}\int_{0}^{+\infty}\lambda^{-(1+\alpha m)+ m}(a+ \lambda^{m})^{-1}d\lambda,
\eeq
where $C_{\alpha}= m^{-1}\int_{0}^{+\infty}(1+ t)^{-1}t^{-\alpha}dt$. 

Let now $a\in \oper{m}$ for $m>0$ be  elliptic with $\Re \sigma_{\rm pr}(a)\geq C|\zeta|^{m}$ for $C>0$. Then  $a+ \lambda^{m}$ is elliptic in $\widetilde{\Psi}^{m}_{{\rm s}}(M)$ and  we can construct  a parametrix $(a+ \lambda^{m})^{(-1)}\in \widetilde{\Psi}^{-m}_{{\rm s}}(M)$
such that $(a+ \lambda^{m})\circ (a+ \lambda^{m})^{(-1)}= \one + \widetilde{B}^{-\infty}_{\rr}(M)$. 

If $a\in \operw{m}{r}$ then $a+ \lambda^{m}\in \widetilde{\Psi}^{m}_{{\rm s}\: r}(M)$ and $(a+ \lambda^{m})^{(-1)}\in \widetilde{\Psi}^{-m}_{{\rm s}\:r}(M)$

Following \eqref{e9.0} we define  for $0<\alpha<1$ the 'symbolic fractional power' $a^{(-\alpha)}$ by
\[
a^{(-\alpha)}\defeq C_{\alpha}^{-1}\int_{0}^{+\infty}\lambda^{-(1+\alpha m)+ m}(a+ \lambda^{m})^{(-1)}d\lambda.
\]
Similarly we set $a^{(n- \alpha)}= a^{n}\circ a^{(-\alpha)}$ for $\alpha>0, n\in \nn$ and 
we obtain that  $a^{(\alpha)}\in \oper{m\alpha}+ B^{-\infty}_{\rr}(M)$ for $\alpha\in \rr$, with $\sigma_{\rm pr}(a^{(\alpha)})= (\sigma_{\rm pr}(a))^{\alpha}$. If $a\in \operw{m}{r}$, then similarly $a^{(\alpha)}\in \operw{m\alpha}{r}+ B^{-\infty}_{\rr}(M)$.

In particular if $a\in\Diff^{2}_{\b}(M)$ is elliptic with $\Re \sigma_{\rm pr}(a)\geq C|(\sigma, \eta)|^{2}$ for $C>0$, we have
\[
a^{(-\12)}= 2\pi^{-1}\int_{0}^{+\infty}(a+ \lambda^{2})^{(-1)}d\lambda,\quad a^{(\12)}= a a^{(-\12)}, 
\]
and $(a^{(\12)})^{2}= a+ B^{-\infty}_{\rr}(M)$.

\subsection{Normal operators}\label{sec9.4}
We recall that $M_{\norm}$ and $d\mu_{\norm}$ are  defined in Subsect.~\ref{sec9.1}.

If $V\overset{\pi}{\to} M$ is a Hermitian bundle  we denote by $V_{\norm}\overset{\pi}{\to} M_{\norm}$ the same bundle equipped with the fiber Hermitian form $h_{\norm}(y)$,  see \ref{sec9.2.2}. 
 
We denote by $S^{m}(\be T^{*}M_{\norm})$ the space  of functions $a_{\norm}\in \cinf(\overline{M_{\norm}}\times \rr^{n}_{\zeta})$ {\em independent of} $x$ such that
\[
\p_{y}^{\beta}\p_{\zeta}^{\gamma}a_{\norm}(y, \zeta)\in O(\langle \zeta\rangle^{m-|\gamma|}),\ \ \beta\in \nn^{n-1},\ \gamma\in \nn^{n},
\]
and by $S^{m}_{\cl}(\be T^{*}M_{\norm})$ the corresponding subspace of classical symbols.

Symbols in $S^{m}(\be T^{*}M_{\norm})$ are quantized with the quantization map $\Op_{\norm}$ identical to $\Op_{\infty}$ in Subsect.~\ref{sec9.3} using the cylinder map $M_{\norm\:\cyl}\tosim M_{\norm}$. We have
\[
T_{s}^{-1}\Op_{\norm}(a_{\norm})T_{s}= \Op_{\norm}(a_{\norm}),\ \ a_{\norm}\in S^{m}(\be T^{*}M_{\norm}),
\]
and we denote by $\opern{m}$ the image of $S^{m}_{\rm cl}(\be T^{*}M_{\norm})$ under $\Op_{\norm}$ and the appropriate cylinder maps. Operators in $\opern{m}$ are normal models for operators in $\Psi^{m}_{{\rm s}\:1}(M)$, including those whose coefficients are only conormal at the boundary.

The weighted Sobolev spaces $H^{m, r}_{\b}(M_{\norm})$ are defined as before. We set also  
\[
a\in B^{m}_{J}(M_{\norm})\hbox{ if }a\in \bigcap_{p\in \rr, r\in J}B(H_{\b}^{p, r}(M_{\norm}); H^{p-m, r}_{\b}(M_{\norm})), 
 \]
 \[
B^{-\infty}_{J}(M_{\norm})= \bigcap_{m\in \rr}B^{m}_{J}(M_{\norm}),
 \]
 \[
 B^{m}_{J, 0^{+}}(M_{\norm})= \bigcap_{\epsilon>0}\bigcup_{\delta>0}x^{\delta}B^{m}_{J(\epsilon)}(M_{\norm})x^{\delta}.
 \]
 
\subsubsection{Normal operators}

If $P\in \Diff^{m}_{1}(M)$ then its normal operator $P_{\norm}$ acts on $M_{\norm}$. To obtain a differential operator acting on $M$, it is necessary to insert appropriate cutoff functions near $\p M$ and embeddings $M_{\infty}\hookrightarrow M_{\norm}$.  In the sequel  $\chi_{\infty}$ will denote any cutoff function in $\cinf(M)$ such that $\chi_{\infty}= 1$ near $\p M$ and $1- \chi_{\infty}\in \coinf(M)$. The embedding $M_{\infty}\hookrightarrow M_{\norm}$ will be forgotten to lighten the notation. 

We first need to define an appropriate class of normal operators. As before we fix an open interval $J$ as in \ref{sec9.2.1b}. 
\begin{definition}\label{def9.2}
 We denote by $N^{m}_{J}(M_{\norm})$ the set of dilation-invariant operators $n$ ($T_s^{-1}nT_s=n$) acting on $M_{\norm}$ of the form
 \[
 n= \tilde{n}+ r_{-\infty},\ \ \tilde{n}\in \opern{m},\ r_{-\infty}\in B^{-\infty}_{J}(M_{\norm}).
 \]
 where  for any $h\in \coinf(\rr^{+}_{x})$ with $h=1$ near $x=0$ one has:
 \[
 \ad_{h}r_{-\infty}\in B^{-\infty}_{J, 0^{+}}(M_{\norm}).
 \]
 An element of $N^{m}_{J}(M_{\norm})$ is called a {\em normal operator} of order $m$.
 
 A normal operator $n$ is {\em elliptic} if $\tilde{n}$ is elliptic, i.e.~if $|\sigma_{\rm pr}(\tilde{n})(y, \zeta)|\geq C |\zeta|^{m}$ for $y\in Y,\ \zeta\in \rr^{n}$.
 \end{definition}

 Here $\ad_{b}a= [b, a]$ is the commutator.  Clearly $\opern{m}\subset N^{m}_{\rr}(M_{\norm})$. Changing $\tilde{n}$ by an element of $\opern{-\infty}$ does not change the properties of $r_{-\infty}$.
 
 The condition on $\ad_{h}r_{-\infty}$ expresses the approximate locality of $n$. Its usefulness will be seen in the sequel.  In the lemma below $(a^{*})_{\norm}$ denotes the adjoint of $a$ in $L^{2}(M_{\norm}, d\mu_{\norm}; (\cdot| \cdot)_{V_{\norm}})$.
 \begin{lemma}\label{lemma9.1}
 \ben
 \item if $n\in N^{m}_{ J}(M_{\norm})$  then $n^{*}\in N^{m}_{-J}(M_{\norm})$.
 \item 
 If $n_{i}\in N^{m_{i}}_{ J}(M_{\norm})$ then  $n_{1}n_{2}\in N^{m_{1}+ m_{2}}_{ J}(M_{\norm})$
 \item if $n\in N^{m}_{ J}(M_{\norm})$ is  elliptic  with $n: H^{p, r}_{\b}(M_{\norm})\tosim H^{p-m, r}_{\b}(M_{\norm})$ for $p\in \rr,\ r\in J$, then $n^{-1}\in N^{-m}_{ J}(M_{\norm})$.
\item $x^{-\alpha}N^{m}_{ J}(M_{\norm})x^{\alpha}= N^{m}_{ J-\alpha}(M_{\norm})$.
\een
\end{lemma}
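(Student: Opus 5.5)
Throughout, $n=\tilde n+r_{-\infty}$ denotes the decomposition of Def.~\ref{def9.2}. The plan is to work in the cylinder picture $M_{\norm\:\cyl}=\rr_s\times Y$, where dilation invariance becomes translation invariance in $s$ and $H^{p,r}_{\b}=\e^{-rs}H^{p}$. Each of (1)--(4) is then checked separately on the two pieces $\tilde n\in\opern{m}$ and $r_{-\infty}\in B^{-\infty}_J(M_{\norm})$. The algebraic properties of the $B$-classes are recorded in \eqref{e9.3d} and \eqref{e9.3dd}; only the approximate locality condition $\ad_h r_{-\infty}\in B^{-\infty}_{J,0^+}$ needs real work.

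\textbf{Items (4) and (1).} Item (4) is immediate. Conjugation by $x^{\alpha}$ maps $\Op_{\norm}(a)$ to $\Op_{\norm}(a_\alpha)$, with $a_\alpha(y,\xi,\eta)=a(y,\xi+\i\alpha,\eta)$ a classical symbol of the same order, because the cutoff $\chi(s-s')$ in \eqref{app.e2} makes the kernel compactly supported near the diagonal. Conjugation also maps $B^{-\infty}_{J}$ to $B^{-\infty}_{J-\alpha}$ by \eqref{e9.3d}. Finally, $\ad_h$ commutes with conjugation by $x^{\alpha}$ since $h$ is a function of $x$. For item (1), note that the dual of $H^{p,r}_{\b}$ with respect to $d\mu_{\norm}$ is $H^{-p,-r}_{\b}$, so $r_{-\infty}^*\in B^{-\infty}_{-J}$. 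We also have $(\ad_h r_{-\infty})^*=-\ad_{\bar h}r_{-\infty}^*$. The class $x^{\delta}B_{J(\epsilon)}x^{\delta}$ is mapped to $x^{\delta}B_{-J(\epsilon)}x^{\delta}$ under adjoints, and $(-J)(\epsilon)=-J(\epsilon)$. The adjoint of $\tilde n$ lies in $\opern{m}$ modulo a smoothing translation-invariant remainder by \eqref{e9.3c}. Dilation invariance is preserved because $T_s$ is unitary up to the invariant density.

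\textbf{Item (2).} Expand $n_1n_2$ as $\tilde n_1\tilde n_2+\tilde n_1r_2+r_1\tilde n_2+r_1r_2$. The first term is $\Op_{\norm}(a_1\moyal a_2)$ plus a remainder. This remainder is translation-invariant with Schwartz-type kernel, hence lies in $B^{-\infty}_{\rr}$, and it has compactly supported kernel in $s-s'$. Because of that support, its commutator with $h$ has kernel supported in a compact $s$-region, which lies in $x^{\delta}B x^{\delta}$ for every $\delta$. The other three terms lie in $B^{-\infty}_J$, since $\opern{m}\subset B^m_{\rr}$. For their commutators we use the Leibniz rule $\ad_h(ab)=(\ad_h a)b+a(\ad_h b)$. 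The term $\ad_h\tilde n_i$ is a compactly localized smoothing-order-lowering operator, so it lies in $B^{m-1}_{\rr,0^+}$ by the same support argument. The products are then handled by \eqref{e9.3dd}.

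\textbf{Item (3), the main obstacle.} First construct a parametrix $q\in\opern{-m}$ with $\tilde nq=\one-s$, where $s$ is smoothing with compact $s-s'$ support; this uses symbolic inversion on the cylinder. Then write
\[
n^{-1}=q+n^{-1}(s-r_{-\infty}q)=:q+r'.
\]
Here $r'\in B^{-\infty}_J$ because $n^{-1}\in B^{-m}_J$ by hypothesis and $s-r_{-\infty}q\in B^{-\infty}_J$. The difficulty is the approximate locality of $r'$, which requires showing $\ad_h n^{-1}\in B^{-m}_{J,0^+}$. The plan is the identity
\[
\ad_h n^{-1}=-n^{-1}(\ad_h n)n^{-1}.
\]
Its middle factor $\ad_h n$ lies in $B^{m}_{J,0^+}$ by the arguments for (2). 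The point to control is that $x^{\delta}B_{J(\epsilon)}x^{\delta}$ is stable under left and right multiplication by $n^{-1}$. To show this, rewrite $n^{-1}x^{\delta}=x^{\delta}(x^{-\delta}n^{-1}x^{\delta})$. By (4), and by the invertibility of $n$ on every weight in $J$, the conjugate $x^{-\delta}n^{-1}x^{\delta}$ is bounded on $H^{p,r}_{\b}$ for $r\in J(\epsilon)$ once $\delta<\epsilon$. This yields $\ad_h n^{-1}\in B^{-m}_{J,0^+}$. Combining it with the locality of $q$, we get $\ad_h r'\in B^{-\infty}_{J,0^+}$. Finally, dilation invariance of $n^{-1}$ follows from that of $n$.
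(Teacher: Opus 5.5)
Your proposal is correct and follows essentially the same route as the paper: duality of weighted $\b$-Sobolev spaces for adjoints, the four-term expansion of $n_1n_2$ handled by the Leibniz rule and \eqref{e9.3dd}, and for (3) a parametrix together with $\ad_h n^{-1}=-n^{-1}(\ad_h n)n^{-1}$ and a weight shift by $x^{\delta}$. The only imprecision is in (4): symbols need not be analytic in $\xi$, so $a(y,\xi+\i\alpha,\eta)$ must be read as the asymptotic Taylor sum, and $x^{-\alpha}\Op_{\norm}(a)x^{\alpha}=\Op_{\norm}(a_\alpha)$ then holds only modulo a smoothing, dilation-invariant remainder whose kernel is compactly supported in $s-s'$; this remainder is harmlessly absorbed into the $r_{-\infty}$ part.
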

\proof Statement (4) is obvious. 
Let $n= \tilde{n}+ r_{-\infty} \in N^{m}_{ J}(M_{\norm})$.  
Weighted Sobolev duality gives $(B^{-\infty}_{J}(M_{\norm}))^{*}=B^{-\infty}_{-J}(M_{\norm})$, hence $(r_{-\infty})^{*}_{\norm}\in B^{-\infty}_{-J}(M_{\norm})$. By symbolic calculus we obtain then that  
$(\tilde{n})^{*}_{\norm}= \tilde{b}+ \tilde{r}_{-\infty}$, where $\tilde{b}\in \opern{m}$ and $\tilde{r}_{-\infty}\in B^{-\infty}_{\rr}(M_{\norm})$ with $\ad_{h}\tilde{r}_{-\infty}\in x^{\12}B^{-\infty}_{\rr}(M_{\norm})x^{\12}$.

It remains to prove that $\ad_{h}(r_{-\infty})^{*}_{\norm}\in B^{-\infty}_{-J,0^{+}}(M_{\norm})$. This follows from $(B^{-\infty}_{J,0^{+}}(M_{\norm}))^{*}=B^{-\infty}_{-J,0^{+}}(M_{\norm})$ and $\ad_h(r_{-\infty})^{*}_{\norm}=-(\ad_{\bar h}r_{-\infty})^{*}_{\norm}$.

Let us now prove (2). If $n_{i}= \tilde{n}_{i}+ r_{-\infty, i}\in N^{m_{i}}_{J}(M_{\norm})$ then 
\beq\label{e9.-1}
n_{1}n_{2}=\tilde{n}_{1}\tilde{n}_{2}+\tilde{n}_{1}r_{-\infty, 2}+ r_{-\infty, 1}\tilde{n}_{2}+ r_{-\infty, 1}r_{-\infty, 2}.
\eeq
By symbolic calculus we have $\tilde{n}_{1}\tilde{n}_{2}= \tilde{n}_{3}+ r_{-\infty, 3}$ for $\tilde{n}_{3}\in \opern{m_{1}+ m_{2}}$ and $r_{-\infty, 3}\in B^{-\infty}_{\rr}(M_{\norm})$, $\ad_{h}r_{-\infty, 3}\in x^{\12}B^{-\infty}_{\rr}(M_{\norm})x^{\12}\subset B^{-\infty}_{J, 0^{+}}(M_{\norm})$.
 Therefore it suffices to consider the other terms in the RHS of \eqref{e9.-1}. Each of these terms belongs to $B^{-\infty}_{J}(M_{\norm})$. Using that $\ad_{h}\tilde{n}_{i}\in x^{\12}B^{m_{i}-1}_{\rr}(M_{\norm})x^{\12}$, we obtain similarly that their commutators with $h$ belong to $B^{-\infty}_{J, 0^{+}}(M_{\norm})$. This completes the proof of (2).
 
 Let us now prove (3). Let $\tilde{n}^{(-1)}\in \opern{-m}$ a parametrix for $\tilde{n}$, so that
 $\tilde{n}\tilde{n}^{(-1)}= \one +\tilde{r}_{-\infty}$, for $\tilde{r}_{-\infty}\in B^{-\infty}_{\rr}(M_{\norm})$.  We have
$ n\tilde{n}^{(-1)}= \one + \tilde{r}_{-\infty} + r_{-\infty}\tilde{n}^{(-1)}$, hence
\[
n^{-1}= \tilde{n}^{(-1)}- n^{-1}(\tilde{r}_{-\infty}+ r_{-\infty}\tilde{n}^{(-1)})\eqdef \tilde{n}^{(-1)}+ n^{-1}\hat{r}_{-\infty}.
\]
We first note that $\hat{r}_{-\infty}$ and hence $n^{-1}\hat{r}_{-\infty}$ belong to $B^{-\infty}_{J}(M_{\norm})$.
Next we have
\beq\label{e9.4b}
\begin{aligned}
\ad_{h}\tilde{n}&\in x^{\12}B^{m-1}_{\rr}(M)x^{\12},\quad \ad_{h}\tilde{n}^{(-1)}\in x^{\12}B^{-m-1}_{\rr}(M)x^{\12},\\
\ad_{h}\tilde{r}_{-\infty}&\in x^{\12}B^{-\infty}_{\rr}(M_{\norm})x^{\12},\quad \ad_{h}r_{-\infty}\in B^{-\infty}_{J, 0^{+}}(M_{\norm}),
\end{aligned}
\eeq
which implies that $\ad_{h}\hat{r}_{-\infty}\in B^{-\infty}_{J, 0^{+}}(M_{\norm})$. We deduce also from \eqref{e9.4b} that $\ad_{h}n\in B^{m-1}_{J, 0^{+}}(M_{\norm})$. We write $\ad_{h} n^{-1}= - n^{-1}\ad_{h}n n^{-1}$. Using that  $n^{-1}\in B^{-m}_{J}(M_{\norm})$ and
 $x^{-\delta}B^{-m}_{J}(M_{\norm})x^{\delta}= B^{-m}_{J-\delta}(M_{\norm})$, as in \eqref{e9.3d}, we obtain that $\ad_{h}n^{-1}\in B^{-m-1}_{J, 0^{+}}(M_{\norm})$. This implies that $\ad_{h}(n^{-1}\hat{r}_{-\infty})\in B^{-\infty}_{J, 0^{+}}(M_{\norm})$ and completes the proof of (3). \qed

\subsection{Extended calculus}\label{sec9.4b}

We recall from \ref{sec9.2.1}  that $M$ is equipped with a density $d\mu$ as in \ref{sec9.2.1}  and that the  vector bundle $V\overset{\pi}{\to}M$ is equipped with a fiber Hermitian structure $h$  as in \ref{sec9.2.2}.

\begin{definition}\label{def9.3} 
Let $m\in \rr$ and $J\subset \rr$ an open interval as in \ref{sec9.2.1b}. We denote by $\Psi^{m}_{J}(M)$ the set of operators $a$  acting on $M$ such that there exists  $a_{\norm}\in N^{m}_{ J}(M_{\norm})$, called   the {\em normal operator }of $a$ such that 
\beq\label{bilie1}
 a= \chi_{\infty}a_{\norm}\chi_{\infty}+ x^{\12}rx^{\12}+r_{-\infty},
 \eeq
 where $r\in \Psi^{m}_{\rm s}(M)$ and $r_{-\infty}\in B^{-\infty}_{J, 0^{+}}(M)$.
 
From Def.~\ref{def9.2} we have $a_{\norm}= \tilde{a}_{\norm}+ \tilde{r}_{-\infty\norm}$ with $\tilde{a}_{\norm}\in \opern{m}$  and $\tilde{r}_{-\infty\norm}\in B^{-\infty}_{J}(M_{\norm})$. We set $\tilde{a}= \chi_{\infty}\tilde{a}_{\norm}\chi_{\infty}+ x^{\12}rx^{\12}$ so that
\beq\label{bilie2}
a= \tilde{a}+ \tilde{r}_{-\infty},\ \ \tilde{a}\in \operw{m}{1},\ \tilde{r}_{-\infty}\in B^{-\infty}_{J}(M).
\eeq
We define the {\em principal symbol} of $a$ as $\sigma_{\rm pr}(a)(x, y, \zeta)\defeq \sigma_{\rm pr}(\tilde{a})(x, y, \zeta)$ and  say that $a$ is {\em elliptic} if 
 \[
 |\sigma_{\rm pr}(\tilde{a})|(x, y, \zeta)\geq C|\zeta|^{m}\hbox{ for }(x, y, \zeta)\in M\times \rr^{n}.
 \]
\end{definition}
 
\begin{lemma}\label{lemma9.2b}
 \ben
 \item The normal operator $a_{\norm}$ of $a$ is unique.
 \item If  $a\in \Psi^{m}_{J}(M)$, then $\sigma_{\rm pr}(a_{\norm})(y, \zeta)= \sigma_{\rm pr}(a)(0, y, \zeta)$. In particular  if $a$ is elliptic then $a_{\norm}$ is elliptic.
 \item One can replace $\chi_{\infty}$ in Def.~\ref{def9.3} (2) by any similar cutoff function $\tilde{\chi}_{\infty}$.
 \item  $\operw{m}{1}\subset \Psi^{m}_{ \rr}(M)$.
 \item $x^{-\alpha}\Psi^{m}_{J}(M)x^{\alpha}= \Psi^{m}_{J-\alpha}$.
 \een
\end{lemma}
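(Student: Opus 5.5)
We take the five statements of Lemma~\ref{lemma9.2b} in the order (1), (3), (2), (4), (5). Items (3), (4) and (5) are essentially bookkeeping; uniqueness (1) carries the content.

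For (1), suppose $a=\chi_{\infty}a_{\norm}\chi_{\infty}+x^{\12}rx^{\12}+r_{-\infty}=\chi_{\infty}a'_{\norm}\chi_{\infty}+x^{\12}r'x^{\12}+r'_{-\infty}$. Set $n=a_{\norm}-a'_{\norm}\in N^{m}_{J}(M_{\norm})$ and fix $u\in\coinf(M_{\norm})$. We pass to the cylinder picture and translate, $u_{s}=T_{s}u$, pushing the support towards $x=0$. Dilation invariance gives $\|\chi_{\infty}n\chi_{\infty}u_{s}\|_{H^{p-m,r}_{\b}}\sim e^{-rs}\|nu\|$, up to the error from $\chi_{\infty}$. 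That error is controlled by $\ad_{h}$ of the residual part of $n$, which lies in $B^{-\infty}_{J,0^{+}}$, while the pseudodifferential part is almost local. On the other side, $x^{\12}(r-r')x^{\12}$ and $r_{-\infty}-r'_{-\infty}\in B_{J,0^+}$ gain a factor $x^{\delta}\sim e^{-\delta s}$ in norm, relative to the weight $e^{-rs}$. Letting $s\to\infty$ forces $nu=0$. Here we use the weight-$\epsilon$ flexibility in Def.~\ref{def9.1c}: pick $r\in J(\epsilon)$. The main obstacle is making the cutoff commutator estimate uniform in $s$. We plan to handle it by writing $\chi_{\infty}n\chi_{\infty}u_s-nu_s=(\chi_\infty-1)nu_s+\chi_\infty n(\chi_\infty-1)u_s$, where the second term vanishes for large $s$. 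The first term equals $(\chi_\infty-1)[n,h]u_s$ for a suitable $h\in\coinf$ with $hu_s=u_s$, and hence decays, by the $\ad_h$ hypothesis in Def.~\ref{def9.2}.

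For (3), observe that $(\chi_\infty-\tilde\chi_\infty)a_\norm\chi_\infty$ has compactly supported cutoffs on the left. Split $a_\norm=\tilde a_\norm+r_{-\infty\norm}$. The $\tilde a_\norm$ part is a $\Psi_{\rm s}$ operator compactly supported in one factor, so it can be absorbed into $x^{\12}rx^{\12}$ modulo $B^{-\infty}_{\rr,0^+}$, since $x^{-1}$ is bounded there. The residual part is $(\chi_\infty-\tilde\chi_\infty)r_{-\infty\norm}\chi_\infty$. Because the left factor has compact support, it gains $x^\delta$ on the left. Commuting a cutoff through via $\ad_h$ gains $x^\delta$ on the right as well, so this term lies in $B^{-\infty}_{J,0^+}$.

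For (2), the principal symbol of $\tilde a$ in \eqref{bilie2} restricted to $x=0$ is that of $\chi_\infty\tilde a_\norm\chi_\infty$, since $x^{\12}rx^{\12}$ has symbol vanishing at $x=0$. The restriction is therefore $\sigma_{\rm pr}(\tilde a_\norm)$; ellipticity at $x=0$ then follows.

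For (4), we write $a=\Op(a_0+xv)$ for $a\in\operw{m}{1}$, and set $a_\norm=\Op_\norm(a_0)$. The difference $\Op(a_0)-\chi_\infty\Op_\norm(a_0)\chi_\infty$ lies in $\Psi_{\rm s}$ compactly supported plus $B^{-\infty}_{\rr}$ errors, which can be placed into the remainder terms after using the $x$-support. The term $\Op(xv)$ equals $x^{\12}\Op(\tilde v)x^{\12}+x^{\12}r_{-\infty}x^{\12}$ by the identity stated after \eqref{e9.3d}, and this lies in $B_{\rr,0^+}$.

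For (5), conjugate the decomposition \eqref{bilie1} by $x^{\alpha}$. The normal part uses Lemma~\ref{lemma9.1} (4). For the $x^{\12}rx^{\12}$ part, $x^{-\alpha}rx^{\alpha}\in\Psi_{\rm s}^{m}$ modulo $B^{-\infty}_{\rr}$, since conjugation by $e^{\alpha s}$ preserves uniform symbols. The residual part uses \eqref{e9.3d}, which shifts $J$ to $J-\alpha$. The reverse inclusion follows by conjugating back.
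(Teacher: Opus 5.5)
Your proposal is correct and follows essentially the paper's route: you prove (1) by translating test functions to the boundary with $T_s$ so that the non-normal terms vanish in the limit, (2) by reading off $\sigma_{\rm pr}$ at $x=0$, (3) by splitting $a_{\norm}=\tilde a_{\norm}+r_{-\infty}$, and (4)--(5) by direct bookkeeping. The paper's version is slightly leaner: for (1) it uses the weak limit $(u|a_{\norm}v)_{M_{\norm}}=\lim_{s\to\infty}(T_su|aT_sv)_M$, so the cutoffs act trivially and your $\ad_{h}$ commutator estimate is not needed; for (3) the one-sided gain $x^{N}B^{-\infty}_{J}(M)\subset B^{-\infty}_{J,0^{+}}(M)$ from \eqref{e9.3dd} already suffices.
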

\proof  (4) and (5)  are obvious. 
 
 To prove (1) we note that if $u, v\in \coinf(M_{\norm})$, then  $T_{s}u, T_{s}v\in \coinf(M)$ for $s\gg 1$ and 
$(u|a_{\norm}v)_{M_{\norm}}= \lim_{s\to \infty}(T_{s}u |a T_{s}v)_{M}$. 

To prove (2) we use that $a= \tilde{a}+ r_{-\infty}= \chi_{\infty}a_{\norm}\chi_{\infty}+ x^{\beta}\tilde{r}x^{\beta}+ \tilde{r}_{-\infty}$, so $\sigma_{\rm pr}(a)= \sigma_{\rm pr}(\tilde{a})= \chi_{\infty}^{2}\sigma_{\rm pr}(a_{\norm})+ x^{2\beta}\sigma_{\rm pr}(\tilde{r})$. 
Finally  
 \[
 (\chi_{\infty}-\tilde{\chi}_{\infty })a_{\norm}\chi_{\infty}= (\chi_{\infty}-\tilde{\chi}_{\infty })\tilde{a}_{\norm}\chi_{\infty}+ (\chi_{\infty}-\tilde{\chi}_{\infty })r_{-\infty}\chi_{\infty}.
 \]
The first term in the RHS  belongs to $x^{N}\oper{m}\subset x^{N/2}\oper{m}x^{N/2}+ x^{N/2}B^{-\infty}_{\rr}(M)x^{N/2}$ for any $N\in \nn$. The second term belongs to $x^{N}B^{-\infty}_{J}(M)\subset B^{-\infty}_{J, 0^{+}}(M)$. We use the same argument for $\chi_{\infty}a_{\norm}(\chi_{\infty}- \tilde{\chi}_{\infty})$, which proves (3). \qed

 \begin{remark}\label{rem9.2}
 The definition of  $a_{\norm}$ implies that $(x^{r}a)_{\norm}= 0$  if $r>0$. In other words   the definition of  $a_{\norm}$ is useful only if  $a_{\norm}\neq 0$. This fact is similar to the notion of the principal symbol  of a pseudodifferential operator on a compact manifold: if $a\in \Psi^{m}(M)$ its principal symbol in    $S^{m'}_{\cl }(T^{*}M)$ is equal to $0$ for $m'>m$.
 
% Therefore we extend the definition of the normal operator to $x^{\alpha}\Psi^{m}_{b, J}(M)$ by setting:
% \[
% (x^{\alpha}a)_{\norm}\defeq x^{\alpha} a_{\norm}, \ a\in \Psi^{m}_{ J}(M).
% \]
 \end{remark}

 \subsubsection{Adjoints}
 
\begin{lemma}\label{lemma9.2a}
 If $a\in \Psi^{m}_{ J}(M)$ then $a^{*}\in \Psi^{m}_{-J}(M)$ with
 \[
( a^{*})_{\norm}= (a_{\norm})^{*},
 \]
 where $(a_{\norm})^{*}$ denotes the adjoint of $a_{\norm}$ in $L^{2}(M_{\norm}, d\mu_{\norm}; (\cdot| \cdot)_{V_{\norm}})$.
 \end{lemma}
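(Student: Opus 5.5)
The plan is to work directly from the decomposition \eqref{bilie1}, writing $a=\chi_{\infty}a_{\norm}\chi_{\infty}+x^{\12}rx^{\12}+r_{-\infty}$ with $a_{\norm}\in N^{m}_{J}(M_{\norm})$, $r\in\Psi^{m}_{\rm s}(M)$ and $r_{-\infty}\in B^{-\infty}_{J,0^{+}}(M)$. Taking adjoints in $L^{2}(M,d\mu;(\cdot|\cdot)_{V})$ term by term gives
\[
a^{*}=\chi_{\infty}a_{\norm}^{\dagger}\chi_{\infty}+x^{\12}r^{*}x^{\12}+(r_{-\infty})^{*},
\]
where $a_{\norm}^{\dagger}$ denotes the adjoint of $a_{\norm}$ computed on $M_{\infty}\subset M_{\norm}$ with the density $d\mu$ and the Hermitian form $h$ (we assume $\chi_{\infty}$ and $x$ real). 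The task is then to show that each term has the form required for membership in $\Psi^{m}_{-J}(M)$, with normal operator $(a_{\norm})^{*}$.

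\textbf{Remainder terms.} Weighted Sobolev duality, using $H^{p,r}_{\b}(M)^{*}\simeq H^{-p,-r}_{\b}(M)$ via the pairing induced by $d\mu$ and $h$, gives $B^{m}_{J(\epsilon)}(M)^{*}=B^{m}_{-J(\epsilon)}(M)$. Hence $(x^{\delta}bx^{\delta})^{*}=x^{\delta}b^{*}x^{\delta}$, and so $(r_{-\infty})^{*}\in B^{-\infty}_{-J,0^{+}}(M)$. For the middle term, \eqref{e9.3c} gives $r^{*}=\Op(Tr)+r'_{-\infty}$ with $r'_{-\infty}\in B^{-\infty}_{\rr}(M)$. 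Then $x^{\12}r'_{-\infty}x^{\12}$ lies in $B^{-\infty}_{-J,0^{+}}(M)$, while $x^{\12}\Op(Tr)x^{\12}$ keeps the allowed form.

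\textbf{Leading term.} Here we compare the density $d\mu=m\,\frac{dx}{x}dy$ and the form $h$, whose coefficients lie in $\cA^{\infty}_{1}(M)$, with their normal parts $m_{\norm}$ and $h_{\norm}$. Write $m=m_{\norm}(1+xw)$ and $h=h_{\norm}(1+x\tau')$. Then
\[
a_{\norm}^{\dagger}=\beta^{-1}(a_{\norm})^{*}\beta,\qquad \beta=1+x\theta,\quad \theta\in\cA^{\infty},
\]
where $(a_{\norm})^{*}$ is the $L^{2}(M_{\norm},d\mu_{\norm};(\cdot|\cdot)_{V_{\norm}})$ adjoint. By Lemma~\ref{lemma9.1}(1), $(a_{\norm})^{*}\in N^{m}_{-J}(M_{\norm})$. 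It remains to control
\[
\chi_{\infty}\beta^{-1}(a_{\norm})^{*}\beta\chi_{\infty}-\chi_{\infty}(a_{\norm})^{*}\chi_{\infty},
\]
which is a sum of terms in which an extra factor $x\theta$ multiplies $(a_{\norm})^{*}$ on the left or on the right.

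For the pseudodifferential part $\tilde n$ of $(a_{\norm})^{*}$, the standard symbolic calculus splits such terms as $x^{\12}(\cdot)x^{\12}$ plus a residual term, in the manner of the identity $\Op(xa)=x^{\12}\Op(\tilde a)x^{\12}+\dots$ stated before Definition~\ref{def9.1c}. The step I expect to be the main obstacle is the remaining part, namely $x\theta\,r_{-\infty\norm}$ with $r_{-\infty\norm}\in B^{-\infty}_{-J}$ only. Here \eqref{e9.3dd} gives $x^{\alpha}B^{-\infty}_{-J}\subset B^{-\infty}_{-J,0^{+}}$, which is exactly what we need, since multiplication by $\theta\in\cA^{\infty}$ is bounded on every $H^{p,r}_{\b}$.

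\textbf{Conclusion.} Combining the three parts shows $a^{*}\in\Psi^{m}_{-J}(M)$ with normal operator $(a_{\norm})^{*}$. By Lemma~\ref{lemma9.2b}(1), the normal operator is unique, so this also identifies $(a^{*})_{\norm}=(a_{\norm})^{*}$. As an independent check, the limit formula used in the proof of Lemma~\ref{lemma9.2b}(1) gives the same identity, because the rescaled densities $T_{s}^{*}d\mu$ converge to $d\mu_{\norm}$ as $s\to\infty$.
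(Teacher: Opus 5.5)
Your proposal is correct and takes essentially the same route as the paper. You take adjoints term by term in \eqref{bilie1}, use weighted Sobolev duality for $r_{-\infty}^{*}$ and symbolic calculus for $x^{\12}r^{*}x^{\12}$, and conjugate $(a_{\norm})^{*}$ by a multiplication factor in $1+x\cA^{\infty}(M)$ comparing $(d\mu,h)$ with $(d\mu_{\norm},h_{\norm})$; the errors are then absorbed via \eqref{e9.3dd} and Lemma~\ref{lemma9.1}~(1), exactly as in the paper, whose factor $T$ in $(a_{\norm})^{*}_{V}=T(a_{\norm})^{*}_{V_{\norm}}T^{-1}$ is your $\beta^{-1}$.
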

 \proof In the proof we denote  by $(a_{\norm})^{*}_{\norm}$ resp.~$(a_{\norm})^{*}$ the adjoint of $a_{\norm}$ in $L^{2}(M_{\norm}, d\mu_{\norm}; (\cdot| \cdot)_{V_{\norm}})$ resp.~$L^{2}(M, d\mu; (\cdot| \cdot)_{V})$. We have:
 \[
 a^{*}= \chi_{\infty}(a_{\norm})^{*}\chi_{\infty}+ x^{\12}r^{*}x^{\12}+ r_{-\infty}^{*}.
 \]
Next $r_{-\infty}^{*}\in B^{-\infty}_{-J,0^{+}}(M)$ by weighted Sobolev duality, $r^{*}$ belongs to $\oper{m}+ B^{-\infty}_{\rr}(M)$ 
 and $x^{\12}B^{-\infty}_{\rr}(M)x^{\12}\subset B^{-\infty}_{-J,0^{+}}(M)$.
 
 It remains to examine  $(a_{\norm})^{*}$.  We have
$ (a_{\norm})_{V}^{*}= T(a_{\norm})_{V_{\norm}}^{*}T^{-1}$ for $T\in\mathcal{A}^{\infty}_{1}(M)$ with $T_{\norm}= \one$, so 
\[
(a_{\norm})^{*}-(a_{\norm})^{*}_{\norm}
=(T-\one)(a_{\norm})^{*}_{\norm}T^{-1}
 +(a_{\norm})^{*}_{\norm}(T^{-1}-\one).
\]
Since $T-\one,T^{-1}-\one\in x\mathcal A^\infty(M)$, symbolic calculus and $(a_{\norm})^{*}_{\norm}\in N^m_{-J}(M_{\norm})$ give
\[
\chi_{\infty}\big((a_{\norm})^{*}-(a_{\norm})^{*}_{\norm}\big)\chi_{\infty}
\in x^{\12}\oper{m}x^{\12}+xB^{-\infty}_{-J}(M)+B^{-\infty}_{-J}(M)x.
\]
By \eqref{e9.3dd}, $xB^{-\infty}_{-J}(M), B^{-\infty}_{-J}(M)x\subset B^{-\infty}_{-J,0^{+}}(M)$. We obtain finally that $a^{*}\in \Psi^{m}_{-J}(M)$ with $(a^{*})_{\norm}= (a_{\norm})^{*}_{\norm}$ as claimed. 
 \qed

\subsubsection{Composition}
 \def\thing{B^{-\infty}_{J, 0^{+}}(M)}
\begin{lemma}\label{lemma9.2}
 If $a_{i}\in \Psi^{m_{i}}_{J}(M)$, $i= 1, 2$ then $a_{1}a_{2}\in \Psi^{m_{1}+ m_{2}}_{J}(M)$ with 
 \[
 (a_{1}a_{2})_{\norm}= a_{1\norm}a_{2\norm}.
 \]
\end{lemma}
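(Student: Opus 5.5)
The plan is to write each $a_i$ in the form \eqref{bilie1},
\[
a_i=\chi_{\infty}a_{i\norm}\chi_{\infty}+x^{\12}r_ix^{\12}+r_{-\infty,i},
\]
with $r_i\in\Psi^{m_i}_{\rm s}(M)$ and $r_{-\infty,i}\in B^{-\infty}_{J,0^{+}}(M)$. We then expand the product $a_1a_2$ into nine terms and show that each term other than $\chi_\infty a_{1\norm}a_{2\norm}\chi_\infty$ lies in $x^{\12}\Psi^{m_1+m_2}_{\rm s}(M)x^{\12}+B^{-\infty}_{J,0^{+}}(M)$. By Lemma~\ref{lemma9.1}~(2) the candidate normal operator $a_{1\norm}a_{2\norm}$ belongs to $N^{m_1+m_2}_J(M_\norm)$. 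Uniqueness of the normal operator (Lemma~\ref{lemma9.2b}~(1)) then identifies $(a_1a_2)_\norm=a_{1\norm}a_{2\norm}$.

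\emph{Terms containing some $r_{-\infty,i}$.} These are handled by \eqref{e9.3dd}. All three pieces of $a_j$ lie in $B^{m_j}_J(M)$: for $\chi_\infty a_{j\norm}\chi_\infty$ this uses the mapping properties of $N^{m_j}_J$, and for $x^{\12}r_jx^{\12}$ it uses \eqref{e9.3} together with \eqref{e9.3d}. Hence the product of such a piece with $B^{-\infty}_{J,0^+}(M)$ lies in $B^{-\infty}_{J,0^+}(M)$.

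\emph{Terms with $x^{\12}r_1x^{\12}\,x^{\12}r_2x^{\12}$ and the mixed terms.} For the first product, symbolic calculus \eqref{e9.3c} gives $x^{\12}(r_1xr_2)x^{\12}$. The operator $r_1xr_2$ is in $\Psi^{m_1+m_2}_{\rm s}$ modulo $B^{-\infty}_{\rr}$, and the remainder $x^{\12}B^{-\infty}_\rr x^{\12}\subset B^{-\infty}_{J,0^+}$. For a mixed term $\chi_\infty a_{1\norm}\chi_\infty x^{\12}r_2x^{\12}$, split $a_{1\norm}=\tilde a_{1\norm}+\tilde r_{-\infty\norm}$. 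The $\tilde a_{1\norm}$ part is a genuine symbol operator and commutes past $x^{\12}$ modulo lower-order terms, via $\Op(xa)=x^{\12}\Op(\tilde a)x^{\12}+x^{\12}r_{-\infty}x^{\12}$. The $\tilde r_{-\infty\norm}$ part is in $B^{-\infty}_J$ composed with $x^{\12}$, which lies in $B^{-\infty}_{J,0^+}$ by \eqref{e9.3dd}.

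\emph{Main term.} We write $\chi_\infty a_{1\norm}\chi_\infty^2a_{2\norm}\chi_\infty=\chi_\infty a_{1\norm}a_{2\norm}\chi_\infty-\chi_\infty a_{1\norm}(1-\chi_\infty^2)a_{2\norm}\chi_\infty$. We also need to compare the adjoints and densities on $M$ versus $M_\norm$, but these differ by $\mathcal A^\infty_1$ factors, and the resulting $x\mathcal A^\infty$ errors are absorbed as in the proof of Lemma~\ref{lemma9.2a}. We expect the hardest step to be the correction term containing $(1-\chi_\infty^2)$, which is compactly supported in $x$ but sits between two nonlocal normal operators. Here we would use the approximate locality built into Def.~\ref{def9.2}. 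Choose $h\in\coinf$ with $h=1$ near $x=0$ and $h(1-\chi_\infty^2)=0$. Commute $h$ through to write $\chi_\infty a_{1\norm}(1-\chi_\infty^2)=\chi_\infty(1-h)a_{1\norm}(1-\chi_\infty^2)+\chi_\infty[h,a_{1\norm}](1-\chi_\infty^2)$, modulo the compactly supported region. The commutator lies in $B_{J,0^+}$ by the $\ad_h$ condition, and $\tilde a$-parts commute with $h$ modulo $x^{\12}B^{m-1}_\rr x^{\12}$. The remaining piece has both cutoffs away from $\p M$, so for it to be an acceptable error it only has to be bounded with some weight gain; an operator with the regularity of $B_J$ and supported away from the boundary on one side gains arbitrary powers of $x$ on that side, which places it in $x^{\12}\Psi x^{\12}+B^{-\infty}_{J,0^+}$ (for the pseudodifferential part) or $B^{-\infty}_{J,0^+}$.
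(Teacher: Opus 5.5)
Your proposal is correct and is essentially the paper's proof. It uses the same three-piece expansion and treats the residual and mixed terms the same way via \eqref{e9.3d}--\eqref{e9.3dd}. For the main term it uses the same two ingredients: the $\ad_h$ locality condition of Def.~\ref{def9.2}, and absorbing an interior-supported cutoff on one side as a factor $x^N$. The paper commutes $\chi_\infty^2$ past $n_2$ and absorbs $\chi_\infty^3-\chi_\infty$, while you write $\chi_\infty^2=1-(1-\chi_\infty^2)$ and commute an auxiliary $h$ past $a_{1\norm}$; this difference is only cosmetic.

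Two small remarks. First, no comparison of densities or adjoints is needed for a composition. Second, for the piece $\chi_\infty[h,\tilde a_{1\norm}](1-\chi_\infty^2)a_{2\norm}\chi_\infty$, the bound $\ad_h\tilde a_{1\norm}\in x^{\12}B^{m_1-1}_{\rr}x^{\12}$ alone does not give the required form. What does is that the kernel of $[h,\tilde a_{1\norm}]$ is compactly supported in the interior, thanks to the cutoff $\chi(s-s')$ in \eqref{app.e2}; with that, this piece is handled exactly like your $(1-h)$ piece.
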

\proof
We have $a_{i}= \chi_{\infty}n_{i}\chi_{\infty}+ x^{\12}r_{i}x^{\12}+ r_{-\infty, i}$ for $n_{i}$ normal operator of order $m_{i}$, $r_{i}\in\oper{m_{i}}$ and $r_{-\infty, i}\in B^{-\infty}_{J, 0^{+}}(M)$. Hence
\[
\begin{aligned}
a_{1}a_{2}&= \chi_{\infty}n_{1}\chi_{\infty}^{2}n_{2}\chi_{\infty}\\
&\phantom{=\,}+ \chi_{\infty}n_{1}\chi_{\infty}x^{\12}r_{2}x^{\12}+ x^{\12}r_{1}x^{\12}\chi_{\infty}n_{2}\chi_{\infty}\\
&\phantom{=\,}+ \chi_{\infty}n_{1}\chi_{\infty}r_{-\infty, 2}+ r_{-\infty, 1}\chi_{\infty}n_{2}\chi_{\infty}\\
&\phantom{=\,}+ x^{\12}r_{1}xr_{2}x^{\12}+ x^{\12}r_{1}x^{\12}r_{-\infty, 2}+ r_{-\infty, 1}x^{\12}r_{2}x^{\12}+ r_{-\infty, 1}r_{-\infty, 2}\\
&=:I_{1}+ I_{2}+ I_{3}+ I_{4}.
\end{aligned}
\]
Let us first look at $I_{1}$. Since $n_{i}= \tilde{n}_{i}+ r_{-\infty, i}$, see Def.~\ref{def9.2} we have
\[
\begin{aligned}
I_{1}&=\chi_{\infty}^{3}n_{1}n_{2}\chi_{\infty}+ \chi_{\infty}n_{1}[\chi_{\infty}^{2}, n_{2}]\chi_{\infty}\\
&=\chi_{\infty}^{3}n_{1}n_{2}\chi_{\infty}+ \chi_{\infty}n_{1}[\chi_{\infty}^{2}, \tilde{n}_{2}]\chi_{\infty}+ \chi_{\infty}n_{1}[\chi_{\infty}^{2}, r_{-\infty, 2}]\chi_{\infty}\\
&=\chi_{\infty}^{3}n_{1}n_{2}\chi_{\infty}+ \chi_{\infty}\tilde{n}_{1}[\chi_{\infty}^{2}, \tilde{n}_{2}]\chi_{\infty}\\
&\phantom{=\,}+ \chi_{\infty}r_{-\infty, 1}[\chi_{\infty}^{2}, \tilde{n}_{2}]\chi_{\infty}+ \chi_{\infty}n_{1}[\chi_{\infty}^{2}, r_{-\infty, 2}]\chi_{\infty}.
\end{aligned}
\]
The term $\chi_{\infty}\tilde{n}_{1}[\chi_{\infty}^{2}, \tilde{n}_{2}]\chi_{\infty}$ belongs to $x^{N}\oper{m_{1}+ m_{2}}x^{N}+ x^{N}B^{-\infty}_{\rr}(M)x^{N}$ for any $N\in \nn$. The term $  \chi_{\infty}r_{-\infty, 1}[\chi_{\infty}^{2},  \tilde{n}_{2}]\chi_{\infty}$ belongs to $B^{-\infty}_{J}(M)x^{N}B^{m_{2}-1}_{J}(M)\subset B^{-\infty}_{J, 0^{+}}(M)$. The term $ \chi_{\infty}n_{1}[\chi_{\infty}^{2}, r_{-\infty, 2}]\chi_{\infty}$  belongs to  $B^{m_{1}}_{J}(M)B^{-\infty}_{J, 0^{+}}(M)\subset B^{-\infty}_{J, 0^{+}}(M)$ by \eqref{e9.3dd}.

Since $\chi_{0}= \chi_{\infty}^{3}- \chi_{\infty}\in \coinf(M)$ we obtain  that
 \[
 \chi_{\infty}^{3}n_{1}n_{2}\chi_{\infty}= \chi_{\infty}n_{1}n_{2}\chi_{\infty}+ x^{N}\oper{m_{1}+ m_{2}}+ x^{N}B^{-\infty}_{J}(M)
\]
hence using \eqref{e9.3d} we have finally:
\[
I_{1}= \chi_{\infty}n_{1}n_{2}\chi_{\infty}+ x^{\12}\oper{m_{1}+ m_{2}}x^{\12}+ \thing.
\]
Next we have
\[
\begin{aligned}
I_{2}&=\chi_{\infty}\tilde{n}_{1}\chi_{\infty}x^{\12}r_{2}x^{\12}+ x^{\12}r_{1}x^{\12}\chi_{\infty}\tilde{n}_{2}\chi_{\infty}\\
&\phantom{=\,}+\chi_{\infty}r_{-\infty, 1}\chi_{\infty}x^{\12}r_{2}x^{\12}+ x^{\12}r_{1}x^{\12}\chi_{\infty}r_{-\infty, 2}\chi_{\infty}.
\end{aligned}
\]
The terms in the first line in the RHS  belong to $x^{\12}\oper{m_{1}+ m_{2}}x^{\12}+ x^{\12}B^{-\infty}_{\rr}(M)x^{\12}$, and the terms in the second line to $x^{\12}B^{-\infty}_{J}(M)+ B^{-\infty}_{J}(M)x^{\12}\subset \thing$.

Since $n_{i}\in B_{J}^{m_{i}}(M)$ we obtain immediately that $I_{3}\in \thing$. The first term in $I_{4}$ belongs to $x^{\12}\oper{m_{1}+ m_{2}}x^{\12}+ x^{\12}B^{-\infty}_{\rr}(M)x^{\12}$, the other to $\thing$.
We obtain finally that
\[
a_{1}a_{2}= \chi_{\infty}n_{1}n_{2}\chi_{\infty}+ x^{\12}rx^{\12}+ r_{-\infty},
\]
for $r\in \oper{m_{1}+ m_{2}}$, $r_{-\infty}\in \thing$,  which completes the proof. \qed

\subsubsection{Inverses}
\begin{lemma}\label{lemma9.3}
 Let $a\in \Psi^{m}_{J}(M)$  be elliptic and such that 
 $a: H^{p, r}_{\b}(M)\tosim H^{p-m, r}_{\b}(M)$  and $a_{\norm}: H^{p, r}_{\b}(M_{\norm})\tosim H^{p-m, r}_{\b}(M_{\norm})$ for $p\in \rr,\ r\in J$. Then $a^{-1}\in \Psi^{-m}_{J}(M)$ with
 \[
 (a^{-1})_{\norm}= (a_{\norm})^{-1}.
 \]
  \end{lemma}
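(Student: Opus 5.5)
The plan is to build an approximate inverse with the correct normal operator and then show that the error it leaves is harmless. The argument mirrors Lemma~\ref{lemma9.1}~(3), now at the level of $M$ rather than $M_{\norm}$.

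First, by Lemma~\ref{lemma9.1}~(3), the normal operator $a_{\norm}$ is elliptic (Lemma~\ref{lemma9.2b}~(2)) and invertible on $H^{p,r}_{\b}(M_{\norm})$ for $r\in J$. Hence $n\defeq(a_{\norm})^{-1}\in N^{-m}_{J}(M_{\norm})$. Next, write $a=\tilde a+\tilde r_{-\infty}$ as in \eqref{bilie2}, with $\tilde a\in\operw{m}{1}$ elliptic. Take a symbolic parametrix $\tilde a^{(-1)}\in\operw{-m}{1}$; its normal part is a parametrix of $\tilde a_{\norm}$. I then form
\[
b_{0}\defeq\chi_{\infty}n\chi_{\infty}+x^{\12}r'x^{\12},
\]
where $x^{\12}r'x^{\12}$ is chosen so that $b_{0}-\tilde a^{(-1)}$ is smoothing modulo $B^{-\infty}_{J}(M)$. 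Concretely, $\tilde a^{(-1)}-\chi_{\infty}\tilde n\chi_{\infty}$ has symbol vanishing at $x=0$ to order $1$, so the identity $\Op(xa)=x^{\12}\Op(\tilde a)x^{\12}+\dots$ puts it in $x^{\12}\oper{-m}x^{\12}$. Thus $b_{0}\in\Psi^{-m}_{J}(M)$ with $(b_{0})_{\norm}=n$.

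By Lemma~\ref{lemma9.2}, the composition $ab_{0}=\one+e$ satisfies $e\in\Psi^{0}_{J}(M)$ and $e_{\norm}=a_{\norm}n-\one=0$. In addition $\sigma_{\rm pr}(e)=0$ to all orders, since $b_{0}$ is a parametrix modulo smoothing terms. Having vanishing normal operator and being smoothing in the symbolic sense, $e$ is of the form $x^{\12}r''x^{\12}+r_{-\infty}$ with $r''\in\oper{-\infty}$ and $r_{-\infty}\in B^{-\infty}_{J,0^{+}}(M)$. Both pieces lie in $B^{-\infty}_{J,0^{+}}(M)$, using $x^{\12}B^{-\infty}_{\rr}(M)x^{\12}\subset B^{-\infty}_{J,0^{+}}(M)$. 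Therefore $a^{-1}=b_{0}-a^{-1}e$, and it only remains to show $a^{-1}e\in B^{-\infty}_{J,0^{+}}(M)$.

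The main obstacle is exactly this last step. Since $a^{-1}\in B^{-m}_{J}(M)$ by hypothesis (invertibility for all $p\in\rr$, $r\in J$), the composition rule \eqref{e9.3dd}, $B^{-m}_{J}B^{-\infty}_{J,0^{+}}\subset B^{-\infty}_{J,0^{+}}$, gives $a^{-1}e\in B^{-\infty}_{J,0^{+}}(M)$. I would double-check that \eqref{e9.3dd} genuinely applies here. If $e=x^{\delta}e'x^{\delta}$ with $e'\in B^{-\infty}_{J(\epsilon)}$, then $a^{-1}x^{\delta}=x^{\delta}(x^{-\delta}a^{-1}x^{\delta})$, and $x^{-\delta}a^{-1}x^{\delta}$ is bounded on the weights $J(\epsilon)$ once $\delta<\epsilon$, by \eqref{e9.3d}. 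Hence $a^{-1}e\in x^{\delta}B^{-\infty}_{J(\epsilon)}x^{\delta}$, as required. This gives $a^{-1}=\chi_{\infty}n\chi_{\infty}+x^{\12}r'x^{\12}+r_{-\infty}\in\Psi^{-m}_{J}(M)$. By uniqueness of normal operators (Lemma~\ref{lemma9.2b}~(1)), $(a^{-1})_{\norm}=n=(a_{\norm})^{-1}$.
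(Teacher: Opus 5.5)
Your overall strategy is the paper's. It uses a parametrix $\tilde a^{(-1)}$ of $\tilde a$, the right approximate inverse $\chi_{\infty}(a_{\norm})^{-1}\chi_{\infty}$ handled by Lemma~\ref{lemma9.2}, and absorption of residues via $a^{-1}\in B^{-m}_{J}(M)$ and \eqref{e9.3dd}. The difference is that you merge the two approximate inverses into one $b_{0}$, so you need the error $e=ab_{0}-\one$ to lie in $B^{-\infty}_{J,0^{+}}(M)$. That is where the gap is.

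Being a parametrix gives only $e\in B^{-\infty}_{J}(M)$: smoothing, but no decay. The condition $e_{\norm}=0$ gives only $e=x^{\12}r''x^{\12}+r_{-\infty}$ with $r''\in\oper{0}$: decay, but no smoothing. Definition~\ref{def9.3} and Lemma~\ref{lemma9.2} give nothing more, and the decomposition is not unique. Your sentence that $e$ has this form ``with $r''\in\oper{-\infty}$'' combines the two properties without argument. Passing smoothness from $x^{\12}r''x^{\12}$ to $r''$ means dividing a symbol by $x$. That destroys uniformity at $x=0$: you only get bounds like $C_{N}x^{-1}\langle\zeta\rangle^{-N}$. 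Dropping the weights is not an option either, since $\oper{-\infty}$ is not contained in $B^{-\infty}_{J,0^{+}}(M)$. The claim can be rescued by interpolating between the $O(x)$ and $O(\langle\zeta\rangle^{-N})$ symbol bounds, but that argument is missing. This is the real crux, not the composition $a^{-1}e$ you flagged, which is immediate from \eqref{e9.3dd}.

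The paper never asks the decaying error to be smoothing. From \eqref{e9.6b} it has $a\chi_{\infty}(a_{\norm})^{-1}\chi_{\infty}=\one-x^{\12}\tilde r_{1}x^{\12}-\tilde r_{-\infty}$, with $\tilde r_{1}\in\oper{0}$ and $\tilde r_{-\infty}\in B^{-\infty}_{J,0^{+}}(M)$. It then handles $a^{-1}x^{\12}\tilde r_{1}x^{\12}$ by substituting $a^{-1}=\tilde a^{(-1)}+r_{-\infty}$ with $r_{-\infty}\in B^{-\infty}_{J}(M)$, as in \eqref{e9.6}; this follows from $a\tilde a^{(-1)}-\one\in B^{-\infty}_{J}(M)$ and $a^{-1}\in B^{-m}_{J}(M)$. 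The term $\tilde a^{(-1)}x^{\12}\tilde r_{1}x^{\12}$ lies in $x^{\12}\oper{-m}x^{\12}+x^{\12}B^{-\infty}_{\rr}(M)x^{\12}$. The term $r_{-\infty}x^{\12}\tilde r_{1}x^{\12}$ lies in $B^{-\infty}_{J}(M)x^{\12}\tilde r_{1}x^{\12}\subset B^{-\infty}_{J,0^{+}}(M)$.

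In your language, the fix is to choose the correction explicitly as $x^{\12}r'x^{\12}\defeq\tilde a^{(-1)}x^{\12}\tilde r_{1}x^{\12}$. Then $ab_{0}-\one=-\tilde r_{-\infty}+(a\tilde a^{(-1)}-\one)x^{\12}\tilde r_{1}x^{\12}$, which lies in $B^{-\infty}_{J,0^{+}}(M)$ by \eqref{e9.3dd}. Also $b_{0}\in\Psi^{-m}_{J}(M)$ with $(b_{0})_{\norm}=(a_{\norm})^{-1}$, and your final step then goes through unchanged.
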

  Note that $a_{\norm}$ is elliptic by Lemma~\ref{lemma9.2b}. Since $a_{\norm}: H^{p, r}_{\b}(M_{\norm})\tosim H^{p-m, r}_{\b}(M_{\norm})$ for $p\in \rr,\ r\in J$ we know that $(a_{\norm})^{-1}$ is a normal operator by Lemma~\ref{lemma9.1} (3).
  
  \medskip

\noindent{\bf Proof of Lemma~\ref{lemma9.3}.}
We recall that $\tilde{a}\in \oper{m}$ is defined in \eqref{bilie2}.
Denoting by  $\tilde{a}^{(-1)}\in\oper{-m}$ a parametrix of $\tilde{a}$,  we have $\tilde{a}\tilde{a}^{(-1)}= \one+\tilde{r}_{-\infty}$, $a\tilde{a}^{(-1)}= \one + \tilde{r}_{-\infty, 1}$ for $\tilde{r}_{-\infty}\in B^{-\infty}_{\rr}(M), \tilde{r}_{-\infty,1}\in B^{-\infty}_{J}(M)$. Therefore
$a^{-1}= \tilde{a}^{(-1)}- a^{-1}\tilde{r}_{-\infty, 1}$. Since $a^{-1}\in B^{-m}_{J}(M)$ we obtain that 
\begin{equation}
\label{e9.6}
a^{-1}= \tilde{a}^{(-1)}+ r_{-\infty},\ \ r_{-\infty}\in B^{-\infty}_{J}(M).
\end{equation}

Next we apply Lemma~\ref{lemma9.2} to  the operators $a_{1}= a$, $a_{2}= \chi_{\infty}(a_{\norm})^{-1}\chi_{\infty}$.  We obtain that
\beq\label{e9.6b}
\begin{aligned}
a\chi_{\infty}(a_{\norm})^{-1}\chi_{\infty}
&= \chi_{\infty}^{2}- x^{\12}\tilde{r}x^{\12}- \tilde{r}_{-\infty},\\
&=1 - x^{\12}\tilde{r}_{1}x^{\12}- \tilde{r}_{-\infty},
\end{aligned}
\eeq
for $\tilde{r}_{1}\in \oper{0}$, $\tilde{r}_{-\infty}\in \thing$. Therefore using \eqref{e9.6}:
\[
\begin{aligned}
a^{-1}&= \chi_{\infty}(a_{\norm})^{-1}\chi_{\infty}+ a^{-1}x^{\12}\tilde{r}_{1}x^{\12}+ a^{-1}\tilde{r}_{-\infty}\\
&= \chi_{\infty}(a_{\norm})^{-1}\chi_{\infty}+ \tilde{a}^{(-1)}x^{\12}\tilde{r}_{1}x^{\12}\\
&\phantom{=\,}+r_{-\infty}x^{\12}\tilde{r}_{1}x^{\12}+ a^{-1}\tilde{r}_{-\infty},
\end{aligned}
\]
where $r_{-\infty}\in B^{-\infty}_{J}(M)$. Next $r_{-\infty}x^{\12}\tilde{r}_{1}x^{\12}\in B^{-\infty}_{J}(M)x^{\12}\subset \thing$ and $a^{-1}\tilde{r}_{-\infty}\in B^{-m}_{J}(M)\thing\subset \thing$ by \eqref{e9.3dd}.
Finally $\tilde{a}^{(-1)}x^{\12}\tilde{r}_{1}x^{\12}= x^{\12}\tilde{r}_{2}x^{\12}+ x^{\12} r_{-\infty}x^{\12}$, for $\tilde{r}_{2}\in \oper{-m}$ and $r_{-\infty}\in B^{-\infty}_{\rr}(M)$.  Therefore we have
\[
a^{-1}= \chi_{\infty}(a_{\norm})^{-1}\chi_{\infty}+ x^{\12}rx^{\12}+ r_{-\infty},\ \ r\in \oper{-m},\ r_{-\infty}\in \thing,
\]
which  completes the proof of the lemma. \qed

\subsection{Fredholm property  and essential spectrum}\label{sec9.5}
We recall that the notation $V$ is  defined in \ref{sec9.2.2}.

\subsubsection{Closed and self-adjoint operators}\label{sec9.5.1}
\begin{lemma}\label{lemma9.1b}
 Assume that $a\in \Psi^{m}_{J}(M; V)$ ($m>0$, $0\in J$) is {\em elliptic}, i.e.\ $\sigma_{\rm pr}(a)(x, y, \zeta)$ is invertible for $\zeta\neq 0$. Then $a$ with domain $H^{m}_{\b}(M; V)$ is closed. If  $a$  is formally self-adjoint on $\coinf(M; V)$ then it is self-adjoint on $H^{m}_{\b}(M; V)$.
\end{lemma}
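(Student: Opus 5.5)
The plan is the standard elliptic argument in the $\b$-calculus: build a parametrix and use it to prove an a priori estimate on the maximal domain. Write $a=\tilde a+\tilde r_{-\infty}$ as in \eqref{bilie2}, with $\tilde a\in\Psi^{m}_{{\rm s}\:1}(M;V)$ elliptic and $\tilde r_{-\infty}\in B^{-\infty}_{J}(M;V)$. Ellipticity of $\sigma_{\rm pr}(\tilde a)$ lets us construct, by the symbolic calculus \eqref{e9.3c} (Moyal products of classical $\b$-symbols and Borel summation), a parametrix $\tilde a^{(-1)}\in\Psi^{-m}_{\rm s}(M;V)$. It satisfies
\[
\tilde a^{(-1)}a=\one+s_{-\infty},\qquad s_{-\infty}\in B^{-\infty}_{J}(M;V).
\]
Because $0\in J$, all these operators act boundedly on $H^{p}_{\b}=H^{p,0}_{\b}$ for every $p$, with the orders indicated.

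The first step is closedness via elliptic regularity. Let $u\in L^{2}_{\b}(M;V)$ satisfy $au\in L^{2}_{\b}$ in the distributional sense. Then
\[
u=\tilde a^{(-1)}au-s_{-\infty}u\in H^{m}_{\b}(M;V),
\]
with the estimate
\[
\|u\|_{H^{m}_{\b}}\leq C\big(\|au\|_{L^{2}_{\b}}+\|u\|_{L^{2}_{\b}}\big).
\]
So the maximal domain equals $H^{m}_{\b}(M;V)$ and the graph norm is equivalent to the $H^{m}_{\b}$ norm. Since $H^{m}_{\b}$ is complete and $a\in B(H^{m}_{\b},L^{2}_{\b})$, the graph is closed.

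The second step is self-adjointness, using the adjoint results. If $a$ is formally self-adjoint on $\coinf(M;V)$, the adjoint of $a$ with domain $H^{m}_{\b}$ is the maximal extension of the formal adjoint $a^{*}$. By Lemma~\ref{lemma9.2a}, $a^{*}\in\Psi^{m}_{-J}(M;V)$; it is elliptic with principal symbol $\sigma_{\rm pr}(a)^{*}$, and $0\in -J$. Hence the first step applies to $a^{*}$, so $\Dom(a^{*})=H^{m}_{\b}$. Since $\coinf$ is dense in $H^{m}_{\b}$ and $a$ is continuous there, formal self-adjointness extends to $(au|v)=(u|av)$ for all $u,v\in H^{m}_{\b}$. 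Therefore $a^{*}=a$ on the common domain $H^{m}_{\b}$.

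The main obstacle is technical and lies in the first step. We must check that the residual remainders (from $\tilde r_{-\infty}$ and from non-closure of $\Psi_{\rm s}$ under composition) really lie in $B^{-\infty}$ at weight $0$. We must also check that the distributional identity $\tilde a^{(-1)}(au)=u+s_{-\infty}u$ holds for $u$ merely in $L^{2}_{\b}$. I would handle this by approximating $u$ by $\coinf$ functions in $L^{2}_{\b}$ and using boundedness of every operator involved on $H^{p,0}_{\b}$ for all $p$; this is exactly where the hypothesis $0\in J$ enters.
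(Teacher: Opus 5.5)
Your argument is correct. The closedness part coincides with the paper's: both rest on the left parametrix $\tilde a^{(-1)}a=\one+s_{-\infty}$ with $s_{-\infty}\in B^{-\infty}_{J}(M;V)$, which acts at weight $0$ because $0\in J$. You phrase it as identifying the maximal domain $\{u\in L^{2}_{\b}: au\in L^{2}_{\b}\}$ with $H^{m}_{\b}(M;V)$. That is slightly stronger than the paper's sequential argument, and it is what you reuse later.

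For self-adjointness the routes differ. The paper notes that $a$ is symmetric on $H^{m}_{\b}$. It then checks the basic criterion $\Ran(a\pm\i\lambda^{m})=L^{2}_{\b}$ for $\lambda\gg1$, using a parameter-dependent parametrix from Subsect.~\ref{sec9.2b} with $(a\pm\i\lambda^{m})a^{(-1)}(\lambda)=\one+r(\lambda)$ and $\|r(\lambda)\|_{B(L^{2})}=o(1)$. You instead identify $\Dom(A^{*})$, for $A=a$ on $H^{m}_{\b}$, with the maximal domain of the formal adjoint, and apply elliptic regularity a second time.

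Your route avoids the large-parameter calculus. It also avoids checking that $a\pm\i\lambda^{m}$ is elliptic with parameter, a check which uses that the principal symbol is Hermitian. And it proves more: for any elliptic $a$ one gets $A^{*}=a^{*}$ on $H^{m}_{\b}$. In the formally self-adjoint case you do not even need Lemma~\ref{lemma9.2a}. The distributional extension of the formal adjoint is then that of $a$ itself, whose maximal domain you have already computed.

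What the paper's route gives in exchange is an explicit approximate resolvent of $a\pm\i\lambda^{m}$ within the calculus, with decay in $\lambda$.
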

\proof Let $u_{n}\in H^{m}_{\b}(M; V)$ be such that $u_{n}\to u$  and $au_{n}\to v$ in $L^{2}_{\b}(M; V)$.  As in the proof of Lemma~\ref{lemma9.3}, let   $\tilde{a}^{(-1)}\in  \Psi^{-m}_{{\rm s}}(M; V)$ with  $\tilde{a}^{(-1)}\circ a= \one + r_{-\infty}$, where $r_{-\infty}\in B^{-\infty}_{J}(M; V)$. We obtain that $\tilde{a}^{(-1)}v= u+ r_{-\infty}u$, hence $u\in  H^{m}_{\b}(M; V)$ and $au= v$.

If $a$ is symmetric on $\coinf(M; V)$ we obtain first that $a$ is also symmetric on $H^{m}_{\b}(M; V)$. To prove that $a$ is self-adjoint on $H^{m}_{\b}(M; V)$ we show that for $\lambda\gg 1$ $(a\pm \i\lambda^{m}): H^{m}_{\b}(M; V)\to L^{2}_{\b}(M; V)$ is surjective. 
 By the  $\Psi$DO calculus on $M_{\rm cyl}$ with large parameter as in Subsect.~\ref{sec9.2b}, we can find $a^{(-1)}(\lambda)\in \operbun{-m}$ such that $(a\pm\i\lambda^{m})a^{(-1)}(\lambda)=\one+r(\lambda)$ where $\|r(\lambda)\|\in o(1)$ in $B(L^{2}(M; V))$.  We obtain that  $(a\pm \i\lambda^{m}): H^{m}_{\b}(M; V)\to L^{2}(M; V)$ is surjective for $\lambda\gg 1$. \qed

\subsubsection{Fredholm property}\label{sec9.5.2}

\begin{lemma}\label{lemma9.4}
 Let $a\in \Psi^{m}_{J}(M; V)$ be  elliptic with $a_{\norm}: H^{p, r}_{\b}(M_{\norm})\tosim H^{p-m, r}_{\b}(M_{\norm})$ for $p\in \rr,\ r\in J$. Then $a: H^{p, r}_{\b}(M; V)\to H^{p-m, r}_{\b}(M; V)$ is Fredholm for $p\in \rr,\ r\in J$.
\end{lemma}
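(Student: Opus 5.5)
The plan is the standard parametrix argument, with the compactness at the boundary supplied by the normal operator. First I would construct a two-sided parametrix of $a$ in $\Psi^{-m}_{J}(M;V)$ whose remainders are compact on the weighted spaces. Write $a=\tilde a+\tilde r_{-\infty}$ as in \eqref{bilie2}, and take $\tilde a^{(-1)}\in\Psi^{-m}_{\rm s}(M;V)$ a symbolic parametrix of $\tilde a$. Let $n=(a_{\norm})^{-1}$; by Lemma~\ref{lemma9.1} (3) it is a normal operator in $N^{-m}_{J}(M_{\norm})$.

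I would then set
\[
b\defeq \chi_{\infty}\,n\,\chi_{\infty}+(1-\chi_{\infty}')\tilde a^{(-1)}(1-\chi_{\infty}''),
\]
with cutoffs chosen so that the two pieces overlap appropriately. Alternatively, and more cleanly, I would follow the proof of Lemma~\ref{lemma9.3}. By Lemma~\ref{lemma9.2}, $a\chi_{\infty}n\chi_{\infty}\in\Psi^{0}_{J}(M;V)$ has normal operator $a_{\norm}n=\one$. Hence, as in \eqref{e9.6b},
\[
a\chi_{\infty}n\chi_{\infty}=\chi_{\infty}^2-x^{\12}\tilde r x^{\12}-\tilde r_{-\infty},
\qquad \tilde r\in\Psi^{0}_{\rm s},\ \tilde r_{-\infty}\in B^{-\infty}_{J,0^{+}}(M).
\]
Combining this with the interior parametrix $\tilde a^{(-1)}$ via the partition $\chi_0^2+\chi_\infty^2=1$, I would produce $b$ with $ab=\one-k_1$. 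Here $k_1$ is a sum of a term in $x^{\12}\Psi^{0}_{\rm s}x^{\12}$ with vanishing principal symbol mod order $-1$, which I would iterate to make it order $-\infty$, and terms in $B^{-\infty}_{J,0^{+}}(M)$. The same construction on the left, using Lemma~\ref{lemma9.2a} or the symmetric composition, gives $ba=\one-k_2$. A cleaner way to organize this is a Neumann/Borel series in the symbolic part, so that the remainders become
\[
k_i\in x^{\delta}B^{-\infty}_{J(\epsilon)}(M)x^{\delta}+x^{\12}B^{-\infty}_{\rr}(M)x^{\12}.
\]

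The key step, and the main obstacle, is showing that such remainders are compact on $H^{p,r}_{\b}(M;V)$ for $r\in J$. For $k=x^{\delta}k'x^{\delta}$ with $k'\in B^{-\infty}_{J(\epsilon)}$, I would pick $\epsilon$ small so that $r\pm\delta$ stay in $J(\epsilon)$, which is allowed by Definition~\ref{def9.1c}. Then $k$ factors as
\[
H^{p,r}_{\b}\xrightarrow{\,x^{\delta}\,}H^{p,r+\delta}_{\b}\xrightarrow{\,k'\,}H^{p+N,r+\delta}_{\b}\xrightarrow{\,x^{\delta}\,}H^{p+N,r+2\delta}_{\b}\hookrightarrow H^{p,r}_{\b},
\]
and the last inclusion is compact by Rellich on the compact manifold with boundary: there is a gain both in regularity and in weight. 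The subtlety I would need to check is that the $\epsilon$–$\delta$ quantifiers in Definition~\ref{def9.1c} allow fixing $\epsilon$ first, given $r$, and then obtaining some $\delta>0$. This does hold: choose $\epsilon$ with $r\in J(\epsilon)$ and $r+\delta,r+2\delta\in J(\epsilon)$ after shrinking $\delta$.

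Finally, $ab=\one-k_1$ and $ba=\one-k_2$ with $b$ bounded $H^{p-m,r}_{\b}\to H^{p,r}_{\b}$ and $k_i$ compact. Atkinson's theorem then gives that $a:H^{p,r}_{\b}\to H^{p-m,r}_{\b}$ is Fredholm for all $p\in\rr$, $r\in J$.
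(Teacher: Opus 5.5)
Your overall strategy (a two-sided parametrix with remainders in $B^{-\infty}_{J,0^{+}}(M)+x^{\12}B^{-\infty}_{\rr}(M)x^{\12}$, then Atkinson) is the paper's. Your factorization through $x^{\delta}$ and Rellich correctly justifies the compactness of such remainders on $H^{p,r}_{\b}$, which the paper only asserts. The gap is in how you build the parametrix.

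If you glue $b=\chi_{\infty}n\chi_{\infty}+\chi_{0}\tilde a^{(-1)}\chi_{0}$ with $n=(a_{\norm})^{-1}$, then the principal symbol of $ab-\one$ is
\[
\chi_{\infty}^{2}\big(\sigma_{\rm pr}(a)(x,y,\zeta)\,\sigma_{\rm pr}(a)(0,y,\zeta)^{-1}-\one\big).
\]
This is $O(x)$ but in general not zero. So the $x^{\12}\Psi^{0}_{\rm s}(M)x^{\12}$ part of $k_{1}$ is genuinely of order $0$, and your claim that its principal symbol vanishes is false. Such an operator gains decay but no regularity, so it is in general not compact. Your symbolic iteration cannot start, since it needs an error of order $-1$. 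A Neumann series in $k_{1}$ only improves decay, never order.

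The missing idea is the paper's composition trick. Two facts are available. First, $a\tilde a^{(-1)}=\one-r_{-\infty}$ with $r_{-\infty}\in B^{-\infty}_{J}(M)$, which is smoothing but not decaying. Second, $a\chi_{\infty}n\chi_{\infty}=\one-x^{\12}\tilde r_{1}x^{\12}-\tilde r_{-\infty}$, which is decaying but of order $0$. Set $b=\tilde a^{(-1)}+\chi_{\infty}n\chi_{\infty}r_{-\infty}$. Then
\[
ab=\one-\big(x^{\12}\tilde r_{1}x^{\12}+\tilde r_{-\infty}\big)\,r_{-\infty}.
\]
The product of a decaying error with a smoothing one lies in $B^{-\infty}_{J,0^{+}}(M)$, which is compact by your argument, so no iteration is needed. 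On the left use $\tilde a^{(-1)}+r'_{-\infty}\chi_{\infty}n\chi_{\infty}$, where $\tilde a^{(-1)}a=\one-r'_{-\infty}$; this gives the remainder $r'_{-\infty}(x^{\12}\tilde r'x^{\12}+\tilde r'_{-\infty})$.

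Equivalently, you can repair your glued $b$ by adding $\tilde a^{(-1)}k_{1}$, which leaves the compact remainder $r_{-\infty}k_{1}$. Either way, the resulting $b$ has full symbol that of $\tilde a^{(-1)}$ and normal operator $(a_{\norm})^{-1}$. That is what your ``cleaner'' formulation is really aiming for, but it has to be produced this way rather than by a partition-of-unity gluing.
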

\proof   By \eqref{e9.6} and \eqref{e9.6b} we have
\[
a\tilde{a}^{(-1)}= \one - r_{-\infty},\quad a \chi_{\infty}(a_{\norm})^{-1}\chi_{\infty}= \one- x^{\12}\tilde{r}_{1}x^{\12}- \tilde{r}_{-\infty},
\]
where $r_{-\infty}\in B^{-\infty}_{J}(M)$, $\tilde{r}_{1}\in \oper{0}$ and $\tilde{r}_{-\infty}\in \thing$. Therefore
\[
a\left(\tilde{a}^{(-1)}+ \chi_{\infty}(a_{\norm})^{-1}\chi_{\infty}r_{-\infty}\right)= \one- x^{\12}\tilde{r}_{1}x^{\12}
 r_{-\infty}- \tilde{r}_{-\infty}r_{-\infty}= \one - \hat{r}_{-\infty}
\]
for $\hat{r}_{-\infty}\in \thing$.  The error term $\hat{r}_{-\infty}$ is compact in $H^{p, r}_{\b}(M)$ for all $p\in\rr,\ r\in J$. We can similarly construct a right pseudo-inverse of $a$ and obtain that $a: H_{\b}^{p, r}(M)\to H^{p-m, r}_{\b}(M)$ is Fredholm for $p\in \rr,\ r\in J$.  \qed

\subsubsection{Essential spectrum}\label{sec9.5.3}
If $P: \Dom P\to \cH$ is a closed densely defined operator, we say that $z\not\in \sigma_{\rm ess}(P)$ if $P-z: \Dom P\to \cH$ is Fredholm. 
From  Lemma~\ref{lemma9.4} we obtain immediately the following lemma. 
\begin{lemma}\label{lemma9.5}
 Let $a\in \Psi^{m}_{J}(M;V)$, with $m>0$ and $0\in J$, be elliptic. By Lemma~\ref{lemma9.1b} $a$ is closed on $\Dom a= H^{m}_{\b}(M; V)$ and $a_{\norm}$ is closed on $H^{m}_{\b}(M_{\norm}; V_{\norm})$. Then $\sigma_{\rm ess}(a)\subset  \sigma(a_{\norm})$.
 \end{lemma}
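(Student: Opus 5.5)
The plan is to reduce the essential spectrum statement to the Fredholm criterion of Lemma~\ref{lemma9.4}, applied to the shifted operators $a-z$. Fix $z\in\cc\setminus\sigma(a_{\norm})$; we must show that $a-z: H^{m}_{\b}(M;V)\to L^{2}_{\b}(M;V)$ is Fredholm. Since $m>0$, the multiplication operator $z\one$ is of lower order. Hence $a-z\in\Psi^{m}_{J}(M;V)$, with the same principal symbol as $a$, so it is still elliptic. Its normal operator is $a_{\norm}-z$: write $z=z\chi_{\infty}^{2}+z\chi_{0}^{2}$, where the second term is compactly supported and smoothing-free. That term belongs to $x^{\12}\Psi^{0}_{\rm s}x^{\12}$, since $\chi_0$ is supported away from $x=0$. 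Lemma~\ref{lemma9.2b}(1) then identifies $(a-z)_{\norm}=a_{\norm}-z$.

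The key step is to turn the hypothesis $z\notin\sigma(a_{\norm})$, which is a statement on $L^{2}_{\b}=H^{0,0}_{\b}$ with domain $H^{m}_{\b}$, into the full invertibility hypothesis of Lemma~\ref{lemma9.4}. That hypothesis asks for $a_{\norm}-z: H^{p,r}_{\b}(M_{\norm})\tosim H^{p-m,r}_{\b}(M_{\norm})$ for all $p\in\rr$ and all $r$ in some interval. I will proceed in three moves.

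First, bijectivity $H^{m,0}_{\b}\to H^{0,0}_{\b}$ is given, using closedness from Lemma~\ref{lemma9.1b}.

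Second, I extend to all $p$ at weight $0$ by elliptic regularity with a parametrix of $\tilde a_{\norm}-z$. The error $r_{-\infty}\in B^{-\infty}_{J}$ preserves all $H^{p,0}_{\b}$ because $0\in J$, so the inverse maps $H^{p-m,0}_{\b}$ into $H^{p,0}_{\b}$. Injectivity on $H^{p,0}_{\b}$ for $p<m$ follows by bootstrapping regularity, and duality with the adjoint handles negative $p$.

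Third, for the weights, I conjugate by $x^{r}$. Because $a_{\norm}$ is dilation invariant, $x^{r}(a_{\norm}-z)x^{-r}$ is a continuous family in $r$ with values in $B(H^{p,0}_{\b},H^{p-m,0}_{\b})$, and it equals $a_{\norm}-z$ at $r=0$. Invertibility at $r=0$ is an open condition, so it persists for $r$ in a small interval $J'=\open{-r_{0},r_{0}}\subset J$. This is the same perturbation argument used for $\tM^{\pm}_{\Sigma\norm}$ in the proof of Prop.~\ref{prop11.1bis}.

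Lemma~\ref{lemma9.4} applied with $J'$ in place of $J$ then gives that $a-z$ is Fredholm $H^{m,0}_{\b}\to H^{0,0}_{\b}$, that is, $z\notin\sigma_{\rm ess}(a)$. Note that $a\in\Psi^{m}_{J}\subset\Psi^{m}_{J'}$, because the residual classes only shrink as the interval grows.

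I expect the main obstacle to be the third move: justifying continuity in $r$ of the conjugated normal operator in operator norm. For the symbolic part $\tilde a_{\norm}$ this holds because $x^{r}\tilde a_{\norm}x^{-r}$ has a symbol depending smoothly on $r$, the conjugation shifting $\xi$ by $\i r$. For the residual part I would use that $r_{-\infty}$ is bounded on $H^{p,r}_{\b}$ uniformly for $r$ in compact subsets of $J$, together with interpolation in $r$. If the operator-norm continuity proves hard, the fallback is an a priori estimate: prove $\|u\|_{H^{m,r}}\le C\|(a_{\norm}-z)u\|_{H^{0,r}}$ for small $|r|$ via the commutator $[x^{r},a_{\norm}]x^{-r}=O(r)$, and obtain surjectivity from the same estimate for the adjoint.
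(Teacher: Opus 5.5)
Your proposal is correct and follows the paper's route. The paper only says the lemma follows immediately from Lemma~\ref{lemma9.4} (implicitly applied to $a-z$), and your three moves spell out the step it leaves unstated: Lemma~\ref{lemma9.4} needs $a_{\norm}-z$ invertible on $H^{p,r}_{\b}(M_{\norm})$ for all $p$ and all $r$ in an interval, so you upgrade $z\notin\sigma(a_{\norm})$ to that and work on a shrunken interval $J'\ni 0$. One small point: the perturbation in $r$ gives an $r_{0}$ that may depend on $p$, so perform it at the single index $p=m$ and then rerun your second move (parametrix plus bootstrapping) at each weight $r\in J'$ to get all $p$ uniformly.
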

 
\begin{lemma}\label{lemma9.6}
 Let $a\in \Psi^{m}_{J}(M; V)$, with $m>0$ and $J=-J$, be elliptic with $a_{\norm}: H^{p, r}_{\b}(M_{\norm}; V_{\norm})\allowbreak\tosim H^{p-m, r}_{\b}(M_{\norm}; V_{\norm})$ for $p\in \rr,\ r\in J$ and $\Re a_{\norm}\geq C>0$ on $\coinf(M_{\norm}; V_{\norm})$ for the Hilbertian scalar product $(\cdot| \cdot)_{V_{\norm}(M_{\norm})}$.
 Assume also that, for some $c>0$,
 \[
 \Re\sigma_{\rm pr}(a)(x,y,\zeta)\geq c|\zeta|^m\one_V
 \quad\hbox{for }\zeta\neq0.
 \]
 Then there exists $s_{-\infty}\in B^{-\infty}_{\rr}(M)$ of finite rank with a smooth distributional kernel supported away from $\p M$ such that
 \[
 a+ s_{-\infty}: H^{p, r}_{\b}(M; V)\tosim   H^{p-m, r}_{\b}(M; V)\hbox{ for }p\in \rr,\ r\in J. 
 \]
 \end{lemma}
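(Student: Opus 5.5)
The plan is the standard "compact perturbation of an invertible accretive operator" argument. First, Lemma~\ref{lemma9.4} applies directly: $a$ is elliptic and $a_{\norm}$ is invertible on the whole weight range $J$, so $a: H^{p,r}_{\b}(M;V)\to H^{p-m,r}_{\b}(M;V)$ is Fredholm for all $p\in\rr$, $r\in J$. The task is then to remove the kernel and cokernel by a finite rank smoothing correction that is compactly supported away from $\p M$, chosen so that it works for all $(p,r)$ at once.

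\textbf{Step 1: kernel and cokernel are independent of $(p,r)$ and consist of compactly supported smooth-type functions.} I would show that $\Ker a\cap H^{p,r}_{\b}$ is independent of $p\in\rr$ and $r\in J$, and lies in $H^{\infty,r}_{\b}$ for every $r\in J$. For the $p$-independence, use the parametrix identity $\tilde a^{(-1)}a=\one+r_{-\infty}$ with $r_{-\infty}\in B^{-\infty}_J$, exactly as in the proof of Lemma~\ref{lemma9.1b}. For the $r$-independence, write
\[
a=\chi_\infty a_{\norm}\chi_\infty+x^{\frac12}rx^{\frac12}+r_{-\infty},
\]
and invert $a_{\norm}$ on $\chi_\infty u$. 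The errors gain a small power $x^{\delta}$, as encoded in $B^{-\infty}_{J,0^+}$ and in the commutator condition on $\ad_h$. Iterating this improves the weight step by step across all of $J$.

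The cokernel is identified with $\Ker a^{*}$ on the dual weights. By Lemma~\ref{lemma9.2a}, $a^{*}\in\Psi^m_{-J}=\Psi^m_J$ since $J=-J$, and $(a^*)_{\norm}=(a_{\norm})^*$ is again invertible. So the same argument shows $\Ker a^{*}$ is finite dimensional and independent of $(p,r)$. The index is then constant on the connected set $\rr\times J$.

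\textbf{Step 2: the index is zero.} I would deform along $a_\theta=\chi_\infty a_{\norm}\chi_\infty+(1-\theta)\bigl(a-\chi_\infty a_{\norm}\chi_\infty\bigr)+\theta\,\Lambda$, with $\Lambda$ a positive elliptic model operator in the interior. Each $a_\theta$ stays elliptic because $\Re\sigma_{\rm pr}\geq c|\zeta|^m$ is preserved under convex combination, and its normal operator is still $a_{\norm}$. The family is therefore Fredholm and the index is constant in $\theta$.

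At $\theta=1$, take $r=0$ and $p=m/2$. There $\Re a_{\norm}\geq C$ together with Gårding's inequality in the interior gives
\[
\Re(u|a_1u)\geq c\|u\|^2_{H^{m/2}_{\b}}-C'\|u\|^2_{H^{-N}_{\b}},
\]
and I would aim to show $a_1$ is injective with dense range. This deformation is the step I expect to be the main obstacle. Making the accretivity estimate global (gluing the boundary estimate from $a_{\norm}$ with interior Gårding) needs care with commutators $[\chi_\infty,a_{\norm}]$. These commutators are only small modulo $B^{-\infty}_{J,0^+}$, so I would first shrink the support of $\chi_\infty$ toward $\p M$ to make them small in norm. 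An alternative route is to note $\Re a\geq c_0-K$ with $K$ compact, so $a+t$ is invertible for large $t$ and homotopic to $a$ through Fredholm operators.

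\textbf{Step 3: the finite rank correction.} With index zero, let $\Ker a=\mathrm{span}\{e_i\}$ and $\Ker a^{*}=\mathrm{span}\{f_j\}$, both of dimension $N$. Choose $\phi_i,\psi_j\in\coinf(M^\circ;V)$ with $(\phi_i|e_k)=\delta_{ik}$ and $(f_j|\psi_l)=\delta_{jl}$. This is possible by density of $\coinf(M^\circ)$ and linear independence, since the $e_i,f_j$ are distributions. Set
\[
s_{-\infty}=\sum_i|\psi_i)(\phi_i|.
\]
It is finite rank, has a smooth kernel supported away from $\p M$, and lies in $B^{-\infty}_{\rr}(M)$. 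Then $a+s_{-\infty}$ is injective: if $(a+s)u=0$, pairing with $f_j$ gives $(\phi_j|u)=0$, hence $au=0$, so $u\in\Ker a$ and $u=0$. Being Fredholm of index zero, it is therefore an isomorphism. Because the kernels are independent of $(p,r)$ by Step 1, this holds for all $p\in\rr$, $r\in J$.
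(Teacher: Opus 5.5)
You have a sound architecture that differs from the paper's. Lemma~\ref{lemma9.4} gives Fredholmness. Your Step~1 makes kernel and cokernel independent of $(p,r)$, via the parametrix and the upward weight bootstrap, with $J=-J$ used for $a^{*}$. Your Step~3 then gives an explicit rank-$N$ correction $\sum_i|\psi_i)(\phi_i|$ with interior support, which does kill the kernel once $\mathrm{ind}\,a=0$.

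The problem is that $\mathrm{ind}\,a=0$ carries all the content of the lemma, and Step~2 does not establish it. The estimate you propose at $\theta=1$, $\Re(u|a_1u)\geq c\|u\|^2_{H^{m/2}_\b}-C'\|u\|^2_{H^{-N}_\b}$, does not help. The embedding $H^{m/2,0}_\b\hookrightarrow H^{-N,0}_\b$ is not compact, because there is no gain of decay at $\p M$. So the error term is not compact, and you get neither injectivity nor any index information. You need a coercive bound at weight $0$ with no such error, and that is exactly where the hypothesis $\Re a_\norm\geq C$ must enter.

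The commutators $[\chi_\infty,a_\norm]$ you worry about do not arise. For $a_1=\chi_\infty a_\norm\chi_\infty+\chi_0\Lambda_0\chi_0$ the quadratic form splits exactly into $\Re(\chi_\infty u|a_\norm\chi_\infty u)+(\chi_0u|\Lambda_0\chi_0u)$. Take $\Lambda_0\geq c>0$, and upgrade $\Re a_\norm\geq C$ by G\aa rding to an $H^{m/2}_\b$ lower bound. That bound absorbs the $O(x)$ mismatch between the Hermitian structures of $M$ and $M_\norm$ on $\supp\chi_\infty$, and your first route can then be completed. Your one-line alternative simply asserts the missing fact, namely $\Re a\geq c_0-K$ with $K$ compact, without saying where it comes from.

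The paper supplies this fact as follows. By Lemma~\ref{lemma9.2a}, $(\Re a)_\norm=\Re a_\norm\geq C$, so Lemma~\ref{lemma9.5} gives $\sigma_{\rm ess}(\Re a)\subset[C,+\infty[$. Hence for suitable $\chi\in\coinf(]-\infty,C[)$ the operator $\chi(\Re a)$ has finite rank and $\Re a+\chi(\Re a)\geq 3C/4$. With this in hand no homotopy is needed. Both $a+\chi(\Re a)$ and its adjoint are accretive on $L^2_\b$, so $a+\chi(\Re a)$ is invertible at weight $0$. Therefore $\mathrm{ind}\,a=0$ there, and your Steps~1 and~3 finish the proof.

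The paper instead takes $s_{-\infty}=\psi\chi(\Re a)\psi$ directly; the cutoff costs only $C/4$ in the lower bound. It gets injectivity of $a+s_{-\infty}$ and $(a+s_{-\infty})^{*}$ at weight $0$ from accretivity. It then transports this to all $r\in J$ by the same weight bootstrap as your Step~1, so it never computes an index. Your kernel-adapted correction has the advantage that the corrected operator need not be accretive, but it still needs the same positivity input to get index zero.
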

\proof By Lemma~\ref{lemma9.1b} we  know that $a$ is closed on $L^{2}(M; V)$ with domain $H^{m}_{\b}(M; V)$.  Let us consider $\Re a= \12 (a+ a^{*})$, which is elliptic and bounded below by the assumption on the principal symbol, and self-adjoint on $L^{2}(M; V)$ with domain $H^{m}_{\b}(M; V)$. By Lemma~\ref{lemma9.2a} $(\Re a)_{\norm}= \Re(a_{\norm})$ hence by Lemma~\ref{lemma9.5} $\sigma_{\rm ess}(\Re a)\subset [C, +\infty[$. 

We choose $\chi\in \coinf(]-\infty, C[)$ such that $\Re a+ \chi(\Re a)\geq 3C/4$,  for the Hilbertian scalar product in $L^{2}(M;V)$. The operator $\chi(\Re a)$ is of finite rank  and we  can choose a cutoff function $\psi\in \coinf(M)$ equal to $0$ near $\p M$ such that $\| \chi(\Re a)- \psi \chi(\Re a)\psi\|_{B(L^{2})}\ll 1$. Setting $s_{-\infty}= \psi \chi(\Re a)\psi$ we obtain that 
\beq\label{e9.7}
\Re a + s_{-\infty}\geq C/2. 
\eeq
The distributional kernel is supported away from $\p M$ and is smooth by elliptic regularity. In particular $s_{-\infty}\in B^{-\infty}_{\rr}(M)$ and since the kernel of $s_{-\infty}$ vanishes near $\p M$ we have $(a+ s_{-\infty})_{\norm}= a_{\norm}$. Therefore by   Lemma~\ref{lemma9.4} $a+ s_{-\infty}: H^{p, r}_{\b}(M; V)\to  H^{p-m, r}_{\b}(M; V)$ is Fredholm for $p\in \rr,\ r\in J$.  

In the sequel we denote $H^{p, r}_{\b}(M; V)$ simply by $H^{p, r}_{\b}$.
By elliptic regularity, if $u\in H^{p,r}_{\b}$ and $(a+s_{-\infty})u=0$ or $(a+s_{-\infty})^{*}u=0$, then $u$ belongs to the operator domain $H^{m,r}_{\b}$, and the usual bootstrapping gives the corresponding higher regularity. Thus the kernels can be tested in the domain spaces.

At weight zero, \eqref{e9.7} gives
\[
        0=\Re ((a+s_{-\infty})u|u)\geq C\|u\|^2/2
\]
for any $u\in\Ker(a+s_{-\infty})\subset H^{m,0}_{\b}$. The same argument applies to the Hilbert adjoint, since
$\Re(a+s_{-\infty})^{*}=\Re(a+s_{-\infty})$. Hence
\beq\label{bilie3}
        \Ker(a+s_{-\infty})_{|H^{p,0}_{\b}}
        =\Ker(a+s_{-\infty})^{*}_{|H^{p,0}_{\b}}
        =\{0\},\ \ p\in\rr .
\eeq
If $r\geq 0$, then $H^{p,r}_{\b}\subset H^{p,0}_{\b}$, so the kernels vanish at weight $r$ as well.

It remains to consider $r<0$.  By Def.~\ref{def9.3} we can write
\[
        a+s_{-\infty}= \chi_{\infty}a_{\norm}\chi_{\infty}
        +x^{\12}r_{\rm ps}x^{\12}+ r_{\rm sm},
\]
where $r_{\rm ps}\in\oper{m}$ and $r_{\rm sm}\in\thing$.  Fix $\epsilon>0$ such that $r\in J(\epsilon)=J'$.  By the definition of $\thing$, $r_{\rm sm}=x^{\delta}b_{-\infty}x^{\delta}$ for some $\delta>0$ and some $b_{-\infty}\in B^{-\infty}_{J'}(M)$.
If $u\in\Ker(a+s_{-\infty})_{|H^{p,r}_{\b}}$, then
\[
        x^{\12}r_{\rm ps}x^{\12}u\in H^{p-m,r+1}_{\b},
        \quad
        r_{\rm sm}u\in H^{\infty,\min(r+\delta,0)}_{\b}.
\]
The equation therefore implies, modulo compactly supported terms which are harmless for weights, that
$a_{\norm}\chi_{\infty}u\in H^{p-m,\min(r+\delta,0)}_{\b}$.  Since
$a_{\norm}:H^{p,s}_{\b}\to H^{p-m,s}_{\b}$ is invertible for all $s\in J$, we obtain
$\chi_{\infty}u\in H^{p,\min(r+\delta,0)}_{\b}$, and hence
$u\in H^{p,\min(r+\delta,0)}_{\b}$.  Iterating this improvement finitely many times gives
$u\in H^{p,0}_{\b}$, and then $u=0$ by \eqref{bilie3}.  The same argument applies to
$(a+s_{-\infty})^{*}$, using $(a+s_{-\infty})^{*}\in\Psi^m_J(M;V)$ and the same normal invertibility.  Thus both the kernel and cokernel of $a+s_{-\infty}$ vanish at every weight $r\in J$.  Since the operator is Fredholm, it is invertible.  This completes the proof of the lemma. \qed

\section{Approximate stability lemma}\label{sec:app-ode}\init
The reduction of the metric in Sect.~\ref{sec2.2} uses the flow of a normal vector field.  The following local lemma provides the stability at $\rho=0$ needed there: a conormal $\b$-vector field vanishing to positive order at the boundary produces a flow which preserves the boundary defining function and boundary coordinates up to the corresponding order.

Let $M$ be a smooth $(n-1)$-dimensional manifold with compact boundary $\p M=Y$, and let $\rho$ be a boundary defining function.  If $I\subset\rr$ is an interval containing $0$, set
\[
        \cA^\infty(I\times M)=C^\infty(I;\cA^\infty(M)),
\]
and use the same convention with $\cA^\infty$ replaced by $\cA^{\infty,r}$, or by $\cA^\infty_r$ when $r>0$.

\begin{lemma}\label{thm:ODE}
Let
\[
        v\in C^\infty(I;\cA^{\infty,\epsilon}(M;{}^{\b}TM)),
        \ \ \epsilon>0,
\]
be a time-dependent $\b$-vector field.  Let $\widetilde x(t,x)$ be the flow of $v$ with initial condition $\widetilde x(0,x)=x$.  After possibly shrinking $I$ to an interval $I'$ containing $0$, the following statements hold.
\begin{enumerate}[label={\rm(\arabic*)},leftmargin=7ex]
\item The flow $\widetilde x:I'\times M\to M$ exists and is unique.
\item If $\widetilde\rho:=\rho\circ\widetilde x$, then
\[
        \widetilde\rho\in \rho(1+\cA^{\infty,\epsilon}(I'\times M)).
\]
\item Fix a collar $[0,\delta[_\rho\times Y$ and local coordinates $(\omega_1,\ldots,\omega_{n-2})$ on $Y$, extended to the collar.  On a smaller coordinate neighborhood,
\[
        \widetilde\omega_i-\omega_i\in\cA^{\infty,\epsilon},
        \quad
        \widetilde\omega_i:=\omega_i\circ\widetilde x .
\]
\item Let $m\in\rr$, let $a_0\in\rho^m C^\infty(I'\times M)$, and assume
\[
        a-a_0\in\cA^{\infty,m+\epsilon}(I'\times M).
\]
Then, with $\widetilde a(t,x):=a(t,\widetilde x(t,x))$, one has
\[
        \widetilde a-a_0\in\cA^{\infty,m+\epsilon}(I'\times M).
\]
\end{enumerate}
\end{lemma}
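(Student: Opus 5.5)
The plan is to work in the cylinder picture $\rho=\e^{-s}$ over the collar. There $\cA^{\infty}$ becomes the space $\cinf_{\rm u}$ of functions bounded with all derivatives, and $\cA^{\infty,\epsilon}$ becomes $\e^{-\epsilon s}\cinf_{\rm u}$. Writing $v=v^{\rho}\rho\p_{\rho}+v^{i}\p_{\omega_i}$ with $v^{\rho},v^{i}\in\cA^{\infty,\epsilon}$, the flow equations in coordinates $(s,\omega)$ read
\[
\dot{\tilde s}=-v^{\rho}(t,\tilde s,\tilde\omega),\qquad \dot{\tilde\omega}_i=v^{i}(t,\tilde s,\tilde\omega).
\]
The right-hand sides are uniformly bounded with all derivatives and decay like $\e^{-\epsilon\tilde s}$. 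Away from the collar the field is smooth on a compact set, so standard ODE theory applies there.

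For (1), I will use that $v$ is a $\b$-vector field, hence tangent to $\p M$. The vector field is therefore uniformly Lipschitz in the cylinder picture, and Picard iteration gives global existence and uniqueness on $I'\times M$. It also shows that the flow maps $M$ into itself and preserves the boundary, with $|\tilde s-s|\le C|t|$. Shrinking $I'$ keeps trajectories that start in a smaller collar inside the coordinate chart, which is needed for (3).

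For (2) and (3), integrate the equations: $\tilde s-s=-\int_0^t v^{\rho}\,dt'$ and $\tilde\omega_i-\omega_i=\int_0^t v^i\,dt'$. Since $\tilde s\ge s-C$, each integrand is $O(\e^{-\epsilon s})$, so $\tilde s-s$ and $\tilde\omega-\omega$ are $O(\e^{-\epsilon s})$. Then $\tilde\rho=\rho\,\e^{-(\tilde s-s)}$ lies in $\rho(1+O(\rho^{\epsilon}))$, using $\e^{x}-1=O(x)$ for bounded $x$.

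The main obstacle is the conormal estimate, namely bounding all derivatives $\p_t^k\p_s^\alpha\p_\omega^\beta$ of these differences by $\e^{-\epsilon s}$. I will argue by induction on the order. Differentiating the ODE gives a linear variational system
\[
\dot Z=A(t)Z+F,
\]
where $A$ is built from first derivatives of $v$ along the trajectory and $F$ is polynomial in lower-order derivatives of $(\tilde s,\tilde\omega)$ times higher derivatives of $v$. Every derivative of $v$ is itself $O(\e^{-\epsilon s})$, and by induction the lower-order derivatives of the flow are bounded. Gronwall's inequality on $I'$ then gives $Z=O(\e^{-\epsilon s})$ for the deviation from the identity. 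The point to check is that Jacobian entries such as $\p_s\tilde s-1$ really pick up a decaying factor, which holds because $\p_s\tilde s-1$ satisfies an inhomogeneous equation with source $O(\e^{-\epsilon s})$. Time derivatives come directly from the equation. Finally, since $\rho^{\epsilon}$ is smooth, I can convert back and factor the result as $\rho(1+\cA^{\infty,\epsilon})$.

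For (4), write
\[
\tilde a-a_0=\bigl(a-a_0\bigr)\circ\tilde x+\bigl(a_0\circ\tilde x-a_0\bigr).
\]
The first term stays in $\cA^{\infty,m+\epsilon}$ because composition with $\tilde x$ preserves conormal classes: $\tilde x$ is conormally bounded by the estimates above, and $\tilde\rho/\rho$ is bounded above and below, so weights are preserved. For the second term, take $a_0=\rho^m b$ with $b$ smooth. Then
\[
a_0\circ\tilde x-a_0=(\tilde\rho^m-\rho^m)\,b\circ\tilde x+\rho^m\bigl(b\circ\tilde x-b\bigr).
\]
By (2), $\tilde\rho^m-\rho^m=\rho^m\cA^{\infty,\epsilon}$. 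For the remaining factor, a Taylor expansion of $b$ in $(\rho,\omega)$, together with (2) and (3), gives $b\circ\tilde x-b\in\cA^{\infty,\epsilon}$. Combining, $\tilde a-a_0\in\cA^{\infty,m+\epsilon}$.
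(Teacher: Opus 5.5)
Your proposal is correct and follows essentially the same route as the paper. It uses the integral form of the flow equations, Gronwall applied inductively to the differentiated (variational) equations, and a first-order Taylor expansion of the smooth factor of $a_0$ for part (4). The differences are mild. Passing to $s=-\ln\rho$ turns the $\b$-derivatives into ordinary uniform derivatives, so you avoid the paper's anisotropically weighted Gronwall quantity $|D(\widetilde\rho-\rho)|+\rho\sum_i|D(\widetilde\omega_i-\omega_i)|$. For existence, your uniform Lipschitz argument on the cylinder replaces the paper's change of smooth structure $\check\rho=\rho^{1/N}$. You should add one point explicitly: the cylinder does not contain $\partial M$. There $v$ vanishes, since its $\b$-coefficients are $O(\rho^{\epsilon})$, so the flow is the identity on $\partial M$, and your bounds show the interior flow extends continuously to it.
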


\proof
The existence statement is standard.  Choose $N,k\in\mathbb N$ with $N\epsilon>k$ and introduce the smooth structure with boundary defining function $\check\rho=\rho^{1/N}$.  Since $\check\rho\partial_{\check\rho}=N\rho\partial_\rho$, the vector field $v$ is $C^k$ as a $\b$-vector field for this smooth structure.  Hence its flow exists, is unique, and is $C^k$ in $(t,x)$ after shrinking $I$ if necessary.  Since $v$ is tangent to $\partial M$, the flow preserves the boundary.  As $\widetilde\rho=\rho$ at $t=0$, we also have
\beq\label{eq:rho-comparable}
        c_0\rho\leq\widetilde\rho(t,\rho,\omega)\leq C_0\rho
\eeq
for $t\in I'$ and suitable constants $0<c_0<C_0$.

We work in a collar chart and write
\beq\label{eq:vf-errors}
        v=v_\rho\partial_\rho+\sum_{i=1}^{n-2}v_i\partial_{\omega_i},
        \ \ v_\rho\in\cA^{\infty,1+\epsilon},
        \ v_i\in\cA^{\infty,\epsilon}.
\eeq
The flow satisfies
\beq\label{eq:rho-omega-int}
\begin{aligned}
        \widetilde\rho(t)-\rho
        &=\int_0^t v_\rho(\tau,\widetilde\rho(\tau),\widetilde\omega(\tau))\,d\tau,\\
        \widetilde\omega_i(t)-\omega_i
        &=\int_0^t v_i(\tau,\widetilde\rho(\tau),\widetilde\omega(\tau))\,d\tau .
\end{aligned}
\eeq
Using \eqref{eq:rho-comparable} in \eqref{eq:vf-errors}, the integrands are respectively $O(\rho^{1+\epsilon})$ and $O(\rho^\epsilon)$.  Hence
\[
        \widetilde\rho-\rho=O(\rho^{1+\epsilon}),
        \quad
        \widetilde\omega_i-\omega_i=O(\rho^\epsilon).
\]

For the conormal estimates set
\[
        D_{\b}^{k,l,\alpha}:=\partial_t^k(\rho\partial_\rho)^l\partial_\omega^\alpha.
\]
We claim that for all $k,l,\alpha$,
\beq\label{eq:all-conormal-flow-est}
        D_{\b}^{k,l,\alpha}(\widetilde\rho-\rho)=O(\rho^{1+\epsilon}),
        \quad
        D_{\b}^{k,l,\alpha}(\widetilde\omega_i-\omega_i)=O(\rho^\epsilon).
\eeq
The proof is by induction on $k+l+|\alpha|$.  If $k>0$, differentiate the equations
$\partial_t\widetilde\rho=v_\rho(t,\widetilde\rho,\widetilde\omega)$ and
$\partial_t\widetilde\omega_i=v_i(t,\widetilde\rho,\widetilde\omega)$ and use the chain rule: each $\partial_\rho$ falling on a coefficient costs one power of $\rho^{-1}$, while the corresponding derivative of $\widetilde\rho$ contributes one power of $\rho$.

It remains to consider $k=0$.  Let $D=(\rho\partial_\rho)^l\partial_\omega^\alpha$ with $l+|\alpha|>0$.  After applying $D$ to \eqref{eq:rho-omega-int}, all terms are controlled by the induction hypothesis except the linear terms containing $D\widetilde\rho$ or $D\widetilde\omega_i$.  If
\[
        X(t):=|D(\widetilde\rho-\rho)|+
        \rho\sum_i|D(\widetilde\omega_i-\omega_i)|,
\]
then these terms give
\[
        X(t)\leq C\rho^{1+\epsilon}+C\rho^\epsilon\int_0^t X(\tau)\,d\tau .
\]
Gronwall's inequality yields $X(t)\leq C\rho^{1+\epsilon}$, proving \eqref{eq:all-conormal-flow-est}.

For the composition statement, factor out the conormal weight $\rho^m$ and write
\[
        a_0=\rho^m a_\infty,
        \quad
        a=\rho^m(a_\infty+\rho^\epsilon a'),
        \ \ a_\infty\in C^\infty,
        \ a'\in\cA^\infty .
\]
If $m$ is not a non-negative integer this is understood as a conormal factorization after removing the explicit factor $\rho^m$.
By \eqref{eq:rho-comparable} and \eqref{eq:all-conormal-flow-est},
\[
        \widetilde\rho^m-\rho^m\in\rho^m\cA^{\infty,\epsilon},
        \quad
        a_\infty(t,\widetilde\rho,\widetilde\omega)-a_\infty(t,\rho,\omega)
        \in\cA^{\infty,\epsilon}.
\]
Therefore $\widetilde a-a_0\in\cA^{\infty,m+\epsilon}(I'\times M)$, and the same chain-rule estimate gives the bounds after any number of $\b$-derivatives.\qed

\begin{corollary}\label{cor:ODE-small}
Let $v\in C^\infty(I;\rho\cA^\infty(M;{}^{\b}TM))$ and let $\widetilde x(t,\cdot)$ be its flow with $\widetilde x(0,\cdot)=\one$.  Then, after possibly shrinking $I$, in any product collar $[0,\delta[_\rho\times Y$ one has
\[
        \rho\circ\widetilde x=\rho+\rho^2\cA^\infty(I\times M),
        \quad
        \omega_i\circ\widetilde x=\omega_i+\rho\cA^\infty(I\times M).
\]
\end{corollary}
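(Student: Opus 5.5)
The plan is to obtain the corollary as the special case $\epsilon=1$ of Lemma~\ref{thm:ODE}, so the only work is to match the hypotheses and translate the conclusions into the stated form.

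\emph{Hypothesis.} By definition $\cA^{\infty,r}(M)=\rho^{r}\cA^{\infty}(M)$, and the same convention holds for sections of ${}^{\b}TM$. Hence
\[
C^\infty(I;\rho\cA^\infty(M;{}^{\b}TM))=C^\infty(I;\cA^{\infty,1}(M;{}^{\b}TM)),
\]
which is exactly the hypothesis of Lemma~\ref{thm:ODE} with $\epsilon=1$. In the collar, writing $v$ in the $\b$-frame $(\rho\p_\rho,\p_{\omega_i})$ with $\rho\cA^\infty$ coefficients gives the decomposition \eqref{eq:vf-errors}: there $v_\rho\in\cA^{\infty,2}$ and $v_i\in\cA^{\infty,1}$. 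After shrinking $I$ to the interval $I'$ provided by Lemma~\ref{thm:ODE}~(1), the flow $\widetilde x$ exists and is unique.

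\emph{The boundary defining function.} Lemma~\ref{thm:ODE}~(2) yields
\[
\rho\circ\widetilde x\in\rho\bigl(1+\cA^{\infty,1}(I'\times M)\bigr)=\rho+\rho\cdot\rho\cA^\infty(I'\times M)=\rho+\rho^2\cA^\infty(I'\times M),
\]
which is the first claimed identity.

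\emph{The boundary coordinates.} Lemma~\ref{thm:ODE}~(3) gives
\[
\omega_i\circ\widetilde x-\omega_i\in\cA^{\infty,1}=\rho\cA^\infty
\]
on a smaller coordinate neighborhood. To obtain this on the whole product collar $[0,\delta[_\rho\times Y$, I will use that $Y$ is compact. Cover $Y$ by finitely many coordinate patches. The bound $\widetilde\rho\le C_0\rho$ from \eqref{eq:rho-comparable} and the estimate $\widetilde\omega-\omega=O(\rho)$ show that, for $t\in I'$ and after shrinking $\delta$ and $I'$ if needed, points starting in a slightly smaller patch stay inside the chart where the $\omega_i$ are defined. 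Taking $I$ to be the intersection of the finitely many shrunken intervals then gives both identities uniformly on the collar.

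The only point requiring care is this patching argument. It is routine once the compactness of $Y$ is used; everything else is immediate from the lemma.
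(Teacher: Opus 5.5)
Your proposal is correct and is exactly the paper's argument. The paper's proof is the one line ``apply Lemma~\ref{thm:ODE} with $\epsilon=1$'', and you do the same, using $\cA^{\infty,1}=\rho\cA^\infty$ to rewrite conclusions (2) and (3); your extra finite-atlas patching over the compact $Y$ is a harmless elaboration that the paper leaves implicit, since the statement for the $\omega_i$ is local to a chart anyway.
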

\proof
Apply Lemma~\ref{thm:ODE} with $\epsilon=1$.\qed

\section{Index of notation}\init\label{sec:notation-index}
In this appendix we collect the most important notation used in the main part of the paper.

{\small
\setlength{\tabcolsep}{3pt}
\renewcommand{\arraystretch}{1.22}
\begin{longtable}{>{\raggedright\arraybackslash}p{0.15\textwidth} >{\raggedright\arraybackslash}p{0.33\textwidth} >{\raggedright\arraybackslash}p{0.36\textwidth} >{\raggedleft\arraybackslash}p{0.10\textwidth}}
\hline
\emph{notation} & \emph{formula} & \emph{description} & \emph{}\\
\hline
$\cA^\infty(M)$, $\cA^{\infty,r}(M)$ & $(x\partial_x)^\alpha\partial_y^\beta u\in L^\infty$ over $M_\infty$  $\cA^{\infty,r}=x^r\cA^\infty$ & conormal functions and weighted conormal classes near a boundary & \defpage{not:conormal-classes}\\
$\be TM$, $\be T^*M$ & frames $(x\partial_x,\partial_y)$ and $(dx/x,dy)$ & $\b$-tangent and $\b$-cotangent bundles & \defpage{not:b-tensor-bundles}\\
$S^m(\rr^n)$, $S^m_1(\rr^n)$ & $\p_x^\alpha u\in O(\langle x\rangle^{m-|\alpha|})$ $S^m_1=S^m_{\cl}+S^{m-1}$ & symbols on Euclidean space and their classical leading parts & \defpage{secnot.1.1}\\
$\Diff^m(\rr^n)$, $\Diff^m_1(\rr^n)$ & $\sum_{|\alpha|+|\beta|\le m} a_{\alpha\beta}\partial_r^\alpha(r^{-1}\partial_\omega)^\beta$ & differential operators with prescribed orders of decay & \defpage{secnot.1.1b}\\
$\Diff^m_\b(M)$, $P_\norm$ & $P=\sum_{k+|\alpha|\le m} a_{k\alpha}D_y^\alpha(xD_x)^k$; $P_\norm$ freezes $a_{k\alpha}$ at $x=0$ & $\b$-differential operators and normal operators & \defpage{not:b-diff}, \appdefpage{sec9.4}\\
$H^{m,r}_\b(M)$ & $H^{m,r}_\b=x^rH^m_\b$ & weighted $\b$-Sobolev spaces & \defpage{sec9.2.3}\\
$V_k$ & $V_k=\cc\otimes^k_{\rm s}T^*M$ & symmetric complex $(0,k)$-tensor bundle & \defpage{sec1.1.2}\\
$(\cdot|\cdot)_{V_k}$ & $(u|v)_{V_k}=k!\,\bar u\cdot(\rg^{\otimes k})^{-1}v$ & fiber Hermitian form on $V_k$ & \defpage{eq:uvk}\\
$I$ & $I=\one-\frac14|\rg)(\rg|$ & trace reversal on two-tensors & \defpage{blurck}\\
$(\cdot|\cdot)_{I,V_2}$ & $(u|v)_{I,V_2}=(u|Iv)_{V_2}$ & physical Hermitian form & \defpage{not:physical-hermitian-forms}\\
$d$, $\delta$ & $(dw)_{\mu\nu}=\frac12(\nabla_\mu w_\nu+\nabla_\nu w_\mu)$; $\delta=d^*$ for $(\cdot|\cdot)_{V_k(M)}$ & symmetric differential and divergence & \defpage{sec1.2}\\
$D_1$, $D_2$ & $D_1=-\square_1$, $D_2=-\square_2+2\Riem_\rg$ & wave/Lichnerowicz operators & \defpage{e0.3bip}\\
$P$ & $P=D_2-I\circ d\circ\delta$ & linearized Einstein operator & \defpage{sec1.2}\\
$K$, $K^\star$ & $K=I\circ d$, $K^\star=\delta$ & pure gauge operator and harmonic gauge operator & \defpage{sec1.2}\\
$G_k$ & $G_k=G_{k{\rm ret}}-G_{k{\rm adv}}$ & causal propagator for $D_k$ & \defpage{sec1.2}\\
$\varrho_k$ & $\varrho_ku=\binom{u|_\Sigma}{\i^{-1}\nabla_\nu u|_\Sigma}$ & Lorentzian Cauchy data map & \defpage{e00.3a}\\
$\cU_k$ & $D_k\cU_k=0$, $\varrho_k\cU_k=\one$ & Cauchy evolution for $D_k$ & \defpage{e00.3}\\
$K_\Sigma$, $K_\Sigma^\dag$, $I_\Sigma$ & $K_\Sigma=\varrho_2K\cU_1$, $K_\Sigma^\dag=\varrho_1K^\star\cU_2$, $I_\Sigma=\varrho_2I\cU_2$ & gauge and trace-reversal maps on Cauchy data & \defpage{sec1.6.2}\\
$\rq_k$ & Green form definition in~\eqref{lub} & Cauchy charge for $D_k$ & \defpage{lub}\\
$\boldsymbol\Lambda_2^\pm$, $\boldsymbol\lambda_2^\pm$ & $\boldsymbol\Lambda_2^\pm=(\varrho_2G_2)^*\boldsymbol\lambda_2^\pm(\varrho_2G_2)$ & spacetime and Cauchy-surface covariances & \defpage{e00.2b}\\
$c_2^\pm$ & $\boldsymbol\lambda_2^\pm=\pm\rq_{I,2}\circ c_2^\pm$ & Cauchy-surface Hadamard projectors for two-tensors & \defpage{e00.4}\\
$\cN$, $\cN^\pm$ & $\cN=\{(x,\xi):\xi\cdot\rg^{-1}(x)\xi=0\}$; signs by time orientation & characteristic set and positive/negative energy shells & \defpage{sec1.7b}\\
$\reta$, $\reta_\b$ & $\reta=-r^2dt^2+\boldsymbol\delta(t)$; $\reta_\b=\rho^2\reta$ & Minkowski metric and its $\b$-rescaling in Beig--Schmidt coordinates & \defpage{e2.mink-hyp}\\
$\boldsymbol\delta(t)$, $\boldsymbol\delta_\b(t)$ & $\boldsymbol\delta=dr^2+r^2\cosh^2(t)h_{\bS^{n-2}}$; $\boldsymbol\delta_\b=(d\rho/\rho)^2+\cosh^2(t)h_{\bS^{n-2}}$ & spatial model metrics & \defpage{e2.mink-b-hyp}\\
$r$, $\rho$, $t$, $\omega$ & $x_0=r\sinh t$, $\rx=r\cosh t\,\omega$, $\rho=r^{-1}$ & Beig--Schmidt coordinates near spacelike infinity & \defpage{sec2.2}\\
$\chi$, $c$ & $\chi^*\rg=c^{-2}(-r^2dt^2+\rh(t))$ & diffeomorphism and conformal factor reducing the background metric & \defpage{eq:metric-reduced}\\
$\rg_\b$ & $\rg_\b=c^2\rho^2\rg=-dt^2+\rh_\b(t)$ & reduced $\b$-rescaled background metric & \defpage{e2.bmetric-reduced}\\
$\rh_\b(t)$ & $\rh_\b=\rho^2\rh(t)=\boldsymbol\delta_\b(t)+\rho\cA^\infty$ & reduced spatial $\b$-metric & \defpage{e2.metric-reduced-asymptotics}\\
$\tau_k$ & $\tau_k=\tau^{\otimes k}$, $\tau=\left(\begin{smallmatrix}-1&0\\0&1\end{smallmatrix}\right)$ & involution relating Hermitian and Hilbertian adjoints & \defpage{e2.12b}\\
$\rq_{k\b}$, $q_k$ & $\bar f\cdot\rq_{k\b}f=(f_1|f_0)_{V_k(\Sigma)}+(f_0|f_1)_{V_k(\Sigma)}$; $q_k=\mat{0}{\tau_k}{\tau_k}{0}$ & Lorentzian Cauchy charge and charge matrix & \defpage{e2.lorentzian-charge}\\
$\varrho_{k\b}$ & $\varrho_{k\b}u=({u|_{t=0}},{\i^{-1}\nabla_tu|_{t=0}})$ & Lorentzian $\b$-Cauchy trace for $D_{k\b}$ & \defpage{e2.10bc}\\
$I_\b$, $I_{\b\Sigma}$ & $I_\b=I$, $I_{\b\Sigma}=\mat{I_\b|_\Sigma}{0}{0}{I_\b|_\Sigma}$ & $\b$-trace reversal and its action on Cauchy data & \defpage{e2.10bf}\\
$\rq_{I,2\b}$ & $\rq_{I,2\b}=\rq_{2\b}\circ I_{\b\Sigma}$ & physical two-tensor Cauchy charge & \defpage{e2.physical-charge}\\
$\tilde\rq_{k\b}$, $\tilde\rq_{k\b\norm}$ & $\bar f\cdot\tilde\rq_{k\b}f=(f_1|f_0)_{\tV_k(\Sigma)}+(f_0|f_1)_{\tV_k(\Sigma)}$ & Euclidean charges, including their normal versions & \defpage{e2.euclidean-charge}\\
$U_\b$, $D_{k\b}$ & $U_\b=(c\rho)^{-n/2}$; $D_{k\b}=(c\rho)^{-1}U_\b D_kU_\b^{-1}(c\rho)^{-1}$ & conformal weight and reduced $\b$-operators & \defpage{e6.-1}\\
$\nabla_a$, $R_{k\b a}$ & $\nabla_a=\partial_{e_{a\b}}+R_{k\b a}$, $R_{k\b a}\in\cA_1^\infty$ & connection after conformal and $\b$-reduction & \defpage{e6.new15f}\\
$a_{k\b}(t)$ & $D_{k\b}=-(\nabla_t)^\star\nabla_t+a_{k\b}(t)$ & spatial part of the reduced operator & \defpage{prop3.normal-split}\\
$D_{k\b\norm}$, $a_{k\b\norm}$ & $D_{k\b\norm}=D_{k\b}^{\reta}$; $a_{k\b\norm}=a_{k\b}^{\reta}$ & normal model and its spatial operator & \defpage{prop10.1}\\
$d_\b$, $K_\b$ & $d_\b=(c\rho)U_\b dU_\b^{-1}(c\rho)^{-1}$; $K_\b=I_\b\circ d_\b$ & $\b$-gauge differential and gauge operator & \defpage{sec7.e2a}\\
$T_\b$ & $T_\b=(c\rho)^{-2}K_\b(c\rho)^2$ & conjugated gauge intertwiner in $D_{2\b}K_\b=T_\b D_{1\b}$ & \defpage{e7.0}\\
$d_{\b\Sigma}$, $K_{\b\Sigma}$ & $d_{\b\Sigma}\varrho_{1\b}w=\varrho_{2\b}d_\b w$; $K_{\b\Sigma}=I_{\b\Sigma}d_{\b\Sigma}$ & Cauchy-surface gauge maps & \defpage{e7.new2}\\
$K_{\b\Sigma\norm}$ & $K_{\b\Sigma\norm}\rho^{-1}=(K_{\b\Sigma}\rho^{-1})_\norm$ & normal Cauchy-surface gauge operator with shifted weight & \defpage{e7.new4}\\
$\treta$ & $\treta=r^2ds^2+dr^2+r^2\cos^2(s)h_{\bS^{n-2}}$ & Wick-rotated Minkowski metric & \defpage{e10.n-22}\\
$\cF_k$, $\cF_{k\Sigma}$ & $\cF_{k\Sigma}=\mat{\cF_k}{0}{0}{\cF_k}$; $dt\mapsto\i ds$ & Wick rotation transport between Lorentzian and Riemannian Cauchy data & \defpage{e10.n-7}\\
$\tD_{k\b\norm}$ & $\tD_{k\b\norm}=D_{k\b}^{\treta}$ & Wick-rotated normal operator & \defpage{sec10.2}\\
$\kappa_k$ & lift of $(\rho,s,\omega)\mapsto(\rho,-s,\omega)$ to $V_k$ & imaginary time reflection on tensor fields & \defpage{sec10.3b}\\
$\trho^\pm$, $\tsig$ & $\trho^\pm u=\binom{u(0^\pm)}{-\widetilde\nabla_su(0^\pm)}$, $\tsig=\mat{0}{-\one}{\one}{0}$ & Riemannian Cauchy traces and boundary symplectic matrix & \defpage{e10.normal-trace}\\
$\widetilde n_\norm^\pm$ & $\widetilde n_\norm^\pm v=-\widetilde\nabla_su^\pm(0^\pm)$ & Riemannian Dirichlet-to-Neumann maps & \defpage{e10.fixed-dtn}\\
$\tc_{k\norm}^\pm$ & $\tc_{k\norm}^\pm=\mp\trho^\pm(\tD_{k\b\norm})^{-1}(\trho)^*\tsig$ & Wick-rotated normal Calder\'on projectors & \defpage{sec10.5}\\
$c_{k\norm}^\pm$ & $c_{k\norm}^\pm=\cF_{k\Sigma}^{-1}\tc_{k\norm}^\pm\cF_{k\Sigma}$ & Lorentzian versions of the normal Calder\'on projectors & \defpage{sec10.8}\\
$\cH^{m,r}_\b(\Sigma_\norm)$ & $H^{m,r}_\b(\Sigma_\norm;V_k)\oplus H^{m-1,r}_\b(\Sigma_\norm;V_k)$ & normal Cauchy-data Sobolev space & \defpage{e01.16e}\\
$J$ & $J=\open{-\frac{n-2}{2},\frac{n-2}{2}}$ & weight interval for the normal calculus and projectors & \defpage{prop10.2}\\
$b^+$, $b^-$ & $b^+=b(0)$, $b^-=-b^\star(0)$ modulo residual terms & positive and negative symbolic factorization branches & \defpage{sec8.3}\\
$c^\pm$ & matrix formula~\eqref{e8.9bb} built from $b^\pm$ & Hadamard projectors for the reduced operators & \defpage{e8.9bb}\\
$\cH^{m,r}_\b(\Sigma)$ & $H^{m,r}_\b(\Sigma;V)\oplus H^{m-1,r}_\b(\Sigma;V)$ & weighted Cauchy data Sobolev space & \defpage{e8.9b}\\
$S^m_{\cl}(\be T^*M)$ & classical symbols with conormal $x$-dependence & principal symbol class for the extended $\b$-calculus & \appdefpage{sec9.2.5}\\
$\Psi^m_{{\rm s}\:1}(M)$ & $\Op(S^m_{\cl\:1}(\be T^*M))$ & pseudodifferential operators with a normal symbol & \appdefpage{def9.0}\\
$B_J^m(M)$ & $\bigcap_{p\in\rr,\,r\in J}B(H^{p,r}_\b,H^{p-m,r}_\b)$ & uniform weighted Sobolev mapping class & \appdefpage{def9.1b}\\
$B_{J,0^+}^{-\infty}(M)$ & $\bigcap_{\epsilon>0}\bigcup_{\delta>0}x^\delta B_{J(\epsilon)}^{-\infty}x^\delta$ & smoothing operators improving decay at the boundary & \appdefpage{def9.1c}\\
$N_J^m(M_\norm)$ & $n=\widetilde n+r_{-\infty}$, $T_s^{-1}nT_s=n$ & dilation-invariant normal operators & \appdefpage{def9.2}\\
$\Psi_J^m(M)$ & $a=\chi_\infty a_\norm\chi_\infty+x^{1/2}rx^{1/2}+r_{-\infty}$ & extended $\b$-calculus with prescribed normal operator & \appdefpage{def9.3}\\
$\hat\ell$ & $\hat\ell u_2=\binom{c^{-2}(n-2)^{-1}(\rg(0)|u_2)_{V_2}}{2u_{2t\Sigma}}$ & trace/synchronous constraint entering $R_\Sigma$ & \defpage{sec11.1}\\
$R_\Sigma$ & $R_\Sigma=\binom{\hat\ell\pi_0c_2^+}{\hat\ell\pi_0c_2^-}$ & microlocal TT-synchronous gauge map & \defpage{e10.3}\\
$M_\Sigma^\pm$ & $\hat\ell\pi_0K_{\b\Sigma}c_1^\pm=\rho M_\Sigma^\pm\pi_0c_1^\pm$ & first order elliptic boundary operator in  gauge-fixing equation & \defpage{e11.1b}\\
$T$ (regular) & $T=\one-K_{\b\Sigma}(R_\Sigma K_{\b\Sigma})^{-1}R_\Sigma$ & projection imposing the microlocal TT-synchronous gauge in the invertible case & \defpage{e11.4}\\
$\pi$ & $\pi f_2=\sum_k\bar v_k\cdot\rq_{I,2\b}f_2\,w_k$ & finite-rank projection used in the singular case & \defpage{e10.8}\\
$T_{\rm reg}$ & $T_{\rm reg}=\one-K_{\b\Sigma}(R_\Sigma K_{\b\Sigma})^{-1}R_\Sigma$ on the regular complement & gauge-fixing projection on the finite-codimension regular subspace & \defpage{e10.20}\\
$T$ (singular) & $T=(\one-\pi_2)\circ T_{\rm reg}\circ(\one-\pi)$ & projection used in the singular gauge-fixing case & \defpage{e10.21}\\
$\boldsymbol\lambda_{2\b}^\pm$ (regular) & $\boldsymbol\lambda_{2\b}^\pm=\pm T^*\circ\rq_{I,2\b}c_2^\pm\circ T$ & Cauchy surface covariances in the regular case & \defpage{e12.1}\\
$\boldsymbol\nu$, $\nu$ & $\boldsymbol\nu=\rq_{I,2\b}-(\one-\tilde\pi)^*\rq_{I,2\b}(\one-\tilde\pi)$ & finite-rank correction form in the singular case & \defpage{e11.12}\\
$\boldsymbol\lambda_{2\b}^\pm$ (singular) & see formula~\eqref{e11.10}  & Cauchy surface covariances in the singular case & \defpage{e11.10}\\
$\mathcal T_{k}$ & $\mathcal T_k\binom{f_0}{f_1}=\binom{c\rho U_\b f_0}{U_\b f_1}$ & transition from original Cauchy data to $\b$-Cauchy data & \defpage{e100}\\
$\boldsymbol\lambda_2^\pm$ & $\boldsymbol\lambda_2^\pm=\mathcal T_2^*\boldsymbol\lambda_{2\b}^\pm\mathcal T_2$ & physical Cauchy surface covariances & \defpage{e101}\medskip\\ 
\hline\end{longtable}
}

%
%\bibliographystyle{amsplain}
%\bibliography{merde.bib}
% \end{document}

 \end{document}